\newif\ifshort
\newif\ifnocomment
\newcites{app}{References for the Appendix}
\gdef\@copyrightpermission{
  \begin{minipage}{0.2\columnwidth}
   \href{https://creativecommons.org/licenses/by/4.0/}{\includegraphics[width=0.90\textwidth]{by}}
  \end{minipage}\hfill
  \begin{minipage}{0.8\columnwidth}
   \href{https://creativecommons.org/licenses/by/4.0/}{This work is licensed under a Creative Commons Attribution International 4.0 License.}
  \end{minipage}
  \vspace{5pt}
}
\title{Candidate nomination for Condorcet-consistent voting rules}
\author{Ildik\'o Schlotter}
\affiliation{
  \institution{HUN-REN Centre for Economic and Regional Studies}
  \city{Budapest}
  \country{Hungary}}
\email{schlotter.ildiko@krtk.hun-ren.hu}
\author{Katar\'ina Cechl\'arov\'a}
\affiliation{
  \institution{%
  P.J. \v Saf\'arik University}
  \city{Ko\v sice}
  \country{Slovakia}}
\email{katarina.cechlarova@upjs.sk}
\begin{abstract}
Consider elections where the set of candidates is partitioned into parties, and each party must nominate exactly one candidate.  The \PP{} problem asks whether some candidate of a given party can become the winner of the election for some nominations from other parties.  We perform a multivariate computational complexity analysis of \PP{}  for a range of Condorcet-consistent voting rules, namely for Copeland$^\alpha$ for $\alpha \in [0,1]$ and Maximin. 
The parameters we study are the number of voters, the number of parties, and the maximum size of a party. For all voting rules under consideration, we obtain dichotomies based on the number of voters, classifying $\NP$-complete and polynomial-time solvable cases. Moreover, for each $\NP$-complete variant, we determine the parameterized complexity of every possible parameterization with the studied parameters as either (a) fixed-parameter tractable, (b) $\W1$-hard but in~$\XP$, or (c) $\paraNP$-hard, 
outlining the limits of tractability for these problems. 
\end{abstract}
\keywords{elections; parties; candidate nomination; Copeland voting; Maximin voting; parameterized complexity}
\newcommand{\grayline}{\arrayrulecolor{lightgray}\hline\arrayrulecolor{black}}
\newcommand{\linkproof}[1]{%
\ifshort
    $\star$%
\else
    \hyperref[#1]{$\star$}%
\fi 
}
\newcommand{\KK}[1]{}
\newcommand{\K}[1]{}
\newcommand{\I}[1]{}
\newcommand{\II}[1]{}
\newcommand{\KK}[1]{\todo[color=red!10!white,fancyline]{K: #1}}
\newcommand{\II}[1]{\todo[color=blue!10!white,fancyline]{I: #1}}
\newcommand{\K}[1]{\textcolor{red}{#1}}
\newcommand{\I}[1]{\textcolor{blue}{#1}}
\newcommand{\new}[1]{\textcolor{red}{#1}}
    \tikzset{
        every node/.style={inner sep=0.3pt},
        single/.style = {draw,circle, 
        minimum size=6pt},
        couple/.style = {draw,ellipse,
            minimum height=6pt, 
            minimum width=12pt},
        room/.style = {draw,rectangle, 
            minimum height=6pt,
            minimum width=12pt},
    }
    \definecolor{lightcyan}{RGB}{153,221,255} 
    \colorlet{bordercol}{lightcyan}
\newcommand{\Cpl}[1][\alpha]{\mathsf{Cpl}_{\EE}^{#1}}
\newcommand{\Cplodd}[1][\EE]{\mathsf{Cpl}_{#1}}
\newcommand{\MM}{\mathsf{MM}}
\newcommand{\wt}[1]{\widetilde{#1}}
\newcommand{\ora}[1]{\overrightarrow{#1}}
\newcommand{\ola}[1]{\overleftarrow{#1}}
\newcommand{\ol}[1]{\overline{#1}}
\newcommand{\copa}[1]{#1 \odot 1}
\newcommand{\copb}[1]{#1 \odot 2}
\newcommand{\copc}[1]{#1 \odot 3}
\newcommand{\copgen}[2]{#1 \odot #2}
\newcommand{\olsi}[1]{\,\overline{\!{#1}}}
\renewcommand{\P}{\mathcal{P}}
\newcommand{\mytabref}[1]{{\small (#1)}}
\def\PP{\textsc{Possible President}}
\def\LabSubDigraph{\textsc{Partitioned Subdigraph Isomorphism}}
\def\MMC{\textsc{Maximum Matching with Couples}}
\def\E{\mathcal{E}}
\def\C{\mathcal{C}}
\def\mt{t\kern-0.035cm\char39\kern-0.03cm}
\def\ml{l\kern-0.035cm\char39\kern-0.03cm}
\def\md{d\kern-0.035cm\char39\kern-0.03cm}
\def\EE{{\mathcal{E}}}
\def\SS{\mathcal{S}}
\def\pii{\wt{\pi}}
\def\tt{\wt{t}}
\def\cc{\wt{c}}
\def\varphii{\wt{\varphi}}
\def\myDB{\texttt{D} \mkern-3.0mu \texttt{B}}
\def\myDC{\texttt{D} \mkern-3.0mu \texttt{C}}
\def\myCA{\texttt{C} \mkern-3.0mu \texttt{A}}
\def\myBX{\texttt{B} \mkern-3.0mu \texttt{S}}
\def\cop{\textup{copy}}
\def\orig{\textup{orig}}
\def\NP{\mathsf{NP}}
\def\W1{\mathsf{W}[1]}
\def\paraNP{\textup{para-}\mathsf{NP}}
\def\XP{\mathsf{XP}}
\def\abase{\underline{a}}
\def\bbase{\underline{b}}
\def\cbase{\underline{c}}
\def\AlgMM{\textup{\textsf{AlgMM}}}
\def\na{--}
\def\maxsize{\sigma}
\theoremstyle{acmplain}
\newtheorem{observation}{Observation}
\newtheorem{claim}{Claim}
\newcommand{\opentriangle}{%
  \raisebox{0.2pt}{\makebox[0.77778em]{%
    \setlength{\unitlength}{0.6em}%
    \linethickness{0.4pt}\roundjoin
    \begin{picture}(1,1)
    \polygon(0,0)(1,0)(1,1)
    \end{picture}%
  }}%
}
\newenvironment{proofsketch}{\par
  \pushQED{\hfill \opentriangle}%
  \normalfont \topsep 12pt\relax
  \trivlist
  \item[\indent\hskip\labelsep
        {\scshape Proof sketch.}]\ignorespaces
}{%
  \popQED\endtrivlist
}
\newenvironment{claimproof}{\par
  \pushQED{$\hfill \lhd$}%
  \normalfont \topsep 6pt\relax
  \trivlist
  \item[\indent\hskip\labelsep
        {\scshape Claim proof.}]\ignorespaces
}{%
  \popQED\endtrivlist
}
\newtheoremstyle{myrulestyle}%
{}{}{}{}
{\bfseries}{. }%
{ }{}%
\theoremstyle{myrulestyle}
\newtheorem{myrule}{Rule}
\newtheorem{mylatrule}{Rule} %
\begin{document}

\pagestyle{fancy}
\fancyhead{}

\maketitle

\section{Introduction}

Political elections are always preceded by a turbulent phase where parties select their nominated candidates for the upcoming election. Clearly, this process has a great influence on the outcome of the election, and therefore it is only natural that political parties engage in all kinds of strategic behavior when choosing their nominees. We focus on the case which models presidential elections in the sense that each party needs to nominate exactly one person among its possible candidates for presidency. 

A naive approach would expect each party to simply choose its ``best'' candidate---however, in practice it is rarely the case that there is a single candidate that can be considered the best in all scenarios.  
Indeed, a given party may find that different candidates have different chances of winning the upcoming election depending on the nominees of the remaining parties. 
Parties may elect their nominees through primaries (an approach studied by Borodin et al.~\cite{Borodin24}), but a more careful process may take into account the estimated preferences of all voters over the possible nominees, and not only the preferences of party members.

Following the formal model of candidate nomination proposed by Faliszewski et al.~\cite{faliszewski2016}, we assume that the preferences of all voters over all potential candidates are known, and in the {\it reduced election} obtained as a result of each party nominating a unique candidate, the preferences of each voter over these nominees are simply the restriction of its preferences over the whole pool of candidates.  Faliszewski et al.\ asked two natural questions: 
the {\sc Possible President} problem asks whether a given party can nominate some candidate~$c$ in such a way that $c$ can become the winner of the election for \emph{some} nominations from the remaining parties,
and the {\sc Necessary  President} problem
asks whether some nominee~$c$ of the given party will be the winner irrespective of all other nominations. 

In this paper we study
the \PP\ problem in election systems that use some \emph{Condorcet-consistent} voting rule.
A candidate that defeats all other candidates in a pairwise comparison is called the {\it Condorcet winner},
and voting rules that always choose the Condorcet winner if it exists are said to be {\it Condorcet-consistent}. 
We focus our investigations on the Condorcet-consistent voting rules Maximin and Copeland$^\alpha$ for~$\alpha\in[0,1]$.

Condorcet-consistent voting rules are widely used in sports competitions, but have also been applied by e.g., the Pirate Party in Sweden and in Germany, 
and various organizations such as Debian, Gentoo Foundation, and Wikimedia~\cite{Schulze}. 
Foley~\cite{Foley} has suggested to use Condorcet-consistent round-robin voting for primary elections, followed by a general election between the top two candidates, to overcome 
the serious flaw in US presidential elections that the winner may not be the preferred candidate of the majority of voters.

\smallskip
\noindent
{\bf Related Work.}
Faliszewski et al.~\cite{faliszewski2016} dealt only with Plurality, arguably the simplest type of elections, %
and derived several $\mathsf{NP}$-hardness results for both \textsc{Possible} and
\textsc{Necessary President}. 
They also showed that when preferences are single-peaked, {\sc Necessary  President} can be decided in polynomial time. 
By contrast, they found that \PP\ remains $\NP$-complete even for single-peaked preferences, though becomes tractable if the candidates of each party appear consecutively on the societal axis.

Misra \cite{misra2019parameterized} extended the results of Faliszewski et al. by studying the parameterized complexity of {\sc Possible President}.  She examined the number~$t$ of parties as the parameter, and proved that the problem is $\mathsf{W}[2]$-hard and in $\mathsf{XP}$, and 
becomes fixed-parameter tractable (FPT) with parameter~$t$ when restricted to 1D-Euclidean preference profiles. 
She also strengthened previous results by proving that {\sc Possible President} for Plurality is $\NP$-hard even if all parties have size at most two, and preferences are both single-peaked and single-crossing; hence, the problem is $\paraNP$-hard when parameterized by the size of the largest party even on a very restricted domain.
Misra asked whether \PP{} for Plurality is FPT when parameterized by the number of voters;  this questions has been answered negatively by Schlotter et al. in \cite{schlotter2024}.

\PP\ for voting rules other than Plurality have been first treated by Cechl{\'a}rov{\'a} et al.~\cite{cechlarova2023candidates}. Namely, they dealt with    positional scoring rules ($\ell$-Approval, $\ell$-Veto, and Borda)  and  with  Condorcet-consistent rules Copeland, Llull, and Maximin. They proved that {\sc Possible President} is $\mathsf{NP}$-hard for each of these rules, even when the maximum size of a party is two; 
they left the complexity for Copeland$^\alpha$ with $\alpha\in (0,1)$ open. %

Schlotter et al. \cite{schlotter2024} obtained results concerning the parameterized complexity of \PP\ for several classes of positional scoring rules, including Borda and nontrivial generalizations  of $\ell$-Approval and $\ell$-Veto. The parameters they examined were the number of voters, the number of voter types, the number of parties, the maximum size of a party and their combinations.

Further results concerning elections with parties that  nominate candidates have been provided by Lisowski \cite{lisowski2022strategic}. He considered directed graphs called tournaments whose vertices correspond to candidates, and each directed arc~$(a,b)$ indicates that a majority of voters prefers candidate~$a$ to candidate~$b$.
Among others, Lisowski 
observed that it is possible to check whether a given party has a possible Condorcet winner in polynomial time, while the problems to decide whether a Nash equilibrium exists in the associated game and whether a given party has a Condorcet winner in some Nash equilibrium are $\NP$-complete. %

For a broader view on research related to candidate nomination, we refer the reader to 
\ifshort
the full version of our paper\new{~\cite{SC24-fullversion}}.
\else
Appendix~\ref{app:relatedwork}.
\fi

\smallskip
\noindent
{\bf Our contribution.}
We perform a detailed multivariate complexity analysis using the framework of parameterized complexity for \PP\ for two types of Condorcet-consistent voting rules: Copeland$^\alpha$  for every $\alpha\in [0,1]$ and Maximin. 
Our parameters are the following: the number~$n$ of voters, the number~$t$ of parties, and the size~$\maxsize$ of the largest party.
Table~\ref{tab:summary} summarizes our results.

\begin{table}[tb!]
    \centering
    \begin{tabular}{@{\hspace{2pt}}l@{\hspace{6pt}}l@{\hspace{6pt}}l@{\hspace{6pt}}l@{\hspace{6pt}}l@{\hspace{2pt}}}
    \multirow{2}{*}{\# voters}
        & %
        Copeland$^\alpha$ & Copeland$^\alpha$ &
        Maximin & Maximin \\
        & classical & param.~$t$  
         & classical & param.~$t$ \\
        \hline \\[-8pt]
        \multirow{2}{*}{$n=2$ \hspace{16pt} \bigg\{} & $\alpha=1$: $\mathsf{P}$  & \na & \multirow{2}{*}{$\hspace{-9pt}\bigg\} \hspace{5pt} \mathsf{P}$} & \multirow{2}{*}{\na} \\
        & $\alpha<1$: $\NP$-c  
        & open 
        & & \\
        & \mytabref{T\ref{thm:Llull-2voters},T\ref{thm:Copeland-v=s=2}}
        & & \mytabref{T\ref{thm:Maximin_V2}} \\
        \grayline \\[-8pt]
        $n=3$ & $\NP$-c & open & $\mathsf{P}$ & -- \\
        & \mytabref{T\ref{thm:3voters-NPh}}
        & 
        & \mytabref{T\ref{thm:Maximin_V3}}\\
        \grayline \\[-8pt]
        $n\geq 4$ even & $\NP$-c 
        & 
        $\mathsf{W}[1]$-h, $\XP$ 
        & $\NP$-c & $\mathsf{FPT}$ \\
        & \mytabref{T\ref{thm:Copeland-v=s=2},T\ref{thm:4voters-s=2}}
        & \mytabref{T\ref{thm:4voters}} 
        & \mytabref{T\ref{thm:Maximin_hard}}
        & \mytabref{T\ref{thm:MM-fpt}}
        \\
        \grayline \\[-8pt]
        $n\geq 5$ odd & $\NP$-c  & 
        $\mathsf{W}[1]$-h,  $\XP$
        & $\NP$-c 
        & $\mathsf{FPT}$ \\
        & \mytabref{T\ref{thm:3voters-NPh}} 
        & \mytabref{T\ref{thm:5voters-t}} & \mytabref{T\ref{thm:Maximin_hard}}
        & \mytabref{T\ref{thm:MM-fpt}}
    \end{tabular}
    \caption{Summary of our results
    on the classical and parameterized complexity of the 
    \PP\ problem. %
    Our parameterized results for $\NP$-hard cases 
    consider parameter~$t$, the number of parties. ``$\NP$-c'' and ``$\mathsf{W}[1]$-h'' stand for 
    ``$\NP$-complete'' and ``$\mathsf{W}[1]$-hard'', respectively. 
    All our $\NP$-completeness results  hold for maximum party size~$\maxsize=2$.}
    \label{tab:summary}
\end{table}
For Copeland$^\alpha$ elections with $\alpha \in [0,1]$, we obtain a complete computational dichotomy for the complexity of \PP{} as a function of the number of voters: 
\begin{theorem}
\label{thm:fixed-voters}
Let $n$ be a fixed integer and $\alpha \in [0,1]$. 
Then \PP\ for Copeland$^\alpha$ is $\NP$-complete when restricted to instances with~$n$ voters and maximum party size $\maxsize=2$, if
\begin{compactitem}%
    \item[(a)] $n \geq 3$, or 
    \item[(b)] $n = 2$ and $\alpha<1$.
\end{compactitem}
By contrast, \PP\ for Copeland$^1$ (i.e., Llull) restricted to instances with~$2$ voters is polynomial-time solvable.
\end{theorem}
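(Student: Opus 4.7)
The plan is to split Theorem \ref{thm:fixed-voters} into its tractable part (Llull with two voters) and its three $\NP$-hardness regimes (two voters with $\alpha<1$, odd $n\ge 3$, and even $n\ge 4$), and handle them separately.

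For the polynomial case I would exploit the combined effect of $n=2$ and $\alpha=1$. With two voters every pairwise comparison is either decided $2$--$0$ or tied $1$--$1$, and under Llull both outcomes contribute one point. Hence the Llull score of a candidate $c$ in a reduced election with nominee set $S$ equals $|S|-1$ minus the number of $d\in S$ that both voters rank above~$c$. Call this binary relation \emph{unanimous dominance}; being the intersection of two linear orders, it is a partial order, so by transitivity any $d\in S$ that unanimously dominates $c$ has strictly fewer unanimous dominators in $S$ than $c$ does. It follows that $c$ is a Llull winner of $S$ iff no selected nominee unanimously dominates $c$. This yields an immediate algorithm: enumerate each candidate $c^*$ of the distinguished party and check whether each remaining party contains some candidate not unanimously dominating $c^*$.

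For the $\NP$-hardness parts I would reduce from a combinatorially rich source (the partitioned pattern-matching and matching problems \PPP{} and \MMC{} that the paper uses elsewhere are natural candidates), with parties of size $\maxsize=2$ encoding binary decisions. The structural lever differs per regime: for odd $n\ge 3$ no ties ever occur, so Copeland$^\alpha$ scores are independent of $\alpha$ and one can work with three preference orders (extended to larger odd $n$ by adding canceling voter pairs), using majority patterns over pairs to enforce the combinatorial constraints of the source instance; for even $n\ge 4$ ties reappear and must be tracked explicitly, but the two extra voters provide additional encoding bandwidth so that the two-voter ideas (see below) can be reused with a suitable scaffold.

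The main obstacle I expect is the case of two voters with $\alpha<1$. The tractability argument above shows that any reduction must genuinely exploit $\alpha<1$, since at $\alpha=1$ the problem becomes polynomial: the combinatorial gap between a tied score $\alpha$ and a winning score $1$ is the sole quantitative lever available. With only two linear orders and party gadgets of size two, the design space is exceptionally tight, and the real work is to pick a source problem whose structure embeds faithfully into the binary win/tie pattern of two preferences so that the target candidate attains the required Copeland$^\alpha$ threshold precisely when the source instance admits a solution.
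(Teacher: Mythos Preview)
Your decomposition into four regimes (Llull with two voters; $\alpha<1$ with two voters; odd $n\ge 3$; even $n\ge 4$, with the extension to larger $n$ via canceling voter pairs) matches the paper's structure exactly. However, there are two issues worth flagging.

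\textbf{The polynomial case contains a genuine error.} Your characterization ``$c$ is a Llull winner of $S$ iff no selected nominee unanimously dominates $c$'' is correct only for \emph{co}-winners, but \PP{} is defined in the unique-winner model. Your algorithm---check for each remaining party whether it contains a candidate \emph{not unanimously dominating} $c^*$---would accept instances where $c^*$ ties for first. Concretely, if some party $P'$ contains only candidates that are tied with $c^*$ (each voter ranks them on opposite sides of $c^*$), then $c^*$ and $P'$'s nominee both attain score $t-1$, so $c^*$ is not the unique winner, yet your test succeeds. The paper's argument goes the other way: $p$ is the unique winner iff $p$ \emph{defeats} every other nominee. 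The nontrivial direction uses transitivity to show that if $p$ is unique but fails to defeat some $c_0$, then every nominee other than $p$ must be defeated by someone, and following the ``defeated by'' relation yields a cycle, contradicting that unanimous dominance is a partial order. The correct algorithmic check is therefore whether each remaining party contains a candidate that $c^*$ unanimously dominates.

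\textbf{The $\NP$-hardness sketches are too optimistic, especially for three voters.} For $n=2$ with $\alpha<1$ and for $n=4$ with $\alpha=1$, the paper reduces from \textsc{3-Coloring} (not \PPP{} or \MMC{}), and the gadgetry fits the two-preference-order constraint relatively naturally. The three-voter case, however, is substantially harder than your outline suggests: ``using majority patterns over pairs to enforce the combinatorial constraints'' glosses over the main obstacle, which is that with three voters and no ties the Copeland scores are determined by a tournament, and controlling which candidate has the maximum outdegree under party-by-party nominations is delicate. The paper's proof relies on the construction of \emph{flat} elections---three-voter elections on $3^q$ candidates where every candidate has the same Copeland score---and then encodes an instance of a restricted variant of \MMC{} via whether the nominations can keep the election restricted to ``relevant'' candidates flat. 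This flatness machinery is the key technical idea, and without it (or something comparable) the three-voter reduction does not go through.
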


It transpires that \PP\ for Copeland$^\alpha$ for arbitrary~$\alpha \in [0,1]$ is $\paraNP$-hard when parameterized by $n+\maxsize$, i.e., both the number of voters and the maximum party size.
We strengthen this result by showing that parameterizing the problem with $t$, the number of parties, the problem remains $\mathsf{W}[1]$-hard even if the number of voters is a constant $n \geq 4$.
Since the problem is easily solvable in $\maxsize^t n^{O(1)}$ time ~\cite{schlotter2024}, this yields a classification of all parameterized ($\NP$-hard)  variants of \PP\ for Copeland$^\alpha$, with parameters chosen arbitrarily from $\{n,\maxsize,t\}$, as either (i) FPT, (ii) $\W1$-hard and in $\XP$, or (iii) $\paraNP$-hard. 

We remark that despite this complete classification, we leave the computational complexity open for certain constant values of~$n$; namely, we could not resolve the parameterized complexity of \PP\ for Copeland$^\alpha$ for parameter $t$ when $n \in \{2,3\}$.

For the Maximin voting rule,  %
we again obtain a complete dichotomy with respect to the number~$n$ of voters:
\begin{theorem}
\label{thm:Maximin-dichotomy}
    Let $n$ be a fixed  integer. Then  \PP\ for Maximin voting rule for instances with $n$ voters is 
    \begin{itemize}\itemsep0pt
        \item[(a)] polynomial-time solvable if $n\le 3$;
        \item[(b)] $\NP$-complete if $n\ge 4$, even for maximum party size $\maxsize=2$.
    \end{itemize}
\end{theorem}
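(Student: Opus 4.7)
The plan is to establish each part of the dichotomy separately.

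For part (a), the polynomial cases with $n \leq 3$, the strategy is to design polynomial-time algorithms tailored to the constant number of voters. The key observation is that when $n$ is a constant, every candidate's Maximin score in any reduced election lies in the fixed set $\{0, 1, \ldots, n\}$. Thus one can iterate over every candidate $c^\star$ of the designated party and over every target score $s^\star \in \{0, \ldots, n\}$, and test whether nominations from the remaining parties can be chosen so that (i) $N(c^\star, c_i) \geq s^\star$ for every other nominee $c_i$, and (ii) every other nominee $c_i$ has some ``attacker'' among the nominees (i.e., some $c_j$ with $N(c_i, c_j) \leq s^\star$). For $n \leq 2$ the structure is simple enough that each feasibility check either decouples across parties (when $s^\star$ is close to the extremes, since $c^\star$ itself serves as the attacker for every other nominee) or collapses to a dominance-based selection problem with only a handful of possible attackers. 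For $n = 3$ the intermediate values of $s^\star$ require coordinated choices across parties, but the number of distinct ``roles'' a party can play (supplying a loser, supplying an attacker, or both) remains constant, so the combined feasibility reduces to a polynomial number of bipartite matching or 2-SAT instances.

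For part (b), the $\NP$-hardness claim for every fixed $n \geq 4$ with maximum party size $\maxsize = 2$, the strategy is a polynomial reduction from a standard $\NP$-hard problem such as \textsc{Clique} or a variant of \textsc{3-SAT}. Membership in $\NP$ is immediate since the nominations form a short certificate. I expect the main obstacle to be designing gadget preference profiles using only a fixed budget of $n$ voters and two candidates per party, such that the Maximin aggregator---which retains only each candidate's worst pairwise comparison---faithfully encodes the combinatorial structure of the source problem. The cases of even and odd $n$ typically must be handled separately, since parity controls whether pairwise comparisons can tie, and additional padding candidates are usually needed to enforce strict majorities where the parity of $n$ would otherwise allow ties; one should also ensure the reduction is uniform in $n$ so that the hardness propagates to every fixed $n \geq 4$ rather than just a specific value.
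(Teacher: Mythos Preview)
Your plan for part~(b) is essentially what the paper does: it reduces from \textsc{3-SAT}, gives two separate constructions for $n=4$ and $n=5$ (to handle the parity issue you anticipate), and then extends to all $n\geq 4$ by padding with a pair of voters having opposite preferences. So that half is fine.

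For part~(a), however, your proposal has a real gap. You claim that for $n=3$ the ``intermediate'' target score $s^\star$ leads to a feasibility condition that ``reduces to a polynomial number of bipartite matching or 2-SAT instances,'' but you give no argument for this, and it is not at all obvious. The constraint in question is roughly: choose one nominee $c_i$ per party so that every $c_i$ is strongly defeated by some other chosen nominee. This is a covering-type constraint where the attackers are themselves being chosen, and with parties of unbounded size it does not encode as 2-SAT (there are no binary variables) nor as a matching in any evident way. You would need a concrete reduction, and none is sketched.

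The paper avoids this difficulty entirely by proving a structural characterization instead of case-splitting on $s^\star$. For $n=2$ it observes that the ``defeats'' relation is transitive; for $n=3$ it observes that the ``strongly defeats'' relation (all three voters agree) is transitive. In either case, transitivity rules out cycles, which forces the conclusion that $p$ can be the unique Maximin winner if and only if $p$ defeats every other nominee. Once you have that, the algorithm is trivial: for each $p\in P^\star$, check whether every other party contains a candidate that $p$ defeats. Your framework would discover this only as the degenerate case where $p$ itself is the attacker for everyone; what you are missing is the argument that the other cases are \emph{vacuous}, and that argument is exactly the transitivity observation.
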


Contrasting the Copeland$^\alpha$ voting rule, we show that \PP{} for Maximin is FPT when parameterized by the number~$t$ of parties. This tractability result is achieved by a reduction of our problem to a special polynomial-time solvable version of the \textsc{Partitioned Subdigraph Isomorphism} problem.
Thus, our results for Maximin yield a complete classification of all parameterized ($\NP$-hard) variants of the problem as either FPT or $\paraNP$-hard. 
In fact, we settle the complexity of the problem for each variant where $n$ and~$\maxsize$ both may be restricted to \emph{arbitrary} fixed integers as either $\NP$-complete and FPT with~$t$, or polynomial-time solvable.

\smallskip 
\noindent
{\bf Techniques.}
Our algorithmic results use standard techniques from parameterized complexity and algorithmic graph theory. Our hardness results rely on intricate constructions, and we also develop the technique of using so-called \emph{flat elections} with three voters and $m$ candidates where each candidate defeats exactly $\frac{m-1}{2}$  candidates; this method might be of independent interest.

\ifshort
Results marked by ($\star$) have their proofs in the full version~\cite{SC24-fullversion}.
\else 
Results marked with~($\star$)
have their proofs deferred to the appendices; %
the symbol~$\star$ functions as a link to the proof.
\fi

\section{Preliminaries}
\label{sec:prelim}
We use the notation $[i]=\{1,2,\dots,i\}$ for each positive integer $i$.

We assume familiarity with  basic graph theory and
the framework of parameterized complexity.
Besides providing all necessary definitions in 
\ifshort the full version~\new{\cite{SC24-fullversion}},
\else Appendix~\ref{app:graphs},
\fi 
we refer the reader to the books~\cite{downey-fellows-FPC-book,cygan2015parameterized} for an introduction into parameterized complexity, and to the books~\cite{diestel-book,BangjensenGutin-book}
for the standard notation on graphs we adopt.

\smallskip
\noindent
{\bf Elections.}
An election $\E=(C,V,\{\succ_v\}_{v \in V})$ consists of a finite set~$C$ of candidates, a finite set~$V$ of voters, and the preferences of voters over candidates. 
We assume that the preferences of each voter~$v$ are represented by a strict linear order~$\succ_v$  over~$C$, where $c\succ_v c'$ means that voter $v$ \emph{prefers} candidate $c$ to candidate $c'$.
We denote the set of all elections over a set~$C$ of candidates by ${\mathbb E}_C$.
A \emph{voting rule} $f:{\mathbb E}_C\to 2^C$ chooses a set of \emph{winners} of the election. 

Our model also includes a partition ${\mathcal P}=\{P_1,\dots,P_t\}$ of the set~$C$ of candidates; each set $P_j$ is interpreted as a \emph{party} that has to decide whom among its potential candidates to nominate for the election.

Formally, a \emph{reduced election}  arises after all parties have nominated a unique candidate, leading to a reduced candidate set $C'\subseteq C$ such that $|C'\cap P_j|=1$ for each~$j\in [t]$. We can then define the reduced election as ${\mathcal E}_{C'}=(C',V,\{\succ'_v\}_{v \in V})$ where the preference relation~$\succ'_v$ of each voter $v\in V$ is the restriction of her original preference relation~$\succ_v$ %
to~$C'$. 

Now we formulate our problem of interest, as introduced in \cite{faliszewski2016}.

\begin{center}
\fbox{ 
\parbox{0.96\columnwidth}{
Problem {\bf \PP} for voting rule~$f$: \\
{\bf Input:} An election $\E=(V,C,\{\succ_v\}_{v \in V})$ with a set~$V$ of voters and a set~$C$ of candidates, a partition~${\mathcal P}$ of~$C$ into parties, and a distinguished party $P^\star \in \P$. \\
{\bf Question:} Is there a candidate~$p \in P^\star$ such that for some nominations of other parties leading to a reduced candidate set~$C'$, $p$ is the \emph{unique} winner of the reduced election ${\mathcal E}_{C'}$ according to~$f$?
}}
\end{center}

Notice that we consider the \emph{unique winner model}, i.e., we aim for a set of nominations that yield  $f(\E_{C'})=\{p\}$ for the candidate~$p$ nominated by the distinguished party in the reduced election $\E_{C'}$.

\smallskip
\noindent
{\bf Voting rules.}
In this paper we shall concentrate on two Condorcet-consistent rules. 
 For two candidates $a,b\in C$, we let
 $N_\EE(a,b)$ denote the number of voters who prefer candidate~$a$ to candidate~$b$ in election~$\EE$; we shall omit the subscript when $\EE$ is clear from the context. If $N_\EE(a,b)>N_\EE(b,a)$ we say that candidate~$a$ {\it defeats} candidate~$b$ in~$\E$; if  $N_\EE(a,b)=N_\EE(b,a)$ and $a \ne b$, then candidates~$a$ and~$b$ are {\it tied} in $\E$. %
 The \emph{Condorcet winner} %
 is the candidate that defeats all other candidates; %
 a voting rule is \emph{Condorcet consistent}, if it always selects the Condorcet winner whenever it exists.

The \emph{Copeland$^\alpha$ voting rule}, as defined by Faliszewski et al. \cite{faliszewski2009llull},
assigns to some candidate~$a$ a score of~$1$ for each candidate defeated by~$a$, and a score of~$\alpha$ for each candidate tied with~$a$, so 
the Copeland$^\alpha$-score of~$a$  is 
$\Cpl(a)=\sum_{b \textrm{ defeated by }a} 1 + \sum_{b \textrm{ tied with }a} \alpha$ in an election~$\EE$.
The winners of~$\EE$ %
are all candidates with the maximum score. The voting rule obtained for $\alpha=1$ is called the \emph{Llull} rule, and %
we refer to the case $\alpha=0$ as the \emph{Copeland} rule.

In the \emph{Maximin voting rule}, the Maximin-score of candidate~$a$ in election~$\EE$ over candidate set~$C$ is $\MM_\EE(a)=\min_{b\in C \setminus \{ a\}}N_\EE(a,b)$, and the winners  of~$\EE$ are again the candidates with maximum score.

Notice that Copeland$^\alpha$ as well as Maximin winners can be computed efficiently for any election. Therefore it is easy to see that \PP{} for these voting rules belongs to the class $\mathsf{NP}$.

\section{Copeland$^\alpha$ voting rule}
\label{sec:Copeland}

If there are only two voters, in Section~\ref{sec:2voters} we show that \PP{} for Copeland$^\alpha$ is polynomially solvable if $\alpha=1$,  %
(Theorem~\ref{thm:Llull-2voters}), but $\mathsf{NP}$-hard if $\alpha<1$  (Theorem~\ref{thm:Copeland-v=s=2}). 

For three voters, we show in Section~\ref{sec:3voters} that Copeland is $\mathsf{NP}$-complete (Theorem~\ref{thm:3voters-NPh}). If the number of voters is odd, then no ties occur, and hence this result holds for Copeland$^\alpha$ for any $\alpha\in[0,1]$. The proof is quite involved, and provides a reduction from a special variant of the $\mathsf{NP}$-complete problem \textsc{Maximum Matching with Couples}, using the crucial notion of {\it flat} elections. 

The case with four or more voters is treated shortly in Section~\ref{sec:min4voters}.

We address the complexity of \PP\ for Copeland   when parameterized by the number of parties in Section~\ref{sec:parameter-t}.
\subsection{Two Voters}
\label{sec:2voters}
Let us first show that \PP\ for the Llull voting rule is easy if there are only two voters. 
The key observation that yields tractability is that the ``defeat'' relation is transitive for two voters: 

\begin{observation}
    \label{obs:2voters_transitivity}
    In an election with two voters, if candidate $a$ defeats candidate~$b$, and $b$ defeats candidate~$c$, then $a$ also defeats $c$. 
\end{observation}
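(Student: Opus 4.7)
My plan is to argue that, with only two voters, the ``defeats'' relation coincides with unanimous preference, and then transitivity follows directly from the transitivity of each voter's linear preference order.

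More concretely, I would first observe that for any two distinct candidates $x,y$ in an election with two voters we have $N(x,y) + N(y,x) = 2$. Hence $N(x,y) > N(y,x)$ forces $N(x,y) = 2$ and $N(y,x) = 0$, i.e., both voters rank $x$ above $y$ in their respective preference orders. So ``$x$ defeats $y$'' is equivalent to ``both voters prefer $x$ to $y$.''

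Now suppose $a$ defeats $b$ and $b$ defeats $c$. By the equivalence just noted, each voter $v \in V$ satisfies both $a \succ_v b$ and $b \succ_v c$. Since $\succ_v$ is a linear (hence transitive) order on~$C$, we conclude $a \succ_v c$ for each of the two voters. Applying the equivalence in the other direction, this yields $N(a,c) = 2 > 0 = N(c,a)$, so $a$ defeats $c$, as required.

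There is essentially no obstacle here: the statement is genuinely a two-line observation, and the only ``step'' is the rephrasing of ``defeat'' as ``unanimity'' in the two-voter setting. I would keep the write-up short, since the point of stating it as an observation is to exploit it later (for instance, to deduce that the defeat relation is a strict partial order on~$C$, which will presumably drive the polynomial-time algorithm for Llull with two voters established in Theorem~\ref{thm:Llull-2voters}).
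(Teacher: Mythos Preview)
Your proof is correct and follows essentially the same approach as the paper: both argue that with two voters ``defeats'' means unanimous preference, and then use the transitivity of each voter's linear order. The paper's version is simply more terse, omitting the explicit $N(x,y)+N(y,x)=2$ computation.
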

\begin{proof}
Since $b$ must follow~$a$ in the preference lists of both voters, and $c$ must follow $b$ in both lists too, we immediately know that $c$ follows $a$ in the preference list of both voters.     
\end{proof} 

\begin{restatable}[\linkproof{app:proof-of-Llulltwovoters}]{theorem}{thmLlulltwovoters}
\label{thm:Llull-2voters}
    \PP{} for the Llull voting rule is polynomial-time solvable if there are only two voters.
\end{restatable}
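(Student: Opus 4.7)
The plan is to establish the following characterization: a candidate $p \in P^\star$ can be made the unique Llull winner of some reduced election if and only if every other party $P_j$ contains at least one candidate that $p$ defeats (i.e., both voters rank $p$ above it). Since this condition can be tested for each $p \in P^\star$ by scanning every other party in polynomial time, proving this equivalence immediately yields the theorem.

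The ``if'' direction is straightforward: I would nominate from each $P_j \ne P^\star$ some candidate defeated by $p$. Then $p$ defeats every other nominee in the reduced election, so its Llull-score is $m-1$ (where $m$ is the number of nominees), while every other nominee has score at most $m-2$, making $p$ the unique winner.

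For the ``only if'' direction, suppose some choice of nominees makes $p$ the unique Llull winner. First, if any nominee $b$ defeated $p$, then by Observation~\ref{obs:2voters_transitivity} $b$ would defeat every candidate that $p$ defeats as well, so $b$'s Llull-score would strictly exceed that of $p$---a contradiction. Hence $p$ is never defeated. It remains to rule out ties. I would classify each candidate $a \ne p$ as type~D if $p$ defeats $a$, or else split the tie-candidates into type~A (with $p \succ_1 a$ but $a \succ_2 p$) and type~B (the symmetric case). The main obstacle is a structural lemma stating that a type-A nominee can be defeated only by another type-A nominee, and symmetrically for type~B. This reduces to a short case analysis on the two voters' linear orders: for instance, a type-D candidate $d$ satisfies $p \succ_2 d$, whereas a type-A candidate $a$ satisfies $a \succ_2 p$, giving $a \succ_2 d$ and hence $d$ cannot defeat $a$; the other cross-type cases are similar.

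Given this lemma, the defeats relation restricted to the set of type-A nominees is a strict partial order (transitivity comes again from Observation~\ref{obs:2voters_transitivity}), so any nonempty set of type-A nominees admits a maximal element that is undefeated in the reduced election, attaining Llull-score $m-1$ and thus tying with $p$---contradicting $p$'s unique victory. The same argument applies to type~B. Hence every nominee other than $p$ must be of type~D, which forces every party $P_j \ne P^\star$ to have a candidate defeated by $p$, completing the characterization.
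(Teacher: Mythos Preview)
Your proposal is correct and follows essentially the same approach as the paper: both establish that $p$ is the unique Llull winner if and only if $p$ defeats every other nominee, using the transitivity of the defeat relation (Observation~\ref{obs:2voters_transitivity}) to show that otherwise some non-$p$ nominee would be undefeated. The only difference is organizational: the paper argues directly that a chain of successive defeaters of any undefeated $c_0$ must cycle, whereas you first rule out any defeater of $p$ and then partition the tied candidates into types~A and~B to locate a maximal (undefeated) element within each type---both are just ways of extracting a source vertex from an acyclic defeat relation.
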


\begin{proofsketch}
Using Observation~\ref{obs:2voters_transitivity}, one can prove that according to the Llull rule, some candidate~$p$ %
can be a unique winner in a reduced election~$\EE$ 
if and only if $p$ defeats every other nominee: intuitively, assuming that $p$ is the unique winner because every nominee other than~$p$ is defeated by some other nominee, we arrive at a cycle in the defeat relation, a contradiction showing that $p$ can become the unique winner only by defeating all nominees. 

This 
offers a quadratic-time algorithm to solve \PP{} for Llull voting with two voters: we check for each candidate~$p$ in the distinguished party~$P$ whether $p$ can become the unique winner, which happens if and only if every other party contains 
at least one candidate that is 
defeated by~$p$.
\end{proofsketch}

By contrast, a reduction from  \textsc{3-Coloring} shows that Copeland$^\alpha$ for $\alpha<1$ is intractable already for two voters, even if $\maxsize=2$.

\begin{restatable}[\linkproof{sec:proof-of-CopelandtwoNPc}]{theorem}{thmCopelandtwovoters}
    \label{thm:Copeland-v=s=2}
    For each $\alpha \in [0,1)$, \PP{} for Copeland$^\alpha$ is $\NP$-complete even for instances with two voters and maximum party size $\maxsize=2$.
\end{restatable}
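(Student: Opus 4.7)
The plan is a polynomial-time reduction from \textsc{3-Coloring}. Given an undirected graph $G=(U,E)$, I would build an instance of \PP{} for Copeland$^\alpha$ with exactly two voters and maximum party size $\maxsize=2$, containing a distinguished candidate $p\in P^\star$, such that $p$ can be made the unique winner if and only if $G$ admits a proper $3$-coloring.

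The structural fact I exploit is that with two voters the defeats relation coincides with $2$D dominance on the rank-pairs of candidates, while the tied pairs correspond to incomparabilities in that order. This geometric view gives full control over which candidates $p$ defeats and which $p$ ties with: I would place $p$ high in both voters' rankings so that it defeats most candidates, and place selected candidates on an antichain so that they tie with $p$. The only way a competitor $q$ can threaten $p$ is for $p$ not to defeat every nominee, since then ties accumulate and the gap $\Cpl(p)-\Cpl(q)$ shrinks from~$1$ down to $1-\alpha$ per tied pair, which is strictly less than~$1$ precisely because $\alpha<1$. For each vertex $v\in U$, I would introduce a color-selection gadget consisting of three parties $P_v^1,P_v^2,P_v^3$ of size $2$, each $P_v^c=\{a_v^c,b_v^c\}$, where $a_v^c$ is tied with $p$ (interpreted as ``$v$ receives color $c$'') and $b_v^c$ is defeated by $p$ (interpreted as ``$v$ does not receive color~$c$''). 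Auxiliary competitor candidates ensure that $p$'s score is matched by a rival whenever a vertex has zero or more than one $a_v^c$ nominated. For each edge $\{u,v\}\in E$ and each color $c$, an edge gadget introduces a further competitor whose score matches $p$'s exactly when both $a_u^c$ and $a_v^c$ are simultaneously nominated, thereby forbidding monochromatic edges.

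The main obstacle will be to calibrate the Copeland$^\alpha$ scores uniformly across $\alpha\in[0,1)$: since each tied pair contributes $\alpha$ to both involved candidates, naive score gaps depend on $\alpha$ in a way that need not behave uniformly as $\alpha$ varies. The standard remedy is to arrange that in every admissible nomination, $p$ and each potentially threatening rival $q$ participate in the same number of tied pairs, so that the decisive comparison reduces to an integer gap in the number of defeats; this can be achieved by inserting padding candidates that every nominee ties with equally, or by carefully restricting the defeats relation inside each gadget. Once the gadgets are set up in this $\alpha$-independent fashion, the correctness of the reduction follows from a straightforward case analysis: a proper $3$-coloring yields a nomination where $p$ accumulates just the right number of defeats to strictly beat every rival, while any faulty coloring (a vertex with the wrong number of active colors, or a monochromatic edge) activates one of the gadgets' competitors and prevents $p$ from being the unique winner. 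The boundary case $\alpha=0$ recovers the hardness result of Cechl\'arov\'a et al.~\cite{cechlarova2023candidates} as a sanity check.
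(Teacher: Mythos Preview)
Your high-level plan matches the paper's: reduce from \textsc{3-Coloring} and exploit that with two voters the defeat relation is 2D dominance (hence a partial order). Where you diverge is in the gadget design, and this is precisely where the obstacle you flag arises.

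In your sketch, $p$ ties with each $a_v^c$ and defeats each $b_v^c$, so $\Cpl(p)$ depends on the nomination. You then need auxiliary rivals whose scores track $p$'s, and you propose ``padding'' so that $p$ and every threatening rival see the same number of ties. You have not shown how to realize this concretely, and because the number of ties $p$ participates in varies with the nomination, forcing it to coincide with each rival's tie count \emph{across all nominations} is nontrivial; this is a genuine gap in the proposal as written.

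The paper sidesteps the issue with one clean move: $p$ is made to tie with \emph{every} candidate except three dedicated singletons $p'_1,p'_2,p'_3$ that it defeats, so $\Cpl(p)=3+\alpha(t-4)$ is fixed independently of all nominations. For any nominee $x$ that is never defeated and defeats exactly $k$ others, $\Cpl(x)=k+\alpha(t-1-k)$, and $\Cpl(x)<\Cpl(p)$ iff $k\le 2$; this uses only $\alpha<1$. The construction then supplies (i) a checker $q_i$ per vertex that defeats exactly the nominated $\ol u_i^c$'s, forcing at least one color per vertex, and (ii) per color $c$ a block $A^c$ whose members each defeat two fixed dummies $d,d'$ and otherwise can only defeat one another; by transitivity (Observation~\ref{obs:2voters_transitivity}), any internal defeat yields an undefeated maximal element with $k\ge 3$. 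Edge parties $P_e^c=\{e_i^c,e_j^c\}$ are placed inside $A^c$ so that avoiding all internal defeats is possible exactly when no edge is monochromatic in color~$c$. Thus the $\alpha$-independence you worry about comes for free---every decisive comparison reduces to an integer defeat count---and no padding is needed.
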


\subsection{Three Voters}
\label{sec:3voters}

As already mentioned, for an odd number of voters no two candidates can be tied, so the value of~$\alpha$ is irrelevant, and the Copeland and Llull voting rules coincide. %
We show the following.

\begin{restatable}[\linkproof{sec:proof-of-3votersNPh}]{theorem}{thmthreevotersNPh}    
    \label{thm:3voters-NPh}
    \PP\ for Copeland is $\NP$-complete even for three voters and maximum party size $\maxsize=2$.
\end{restatable}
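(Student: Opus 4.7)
\begin{proofsketch}
Membership in $\NP$ is immediate, since given any set of nominations one can compute the Copeland winners of the reduced election in polynomial time. The plan is to establish $\NP$-hardness by a polynomial reduction from a suitable restricted variant of \MMC{}, in which the input is a graph with some vertices paired into \emph{couples} that must be matched together, and we ask for a matching of prescribed size respecting these couple constraints.

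My central technical device will be the construction of \emph{flat} 3-voter elections on $m$ candidates (with $m$ odd), in which every candidate defeats exactly $(m-1)/2$ others, so that every candidate has the same Copeland score $(m-1)/2$. I first plan to show that flat elections on any odd~$m$ can be realized as the majority tournament of three linear orders, and, crucially, that one can do this while fixing an arbitrary prescribed pairwise outcome on a small designated set of \emph{gadget} candidates; the remaining ``filler'' candidates then absorb the other tournament edges so as to preserve flatness globally.

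In the reduction itself I would encode each potential edge of the \MMC{} instance by a party of size two whose candidates represent the choices ``this edge is taken'' and ``this edge is not taken''. The distinguished party will contain the target candidate $p^\star$ together with a single companion, and the flat baseline will be perturbed so that, before any nominations are resolved, $p^\star$ has Copeland score exactly one short of what is needed to be the sole winner. Each edge-nomination will then contribute a controlled amount to the score of $p^\star$ and to the scores of a small set of potential threat candidates, with couple constraints synchronized through the gadget tournament, so that $p^\star$ can be pushed strictly ahead of every other nominee if and only if the selected nominations encode a valid matching of the required size in the source \MMC{} instance.

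The main obstacle is the realizability constraint: with only three voters, the majority tournament on $m$ candidates is extremely restricted, and many tournaments one might naturally use as gadgets simply cannot be produced by any profile of three linear orders. This is precisely where flat elections do the heavy lifting, since they provide a combinatorial library of 3-voter-realizable building blocks that can be composed while simultaneously preserving the local defeat relations encoding \MMC{} and the global regularity needed to keep non-gadget candidates tied at the baseline score, allowing me to certify the correctness of the reduction by a direct double-implication argument.
\end{proofsketch}
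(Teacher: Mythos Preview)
Your proposal identifies the correct source problem (\MMC) and the correct central device (flat 3-voter elections), so the broad strategy matches the paper. However, the mechanism you describe diverges from the paper's in a way that leaves real gaps.

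First, in the paper the distinguished candidate's Copeland score is \emph{fixed}, independent of all nominations: $a_1$ is a singleton party placed so that it always scores exactly $\frac{9\rho+1}{2}$. The entire question becomes whether the remaining \emph{relevant} nominees can all be held at score $\frac{9\rho-1}{2}$, i.e., whether the reduced relevant election is flat. Your description inverts this: you have nominations ``contribute a controlled amount to the score of $p^\star$'', pushing it up from one-short to winning. With only three voters, engineering nominations that each add a precise increment to a fixed candidate's score while simultaneously controlling threats is much harder than the flatness criterion the paper actually uses, and you give no indication of how you would achieve it.

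Second, your claim that one can build flat 3-voter elections ``while fixing an arbitrary prescribed pairwise outcome on a small designated set of gadget candidates'' is far stronger than what the paper establishes or needs, and is not obviously true. The paper does not do this. Instead it builds flat elections $\EE_q$ recursively only on $3^q$ candidates, then substitutes each candidate of $\EE_q$ by a carefully designed \emph{team list} (different for rooms, singles, couples, copies of couples, and dummies). The delicate part is designing these team lists so that flatness of the reduced team sub-election forces exactly the local constraints of a complete matching (e.g., Lemma~\ref{lem:acyclic-flat} and Claims~\ref{clm:flatness}--\ref{clm:nomination-yields-matching}). Your ``edge taken / not taken'' encoding is too coarse: the paper needs parties $P^r_p$ for each incident pair $(p,r)$, separate room parties, and duplicated couples $\hat{c}$ to make the degree-counting work out.

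In short, the skeleton is right, but the flesh you describe does not match a workable construction; the paper's reduction is substantially more intricate than your sketch suggests, and the specific mechanism you outline would need to be replaced by the flatness-based one.
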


To show Theorem~\ref{thm:3voters-NPh}, we will reduce from a special case of an $\mathsf{NP}$-complete problem \textsc{Maximum Matching with Couples}, described in Section~\ref{sec:MC-variant}.
We present the most important ingredient of the reduction, the notion of \emph{flat elections} %
in Section~\ref{sec:flat-elections}, and 
follow with a sketch of the reduction in Section~\ref{sec:3votersreduction}.

\subsubsection{A special case of \textsc{Maximum Matching with Couples}}
\label{sec:MC-variant}

We are going to reduce from a variant of the following problem called \textsc{Maximum Matching with Couples}. This problem involves a set~$S$ of \emph{singles}, a set $C$ of \emph{couples}\footnote{Although in the context of elections $C$ denotes the set of candidates, this slight clash of notation will not cause any confusion.} and a set~$R$ of rooms. Each room has capacity~2, meaning that it can accommodate either a couple or at most two singles. 
Moreover, we need to match everyone to a room that they find \emph{acceptable}, where acceptability is described by a bipartite graph~$G=((S \cup C)\uplus R,E)$.
A \emph{complete matching}\footnote{Note that we do not require $M$ to be a matching in the classic graph-theoretic sense, since we allow edges in~$M$ to share endpoints in $R$.} in~$G$ is then an edge set~$M \subseteq E$ that contains exactly one edge incident to each vertex in~$S\cup C$ and satisfies $|M(r) \cap S|+2|M(r) \cap C| \leq 2$ for each room~$r \in R$, 
where ${M(r)=\{x \in S \cup C \colon rx \in M\}}$ denotes the set of singles and couples \emph{matched to}~$r$.
It is known that the following problem is $\NP$-complete~\cite{biro-mcdermid-matching-sizes1,biro-mcdermid-matching-sizes}.

\begin{center}
\fbox{ 
\parbox{0.96\columnwidth}{
Problem {\bf \textsc{Maximum Matching with Couples}}: \\
{\bf Input:} Sets~$S$, $C$, and~$R$ of singles, couples and rooms, respectively, and a bipartite graph $G=((S \cup C)\uplus R,E)$. \\
{\bf Question:} Is there a complete matching in~$G$?
}}
\end{center}

We shall use a special case of \textsc{Maximum Matching with Couples} as specified in Theorem~\ref{thm:MMC-deg3}. The proof of its $\mathsf{NP}$-completeness relies on a series of simple reduction rules that transform any instance into an equivalent one, achieving the properties required in Theorem~\ref{thm:MMC-deg3} step by step; see 
\ifshort the full version~\new{\cite{SC24-fullversion}}.
\else Appendix~\ref{app:MC-variant}.
\fi

\begin{restatable}[\linkproof{app:MC-variant}]{theorem}{mmcvariant}
\label{thm:MMC-deg3}    
\textsc{Maximum Matching with Couples} remains $\NP$-complete even if $|R|=|S|/2+|C|$, and 
\begin{itemize}
    \item each vertex in the input graph has degree~$2$ or~$3$, and
    \item each room adjacent to both singles and couples is adjacent to exactly two singles and one couple.
\end{itemize} 
\end{restatable}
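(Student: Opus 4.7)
The plan is to reduce from the standard version of \MMC, whose $\NP$-completeness is established in~\cite{biro-mcdermid-matching-sizes1,biro-mcdermid-matching-sizes}, to the restricted variant demanded by Theorem~\ref{thm:MMC-deg3}. I would describe a sequence of polynomial-time reduction rules, each of which modifies the instance $(S,C,R,G)$ locally in order to enforce one of the three required properties, while preserving both solvability and the properties already established by earlier rules.

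First I would tackle the degree constraint. Isolated vertices in $S\cup C$ give trivial no-instances, whereas a vertex of degree~$1$ can be eliminated by contracting it together with its forced partner and cleaning up the neighbourhood. For any remaining vertex of degree at least~$4$, I would split it into several copies of bounded degree and link them by a small \emph{consistency gadget} that forces all copies to be matched coherently in any complete matching; such splitting constructions are standard in parameterised hardness reductions. The gadget would be chosen so that the auxiliary rooms it introduces are either pure-single, pure-couple, or of the canonical $(2,1)$-mixed shape (two single-neighbours and one couple-neighbour), so that the subsequent normalisation steps remain well-defined.

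Next I would normalise the mixed rooms. For every room $r$ whose single-neighbourhood $X\subseteq S$ and couple-neighbourhood $Y\subseteq C$ violate $(|X|,|Y|)=(2,1)$, I would replace $r$ by a gadget built from a constant number of pure-single rooms, pure-couple rooms, canonical $(2,1)$-mixed rooms, and auxiliary singles and couples, so that the completions of this gadget in any complete matching of the new instance are in bijection with the two admissible matching patterns at~$r$: ``$r$ takes two singles from~$X$'' versus ``$r$ takes one couple from~$Y$''. The main obstacle of the whole proof lies in this step, since the gadget must simultaneously (i) emulate this disjunctive behaviour, (ii) respect the degree bounds $2$ and~$3$ for every vertex it introduces, and (iii) contain only mixed rooms of the canonical $(2,1)$-shape; once a suitable gadget is exhibited, correctness follows by a direct translation of matchings in either direction.

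Finally, the condition $|R|=|S|/2+|C|$ is equivalent to $2|R|=|S|+2|C|$, which states that every room must be completely filled in any complete matching. Any remaining surplus or deficit between total room capacity and total demand can be cancelled by repeatedly attaching small padding blocks made of auxiliary singles, couples, and rooms, arranged in a fixed pattern that already satisfies the previous two properties by construction. Performing this balancing step last ensures that all three conditions of Theorem~\ref{thm:MMC-deg3} hold simultaneously on the final instance, and the equivalence between the transformed and the original instance then follows by composing the routine matching-translation arguments supplied with each rule.
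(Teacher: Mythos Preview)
Your overall plan—reduce from unrestricted \MMC\ via a sequence of local, equivalence-preserving reduction rules—is exactly what the paper does, so the high-level strategy is sound. One ordering difference: the paper first pads with dummy singles to enforce $|R|=|S|/2+|C|$ and then designs every subsequent rule so that this invariant is maintained; doing the balance last, as you propose, also works provided your padding blocks respect the degree and mixed-room constraints already established.

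Two concrete issues remain. First, your treatment of degree-$1$ vertices is not correct for singles: if a single~$s$ has a unique acceptable room~$r$, then $s$ is indeed forced into~$r$, but $r$ still needs a second occupant, so there is no ``forced partner'' to contract against and no clean way to delete~$s$ without changing the model. The paper handles this in the opposite direction, \emph{raising} the degree of such vertices to~$2$ by attaching a small absorbing gadget. Second, and more importantly, you defer the only genuinely nontrivial step—constructing the mixed-room normalisation gadget—to ``once a suitable gadget is exhibited''. In the paper this is the rule that handles a high-degree room adjacent to singles and at most one couple, and the gadget is neither small nor obvious: it introduces eight new rooms, six new singles, and five new couples, wired so that the intended disjunctive behaviour is enforced while every new vertex has degree~$2$ or~$3$ and every new mixed room is of the $(2,1)$ shape. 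Without an explicit construction here your proposal is a plan rather than a proof; the difficulty you flag in point~(ii) is real, and it is precisely where the paper spends most of its effort.
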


\subsubsection{Flat elections with three voters}
\label{sec:flat-elections}

Working towards a reduction from \textsc{Maximum Matching with Couples} to \PP\ for Copeland voting with three voters, we next present a construction for an election~$\EE_q$ over $3^q$ candidates for some $q \in \mathbb{N}^+$ and with three voters, in which every candidate defeats the same number of candidates. We will call such elections \emph{flat}, i.e., an election is flat if all candidates receive the same Copeland-score. An election with $m$ candidates where $m$ is odd can only be flat if each candidate defeats~$\frac{m-1}{2}$ other candidates. To see this, consider the tournament underlying the election: clearly, we can only have all out-degrees equal to some~$d$, 
if the tournament has $m \cdot d$ arcs, i.e., $m \cdot d=\binom{m}{2}$.

We propose a recursive construction for $\EE_q$ in Definition~\ref{def:EEq}. 

\begin{definition}
\label{def:EEq}
Let the candidate set of~$\EE_1$ be $C_1=\{\abase,\bbase,\cbase\}$, and 
let $w$, $w'$, and $w''$ be our three voters with preferences 
\begin{align*}
    w &: \abase,\bbase,\cbase; \\
    w' &: \cbase,\abase,\bbase; \\
    w'' &: \bbase,\cbase,\abase.
\end{align*}
Notice that $\abase$ defeats~$\bbase$, $\bbase$ defeats~$\cbase$, and $\cbase$ defeats~$\abase$. Therefore, each of the candidates obtains a Copeland$^\alpha$-score of~$1$.

For $q \geq 1$, we are going to reuse the candidate set $C_q$ of the election~$\EE_q$ to construct the candidate set~$C_{q+1}$ of~$\EE_{q+1}$ 
by introducing three copies of each candidate~$c \in C_q$ which will be denoted by~$\copa{c},\copb{c}$, and $\copc{c}$. 
Let $L_q(w)$, $L_q(w')$, and $L_q(w'')$ denote the preference lists of voters~$w$, $w'$, and~$w''$, respectively, in~$\EE_q$. 
For a list~$L$ of candidates from~$C_q$ and each $h \in [3]$, let us denote by~$\copgen{L}{h}$ the list obtained from~$L$ by replacing each candidate~$c$ in~$L$ by its {$h$-th} copy~$\copgen{c}{h}$. Using this notation, we are now ready to define the preferences of the voters in~$\EE_{q+1}$:
\begin{equation}
\label{prefs-of-EEq}
\begin{array}{rlll}
    w: & \copgen{L_q(w)}{1}, & \copgen{L_q(w)}{2}, & \copgen{L_q(w)}{3}; \\
    w': & \copgen{L_q(w')}{3}, & \copgen{L_q(w')}{1}, & \copgen{L_q(w')}{2}; \\
    w'': & \copgen{L_q(w'')}{2}, & \copgen{L_q(w'')}{3}, & \copgen{L_q(w'')}{1}.    
\end{array}
\end{equation}
\end{definition}
\smallskip

Notice that each candidate in~$\EE_{q+1}$ is then of the form 
\begin{equation}
\label{eqn:candidate-form}
    (\copgen{(\copgen{(\copgen{x}{h_1})}{h_2})  \cdots \odot h_{q-1})}{h_q}
\end{equation} for some $x \in \{\abase,\bbase,\cbase\}$ and indices $h_1,h_2,\dots,h_q \in [3]$.

We will say that two candidates~$c$ and~$c'$ in~$\EE_{q+1}$, having the form~(\ref{eqn:candidate-form}) for $x$ and~$x'$ in~$\{\abase,\bbase,\cbase\}$ and indices $h_1,\dots,h_q$ and~$h'_1,\dots,h'_q$ from~$[3]$, respectively, 
\emph{belong to the same group at level~$q'$} 
for some $q' \in [q]$, if~$h_i=h'_i$ for each $q' \leq i \leq q$;
accordingly,  we define a \emph{$q'$-level group} as a maximal set of candidates that belong to the same group at level~$q'$.
Notice that restricting the election $\EE_{q+1}$ to a $q'$-level group, we obtain a copy of the election~$\EE_{q'}$.

In particular, restricting $\EE_{q+1}$ to a $q$-level group, that is, to the set of candidates contained in $\copgen{L_q(w)}{h}$ for some $h \in [3]$, 
we obtain a copy of the election~$\EE_q$.
Observing the preferences of the voters as given in~(\ref{prefs-of-EEq}), 
the following facts are immediate:
\begin{observation}
\label{obs:EEq-properties}
    For each $q \in \mathbb{N}$, the election~$\EE_{q+1}$ has the following properties: 
    \begin{itemize}
        \item each candidate in~$\copgen{L_q(w)}{1}$ defeats all candidates in~$\copgen{L_q(w)}{2}$, and is defeated by all candidates in~$\copgen{L_q(w)}{3}$;
        \item each candidate in~$\copgen{L_q(w)}{2}$ defeats all candidates in~$\copgen{L_q(w)}{3}$, and is defeated by all candidates in~$\copgen{L_q(w)}{1}$;
        \item each candidate in~$\copgen{L_q(w)}{3}$ defeats all candidates in~$\copgen{L_q(w)}{1}$, and is defeated by all candidates in~$\copgen{L_q(w)}{2}$.
    \end{itemize}
    Furthermore, for each $h \in [3]$ and each 
    $c,c' \in C_q$,
    candidate~$\copgen{c}{h}$ defeats candidate~$\copgen{c'}{h}$ in~$\EE_{q+1}$ if and only if~$c$ defeats~$c'$ in~$\EE_q$. 
\end{observation}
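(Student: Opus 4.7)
The plan is to prove both parts by direct inspection of the preference lists specified in~(\ref{prefs-of-EEq}). The key observation is that in $\EE_{q+1}$ each voter's list is a concatenation of three blocks, one per copy-index, and the relative order of these three blocks follows a cyclic pattern: voter~$w$ orders them $1,2,3$; voter~$w'$ orders them $3,1,2$; and voter~$w''$ orders them $2,3,1$. Every pairwise comparison in~$\EE_{q+1}$ reduces to reading off this block structure.

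To establish the three bulleted inter-block claims, I would fix arbitrary candidates $x \in \copgen{L_q(w)}{h}$ and $y \in \copgen{L_q(w)}{h'}$ with $h \neq h'$ in~$[3]$, and determine, for each of the three voters, whether $x$ precedes $y$ by checking which of the two blocks appears first in that voter's list. For instance, when $h = 1$ and $h' = 2$, voter~$w$ has the $1$-block before the $2$-block, voter~$w'$ also has the $1$-block before the $2$-block (order $3,1,2$), while voter~$w''$ has the $2$-block first (order $2,3,1$); thus two of three voters rank $x$ above $y$, so $x$ defeats~$y$. The analogous verification for the pairs $(2,3)$ and $(3,1)$ yields the cyclic dominance $1 \to 2 \to 3 \to 1$ claimed in the three bullets.

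For the intra-block statement, I would use the defining substitution rule: $\copgen{L_q(u)}{h}$ is obtained from $L_q(u)$ by replacing each candidate $c$ with $\copgen{c}{h}$, for every voter $u \in \{w, w', w''\}$. Consequently, for any $c, c' \in C_q$, the relative order of $\copgen{c}{h}$ and $\copgen{c'}{h}$ within the $h$-block of voter~$u$'s list in~$\EE_{q+1}$ coincides with the relative order of $c$ and $c'$ in voter~$u$'s list in~$\EE_q$. Since $\copgen{c}{h}$ and $\copgen{c'}{h}$ lie in the same block, their pairwise comparison is determined entirely by this intra-block order for each voter, so the voters preferring $\copgen{c}{h}$ over $\copgen{c'}{h}$ in~$\EE_{q+1}$ are exactly the voters preferring $c$ over $c'$ in~$\EE_q$, yielding the stated equivalence.

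There is no substantial obstacle here: the observation is essentially a routine bookkeeping argument on the cyclic block pattern of~(\ref{prefs-of-EEq}). The only care needed is to keep the three cyclic rotations straight when handling the inter-block cases; a small table with rows indexed by voters and columns indexed by block-index pairs makes all six sub-statements immediate.
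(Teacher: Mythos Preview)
Your proposal is correct and matches the paper's approach: the paper treats this observation as immediate from inspecting the preference lists in~(\ref{prefs-of-EEq}), and your argument spells out exactly that inspection—checking the cyclic block order $1,2,3$ / $3,1,2$ / $2,3,1$ for the inter-block defeats and using the order-preserving substitution $c\mapsto\copgen{c}{h}$ for the intra-block equivalence.
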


By Observation~\ref{obs:EEq-properties}, $\EE_{q+1}$ for some $q \in \mathbb{N}^+$ is flat if and only $\EE_q$ is flat. Since $\EE_1$ is flat, we  obtain the following consequence.

\begin{observation}
\label{obs:EEq-number-of-defeats}
    For each integer $q \geq 1$, every candidate in~$C_q$ defeats
    $\frac{|C_q|-1}{2}=\frac{(3^q-1)}{2}$ candidates in~$\EE_q$, so $\EE_q$ is a flat election with $3^q$ candidates.
    Moreover, no candidate is preferred to another candidate by all three voters in~$\EE_q$. 
\end{observation}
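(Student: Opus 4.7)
The plan is to prove both claims by induction on~$q$, using Observation~\ref{obs:EEq-properties} as the engine for the inductive step. The base case $q=1$ is immediate from Definition~\ref{def:EEq}: the three candidates $\abase,\bbase,\cbase$ each defeat exactly one other candidate, so $|C_1|=3$ and each defeats $(3-1)/2=1$ candidate; and a quick scan of the three lists shows that for every ordered pair $(x,y)$ with $x\neq y$, at least one voter prefers~$y$ to~$x$.

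For the inductive step on flatness, I would fix a candidate $\copgen{c}{h}$ in~$\EE_{q+1}$ with $c \in C_q$ and $h \in [3]$, and split the remaining candidates of~$\EE_{q+1}$ according to which of the three $q$-level groups they lie in. Candidates in the same $q$-level group as $\copgen{c}{h}$ are of the form $\copgen{c'}{h}$; by the second part of Observation~\ref{obs:EEq-properties}, $\copgen{c}{h}$ defeats exactly those for which $c$ defeats $c'$ in~$\EE_q$, and the inductive hypothesis gives $(3^q-1)/2$ such candidates. By the first three bullets of Observation~\ref{obs:EEq-properties}, $\copgen{c}{h}$ defeats every one of the $3^q$ candidates of one of the other two $q$-level groups and is defeated by every one of the $3^q$ candidates of the remaining group. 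Summing yields $(3^q-1)/2 + 3^q = (3^{q+1}-1)/2$ defeats, and since $|C_{q+1}|=3|C_q|=3^{q+1}$, this confirms flatness of~$\EE_{q+1}$.

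For the second claim, the inductive step splits into two cases for a pair of candidates $c_1,c_2$ in~$\EE_{q+1}$. If $c_1$ and $c_2$ lie in the same $q$-level group, the restriction of~$\EE_{q+1}$ to that group is isomorphic to~$\EE_q$, so the inductive hypothesis applies directly. If they lie in different $q$-level groups, indexed by $h_1 \neq h_2$ in~$[3]$, then the claim follows purely from the block structure in~(\ref{prefs-of-EEq}): a direct inspection of the three possible unordered pairs $\{h_1,h_2\} \subseteq [3]$ shows that for each of them, one of the three voters $w,w',w''$ prefers the block indexed by~$h_1$ to the block indexed by~$h_2$ while another voter does the opposite, so no candidate in one block can be unanimously preferred to any candidate in the other.

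The proof is essentially a double induction of bookkeeping type, and the only place where care is needed is verifying the block-versus-block case for the second claim, since the three voters permute the block order $1,2,3$ differently (cyclic shifts by $0$, $-1$, $+1$), and one must confirm that every pair of distinct indices receives at least one "for" and one "against" voter. I expect no deeper obstacle, as all heavy lifting has already been done in stating Observation~\ref{obs:EEq-properties}.
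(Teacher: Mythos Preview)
Your proposal is correct and takes essentially the same approach as the paper: induction on~$q$ with Observation~\ref{obs:EEq-properties} driving the inductive step. The paper's own argument is even terser (a single sentence before the statement noting that flatness of~$\EE_{q+1}$ is equivalent to flatness of~$\EE_q$ via Observation~\ref{obs:EEq-properties}), and your write-up is the natural unpacking of that remark, including a careful treatment of the second claim that the paper leaves implicit.
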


\subsubsection{Reduction for Theorem~\ref{thm:3voters-NPh}}
\label{sec:3votersreduction}
    We present a reduction from the variant of \textsc{Maximum Matching with Couples} %
    described in Theorem~\ref{thm:MMC-deg3}. Let $G=((S\cup C)\uplus R,E)$ be the input graph.
    
\medskip
\noindent
{\bf High-level description.}
The main ideas of the reduction are the way flat elections are used. First, we need a large enough set~$T$ of \emph{teams}, over which we have a flat election involving three voters. Each team in~$T$ will be either a single, a couple or its copy, a room, or a dummy, and will be
eventually be replaced by a set of candidates, depending on its type. We will also add a set~$A \cup B$ of \emph{simple candidates}, and we fix a simple candidate~$a_1$ to form the distinguished singleton party in the constructed instance. 

Based on our flat election over~$T$, we do three modifications: (i) we insert the simple candidates, (ii) we substitute each team in~$T$ with the corresponding candidate lists, and (iii) we move our distinguished candidate~$a_1$ ``to the left'' so that it gains one extra point in the election. 
The crux of the reduction is to ensure that in the obtained election, $a_1$ can become the unique winner if and only if restricting the election to the \emph{relevant} candidates (those that are associated with some team in~$T$) yields a flat election. By carefully designing the candidate set corresponding to each team and their ordering within the preference lists (used during the substitution step), we will ensure that the relevant candidates can form a flat election if and only if our instance of \MMC\ admits a complete matching.
    
\smallskip
\noindent
{\bf Candidates and parties.}
First, we define a party~$P_r=\{r,r'\}$ for each room $r \in R$.  
Next, for each vertex~$p \in S \cup C$ adjacent to~$r$ in~$G$, we introduce a party~$P_p^r=\{p^r,\neg p^r\}$.
Additionally, we define two candidates~$p$ and~$p'$ for each $p \in S \cup C$; if $p$ has degree~$3$ in~$G$, then these two candidates form a single party, and if $p$ has degree~$2$ in~$G$, then $p$ and $p'$ both form their own singleton party.
This way, we associate  four parties with each single~$s \in S$:
\begin{itemize}
    \item $P_s^{r_1}, P_s^{r_2}, \{s\}, \{s'\}$ if $N_G(s)=\{r_1,r_2\}$,
    \item $P_s^{r_1}, P_s^{r_2}, P_s^{r_3}, \{s,s'\}$ if $N_G(s)=\{r_1,r_2,r_3\}$
\end{itemize}
where $N_G(v)$ denotes the neighborhood of a vertex~$v$ in~$G$.
Similarly, there are four parties associated with each couple~$c \in C$:
\begin{itemize}
    \item $P_c^{r_1}, P_c^{r_2}, \{c\}, \{c'\}$ if $N_G(c)=\{r_1,r_2\}$;
    \item $P_c^{r_1}, P_c^{r_2}, P_c^{r_3}, \{c,c'\}$  if $N_G(c)=\{r_1,r_2,r_3\}$.
\end{itemize}

Next, for each couple~$c \in C$, we introduce a copy~$\hat{x}$ for each candidate~$x$ associated with the couple~$c$, yielding a candidate set $\{\hat{c}^r,\neg \hat{c}^r: r \in N_G(c)\} \cup \{\hat{c},\hat{c}'\}$.  
We write $\hat{C} =\{\hat{c}:c \in C\}$. 
With each~$\hat{c} \in \hat{C}$ we associate the parties $P^r_{\hat{c}}=\{\hat{x}:x \in P^r_c\}$ for each $r \in N_G(c)$, plus one or two parties formed by $\hat{c}$ and~$\hat{c}'$, depending on whether $c$ has degree two or three in~$G$, so that altogether there are four parties associated with~$\hat{c}$ (as for~$c$).
For practical purposes, we extend the notation by setting $N_G(\hat{c}):=N_G(c)$ for each $c \in C$.

    We also fix an arbitrary set~$D$ of \emph{dummy teams} %
    whose size is the smallest non-negative integer for which 
    $\rho:=|R|+|S|+2|C|+|D|=3^q$ for some~$q \in \mathbb{N}^+$, and
    introduce candidates~$a_d,b_d,c_d$ for each $d \in D$, each of them forming its own singleton party. 
    Since for each positive integer~$n$ there is a power of~$3$ in the interval $[n,3n]$ (this is easily shown by induction on~$n$),
    we get $|D|\leq 2(|S|+2|C|+|R|)$.

    We call the candidates defined so far  \emph{relevant candidates}, and denote their set as~$X$.
    We further define \emph{simple candidates} $a_1,\dots,a_{3\rho}$ and $b_1,\dots,b_{3\rho}$, each of them forming its own singleton party. %
    We will write $A=\{a_1,\dots,a_{3\rho}\}$ %
    and
    $B=\{b_1,\dots,b_{3\rho}\}$.
    Notice that the maximum party size is $\maxsize=2$ in~$G$,
    and the number of parties is $|R|+4|S|+8|C|+3|D|+6\rho=9\rho$.
    Our distinguished party is $\{a_1\}$.

\smallskip
\noindent
{\bf Teams and their lists.}
We refer to the set 
$T=S \cup C \cup \hat{C} \cup R \cup D$
as the set of \emph{teams}.
To define the preferences of our voters, $v$, $v'$, and~$v''$, we introduce for each team $t \in T$ three lists that we call \emph{team lists} and denote by $F_t,F'_t$, and~$F''_t$.
Each of these three lists contains the same set candidates that we associate with~$t$.
        
Let us start with defining the team lists for each room team~$r \in R$. 
First, if room~$r$ is adjacent to singles~$s_1$ and~$s_2$ and a couple~$c$ in~$G$, then we set its team list according to~(\ref{eqn:def-Fr-mixed})
below (to the left). Second, if room~$r$ is adjacent to two or three singles, $s_1, s_2$ and possibly~$s_3$, and no couples in~$G$, then we set its team list as in~(\ref{eqn:def-Fr-singles}).

\begin{minipage}{.47\columnwidth}
    \begin{equation}
    \begin{tabular}{@{}r@{\hspace{2pt}}l@{}}    
    \multicolumn{2}{@{}l@{}}{if $N_G(r)=\{s_1,s_2,c\}$:} \\[3pt]
    $F_r$ & $=s_1^r, r, s_2^r, c^r, r', \hat{c}^r$; \\
    $F'_r$ & $= s_2^r, s_1^r, \hat{c}^r, c^r, r, r'$; \\    
    $F''_r$ & $= r, r', s_2^r, s_1^r, \hat{c}^r, c^r$;    
    \end{tabular}
    \label{eqn:def-Fr-mixed}
    \end{equation}
\end{minipage}
\hfill
\begin{minipage}{.47\columnwidth}
    \begin{equation}
    \begin{tabular}{@{}r@{\hspace{2pt}}l@{}}    
    \multicolumn{2}{@{}l@{}}{if $N_G(r)=\{s_1,s_2,(s_3)\}$:} \\[3pt]
    $F_r$ & $= s_1^r, r, s_2^r, r' , (s^r_3)$; \\
    $F'_r$ & $= (s^r_3), s_2^r, s_1^r,  r, r'$; \\    
    $F''_r$ & $= r, r', (s_3^r), s_2^r, s_1^r$.
    \end{tabular}
    \label{eqn:def-Fr-singles}
    \end{equation}
\end{minipage}    
\vspace{2pt}     

    Third, if room~$r$ is adjacent to two or three couples, $c_1, c_2$ and possibly~$c_3$, and no singles in~$G$, then we set
    \begin{equation}
    \begin{array}{@{}r@{\hspace{2pt}}l@{}}    
    F_r &= c^r_1,c^r_2,(c^r_3), r, r', \hat{c}^r_1, \hat{c}^r_2, (\hat{c}^r_3); \\
    F'_r &= \hat{c}^r_1, c^r_1, \hat{c}^r_2, c^r_2, (\hat{c}^r_3), (c^r_3), r, r'; \\    
    F''_r &= r, r', (\hat{c}^r_3), (c^r_3), \hat{c}^r_2, c^r_2, \hat{c}^r_1, c^r_1.
    \end{array}
    \label{eqn:def-Fr-couples}
    \end{equation}
   In lists~(\ref{eqn:def-Fr-singles}) and~(\ref{eqn:def-Fr-couples}), candidates written within parenthesis may not exist, in which case they should be ignored.
    
    Next, consider a team $p \in S \cup C \cup \hat{C}$. 
    We set the team lists for~$p$ depending on the degree of~$p$ in~$G$:
    
    \begin{minipage}{.5\columnwidth}
    \begin{equation}
    \begin{tabular}{@{}r@{\hspace{2pt}}l@{}}    
    \multicolumn{2}{@{}l@{}}{if $N_G(p)=\{r_1,r_2,r_3\}$:} 
     \\[3pt]
    $F_p$ & $=p, p', \neg  p^{r_1}, \neg  p^{r_2}, \neg  p^{r_3}$; \\
    $F'_p$ & $= \neg  p^{r_3}, p', \neg  p^{r_2}, p, \neg  p^{r_1}$; \\    
    $F''_p$ & $ = \neg  p^{r_1},\neg  p^{r_2}, \neg  p^{r_3}, p,p'$. \\[2pt]
    \end{tabular}
    \label{eqn:def-Fp-3choices}
    \end{equation}
    \end{minipage}
    \hfill%
    \begin{minipage}{.4\columnwidth}
    \begin{equation}\hspace{-8pt}
    \begin{tabular}{@{}r@{\hspace{2pt}}l@{}}    
    \multicolumn{2}{@{}l}{if $N_G(p)=\{r_1,r_2\}$:} 
     \\[3pt]
    $F_p$ & $= p, p', \neg  p^{r_1}, \neg  p^{r_2}$; \\
    $F'_p$ & $= \neg  p^{r_1}, \neg  p^{r_2}, p, p'$; \\    
    $F''_p$ & $=  p', \neg  p^{r_1}, \neg  p^{r_2}, p$. \\[2pt]
    \end{tabular}
    \label{eqn:def-Fp-2choices}
    \end{equation}
    \end{minipage}
    \vspace{2pt}

    Finally, for each dummy team $d \in D$, we let 
    \begin{equation}
    \begin{array}{@{}r@{\hspace{2pt}}l@{}}    
    F_d &= a_d,b_d,c_d; \\
    F'_d &= c_d,a_d,b_d; \\
    F''_d &= b_d,c_d,a_d.
    \end{array}
    \label{eqn:def-Fd}
    \end{equation}    
    This finishes the definition of the team lists~$F_t,F'_t$, and~$F''_t$ for each team $t \in T$.
Observe that the sets of candidates in~$F_t$ taken over each $t \in T$ form a partition of the set~$X$ of relevant candidates. 
    
\medskip
\noindent
{\bf Preferences.}
    In what follows, it will be convenient to fix an ordering over~$T$ and use the notation $T=\{t_1,\dots,t_\rho\}$. 
    Consider the election $\EE_{q}$ introduced in Definition~\ref{def:EEq} over $3^q=\rho$ candidates.
    Since $|T|=3^q$, there exists a bijection~$\psi:C_q \rightarrow T$ between candidates of~$\EE_q$ and teams in~$T$ that maps $t_i \in T$ to the $i$-th candidate in the preference list of~$w$.
    Using the alias $\tt_i=\psi^{-1}(t_i)$ for each team~$t_i \in T$, the election~$\EE_q$ can be written as 
    \begin{equation}
        \label{rewritten-EEq}
    \begin{array}{rl}
        \text{election }\EE_q: \quad 
        w :& \tt_1, \tt_2, \dots,\tt_{\rho}; \\
        w' :& \tt_{\pi(1)},\tt_{\pi(2)},\dots,\tt_{\pi(\rho)}; \\
        w'' :& \tt_{\pii(1)},\tt_{\pii(1)},\dots,\tt_{\pii(\rho)}
    \end{array}
    \end{equation}
    for some permutations~$\pi$ and~$\pii$ over~$[\rho]$.

    We define the permutations~$\varphi$ and~$\varphii$ over~$[3\rho]$ based on the election~$\EE_{q+1}$ similarly: after renaming the candidates in the %
    election~$\EE_{q+1}$ as $\cc_1,\cc_2,\dots,\cc_{3\rho}$, the election~$\EE_{q+1}$ can be re-written as
    \begin{align*}
        \text{election }\EE_{q+1}: \quad 
        w &: \cc_1, \cc_2, \dots,\cc_{3\rho}; \\
        w' &: \cc_{\varphi(1)},\cc_{\varphi(2)},\dots, \cc_{\varphi(3\rho)}; \\
        w'' &: \cc_{\varphii(1)},\cc_{\varphii(1)},\dots, \cc_{\varphii(3\rho)};
    \end{align*}
    for some permutations~$\varphi$ and~$\varphii$ over~$[3\rho]$.

    Now we are ready to give the preferences of voters~$v$, $v'$, and~$v''$:
    \begin{align*}    
    v &: 
    F_{t_1},F_{t_2},
    \dots, F_{t_\rho}, 
    b_1,b_2,
    \dots,b_{3\rho-1}, a_1, b_{3\rho}, 
    a_2,a_3,\dots,a_{3\rho}; \\
    v' &: 
    a_{\varphi(1)},%
    \dots, a_{\varphi(3\rho)}, 
    F'_{t_{\pi(1)}},%
    \dots, F'_{t_{\pi(\rho)}}, 
    b_{\varphi(1)},%
    \dots,b_{\varphi(3\rho)}; \\
    v'' &: 
    b_{\varphii(1)},%
    \dots,b_{\varphii(3\rho)},
    a_{\varphii(1)},%
    \dots, a_{\varphii(3\rho)}, 
    F''_{t_{\pii(1)}},%
    \dots, F''_{t_{\pii(\rho)}}.
    \end{align*}
Hence, the constructed election is obtained from~(\ref{rewritten-EEq}) by substituting each candidate corresponding to some team~$t_i$ with the team lists for~$t_i$, and adding the simple candidates in the appropriate manner.

    It is clear that the construction takes polynomial time, since building the elections~$\EE_q$ and~$\EE_{q+1}$ takes time polynomial in~$3^q$, and $q=\log_3(|T|)$.
    Therefore, it remains to prove its correctness. 

\medskip
\noindent
{\bf Connection between solvability of the input instance and flatness of the election restricted to relevant candidates.}
To prove the correctness of our reduction, let us start with the following facts, which rely on Observation~\ref{obs:EEq-number-of-defeats}.\begin{itemize}
    \item Candidate~$a_1$ defeats~$b_{3\rho}$, all relevant candidates, and no candidate in~$B \setminus \{b_{3\rho}\}$; additionally $a_1$ also defeats exactly half of the candidates in~$A \setminus \{a_1\}$. Therefore, 
    \begin{equation}
    \label{eq:score-of-a1}
    \Cplodd(a_1)=1 +3\rho+\frac{3\rho-1}{2}=\frac{9\rho+1}{2} 
    \end{equation}
    because $X$ is the union of~$3\rho$ parties.\footnote{Henceforth, we write $\Cplodd(x)$ for the score of candidate~$x$ whenever $\alpha$ is irrelevant.}
    \item Candidate~$a_i \in A \setminus \{a_1\}$ defeats all relevant candidates, no candidates in~$B$,  and %
    half of the candidates in~$A \setminus \{a_i\}$. Thus, 
    $\Cplodd(a_i)=3\rho+\frac{3\rho-1}{2}=\frac{9\rho-1}{2}$.
    \item Candidate~$b_{3\rho}$ defeats all candidates in~$A$ except for~$a_1$, no relevant candidates,  and
    half of the candidates in~$B \setminus \{b_{3\rho}\}$. Thus, 
    $\Cplodd(b_{3\rho})=|A|-1+\frac{3\rho-1}{2}=\frac{9\rho-3}{2}$.
    \item Candidate~$b_i \in B \setminus \{b_{3\rho}\}$ defeats all candidates in~$A$, no relevant candidates,  and %
    half of the candidates in~$B \setminus \{b_i\}$. Thus, 
    $\Cplodd(b_i)=|A|+\frac{3\rho-1}{2}=\frac{9\rho-1}{2}$.
    \item Relevant candidates defeat all candidates in~$B$ and no candidates in~$A$.
\end{itemize}
    Due to (\ref{eq:score-of-a1}), the above observations imply that $a_1$ is the unique winner in of the election~$\EE$ resulting from some nominations if and only if all relevant nominees defeat at most $\frac{3\rho-1}{2}$ relevant nominees, i.e., if the election~$\EE$ restricted to relevant nominees is flat.
In other words, our instance of \PP\ is a ``yes''-instance if and only if there exist nominations of all parties corresponding to singles, couples, and rooms for which the \emph{relevant election}~$\EE_X$ below reduced to these nominations becomes flat: 
    \begin{equation}
    \label{prefs-EEX}
    \begin{array}{r@{\hspace{4pt}}l}
    \text{relevant election $\EE_X$: }\quad v &: 
    F_{t_1},F_{t_2},\dots, F_{t_\rho}; \\
    v' &: 
    F'_{t_{\pi(1)}},F'_{t_{\pi(2)}},\dots, F'_{t_{\pi(\rho)}}; \\
    v'' &: 
    F''_{t_{\pii(1)}},F''_{t_{\pii(2)}},\dots, F''_{t_{\pii(\rho)}}.
    \end{array}
    \end{equation}

Recall that~$\psi:C_q \rightarrow T$ is a bijection between candidates of~$\EE_q$ and teams in~$T$. 
Comparing~(\ref{rewritten-EEq}) and~(\ref{prefs-EEX}), we get the following. 
\begin{observation}
\label{obs:bijection-property}
    Replacing each candidate~$z$ in the preference lists of~$w$, $w'$, and~$w''$ in the election~$\EE_q$ with~$F_{\psi(z)}$, $F'_{\psi(x)}$, and~$F''_{\psi(x)}$, respectively, yields exactly the preference lists of voters~$v$, $v'$, and~$v''$ in the relevant election~$\EE_X$.
\end{observation}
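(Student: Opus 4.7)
The observation is essentially a book-keeping check, so my plan is to verify it by direct comparison of the preference lists written down in~(\ref{rewritten-EEq}) and in~(\ref{prefs-EEX}), keyed by the bijection $\psi$.

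First, I would recall the definition of $\psi$: by construction, $\psi$ sends the $i$-th candidate in $w$'s preference list in $\EE_q$ to $t_i\in T$. Using the alias $\tt_i=\psi^{-1}(t_i)$, this just means $\psi(\tt_i)=t_i$ for every $i\in[\rho]$. So replacing each candidate $z$ appearing in $w$'s list in~$\EE_q$ by the list $F_{\psi(z)}$ amounts to replacing $\tt_i$ by $F_{t_i}$. Reading off the resulting concatenated list gives exactly $F_{t_1},F_{t_2},\dots,F_{t_\rho}$, which is precisely $v$'s preference list in~$\EE_X$ according to~(\ref{prefs-EEX}).

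Second, I would do the analogous computation for $w'$. The list of $w'$ in~$\EE_q$ is $\tt_{\pi(1)},\tt_{\pi(2)},\dots,\tt_{\pi(\rho)}$, where $\pi$ is the permutation fixed in the definition after~(\ref{rewritten-EEq}). Replacing each $\tt_{\pi(i)}$ by $F'_{\psi(\tt_{\pi(i)})}=F'_{t_{\pi(i)}}$ produces $F'_{t_{\pi(1)}},F'_{t_{\pi(2)}},\dots,F'_{t_{\pi(\rho)}}$, matching $v'$'s preference list in~(\ref{prefs-EEX}). The same argument applied to $w''$ with the permutation $\pii$ yields $v''$'s list.

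The only ``content'' in this observation is the consistency of the naming conventions, so there is no real obstacle beyond spelling out what each symbol stands for; in particular, nothing has to be checked about the internal structure of the team lists $F_t,F'_t,F''_t$ at this point. I would therefore close the proof by noting that the three list-by-list matches above are exactly the content of the statement.
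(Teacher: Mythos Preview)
Your proposal is correct and matches the paper's approach exactly: the paper presents this observation as an immediate consequence of comparing displays~(\ref{rewritten-EEq}) and~(\ref{prefs-EEX}), and you have simply spelled out that comparison voter by voter.
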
 %
Observation~\ref{obs:bijection-property} enables us to take advantage of the structure of election~$\EE_q$ to establish analogous properties of the constructed instance.
Using the specifics of the team list definitions, we can show that 
our instance of \textsc{Maximum Matching with Couples} admits a complete matching
if and only if 
$\EE_X$ admits nominations resulting in a flat election;
as we have seen, the latter happens
 if and only if the constructed instance of \PP\ is a ``yes''-instance. See 
 \ifshort the full version~\new{\cite{SC24-fullversion}}
 \else Appendix~\ref{app:proof-of-correctness-3voters}
 \fi 
 for the rest of the proof.

\subsection{Four or More Voters}
\label{sec:min4voters}

Contrasting Theorem~\ref{thm:Llull-2voters}, showing the tractability of \PP\ for Llull with two voters, a reduction from \textsc{3-Coloring}
yields $\NP$-hardness for four voters.
As it is possible to add two voters with opposite preferences without changing the election outcome, Theorems~\ref{thm:Copeland-v=s=2}, \ref{thm:3voters-NPh} and~\ref{thm:4voters-s=2} imply Theorem~\ref{thm:fixed-voters}.

\begin{restatable}[\linkproof{sec:proof-of-4votersNPc}]{theorem}{thmfourvotersmaxsizetwo}
    \label{thm:4voters-s=2}
    \PP\ for Copeland$^1$ (i.e., Llull) is $\NP$-complete even for four voters and maximum party size~$\maxsize=2$.    
\end{restatable}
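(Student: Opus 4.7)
The plan is to reduce from \textsc{3-Coloring}. Given a graph $G=(V,E)$, I would construct a \PP{} instance with four voters, a distinguished singleton party $\{p\}$, and only parties of size at most~$2$, such that $p$ can be made the unique Llull winner if and only if $G$ admits a proper $3$-coloring. The first step is to encode the color chosen at each vertex $v\in V$ using three size-$2$ parties $P_v^R,P_v^G,P_v^B$, each containing an ``active'' color candidate and a ``passive'' blocker; the intended semantics is that $v$ gets color $c$ exactly when $P_v^c$ nominates its active candidate. For each edge $e=uv$ I would introduce one or more edge-witness candidates (packaged into further size-$2$ parties) whose Llull score is designed to rise to the winning threshold precisely when both endpoints of $e$ take the same color. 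Finally, a small collection of dummy parties and padding candidates would be used to fix the Llull score of~$p$ at a constant value~$\sigma$ that is independent of the nominations.

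The main leverage for the gadgets comes from the fact that with four voters and $\alpha=1$, Llull awards a full point for any pairwise tally of $2{:}2$ (tie) or $3{:}1,4{:}0$ (defeat); a candidate $c$ fails to score against $d$ only when at least three voters rank $d$ above $c$. This single-voter sensitivity at the boundary between ``tie'' and ``defeat'' is exactly what makes it possible to build a gadget in which switching a party's nominee between its active and passive option flips exactly one targeted pairwise outcome between $(2{:}2)$ and $(3{:}1)$, thereby raising or lowering the Llull score of one witness candidate by exactly $1$, while leaving all other pairwise tallies unchanged. Arranging the $4$-voter preference lists so that the switches in the color-parties and the edge-witness responses line up in this crisp fashion is the crux of the construction.

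The hard part will be verifying a tight threshold behavior rather than any individual ingredient: I need to show, by case analysis over nomination patterns, that for every valid set of nominations corresponding to a proper $3$-coloring, every non-$p$ nominee has Llull score strictly below $\sigma$, whereas any invalid pattern (two colors at some vertex, no color at some vertex, or a monochromatic edge) pushes some candidate to score~$\geq \sigma$ and blocks $p$'s unique victory. This reduces to a bookkeeping exercise, once the pairwise tallies forced by the voter preferences are tabulated. Membership in $\NP$ is immediate because Llull scores are polynomial-time computable, so correctness of the construction suffices; together with Theorems~\ref{thm:Copeland-v=s=2} and~\ref{thm:3voters-NPh} and the standard trick of padding with two voters having opposite preferences (which alters no pairwise comparison), this also yields Theorem~\ref{thm:fixed-voters} for every even $n\geq 4$ and indeed for every $n\geq 3$.
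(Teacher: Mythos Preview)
Your plan matches the paper's approach: a reduction from \textsc{3-Coloring} with three size-$2$ color parties per vertex and per-edge witness parties, exploiting the $2{:}2$ versus $3{:}1$ boundary in four-voter Llull scoring. What you leave as the acknowledged ``crux'' is precisely what the paper supplies: rather than fixing $p$'s score at an arbitrary threshold, the paper arranges that $p$ is never defeated (hence $\Cpl[1](p)=t-1$), so unique victory reduces to ``every other nominee is defeated at least once''; it then introduces singleton candidates~$q_i$ whose only possible defeaters are the passive candidates $\ol{u}_i^1,\ol{u}_i^2,\ol{u}_i^3$ (forcing at least one color per vertex), and one edge party $P_e^c=\{e_i^c,e_j^c\}$ \emph{per color} whose nominee's only possible defeater is the corresponding active vertex candidate $u_i^c$ or $u_j^c$ (forcing properness).
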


\subsection{Few Parties}
\label{sec:parameter-t}

In this section we consider the parameterization of \PP\ by~$t$, the number of parties. As we will see, intractability persists even if the number of voters is four, and $t$ is a parameter. Our starting point is Theorem~\ref{thm:Copeland-easy-alpha} which shows that \PP\ for Copeland$^\alpha$ for $\alpha<1$ is $\mathsf{W}[1]$-hard with parameter~$t$.

\begin{theorem}
\label{thm:Copeland-easy-alpha}
For any constant $\alpha \in [0,1)$, \PP{} for Copeland$^{\, \alpha}$ is $\mathsf{W}[1]$-hard when parameterized by~$t$.
\end{theorem}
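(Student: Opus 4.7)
The plan is to establish $\mathsf{W}[1]$-hardness via a parameterized reduction from \textsc{Multicolored Clique}, a canonical $\mathsf{W}[1]$-hard problem when parameterized by the number~$k$ of color classes. Given a graph $G$ with vertex partition $V_1 \uplus \cdots \uplus V_k$, I would construct a \PP\ instance whose party count is $t=f(k)$: one ``selection party'' $P_i$ per color class $V_i$, containing a candidate for each vertex of $V_i$; $O(k^2)$ auxiliary parties of size bounded by a function of $k$, one per pair~$(i,j)$ to act as an edge-verification gadget; and the distinguished singleton $\{p\}$.

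The construction should ensure that $p$ becomes the unique Copeland$^{\,\alpha}$ winner iff the $k$ nominated vertex-candidates form a clique in~$G$. Using a McGarvey-style realization of pairwise majorities, I would tailor the preferences of polynomially many voters so that (i) $p$ defeats a fixed block of ``padding'' candidates, giving it a baseline score; (ii) each candidate corresponding to a vertex $v\in V_i$ interacts with the nominees of parties $P_j$ (for $j\ne i$) in a way that depends \emph{only} on adjacency in~$G$; and (iii) the gadget for pair $(i,j)$ produces a nominee whose score ties or exceeds $p$'s unless the vertices nominated from $P_i$ and $P_j$ are adjacent in $G$. Aggregated over all $\binom{k}{2}$ pairs, this would yield the desired biconditional.

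The main obstacle---and the reason the hypothesis $\alpha < 1$ is essential---lies in calibrating the Copeland$^{\,\alpha}$ arithmetic so that the score gap between the ``clique'' and ``non-clique'' cases is robust for every fixed $\alpha\in[0,1)$. Since a tie contributes only $\alpha$, strictly less than the full point of a defeat, I would inject a carefully chosen budget of ties inside the gadgets so that a single non-edge pair is enough to push some auxiliary candidate's tally up to (or above) $p$'s, whereas when all pairs are edges, $p$ remains strictly ahead. For $\alpha = 1$ this mechanism collapses, which is consistent with Llull being excluded from the statement.

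The finicky bookkeeping will be in guaranteeing \emph{modularity} of the gadgets: the contribution of each pair's gadget to the final Copeland$^{\,\alpha}$ scores must depend only on the adjacency of the two vertices selected from $P_i$ and $P_j$, and not on which specific vertices are selected from the other $k-2$ color classes. Standard techniques---inserting ``balanced'' blocks of voters that cancel out cross-interactions, and separating the gadgets by groups of candidates that defeat or are defeated uniformly---should suffice, but the exact bookkeeping is where the bulk of the technical work will lie.
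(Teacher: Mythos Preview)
Your reduction source (\textsc{Multicolored Clique}) and the high-level shape are right, but you are working much harder than necessary, and the proposal as written is only a plan, not a proof: you never exhibit the gadgets, and you yourself flag the ``finicky bookkeeping'' as the bulk of the work still to be done.

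The paper's construction avoids all of this machinery. There are no edge-verification parties and no padding blocks: the parties are simply $\{p\}$, $\{p'\}$, and the $k$ color classes $U_1,\dots,U_k$ themselves, so $t=k+2$. The key idea is to encode the \emph{non-edges} of $G$ directly in the electorate: for every non-edge $uu'$ (with $u\in U_i$, $u'\in U_j$, $i<j$) one introduces a pair of voters, one ranking $u,u',\overrightarrow{U\setminus\{u,u'\}},p,p'$ and the other ranking $p,p',\overleftarrow{U\setminus\{u,u'\}},u,u'$. These two voters cancel on every pairwise comparison except that they jointly make $u$ defeat $u'$. Hence in any reduced election, $p$ defeats $p'$ and is tied with all vertex nominees, giving $\Cpl(p)=\alpha k+1$; two vertex nominees are tied iff they are adjacent in~$G$, and otherwise one defeats the other. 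If the nominees form a clique, every vertex nominee is tied with all $k+1$ other nominees and scores $\alpha(k+1)<\alpha k+1$ (here is where $\alpha<1$ enters, and it is the \emph{only} place). If they do not, the minimum-index vertex involved in a non-edge defeats someone and is defeated by no one, so it scores at least $\alpha k+1$.

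So the $\alpha$-calibration you worry about collapses to the single inequality $\alpha(k+1)<\alpha k+1$, and the modularity issue disappears because there are no gadgets to interact. Your more elaborate $O(k^2)$-gadget route could presumably be made to work, but it buys nothing here and leaves the actual construction unwritten.
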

\begin{proof}
We provide a reduction from the \textsc{Multicolored Clique} problem. An instance of this problem consists of a graph $G=(U,E)$ with its vertex set partitioned into $k$ independent sets~$U_1,\dots, U_k$, and the question is whether $G$ contains a clique of size~$k$. \textsc{Multicolored Clique} is $\mathsf{W}[1]$-hard when parameterized by~$k$ \cite{fellows-hermelin-rosamond-vialette-multicolored-hardness,pietrzak-multicolored-2003}.

We construct an instance of \PP\ as follows. The set of candidates is $C=U\cup \{p,p'\}$, our distinguished party is $P=\{p\}$, and we have further parties $P'=\{p'\}$ and  $U_i$ for each $i\in[k]$.
Thus, we have $t=k+2$ parties. 

The set of voters corresponds to the set of ``non-edges'' in~$G$, that is, to $\olsi{E}=\{ u u': u \in U_i, u' \in U_j, i<j, uu' \notin E\}$. Namely, for each $e=uu' \in \olsi{E}$, 
we create two voters~$v_e$ and~$v'_e$ with  preferences as in~(\ref{eq:prefs-of-ve}).
We fix an arbitrary ordering over~$C$, and write  $\ora{X}$ for listing a set~$X$ of candidates according to this order, and $\ola{X}$  for its reverse.
\begin{equation}
\label{eq:prefs-of-ve}
    \begin{array}{@{}l@{\hspace{2pt}}l@{}}
 v_e: & u,u', \ora{U\setminus \{u,u'\}}, p, p'\\[4pt]
 v'_e: & p, p', \ola{U\setminus \{u,u'\}}, u, u'\\
     \end{array}
\end{equation}
    
Consider a reduced election~$\EE$ obtained by some nominations of all parties. Notice that $\Cpl(p)=\alpha k+1$ and $\Cpl(p')=\alpha k$,  since %
$p$ defeats %
$p'$, and %
both are tied with every other candidate.

Assume that $G$ admits a multicolored clique $S=\{u^{(i)},\dots,u^{(k)}\}$ with $u^{(i)}\in U_i$ for $i\in[k]$. Let each party $U_i$ nominate  $u^{(i)}$. As $S$ is a clique in~$G$, it is an independent set in the complement of~$G$, so there is no $e \in \olsi{E}$ containing two vertices of~$S$ corresponding to two nominated candidates. Thus, each nominee from~$U$ obtains a Copeland$^\alpha$ score of~$\alpha(k+1)$. Since $\alpha<1$, this is strictly smaller than $\Cpl(p)$, so $p$ is the unique winner of the resulting election.

Conversely, assume for the sake of contradiction that $p$ is the unique winner of some reduced election~$\EE$, but the nominated candidates in~$U$ do not form a clique. 
Let $u^{(i)} \in U_i$ be a nominee such that 
there is an edge $e \in \olsi{E}$ in the complement of~$G$ between $u^{(i)}$ and some nominee $u^{(j)}\in U_j$ with $i<j$; we choose $i$ as the  minimal index where this happens.
Then, due to the two voters corresponding to $e \in \olsi{E}$ we know that candidate~$u^{(i)}$ defeats candidate~$u^{(j)}$, and due to our choice of~$i$, there is no nominated candidate that defeats~$u^{(i)}$. Hence,  $\Cpl(u^{(i)}) \geq \alpha k+1$, a 
contradiction to our assumption that $p$ is the unique winner in~$\EE$.
\end{proof}

We can strengthen Theorem~\ref{thm:Copeland-easy-alpha} as follows:

\begin{restatable}[\linkproof{sec:param-t-even}]{theorem}{thmCopelandfourteven}
\label{thm:4voters}
For any constant $\alpha \in [0,1]$, \PP{} for Copeland$^{\, \alpha}$ is 
$\mathsf{W}[1]$-hard when parameterized by~$t$, the number of parties, even if there are only four voters.    
\end{restatable}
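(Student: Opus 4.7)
The plan is to strengthen the reduction in the proof of Theorem~\ref{thm:Copeland-easy-alpha} in two independent ways: to extend the argument to $\alpha = 1$ (the Llull rule), and to compress the number of voters from $2|\olsi{E}|$ down to exactly four. We again reduce from \textsc{Multicolored Clique}, which is $\W1$-hard parameterized by the clique size~$k$. From an input graph $G = (U, E)$ with vertex partition $U_1, \ldots, U_k$, we construct a \PP{} instance with the distinguished party $\{p\}$, a party $\{p'\}$, parties $U_1, \ldots, U_k$ and perhaps a constant number of additional helper parties (so $t = O(k)$), together with exactly four voters whose preferences encode both the score structure and the edge set of~$G$.

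For the case $\alpha < 1$, the goal is to design the four voters' preferences so that every reduced election satisfies: (i)~$p$ defeats~$p'$; (ii)~$p$ is tied with each nominee from~$U$; (iii)~$p'$ is tied with each nominee from~$U$; and (iv)~two nominees $u^{(i)}, u^{(j)}$ from distinct color classes are tied if $\{u^{(i)},u^{(j)}\} \in E$, and otherwise one defeats the other. Under (i)--(iv) the score computation of Theorem~\ref{thm:Copeland-easy-alpha} transfers verbatim and yields the desired equivalence between $p$ being the unique winner and the nominees forming a $k$-clique. To achieve (i)--(iii) we place $p$ and $p'$ symmetrically across the four voters---e.g.\ two voters rank $\{p,p'\}$ at the top and two at the bottom, with one asymmetric flip giving $p$ its strict majority over~$p'$. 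To achieve (iv) we use complementary voter pairs so that pairs within~$U$ are tied by default, and introduce carefully placed adjacent transpositions in the non-base voters to turn exactly the non-edges of~$G$ into defeats. Since the inversion patterns realizable by four linear orders form a restricted class, we expect it to be necessary to pad each party~$U_i$ with auxiliary candidates, providing slack for the transpositions.

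For $\alpha = 1$, the Copeland$^\alpha$ score analysis of Theorem~\ref{thm:Copeland-easy-alpha} collapses because ties and defeats contribute equally under Llull scoring. To reintroduce a strict score gap, we add a constant-size auxiliary gadget---for instance, a singleton party containing a candidate $q$ that is defeated by $p$ but defeats every nominee from~$U$---so that $p$'s Llull score strictly exceeds every nominee's score in the clique case, while in the non-clique case the additional defeat between some two nominees boosts one of them to match $p$'s score.

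The main obstacle is the simultaneous realization of (i)--(iv) using only four voters. Any pairwise margin with four voters lies in $\{-4,-2,0,2,4\}$, and each voter influences all pairs through one global linear order, so the ability to tune individual margins is highly constrained. The crux is the combinatorial design of the four orderings of~$U$ so that the tied pairs form exactly the edge set~$E$; this requires judicious padding and a controlled sequence of adjacent transpositions across the non-base voters. For the Llull case, a secondary hurdle is to construct the distinguishing gadget using only $O(1)$ additional parties, keeping $t$ bounded by a function of~$k$.
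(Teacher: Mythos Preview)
Your proposal has two genuine gaps.

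\textbf{Gap 1 (the core construction for $\alpha<1$).} You identify the crux correctly but do not resolve it: designing four voters so that nominees $u^{(i)}\in U_i$ and $u^{(j)}\in U_j$ are tied if and only if $\{u^{(i)},u^{(j)}\}\in E$ is not possible for arbitrary~$G$. Even with unlimited padding inside the parties, only the \emph{relative order} of the real vertex-candidates in each of the four lists matters for the margins among nominees; this gives at most $(|U|!)^4=2^{O(|U|\log|U|)}$ realizable tie relations, whereas the number of bipartite edge patterns across the colour classes is $2^{\Theta(|U|^2)}$. Concretely, if you take two of the voters to be reverses of each other as you suggest, the tie relation collapses to the inversion set of one permutation relative to another---very far from an arbitrary graph. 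Padding candidates that are never nominated cannot influence the margins among the nominees. The paper sidesteps this obstacle by a different encoding: it introduces \emph{edge parties} $E_{\{i,j\}}$ (so $t=2+k+\binom{k}{2}$), and its four voters are built from interleaved blocks so that the nominee of $E_{\{i,j\}}$ is tied with the nominee of~$U_i$ precisely when that edge is incident to that vertex. This turns the global graph-encoding task into a purely local incidence constraint that four voters can express.

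\textbf{Gap 2 (the Llull fix).} Under Copeland$^1$, a defeat and a tie both contribute~$1$ to the winning side; the Llull score of~$c$ is simply $t-1-|\lambda(c)|$ where $\lambda(c)$ is the set of nominees defeating~$c$. With your gadget~$q$ (defeated by~$p$, defeating every $U$-nominee), we get $\lambda(p)=\emptyset$ while every other nominee is defeated by at least one candidate---so $p$ is the unique winner for \emph{every} nomination, clique or not, and the reduction is vacuous. The ``additional defeat'' in the non-clique case cannot boost the winner's score; it only lowers the loser's. The paper's $\alpha=1$ construction is accordingly far more elaborate: it introduces several copies of each vertex and edge party, plus auxiliary two-element parties, arranged so that in the intended (clique) nomination every non-$p$ party is defeated by at least two others, whereas for any non-clique nomination some copy-party escapes one of its two designated defeaters and ties~$p$.
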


We prove Theorem~\ref{thm:4voters} in two steps, first for $\alpha<1$, and then filling the gap with a more involved reduction for $\alpha=1$;
see 
\ifshort the full version~\new{\cite{SC24-fullversion}}. 
\else Appendix~\ref{sec:param-t-even}.
\fi
Regarding elections with an odd number of voters, we were able to prove the following:
\begin{restatable}[\linkproof{sec:proof-of-5voterst}]{theorem}{thmfivevoterst}
    \label{thm:5voters-t}
    For any constant~$\alpha \in [0,1]$, \PP\ for Copeland$^\alpha$ is $\mathsf{W}[1]$-hard when parameterized by~$t$, the number of parties, even if there are only five voters.
\end{restatable}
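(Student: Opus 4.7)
The first observation is that since $n = 5$ is odd, $N(a,b) + N(b,a) = 5$ for any two distinct candidates, so no two candidates can be tied in any reduced election. Hence the Copeland$^\alpha$-score is independent of $\alpha$, and it suffices to prove $\W1$-hardness for a single value (say $\alpha = 0$); the general statement then follows uniformly for all $\alpha \in [0,1]$.

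My plan is to reduce from \textsc{Multicolored Clique} parameterized by the clique size $k$, constructing an election with exactly five voters and $t = k + O(1)$ parties. For each color class $U_i$ of the input graph $G = (U, E)$ introduce a party $P_i = U_i$, together with a distinguished singleton party $\{p\}$ and a small constant number of auxiliary parties used for score calibration. The five preference lists are designed so that (i)~$p$ defeats a specific set of candidates in every reduced election, achieving a fixed Copeland score $s_p$, and (ii)~the pairwise defeats among any $k$ nominated vertex candidates $u^{(i)} \in U_i$ encode the non-edge relation of $G$ on the selected vertices. The correctness argument then mirrors the one of Theorem~\ref{thm:Copeland-easy-alpha}: a multicolored clique in $G$ yields nominations where no vertex candidate accrues extra defeats beyond a predetermined baseline, so $p$ becomes the unique winner; conversely, any non-clique choice exhibits a pair of nominated vertices corresponding to a non-edge, giving one of them an additional defeat and pushing its score to at least $s_p$.

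The central obstacle is encoding the non-edge structure of an arbitrary graph within just five preference lists. Whereas the proof of Theorem~\ref{thm:Copeland-easy-alpha} allows two dedicated voters per non-edge, here all non-edge information must be packed into five fixed orderings. I would tackle this either (a) by extending the 4-voter construction of Theorem~\ref{thm:4voters} by means of a carefully chosen fifth voter whose preferences break every 2--2 tie consistently, while preserving the score inequalities crucial for correctness; or (b) by designing the five voters directly, using an adjacency-driven placement of the vertex candidates in each list, possibly aided by duplicating vertex candidates within each color-class party so that the effective pairwise majorities realize the desired tournament. In either approach, the hard part is verifying---through a somewhat intricate case analysis over nominations---that the resulting pairwise-defeat structure distinguishes clique from non-clique selections exactly as required, so that the score contributions from vertex candidates and auxiliary parties combine to give $p$ the unique-winner status precisely on clique instances.
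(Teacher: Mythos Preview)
Your opening observation is correct: with five voters there are no ties, so the value of~$\alpha$ is irrelevant and a single construction suffices.

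However, both of your proposed routes have real problems. Route~(a)---adding a fifth voter to the four-voter construction of Theorem~\ref{thm:4voters}---cannot work as stated. Both four-voter reductions (for $\alpha<1$ and for $\alpha=1$) are built around ties: in the $\alpha<1$ case, $p$'s score is $1+\alpha(t-2)$ precisely because $p$ is \emph{tied} with almost every nominee, and the correctness argument hinges on all other pairs being tied under a clique nomination. Any fifth voter destroys every tie simultaneously, so the delicate score calibration collapses; you would essentially have to redesign the construction from scratch, not merely ``break ties consistently.'' Route~(b) is not yet a plan: packing the full non-edge relation of an arbitrary graph into five linear orders so that majority comparisons realize exactly the desired tournament is the whole difficulty, and ``adjacency-driven placement'' does not indicate how you would achieve it. Note also that not every tournament is the majority tournament of five linear orders, so a direct encoding of non-edges among vertex candidates alone may simply be impossible.

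The paper's proof takes a substantially different route and avoids encoding non-edges altogether. It introduces, besides the vertex parties~$U_i$, an \emph{edge party}~$E_{\{i,j\}}$ and two edge-copy parties~$A_{(i,j)},A_{(j,i)}$ for every pair $i\neq j$, together with many dummy singletons, giving $t=O(k^2)$ parties rather than $k+O(1)$. The first three voters are obtained from a \emph{flat election}~$\EE_q$ (Definition~\ref{def:EEq}) by substituting team lists for candidates, so that every nominee has the same baseline Copeland score; Lemma~\ref{lem:shiftable-candidates} then lets one shift~$p$ and the vertex candidates to gain controlled extra points. The remaining two voters $z,z'$ have almost opposite preferences, except that an edge-copy candidate~$a^h_{(i,j)}$ precedes a vertex~$u\in U_i$ in both lists exactly when the corresponding edge is incident to~$u$. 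Thus the only way to bring each~$U_i$'s inflated score back below~$p$'s is to nominate, for every $j\neq i$, an edge-copy in~$A_{(i,j)}$ incident to the nominee of~$U_i$; flatness of the team list for~$E_{\{i,j\}}$ then forces $A_{(i,j)}$ and~$A_{(j,i)}$ to nominate copies of the \emph{same} edge, certifying a clique. The flat-election machinery and the use of edge(-copy) parties are the two ingredients your plan is missing.
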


Each of these results uses a reduction from \textsc{Multicolored Clique}, but the constructions become gradually more complicated; the proof of Theorem~\ref{thm:5voters-t} necessitates also the notion of flat elections (see
\ifshort
the full version~\new{\cite{SC24-fullversion}}). 
\else 
Appendix~\ref{sec:proof-of-5voterst}).
\fi

\smallskip
We remark that \PP\  is in~$\XP$ when parameterized by~$t$, assuming that winner determination can be performed in polynomial time: there are at most $\maxsize^t$ possibilities  for how parties can choose their nominated candidates, so we can check whether the distinguished party wins in at least one  election resulting from some nomination strategy in $\maxsize^t n^{O(1)}$ time (see e.g.,~\cite{schlotter2024}).

\section{Maximin voting rule}
\label{sec:Maximin}

Turning to the Maximin voting rule, we investigate how the complexity of \PP\ for Maximin depends on the number of voters (Section~\ref{sec:MM-voters}) and on the number of parties (Section~\ref{sec:MM-parties}). 

\subsection{Few Voters}
\label{sec:MM-voters}

We start by extending the tractability result of Theorem~\ref{thm:Llull-2voters},  dealing with the Llull voting rule with two voters, to the Maximin voting rule with two or three voters.

For two voters, tractability again relies on Observation~\ref{obs:2voters_transitivity} stating the transitivity of the ``defeat'' relation. For three voters we say that candidate~$a$ \emph{strongly defeats} a candidate~$b$, if all three voters prefer~$a$ to~$b$. It is easy to see that the ``strong defeat'' relation is also transitive.  %
This implies that some candidate is the unique winner in a Maximin election if and only if it defeats every other candidate.

\begin{theorem}\label{thm:Maximin_V2}
   \PP{} for the Maximin voting rule is poly\-no\-mial-time solvable if there are only two voters. 
   \end{theorem}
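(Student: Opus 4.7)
\begin{proofsketch}
The plan is to show that with two voters a candidate~$p$ is the unique Maximin winner of a reduced election if and only if $p$ defeats every other nominee, i.e., if and only if $p$ is the Condorcet winner. Once this characterization is established, the quadratic-time algorithm used for Llull in Theorem~\ref{thm:Llull-2voters} applies verbatim: for each candidate~$p$ of the distinguished party, test whether every other party contains at least one candidate defeated by~$p$.

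With only two voters, $N(a,b)\in\{0,1,2\}$, so in any reduced election the Maximin-score of every candidate lies in~$\{0,1,2\}$. The easy direction is immediate: if $p$ defeats every other nominee, then $\MM(p)=2$ while $N(b,p)=0$ for every nominee $b\neq p$, so $\MM(b)\leq 1 < \MM(p)$, and $p$ is the unique winner.

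The main content is the converse. Suppose $p$ is the unique winner in a reduced election~$\EE$. Uniqueness forces $\MM(p) > \MM(b)$ for every other nominee~$b$, so either $\MM(p)=2$ (and we are done) or $\MM(p)=1$ and every other nominee has score~$0$. In the latter case, $p$ is tied with some nominee~$b_1$; since $\MM(b_1)=0$, some nominee~$b_2$ defeats~$b_1$; since $\MM(b_2)=0$, some nominee~$b_3$ defeats~$b_2$; and so on. By Observation~\ref{obs:2voters_transitivity} the ``defeats'' relation is transitive with two voters, so $b_{i+1}$ defeats each of $b_1,\dots,b_i$. In particular the $b_i$ are pairwise distinct, and none of them equals~$p$ (otherwise $p$ would defeat~$b_1$, contradicting that $p$ is only tied with~$b_1$). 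This yields an infinite sequence of distinct nominees, contradicting the finiteness of the nominee set. Hence $\MM(p)=2$, as claimed.

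The algorithm extracted from the characterization runs in time $O(|C|\cdot t\cdot\maxsize)$: for each of the candidates in the distinguished party and each of the remaining parties, we test whether that party contains a candidate defeated by~$p$. The main conceptual obstacle is the converse direction above, which is nontrivial because, unlike for Llull, a candidate could a priori win under Maximin with score~$1$; the transitivity of the defeats relation is precisely what rules this out.
\end{proofsketch}
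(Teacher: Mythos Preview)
Your proof is correct and follows essentially the same approach as the paper: both establish that with two voters $p$ is the unique Maximin winner if and only if $p$ defeats every other nominee, using the transitivity of the defeat relation (Observation~\ref{obs:2voters_transitivity}) to rule out the case $\MM(p)=1$, and then invoke the same quadratic-time algorithm as in Theorem~\ref{thm:Llull-2voters}. The only cosmetic differences are that the paper phrases the contradiction as a cycle in the defeat relation while you phrase it as an infinite sequence of distinct nominees, and that in the easy direction you write $\MM(b)\leq 1$ where in fact $N(b,p)=0$ already gives $\MM(b)=0$; neither affects correctness.
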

\begin{proof}
The theorem hinges on the fact that a nominee $p$ is a unique winner in a Maximin election~$\E$ if and only if $p$ defeats every other nominee. 
To see this, first realize that if $p$ defeats all nominees then ${\MM_\E(p)=2}$ and we have ${\MM_\E(c)=0}$ for every other nominee~$c$, so $p$ is the unique winner.

Now assume that $p$ is the unique winner of a reduced election~$\EE$. Clearly,
 $p$ cannot be defeated by any nominee, as that would yield ${\MM_\E(p)=0}$. Neither is $\MM_\E(p)=1$ possible, as in this case every other nominee~$c$ must have Maximin-score $\MM_\E(c)=0$, i.e., has to be defeated by at least one other nominee.  However, by a similar argument as in the proof of Theorem~\ref{thm:Llull-2voters}, this quickly leads to a contradiction, because the ``defeat'' relation cannot contain cycles.
 
 Therefore, only $\MM_\E(p)=2$ is possible, and thus $p$ defeats all nominees.
Hence the same quadratic-time algorithm as in Theorem~\ref{thm:Llull-2voters} solves the \PP{} problem also for the Maximin voting rule in the case of two voters.
\end{proof}

\begin{theorem}\label{thm:Maximin_V3}
   \PP{} for the Maximin voting rule is poly\-no\-mial-time solvable if there are only three voters. 
   \end{theorem}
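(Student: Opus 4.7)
The plan is to prove, in parallel with the two-voter case, the characterization that a nominee $p$ is the unique winner of a reduced Maximin election with three voters if and only if $p$ defeats every other nominee. Once this is in hand, the algorithm is identical to the one used in Theorem~\ref{thm:Llull-2voters}: iterate over each candidate $p$ in the distinguished party and check whether every other party contains at least one candidate defeated by~$p$; this runs in quadratic time in the input size.

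The easy direction is straightforward. If $p$ defeats every other nominee, then $N_{\EE}(p,c) \geq 2$ for every other nominee~$c$, so $\MM_{\EE}(p) \geq 2$, whereas $N_{\EE}(c,p) \leq 1$, which gives $\MM_{\EE}(c) \leq 1$ for every $c \neq p$; hence $p$ is the unique Maximin winner.

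For the converse, suppose $p$ is the unique winner of the reduced election~$\EE$. I would argue that $\MM_{\EE}(p) \geq 2$ by ruling out the two other possibilities. The case $\MM_{\EE}(p)=0$ is immediate, since scores are non-negative and uniqueness would force all other nominees to score strictly less than~$0$. The crucial case is $\MM_{\EE}(p)=1$. Here every other nominee~$c$ must satisfy $\MM_{\EE}(c)=0$, i.e.~each such $c$ is \emph{strongly defeated} (beaten $3$--$0$) by some other nominee. That "other nominee" cannot be $p$, because $p$ defeating $c$ $3$--$0$ would contradict $\MM_{\EE}(p)=1$ via~$c$ only if additionally $p$ were weakly beaten by someone; more importantly, $p$ does not strongly defeat $c$ if $\MM_{\EE}(p)=1$ is witnessed on~$c$. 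Either way, every nominee in $\{\text{nominees}\}\setminus\{p\}$ has an in-neighbor inside this set with respect to the strong-defeat relation, so by finiteness this relation contains a directed cycle. Since the strong-defeat relation is transitive (each voter's preferences are transitive, and strong defeat requires all three voters to agree, exactly as noted before the statement), a cycle would force some candidate to strongly defeat itself, which is impossible.

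The main obstacle is exactly the $\MM_{\EE}(p)=1$ case: one must be careful to exclude $p$ as the strong defeater of any nominee~$c$ that witnesses $\MM_{\EE}(p)=1$, and then argue on the induced sub-tournament of the strong-defeat relation among the remaining nominees. Once this is established, the equivalence $\MM_{\EE}(p)\geq 2 \Leftrightarrow p$ defeats every nominee yields the characterization, and the polynomial-time algorithm follows immediately, mirroring the proof of Theorem~\ref{thm:Maximin_V2}.
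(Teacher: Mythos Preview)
Your approach is the same as the paper's: show that with three voters, $p$ is the unique Maximin winner of a reduced election if and only if $p$ defeats every other nominee, using transitivity of the strong-defeat relation to rule out $\MM_{\EE}(p)=1$, and then apply the same quadratic-time check as in the two-voter case.

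Your handling of the case $\MM_{\EE}(p)=1$ needs one correction. You rightly observe that $p$ cannot strongly defeat the particular nominee $c^\star$ on which $\MM_{\EE}(p)=1$ is witnessed, but you then jump to the claim that \emph{every} nominee other than~$p$ has a strong defeater that is also distinct from~$p$. This does not follow: nothing prevents $p$ from strongly defeating some nominee $c\neq c^\star$ while still having $\MM_{\EE}(p)=1$. The clean way to finish is to trace the chain of strong defeaters starting from $c^\star$: if $c_1$ strongly defeats $c^\star$, and $c_2$ strongly defeats $c_1$, and so on, then by transitivity each $c_i$ strongly defeats $c^\star$, whence $c_i\neq p$ (since $N(p,c^\star)=1<3$). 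The chain therefore stays among the finitely many non-$p$ nominees and must eventually repeat, yielding a cycle in the strong-defeat relation and the desired contradiction. This is precisely the argument the paper invokes (by analogy with the two-voter proof), stated tersely.
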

\begin{proof}
Again, we show that $p$ is a unique winner in some election~$\EE$ if and only if $p$ defeats every other nominee in~$\EE$.
To see this, first realize that if $p$ defeats all nominees, then $\MM_\E(p)\ge 2$ and $\MM_\E(c)\le 1$ for every other nominee~$c$, so $p$ is the unique winner.

Now assume that $p$ is the unique winner in some election~$\EE$.  Clearly, $\MM_\E(p)=0$ is impossible. If $\MM_\E(p)=1$, then every other nominee~$c$ must have $\MM_\E(c)=0$. This means that every nominee other than~$p$ is strongly defeated by another nominee. However, this is not possible, as there can be no cycles in the ``strong defeat'' relation due to its transitivity. 
Therefore, $\MM_\E(p) \geq 2$ must hold, 
and thus $p$ can become the unique winner in an election resulting from some nominations if and only if each party has a candidate that is defeated by~$p$. From this, the polynomial-time solvability of the problem follows easily.
\end{proof}

The following  theorem shows that for $n \geq 4$ voters, \PP{} for Maximin is hard even when all parties have size at most~$2$. The proof of Theorem~\ref{thm:Maximin_hard} deals with the case of even and odd number of voters separately, providing two reductions from \textsc{3-SAT}.
Theorems~\ref{thm:Maximin_V2}, \ref{thm:Maximin_V3}, and \ref{thm:Maximin_hard} together prove Theorem~\ref{thm:Maximin-dichotomy}.

\begin{restatable}[\linkproof{sec:proof-of-Maximinhard}]{theorem}{thmMaximinhard}
    \label{thm:Maximin_hard}
    \PP\ for Maximin is $\NP$-complete even for instances where the number of voters is a fixed constant $n \geq 4$, and the maximum party size is~$\maxsize=2$.
\end{restatable}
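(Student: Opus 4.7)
The plan is to reduce from \textsc{3-SAT}, handling the cases of even and odd $n$ with separate constructions. Because adding two voters with exactly opposite preference orders leaves every pairwise count $N_\EE(a,b)$ (and therefore every Maximin score and the unique-winner status of every candidate) unchanged, it will suffice to establish $\NP$-hardness for $n=4$ and $n=5$; any fixed $n\ge 4$ then follows by padding from the appropriate base case.

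For either base construction, given a 3-CNF formula $\varphi$ with variables $x_1,\dots,x_m$ and clauses $C_1,\dots,C_\ell$, I would build an election consisting of a distinguished singleton party $\{p\}$, a size-$2$ party $\{t_i,f_i\}$ for each variable $x_i$ (with nominating $t_i$ encoding $x_i=\text{true}$ and $f_i$ encoding $x_i=\text{false}$), and a singleton clause-checker party $\{c_j\}$ for each clause $C_j$; this automatically ensures $\maxsize=2$. The crux is to fix a threshold $\theta$ and design the voter orderings so that in the reduced election $\EE$ resulting from any nomination profile:
\begin{compactenum}[(i)]
    \item $\MM_\EE(p)=\theta$, with the binding comparison being $N_\EE(p,c_j)$ for some $c_j$;
    \item every variable nominee $t_i$ or $f_i$ has $\MM_\EE<\theta$ automatically, because $p$ defeats it strongly enough that the comparison with $p$ alone forces its Maximin score below~$\theta$; and
    \item $\MM_\EE(c_j)<\theta$ if and only if some nominated literal satisfies $C_j$---enforced by having $c_j$ dominated as strongly as possible by every satisfying-literal nominee, while all of $c_j$'s other pairwise counts (against non-satisfying variable nominees, against the other clause checkers, and against $p$) remain $\ge\theta$.
\end{compactenum}
Under these invariants, $p$ is the unique winner of $\EE$ iff each $c_j$ is ``killed'' by at least one satisfying-literal nominee, which happens exactly when the nominations encode a truth assignment satisfying~$\varphi$.

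I would instantiate $\theta=2$ in both cases. For $n=4$: $p$ is tied $2{:}2$ with every $c_j$ and defeats each $t_i, f_i$ by $3{:}1$, while each $c_j$ is beaten $3{:}1$ by any satisfying-literal nominee yet has all its other pairwise counts at least~$2$. For $n=5$, where ties are impossible: $p$ is beaten $3{:}2$ by every $c_j$ but defeats every $t_i,f_i$ by $4{:}1$, while each $c_j$ is beaten $4{:}1$ (or better) by any satisfying-literal nominee and has all its other pairwise counts at least~$2$.

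The main obstacle will be the explicit combinatorial design of four (respectively five) voter orderings that simultaneously realize all the required pairwise counts. Each $c_j$ has to interact consistently with $p$, with every variable nominee (losing heavily to the satisfying ones, while otherwise holding its ground), and with every other clause checker; the challenge is to slot $c_j$ within each voter's ranking so that its Maximin bottleneck is dictated solely by the variable parties appearing in $C_j$, without interference from extraneous comparisons. The reason the two parities of $n$ genuinely require separate reductions is that the even case exploits ties at the threshold, whereas the odd case must emulate the same cutoff using only strict majorities. Once the preferences are fixed, correctness will reduce to a routine per-nominee accounting of Maximin scores.
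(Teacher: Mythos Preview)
Your high-level strategy---reduce from \textsc{3-SAT}, encode a truth assignment via size-$2$ variable parties, and use clause-checker candidates whose Maximin score drops below the threshold exactly when a satisfying literal is nominated---matches the paper's. The gap is in the concrete pairwise-count design, and for $n=5$ it is fatal: your constraints are mutually inconsistent. You ask for $N(p,\ell)=4$ and $N(\ell,c_j)\ge 4$ whenever $\ell$ satisfies~$C_j$; with five voters, the voter sets $\{i:p\succ_i\ell\}$ and $\{i:\ell\succ_i c_j\}$ each have size at least~$4$, so their intersection has size at least~$3$, and in each such voter $p\succ\ell\succ c_j$. This forces $N(p,c_j)\ge 3$, contradicting your requirement $N(p,c_j)=2$. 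Hence $p$ cannot simultaneously beat literals $4{:}1$, be beaten $3{:}2$ by clause checkers, and have satisfying literals beat their clause checkers $4{:}1$.

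The paper avoids this by not letting $p$ knock out the literal nominees directly. It places $p$ first in two voters and last in the rest (so $\MM_\EE(p)=2$ while $p$ defeats nothing), and builds a longer defeat chain via auxiliary singleton parties: a candidate~$p'$ (and, for $n=5$, also~$p''$) defeats all literal nominees, and each clause~$c$ is represented not by one singleton but by three size-$2$ parties $\{c_j,c_j^-\}$, one per literal position~$j\in[3]$. This way each positive clause candidate~$c_j$ has to be beaten by exactly \emph{one} literal (the $j$-th literal of~$c$), turning the literal--clause incidence into a matching; that is what allows the voter orders to be written explicitly via blocks $F_\ell=\ell,\ora{A(\ell)}$ with the sets~$A(\ell)$ pairwise disjoint. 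Your singleton-clause design would force each clause checker to be beaten by three distinct literals, and even for $n=4$---where your numbers survive the transitivity check above---realising all the required counts with four linear orders and no auxiliary candidates is precisely the part you have not done; the paper's extra gadgets are doing real work here.
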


\subsection{Few Parties}
\label{sec:MM-parties}

Contrasting Theorem~\ref{thm:Copeland-easy-alpha},  %
we show that if the number of parties is small, then we can efficiently solve \PP\ for Maximin. More precisely, we provide an FPT algorithm for this problem with parameter~$t$, the number of parties.
This subsection is dedicated to proving the following result.

\begin{restatable}[\linkproof{app:proof-of-fptcorrect}]{theorem}{thmMMfpt}
\label{thm:MM-fpt}    
There exists an algorithm that solves \PP\ for Maximin and runs in FPT time with parameter~$t$.
\end{restatable}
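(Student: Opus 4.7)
The plan is to guess a small amount of structural information about a purported winning scenario, and then reduce the residual task to a polynomial-time solvable subproblem. Write $P_1 := P^\star$ and iterate over every pair $(p,k)$ with $p \in P_1$ and $k \in \{0,1,\dots,n\}$; this introduces only $O(\maxsize \cdot n)$ overhead. For a fixed such pair, the goal is to select nominees $q_j \in P_j$ for $j \in \{2,\dots,t\}$ satisfying $N(p,q_j) \ge k$ for every~$j$ and simultaneously $\MM(q_j) < k$ in the resulting reduced election. Together these conditions say exactly that $p$ is the unique Maximin winner with score at least~$k$, so the input is a \emph{yes}-instance if and only if such nominees exist for some choice of $(p,k)$.

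First restrict each party to $Q_j := \{q \in P_j : N(p,q) \ge k\}$, discarding the current guess if some $Q_j$ becomes empty. The condition $\MM(q_j) < k$ then demands a nominee $c$ from some party other than~$P_j$ with $N(q_j,c) < k$; call such a~$c$ a \emph{witness} for~$q_j$. Next, branch over every function $\sigma : \{2,\dots,t\} \to \{1,\dots,t\}$ with $\sigma(j) \neq j$, where $\sigma(j)$ specifies the party that is required to supply a witness for~$P_j$. There are at most $(t-1)^{t-1} = 2^{O(t \log t)}$ such functions, and any valid nomination induces at least one admissible~$\sigma$ (just pick one witness per non-distinguished nominee), so this branching step is lossless.

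For each fixed triple $(p,k,\sigma)$ the residual problem is a binary CSP: choose $q_j \in Q_j$ for every $j \ge 2$ so that $N(q_j, q_{\sigma(j)}) < k$ holds for every~$j$, with $q_1 := p$ already fixed. Its constraint graph is a functional digraph in which every $j \ge 2$ has exactly one out-arc $j \to \sigma(j)$ and vertex~$1$ is a sink, so each weakly connected component is either a tree rooted at vertex~$1$ or a union of trees hanging off a single directed cycle. A tree component yields to standard bottom-up dynamic programming over the domains~$Q_j$; a unicyclic component is handled by first fixing the nominee at one cycle vertex (at most $\maxsize$ choices) and then running the same tree-style DP on the remainder. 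Equivalently, this step may be cast as a restricted instance of \LabSubDigraph{} in which the pattern is a vertex-colored functional digraph on~$t$ vertices with vertex~$1$ pre-mapped to~$p$, a form solvable in polynomial time because the pattern has treewidth at most two.

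Summed over all triples $(p,k,\sigma)$, the running time is $2^{O(t \log t)} \cdot |I|^{O(1)}$, establishing FPT with parameter~$t$. The main obstacle I anticipate is justifying the two halves of the inner step: that branching over~$\sigma$ loses no solutions (immediate from the witness observation above) and that the CSP on the functional digraph is indeed polynomial-time solvable even when the digraph contains cycles disconnected from vertex~$1$. The latter relies on each weakly connected component containing at most one cycle, which limits the enumeration needed to break cyclic dependencies to a single vertex per component, after which tree-style dynamic programming suffices.
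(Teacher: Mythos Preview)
Your proposal is correct and essentially the same approach as the paper's. Both guess the nominee~$p$ of~$P^\star$, its target score, and a witness function (your~$\sigma$, the paper's~$\delta$) assigning to each other party the party whose nominee will certify its sub-threshold Maximin-score; both then solve the resulting binary CSP whose constraint graph is a functional digraph (equivalently, a pattern of maximum in-degree~$1$). The only presentational difference is that you solve the residual problem by explicit tree DP plus cycle-breaking, whereas the paper packages it as \LabSubDigraph\ with a pattern of maximum in-degree~$1$ and dispatches it via two reduction rules that contract degree-$1$ and degree-$2$ pattern vertices---the same computation in a different wrapper.
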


Let our input instance~$I$ of \PP\ be an election $\EE_0=(V,C,\{\succ_v\}_{v \in V})$ whose candidate set~$C$ is partitioned into a family~$\P$ of parties containing a distinguished party $P^\star \in \P$. 
Our algorithm \AlgMM\ first makes certain guesses about the properties of a hypothetical solution to~$I$, i.e., a nomination strategy that allows $P^\star$ to become the unique winner in the resulting reduced election~$\EE$. 
Then, after some preprocessing steps, 
we reduce our problem to the following directed variant of the \textsc{Partitioned Subgraph Isomorphism} problem~\cite{alon-yuster-zwick-colorcoding,marx-2010-tw}.

\begin{center}
\fbox{ 
\parbox{0.95\columnwidth}{
Problem {\bf \LabSubDigraph:} \\
{\bf Input:} Digraphs $D$ and~$H$ with labelling $\gamma:V(H) \rightarrow V(D)$. \\
{\bf Question:}  Is there a subdigraph~$\wt{H}$ of~$H$ that is isomorphic to~$D$, and an isomorphism~$f:V(D) \rightarrow V(\wt{H})$ that maps each vertex~$v$ of~$D$ to a vertex of~$\wt{H}$ with label~$v$, i.e., satisfies $\gamma(f(v))=v$?
}}
\end{center}

Given an instance of \LabSubDigraph, we may refer to~$D$ and~$H$ as the \emph{pattern} and the \emph{host} graphs, respectively.
We  say that $\wt{H}$ is \emph{$\gamma$-isomorphic} to~$D$ if it satisfies the requirements given in the problem definition.

It is easy to see that \LabSubDigraph\ is $\NP$-complete, e.g., by a simple reduction from \textsc{Multicolored Clique}; see the results by Marx~\cite{marx-2010-tw} for much stronger lower bounds for the undirected version. 
However, we will only need to solve \LabSubDigraph\ in the easy special case when all vertices of the pattern graph have indegree at most~$1$.

\begin{lemma}
\label{lem:labelled-subdigraph}
\LabSubDigraph\ can be solved in $O(|V(H)|^2)$ time if 
the pattern graph~$D$ has maximum indegree~1.
\end{lemma}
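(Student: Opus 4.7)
The plan is to exploit the restricted structure of pattern graphs with maximum indegree~$1$: each weakly connected component of $D$ is either an out-arborescence (a rooted tree with arcs pointing away from the unique indegree-zero vertex) or consists of a single directed cycle with out-arborescences hanging from its vertices, all arcs on these hanging trees pointing away from the cycle. I would solve the instance component by component and answer ``yes'' exactly when every component admits a valid partial embedding.

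The main algorithmic primitive is a bottom-up dynamic program that, for every vertex $v \in V(D)$, computes the set $F(v) \subseteq \gamma^{-1}(v)$ of candidates $y$ for which the descendant part of $v$ in $D$ admits a valid $\gamma$-isomorphic embedding with $f(v)=y$. For a leaf~$v$ I put $F(v) = \gamma^{-1}(v)$, and for an internal vertex $v$ with $D$-children $w_1,\dots,w_d$ I set
$$ F(v) = \bigl\{ y \in \gamma^{-1}(v) : \forall i \in [d],\ \exists z \in F(w_i) \text{ with } (y,z) \in E(H) \bigr\}. $$
For an out-arborescence component with root~$r$ the component admits a valid embedding iff $F(r) \neq \emptyset$. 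For a cyclic component with cycle $v_1 \to v_2 \to \cdots \to v_k \to v_1$ I first run the DP on all hanging subtrees, then compute a refined set $F^{*}(v_i) \subseteq \gamma^{-1}(v_i)$ at each cycle vertex accounting for its hanging children, and finally check whether there is a consistent cyclic assignment $(y_1,\dots,y_k) \in F^{*}(v_1) \times \cdots \times F^{*}(v_k)$ with $(y_i, y_{i+1 \bmod k}) \in E(H)$ for all~$i$, by propagating reachability around the cycle.

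Correctness of the DP will follow by induction on depth, using that in an out-arborescence the subtrees at distinct children of $v$ are disjoint as subsets of $V(D)$, so their candidate sets in $V(H)$ are disjoint as well (since $\gamma^{-1}$ partitions $V(H)$), and therefore the independent choices for the children are automatically compatible; the cyclic case reduces to the arborescence case once the hanging subtrees have been processed, plus a direct check of the closed walk of length~$k$.

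For the running-time bound the crucial point is that each arc $(y,z) \in E(H)$ is touched at most a bounded number of times throughout the DP: the only pattern arc whose treatment could involve $(y,z)$ is $(\gamma(y), \gamma(z))$, and this arc, if it belongs to $E(D)$, is the unique arc entering $\gamma(z)$ by the maximum-indegree-one assumption. Hence the arc-processing work of the DP phase is $O(|E(H)|) = O(|V(H)|^2)$. I expect the main obstacle to be absorbing the cyclic closure check into the same $O(|V(H)|^2)$ budget: a naive propagation from each starting candidate in $F^{*}(v_1)$ risks a cubic blow-up, which I would circumvent by organizing the reachability propagation around the cycle so that each arc of $H$ whose endpoints project onto consecutive cycle vertices is still inspected only a bounded number of times.
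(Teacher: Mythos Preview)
Your approach—bottom-up DP on the structural decomposition of $D$ into out-arborescences, possibly capped by a single directed cycle per component—is a clean alternative to the paper's reduction-rule method (Rules~A and~B). Both handle the arborescence part in $O(|E(H)|)$ time, and your justification that each host arc $(y,z)$ is relevant to at most one pattern arc (namely the unique arc into $\gamma(z)$) is exactly right.

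The gap is in the cycle closure, and it is genuine. You correctly flag it as the obstacle, but the proposed fix (``organizing the reachability propagation so that each arc of $H$ is inspected only a bounded number of times'') cannot work. Take $D$ to be a directed triangle $v_1\to v_2\to v_3\to v_1$ with $|\gamma^{-1}(v_i)|=\Theta(|V(H)|)$; deciding whether a consistent cyclic assignment exists is precisely partitioned triangle detection, for which no $O(n^2)$ algorithm is known—finding one would be a breakthrough in fine-grained complexity. More generally, for a $k$-cycle the closure test asks whether a product of $k$ Boolean matrices has a nonzero diagonal entry, and this cannot be bounded by the number of host arcs alone. Inspecting each arc $O(1)$ times lets you compute which vertices of $F^{*}(v_1)$ are reachable \emph{from some} vertex of $F^{*}(v_1)$ after one lap, but not which are reachable from \emph{themselves}.

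In fairness, the paper's own running-time argument carries the same weakness: it bounds the number of arc \emph{additions} in Rule~B by $O(|V(H)|^2)$ via ``no arc is added more than once,'' but does not account for the cost of \emph{computing} each set $A_{uw}$, which is itself a Boolean matrix product and is not output-sensitive. A bound such as $O(|V(H)|^3)$ is safely achievable by either approach and still suffices for the downstream FPT result.
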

\begin{proof}%
For a vertex~$v$ in a digraph~$G$, let $N_G^-(v)$ and~$N_G^+(v)$ denote $v$'s in- and outneighbors in~$G$, respectively.
    We will also use the notation $\Gamma_v=\{x \in V(H):\gamma(x)=v\}$ for the set of vertices in~$H$ with label~$v$ for some $v \in V(D)$.

    We introduce two simple rules that reduce the size of the input instance without changing its solvability.
    The first rule deals with vertices in the pattern graph that have indegree~$1$ and outdegree~$0$.
    
\begin{mylatrule}
\label{latrule:leaf}
    Let $(D,H,\gamma)$ be an instance of \LabSubDigraph\ containing a vertex~$v \in V(D)$ with $N_D^+(v)=\emptyset$ and $N_D^-(v)=\{u\}$.
    Then 
     delete $v$ from~$D$, and
     delete all vertices of~$\Gamma_u$ without
     out-neighbors in~$\Gamma_v$, as well as $\Gamma_v$ itself from~$H$.
\end{mylatrule}

The second rule deals with vertices in the pattern graph that have both in- and outdegree~1. 

\begin{mylatrule}
\label{latrule:cycle}
    Let $(D,H,\gamma)$ be an instance of \LabSubDigraph\ containing a vertex~$v \in V(D)$ with $N_D^+(v)=\{w\}$ and $N_D^-(v)=\{u\}$ such that $(u,w)$ is not an arc in~$D$.\footnote{In fact, it is not hard to modify Rule~\ref{latrule:cycle}  so that it becomes applicable even if $(u,w)$ is an arc of~$D$; however, the current weaker version of the rule is sufficient for us.} 
    First delete $v$ from~$D$ and add the arc~$(u,w)$ to~$D$. 
    Second, 
        delete $\Gamma_v$ from~$H$, and replace the arcs of~$H$ contained in~$\Gamma_u \times \Gamma_w$ with the arc set \[A_{uw}=\{(x,y):x \in \Gamma_u, y \in \Gamma_w,   N_H^+(x) \cap N_H^-(y)  \cap \Gamma_v \neq \emptyset\}.\]%
\end{mylatrule}%
\begin{restatable}[\linkproof{app:proof-of-clmlatrules}]{claim}{clmlatrules}
\label{clm:latrules}
    Applying Rule~\ref{latrule:leaf} or~\ref{latrule:cycle} yields an equivalent instance of \LabSubDigraph.
\end{restatable}  

Applying Rules~\ref{latrule:leaf} and~\ref{latrule:cycle} preserves the property that all vertices in the pattern graph have indegree at most~1. 
After applying Rule~\ref{latrule:leaf} exhaustively, we obtain an instance where all vertices of the pattern graph have in- and outdegree at most one, i.e., the pattern graph is  
a disjoint union of directed cycles, paths, and isolated vertices. In fact, since Rule~\ref{latrule:leaf} is applicable whenever the pattern graph has a connected component that is a directed path with at least two vertices, we know that after the exhaustive application of Rule~\ref{latrule:leaf} we arrive at a pattern graph that is %
a disjoint union of directed cycles and isolated vertices. 
Applying then Rule~\ref{latrule:cycle} exhaustively we arrive at an instance~$I^\star$ whose pattern graph $D^\star$ consists solely of isolated vertices, possibly with loops.
Solving such an instance~$I^\star$ is easy: $I^\star$ is a ``yes''-instance if and only if the host graph contains a vertex~$f(v)$ with label~$v$ for each $v \in V(D^\star)$, with $f(v)$ having a loop whenever $v$ has an incident loop in~$D^\star$. 

Notice that applying either of the two rules consists of the deletion of vertices and, possibly, the addition of arcs to the host graph. Starting from an instance $(D,H,\gamma)$, the total time spent on the former is $O(|V(H)|+|V(D)|)=O(|V(H)|)$, whereas the total time spent on the latter is at most~$O(|V(H)|^2)$, because no arc is added more than once to~$H$. Hence, the total running time is $O(V|H|)^2$.
\end{proof}

We are now ready to describe the steps of \AlgMM\ when run on the instance~$(\EE_0,\P,P^\star)$; see 
\ifshort the full version \K{\cite{SC24-fullversion}}
\else Appendix~\ref{app:proof-of-fptcorrect}
\fi for its correctness.

\begin{description}
    \item[Step~1.] %
    Guess the candidate~$p$ nominated by~$P^\star$ in  the reduced election~$\EE$, as well as its Maximin-score $s^\star=\MM_\EE(p)$ in~$\EE$.
    \item[Step~2.] 
    For each party $P \in \P \setminus \{P^\star\}$, guess a party $P' \in \P \setminus \{P\}$ for which the nominees $c$ and $c'$ of $P$ and $P'$ in~$\EE$, respectively, satisfy $N_{\E}(c,c') < s^\star$. Let $\delta(P)$ denote the guessed party.
    \item[Step 3.] 
    Delete every candidate~$c \in C$ for which $N_{\E_0}(p,c)<s^\star$. 
    \item[Step 4.] 
    For each party~$P \in \P$ such that 
    $\delta(P)=P^\star$, 
    delete all candidates~$c \in P$ for which $N_{\E_0}(c,p)\geq s^\star$.    
    \item[Step 5.]    
    Let $X$ be the set of candidates deleted in Steps~3 and~4.
    If there is a party $P \in \P \setminus \{P^\star\}$ with $P \subseteq X$, then return ``no.''     
    \item[Step 6.]
    Construct a digraph~$D$ whose vertex set is $\P \setminus \{P^\star\}$ and contains an arc~$(P',P)$ if and only if $P'=\delta(P)$; hence, each vertex in~$D$ has at most one incoming arc.\\
    Construct also a digraph~$H$ over $C \setminus X \setminus P^\star$ in which $(c',c)$ is an arc if and only if 
    $N_{\E_0}(c,c')<s^\star$.
    We set the label~$\gamma(c)$ of each candidate~$c$ to be the party containing~$c$.
    \item[Step 7.] Solve \LabSubDigraph\ on instance $J=(D,H,\gamma)$ using the algorithm of Lemma~\ref{lem:labelled-subdigraph}, and return ``yes'' if and only if $H$ admits a  subdigraph $\gamma$-isomorphic to~$D$. Otherwise return ``no.''
\end{description}

\balance

\section{Conclusions and future research}

We provided a detailed multivariate complexity analysis of the \PP{} problem in the framework of candidate nomination by parties for several Condorcet-consistent rules; see Table~\ref{tab:summary} for a summary. 
Our results show a clear difference between Copeland$^\alpha$ for $\alpha \in [0,1]$ and Maximin: although both remain $\NP$-hard even for a constant number of voters, \PP\ for Maximin becomes tractable (in the parameterized sense) in the realistic scenario where the number~$t$ of parties is small, while Copeland$^\alpha$ remains intractable even then.  %
An intriguing question we left open is whether \PP\ for Copeland with two or three voters becomes FPT when parameterized by~$t$.

For another promising research direction, recall that our algorithms for two voters relied on the transitivity of the ``defeat'' relation. Interestingly, the defeat relation is transitive for any number of voters if preferences are single-peaked. 
 Faliszewski et al.~\cite{faliszewski2016} proved that \PP{} for Plurality remains $\mathsf{NP}$-complete for such preferences. Misra~\cite{misra2019parameterized} strengthened this result by showing $\NP$-hardness for  1D-Euclidean profiles that are both single-peaked and single-crossing, even with maximum party size~$2$. What is the situation for voting rules other than Plurality?

 The related {\sc Necessary President} problem, asking if some candidate of a given party can become the winner regardless of nominations from other parties, was shown to be $\mathsf{coNP}$-complete for Plurality by Faliszewski et al.~\cite{faliszewski2016}, even with maximum party size two. Cechl{\'a}rov{\'a} et al.~\cite{cechlarova2023candidates} added the analogous results for $\ell$-Approval, $\ell$-Veto, and Plurality with run-off, and gave integer programs for  {\sc Necessary President} for further voting rules including Copeland, Llull, and Maximin. 
 As far as we know, the parameterized complexity of this problem 
 has not been considered yet.

\begin{acks}
Ildik\'o Schlotter is supported by the Hungarian Academy of Sciences under its Momentum Programme (LP2021-2) and its J\'anos Bolyai Research Scholarship.
Katar\'ina Cechl\'arov\'a is supported by VEGA 1/0585/24 and
APVV-21-0369.
\end{acks}

\bibliographystyle{ACM-Reference-Format} 
\bibliography{citeme}

\ifshort
\else
\clearpage
\begin{appendices}
\section{Additional related work}
\label{app:relatedwork}

The topic of this paper---the problem of nominating the candidates for an election by parties---is relatively new in the computer science literature, as it was initiated by Faliszewski et al.~\cite{faliszewski2016} only in 2017. However, the study of the computational complexity of problems connected with elections where the set of candidates is not fixed appeared already in the seminal paper by Bartholdi et al.~\cite{bartholdi1992} in 1992, in the form of control by adding or deleting candidates.  
Of the extensive literature on electoral control, we shall mention in Section~\ref{subsec:control} only results that are closest to the  current paper, i.e., those that consider Condorcet-consistent voting rules, in particular, Copeland$^\alpha$ and Maximin. 
For a broader overview of electoral control see the survey by Faliszewski and Rothe~\citeapp{Chapter7}.
 Additionally, in Section \ref{sec:parties} we present a review of alternative approaches to the study of elections involving parties that may contain either candidates or voters.

\subsection{Control of Elections}\label{subsec:control}

In the model introduced by Bartholdi et al.~\cite{bartholdi1992} for electoral control, there is a central authority, the \emph{controller}, that aims to achieve  a given goal by applying certain control actions. 
The possible control actions related to changing the set of candidates  include
adding, deleting, replacing, or partitioning candidates. The aim of the controller may be constructive, i.e., to make a distinguished candidate the winner of the election, or destructive, when it aims to prevent a given candidate from winning the election. 
In the combinatorial variant of control, adding or deleting a
candidate automatically requires adding or deleting a whole group of candidates. 

Control by partitioning the candidates assumes a two-stage %
election process where the candidate set~$C$ is partitioned into two disjoint subsets~$C_1$ and~$C_2$. In a sequential election, the entire electorate first votes on the candidates from $C_1$, while the second stage of the election is conducted between the winner of the first election and the candidates of~$C_2$. 
Alternatively, in a run-off election the first stage consists of two separate elections on the candidate sets~$C_1$ and~$C_2$, and the second stage of the election is conducted over the set of  winners from~$C_1$ and~$C_2$. 
In such two-stage elections, the controller may aim to achieve its goal by suitably choosing the partitioning of the candidate set.

Bartholdi et al.~\cite{bartholdi1992} dealt with constructive control and observed that Condorcet voting is immune to control by adding candidates but is computationally vulnerable to control by deleting candidates and by partitioning of candidates for both sequential and run-off elections. The last results have been strengthened by Erd\'elyi et al.~\citeapp{erdelyi2015more} who showed that polynomial-time solvability of the run-off version is preserved even in the case when the cardinality of the two candidate partition sets should differ by at most one. 

 Betzler and Uhlmann~\citeapp{betzlerUhlmann2009} studied constructive control for the Llull and Copeland voting rules. They showed that  control by deleting candidates for Llull is FPT with respect to the parameter ``number of candidates defeating the distinguished candidate'' and  control by deleting candidates for Copeland is $\mathsf{NP}$-hard  even when for every candidate the number of candidates that are not tied
with it is at most three. 
If the voting is tie-free, then for any $\alpha\in[0,1]$ control by deleting and adding candidates for Copeland$^\alpha$ election is $\mathsf{W}$[2]-complete with respect to the parameter ``number of deleted candidates'' 
and $\vphantom{L^{L^{\hat{a}}}}$ ``number of added candidates'', respectively.
For Copeland elections, control by deleting candidates  is $\mathsf{NP}$-complete for six voters and control by adding candidates is $\mathsf{NP}$-complete   if there are eight voters. For Llull, control by deleting candidates is $\mathsf{NP}$-complete for ten voters and control by adding candidates is $\mathsf{NP}$-complete  for eight voters.

A follow-up study is provided by  Faliszewski et al.~\cite{faliszewski2009llull}. The authors showed that  Copeland$^\alpha$  for any  $\alpha\in[0,1]$ is vulnerable to destructive control via adding, deleting, or partitioning candidates. Constructive control of all these types  is  computationally resistant for $\alpha\in [0,1]$ except when the number of added candidates is unlimited  and  $\alpha\in \{0,1\}$.
These results have been fine-tuned by Chen et al.~\citeapp{chen2017elections}. The authors show that for Copeland$^\alpha$ the constructive control by adding as well as deleting candidates is $\mathsf{NP}$-hard, the former if there are 20 voters and the latter for 26 voters.

Gurski and Ross~\citeapp{gurski2014} considered approximating the number of candidates necessary to add or to delete in order to achieve the constructive or the destructive aim of the control. They showed that there is no approximation algorithm with absolute performance guarantee for optimal constructive control by
deleting candidates in Copeland and by adding candidates in Llull voting schemes, unless $\mathsf{P} =\mathsf{NP}$. Moreover, there is no EPTAS  for optimal constructive control by adding and deleting candidates in
Copeland and Llull voting schemes, unless $\mathsf{W}[2] =\mathsf{FPT}$. Furthermore, Gurski and Ross proposed binary linear programs for solving the studied control problems and tested their performance on randomly generated elections.

For the Maximin voting rule, Liu and Zhu~\citeapp{liu2010parameterized}  proved that constructive control by adding candidates is $\mathsf{W}[2]$-hard with respect to the parameter ``number of added candidates''. 
Chen et al.~\citeapp{chen2017elections} showed that  constructive control by adding candidates is $\mathsf{NP}$-hard even if there are only 10 voters. 
Maushagen and Rothe~\citeapp{maushagen2020last} studied
constructive control by partitioning candidates in Maximin elections, and proved it to be $\mathsf{NP}$-complete in both the sequential and the run-off versions.

Combinatorial control was studied in Chen et al.~\citeapp{chen2017elections}. They
   proved that combinatorial constructive control by deleting candidates is $\mathsf{NP}$-hard for  Copeland$^\alpha$ and Maximin, even for elections with only a single voter, using the fact that these voting rules satisfy the unanimity property (which means that if a candidate~$c$ is ranked first by all the voters, then $c$ is the unique winner).
  For destructive control by deleting candidates they proved
  $\mathsf{NP}$-hardness for the  Maximin voting rule even in case of five voters, and for  Copeland even in case of three voters.  As far as constructive control by adding candidates is concerned, it is $\mathsf{NP}$-hard even for elections with three voters for Copeland$^\alpha$ with any $\alpha\in[0,1]$ and for 
Maximin for elections with only six voters. By contrast, combinatorial destructive control by adding candidates can be solved in polynomial time.%

Erd\'elyi et al.~\citeapp{erdelyi2021} gave a detailed overview of the results on computational complexity of election control by adding or deleting candidates. They complemented the known results by proving that the constructive control by replacing candidates is  $\mathsf{NP}$-complete for both Copeland$^\alpha$ and Maximin voting rules, while the destructive control is in $\mathsf{P}$ in both cases.

\subsection{Elections Involving Parties}\label{sec:parties}

Harrenstein et al.~\citeapp{harrenstein2021hotelling} present
a model for the strategic selection of party nominees.
Each party is fully described by a set of points on
the line (political spectrum) that correspond to the
positions of the candidates the party has to choose its nominee from. Each nominee then attracts the closest voters compared to all other representatives. The paper shows that a Nash equilibrium
is not guaranteed to exist even in a two-party game. Finding a Nash equilibrium is $\mathsf{NP}$-complete for the
general case, but for  two competing parties can be computed in linear time. 

In a different model, Harrenstein and Turrini~\citeapp{harrenstein2022computing} consider district-based elections.  In each district, voters rank the nominated candidates and elect the plurality winners. So  parties have to strategically place their candidates in districts so as to maximize the number of their nominees that get elected.  The authors show  that deciding the existence of pure
Nash equilibria for these games is $\mathsf{NP}$-complete if party size is bounded by a constant and  $\Sigma^P_2$-complete for the general case.

Another model of control  with elections involving parties was introduced by Perek et al.~\citeapp{perek2013}. Unlike in the current paper, the authors assume that voters, not candidates are partitioned into parties and voters in the same party vote in the same way. It is assumed that the most preferred candidate of a given leading party is the winner of the election. The authors study  the computational complexity of the problems to determine the maximum number of voters who can leave the leading party without changing the winner and the minimal number of voters that must leave the leading party to ensure that the winner will be changed. Perek at al.~\citeapp{perek2013} and a follow-up paper by Guo et al.~\citeapp{guo_yang2015} classify these problems into polynomial or  $\mathsf{NP}$-complete for several different voting rules.

\section{Additional Notation}
\label{app:graphs}

Here we provide some additional definitions for concepts appearing in the paper.

\medskip
\noindent
{\bf Undirected graphs.}
An undirected graph $G$ is a pair $G=(U,E)$ where $U$ is the set of \emph{vertices} and $E$ the set of \emph{edges} in~$G$, each edge connecting two vertices. If an edge~$e$ connects $u$ and~$v$, then we write $e=uv$, and we say that $u$ and $v$ are the \emph{endpoints} of~$e$. Moreover, $u$ and $v$ are \emph{adjacent to} each other, and each of them is \emph{incident to}~$e$.  
The \emph{neighborhood} of a vertex~$u\in U$ in~$G$ is denoted by $N_G(u)=\{v:uv \in E\}$, and we call $|N_G(u)|$ the degree of~$u$. 
If the set of vertices $U$ of $G$ can be partitioned into two disjoint sets $A$ and $B$ such that each edge connects a vertex in~$A$ with a vertex in~$B$, then $G$ is  \emph{bipartite}, and to stress its bipartition we shall write $G=(A\uplus B, E)$. 

Given a subset of vertices $K\subset U$, we denote by $G[K]$ the subgraph of $G$ \emph{induced} by~$K$, i.e., the graph whose vertex set is~$K$ and whose edges are those edges from $E$ that connect two vertices in~$K$. 

A \emph{clique} is a set of vertices that are pairwise adjacent, and an \emph{independent set} is a set of vertices so that they are pairwise non-adjacent. A set of edges is \emph{independent}, if no two of them shares an endpoint.

\smallskip
\noindent
{\bf Directed graphs.}
A directed graph~$D$ (or \emph{digraph}) is a pair~$(V,A)$ with vertex set~$V$ and arc set~$A$, where $A\subseteq V \times V$; we will also write $V(D)$ to denote the set of vertices in~$D$.
Each arc $a =(u,v) \in A$ \emph{points} from its \emph{head} vertex~$u \in V$ to its \emph{tail} vertex~$v \in V$.  
A \emph{loop} is an arc of the form~$(v,v)$ for some vertex~$v \in V$.
We write $N_D^-(v)=\{u:(u,v) \in A\}$ and~$N_D^+(v)=\{u:(v,u) \in A\}$ to denote the set of \emph{in-} and \emph{out-neighbors} of a vertex~$v \in V$ in~$D$; when the digraph~$D$ is clear from the context, we may omit the subscript.

A \emph{tournament} is a directed graph~$D=(V,A)$ without loops where for each distinct vertices $u,v \in V$, exactly one of~$(u,v)$ and~$(v,u)$ is an arc contained in~$A$.

Digraphs~$D=(V,A)$ and~$D'=(V',A')$ are \emph{isomorphic}, if there exists a mapping $f:V \rightarrow V'$ such that for each $u,v \in V$, it holds that $(u,v) \in A$ if and only if $(f(u),f(v) \in A')$;
the function~$f$ is then called an \emph{isomorphism}.

\smallskip
\noindent
{\bf Parameterized complexity.}
Parameterized complexity was introduced by Downey and Fellows~\cite{downey1999parameterized} as a tool to deal with hard problems. 
Each instance of a parameterized problem $Q$ is a pair $(I,k)$ consisting of an input $I$ and a parameter $k$, which is usually an integer. A parameterized problem is \emph{fixed-parameter tractable} (FPT) if there is an algorithm which correctly determines for each instance $(I,k)$ of the problem whether $(I,k)$ is a ``yes''-instance in running time $f(k)\cdot|I|^{O(1)}$, where $f$ is a computable function.

Analogously to $\NP$-hardness, parameterized complexity theory offers the notion of \emph{$\W1$-hardness};  
proving that a parameterized problem~$Q$ is $\mathsf{W}[1]$-hard provides strong evidence that we cannot expect an FPT algorithm for~$Q$. 
$\W1$-hardness can be established via  \emph{parameterized reductions}: given two parameterized problems $Q$ and~$Q'$, a parameterized reduction from~$Q$ to~$Q'$ is an algorithm that runs in FPT time and transforms an instance $(I,k)$ of~$Q$ into an equivalent instance $(I',k')$ of $Q'$ such that $k' \leq g(k)$ for some computable function $g$. 
If $Q$ is $\mathsf{W}[1]$-hard, then such a reduction from~$Q$ to~$Q'$ implies that $Q'$ is $\mathsf{W}[1]$-hard as well.

If a parameterized problem~$Q$  is $\NP$-hard for some fixed constant value of the parameter, then $Q$ is said to be para-$\NP$-hard with respect to this parameter.
By contrast, if $Q$ can be solved in polynomial time for all constant values of the parameter, then we say that $Q$ is in the class $\mathsf{XP}$. Clearly, $\mathsf{FPT \subseteq XP}$.

\medskip
\noindent
{\bf Additional notation for elections.}
Given a reduced election~$\EE$, we define the Copeland$^\alpha$-score of a party~$P$ as the  Copeland$^\alpha$-score of its nominated candidate in~$\EE$, and denote it as $\Cpl(P)$.
We define the Maximin-score of~$P$ analogously, and denote it as~$\MM(P)$.

Sometimes, when defining preferences of voters, we assume that there is an arbitrarily fixed total order~$\theta$ given over the set of all candidates~$C$. When we write~$\overrightarrow{A}$ for some $A\subseteq C$ we mean the sequence of candidates in~$A$ listed according to~$\theta$; notation~$\overleftarrow{A}$ denotes the sequence of these candidates in the opposite order.
\section{Missing proofs from Sections~\ref{sec:2voters} and~\ref{sec:3voters}}.

\subsection{Proof of Theorem~\ref{thm:Llull-2voters}}
\label{app:proof-of-Llulltwovoters}

\thmLlulltwovoters*

\begin{proof}
Using Observation~\ref{obs:2voters_transitivity}, it is not hard to prove the following claim:  some candidate~$p$ contained in our distinguished party can be a unique winner in an election~$\EE$ resulting from some nominations if and only if $p$ defeats every other nominee.

First, if $p$ defeats all other nominees in~$\EE$, then its Llull-score is $t-1$, the maximum score possible, while each candidate~$c \in C \setminus \{p\}$ is defeated by~$p$ and hence obtains a Llull-score at most~$t-2$. Thus, $p$ is indeed the unique winner of~$\EE$.

Second, suppose that there exists some candidate~$c_0$ not defeated by~$p$; we show that $p$ cannot be the unique winner in~$\EE$. Suppose for the sake of contradiction that $p$ has higher Llull-score than every other nominated candidate. 
Then each nominee in~$C \setminus \{p\}$ has to be defeated by some other nominee, as otherwise its Llull-score will be $t-1$, preventing $p$ from being the unique winner.
In particular, $c_0$ must be defeated by some nominee~$c_1$. Similarly, $c_1$ must also be defeated by some nominee~$c_2$. Again, $c_2$ in turn must be defeated by some nominee~$c_3$.
Since we can always repeat this argument and the number of nominees is finite, we must arrive  at a cycle in the ``defeat'' relation, contradicting Observation~\ref{obs:2voters_transitivity}. 

The claim we have just proved offers a simple quadratic-time algorithm to solve an instance of \PP{} for Llull voting with two voters: we simply need to check for each candidate $p$ in the distinguished party~$P$ whether each other party~$P'$ contains at least one candidate~$c_{P'}$ that is defeated by $p$ (i.e., is ranked worse than~$p$ by both voters). If such a candidate exists for all parties, then $p$ can be the unique winner by nominating these candidates, and thus we output ``Yes''; otherwise, we know that $p$ cannot be the unique winner. If no candidate in~$P$ can be the unique winner, then we output ``No''.
\end{proof}

\subsection{Proof of Theorem~\ref{thm:Copeland-v=s=2}}
\label{sec:proof-of-CopelandtwoNPc}

\thmCopelandtwovoters*
\begin{proof}
We reduce from \textsc{3-Coloring}.
Let $G=(U,E)$ be our input graph with vertex set $U=\{u_1,\dots,u_n\}$. We construct an instance of \PP\ with voter set~$\{v,v'\}$ and~$\maxsize=2$ as follows.

We let $P=\{p\}$ be our distinguished party, and we create candidates $p'_1,p'_2,p'_3,d,d'$ as well as candidates~$q_i$ for $i \in [n]$; all of these $n+5$ candidates form singleton parties. For each color~$c \in [3]$, we further create a party~$P_{u_i}^c=\{u_i^c,\overline{u}_i^c\}$ for each~$i \in [n]$, and a party $P_e^c=\{e_i^c,e_j^c\}$ for each edge $e=u_i u_j \in E$. Then the number of parties is $t=4n+3|E|+6$, and each party has size at most two.

To define the preferences of~$v$ and~$v'$, we first define certain series of candidates that we call \emph{blocks}. For each candidate $u_i^c$ representing the option of coloring vertex $u_i \in U$ with color~$c \in [3]$, we define the set 
\[    A(u_i^c)=\left\{e^c_i:e \in E \textrm{ is incident to }u_i \right\}.\]
Note that if $i \neq j$, then the sets~$A(u_i^c)$ and $A(u_j^c)$ are disjoint, and if $u_i u_j=e \in E$, then both of these sets contain exactly one candidate from the party~$P_e^c$. Moreover, we also have $\bigcup_{i \in [n]} A(u_i^c)= \bigcup_{e \in E} P_e^c$ for each $c\in [3]$. Recall the notation introduced in Appendix~\ref{app:graphs} that we write $\ora{X}$ for the ordering of a set~$X$ of candidates according to some arbitrary fixed ordering over all candidates, and we write $\ola{X}$  for its reverse.
We can now define the following blocks, $X^c$ and~$\wt{X}^c$, for each $c \in [3]$:
\begin{align}
\label{eq:Xblocks-def}
\begin{split}
X^c &= u_1^c, \ora{A(u_1^c)}, u_2^c, \ora{A(u_2^c)}, \dots, u_n^c, \ora{A(u_n^c)}; \\
\wt{X}^c &= u_n^c, \ola{A(u_n^c)}, u_{n-1}^c, \ola{A(u_{n-1}^c)}, \dots, u_1^c, \ola{A(u_1^c)}.
\end{split}
\end{align}
We further introduce the following blocks:
\begin{align*}
    Y &= 
    q_1,\ol{u}_1^1,\ol{u}_1^2,\ol{u}_1^3,\, 
    q_2,\ol{u}_2^1,\ol{u}_2^2,\ol{u}_2^3,\dots,
    q_n,\ol{u}_n^1,\ol{u}_n^2,\ol{u}_n^3; \\
    \wt{Y} &= 
    q_n,\ol{u}_n^1,\ol{u}_n^2,\ol{u}_n^3,
    q_{n-1},\ol{u}_{n-1}^1,\ol{u}_{n-1}^2,\ol{u}_{n-1}^3, \dots,
    q_1,\ol{u}_1^1,\ol{u}_1^2,\ol{u}_1^3.    
\end{align*}

\noindent Now we can define the preferences of voters~$v$ and~$v'$ as follows:
\begin{align*}
    v &: p,p'_1,p'_2,p'_3,Y,X^1,X^2,X^3,d,d'; \\
    v' &: \wt{X}^3,\wt{X}^2,\wt{X}^1,d,d',\wt{Y},p,p'_1,p'_2,p'_3.
\end{align*}

First notice that $\Cpl(p)=3+\alpha(t-4)$ in
any %
reduced election~$\EE$, because $p$ defeats~$p'_1,p'_2,$ and~$p'_3$, and is tied with every other nominee. 
Define the candidate set~$\ol{U}_i:=\{\ol{u}_i^1,\ol{u}_i^2,\ol{u}_i^3\}$ for each $i \in [n]$, and the set~$A^c$ of candidates that are present in~$X^c$ (or equivalently, in~$\wt{X}^c$), i.e., $A^c:=\bigcup_{e \in E} P_e^c \cup \{u_i^c:i \in [n]\}$ for each $c \in [3]$.

Observe the following facts: 
\begin{itemize}
    \item Nominees in~$\bigcup_{i \in [n]} \ol{U}_i \cup \{p'_1,p'_2,p'_3,d,d'\}$ each defeat at most two nominees, and have Copeland$^\alpha$-score less than $\Cpl(p)$. 
    \item For each $i \in [n]$, nominee~$q_i$ is tied with every other nominee except for those in $\ol{U}_i$, while it defeats all nominees in~$\ol{U}_i$. Therefore, $\Cpl(q_i)<\Cpl(p)$ if and only if at most two candidates from~$\ol{U}_i$ are nominated, which in turn means that at least one candidate among $\{u_i^1,u_i^2,u_i^3\}$ is nominated.     
    \item For each $c \in [3]$, all nominees in~$A^c$ 
    defeat both~$d$ and~$d'$, and are tied with every other nominee not in~$A^c$. Hence, it holds that all nominees in~$A^c$ have Copeland$^\alpha$-score less than~$\Cpl(p)$ if and only if no nominee in~$A^c$ defeats another nominee in~$A^c$.
\end{itemize} 

This shows that $p$ is the unique winner in the reduced election~$\EE$ if and only if 
no nominee in~$A^c$ defeats another nominee in~$A^c$ for each $c \in [3]$. 
We claim that there exists such an election~$\EE$ resulting from some nominations if and only if
$G$ admits a proper 3-coloring.

\medskip 
\noindent
{\bf Direction ``$\Longrightarrow$'':}
Suppose first that $p$ is the unique winner in~$\EE$. By the above observations, for each $i \in [n]$, at least one candidate among $\{u_i^1,u_i^2,u_i^3\}$ must be nominated; let us color vertex~$u_i$ with some (arbitrarily fixed) color~$c \in [3]$ for which $u_i^c$ is nominated in~$\EE$.
Using that no nominee in~$A^c$ can defeat another nominee in~$A^c$, we are going to prove that the constructed coloring is proper. Assume for contradiction that $u_i$ and $u_j$ are both colored with some color~$c \in [3]$,  but there is an edge $e=u_iu_j \in E$. Consider the party $P_e^c=\{e^c_i,e^c_j\}$. If $P_e^c$ nominates $e_i^c$, then both~$v$ and~$v'$ prefer $u_i^c$ to~$e_i^c$, and thus party~$P_{u_i}^c$ defeats~$P_e^c$, a contradiction to our assumption on~$\EE$. 
If $P_e^c$ nominates~$e_j^c$, then a symmetric argument shows that $P^c_{u_j}$ defeats~$P_e^c$; a contradiction again.  Hence, our coloring is indeed proper.

\medskip 
\noindent
{\bf Direction ``$\Longleftarrow$'':}
Suppose now that $\chi:U \rightarrow [3]$ is a proper 3-coloring of~$G$. Let us nominate~$u_i^{\chi(u_i)}$ and also the candidates in~$\{ \ol{u}_i^{c'}:c' \in [3],c' \neq \chi(u_i)\}$  for each $i \in [n]$. Notice that in this way, at most two candidates from~$\ol{U}_i$ are nominated for each~$i \in [n]$.
Furthermore, for each $e \in E$ and $c \in [3]$, let us nominate the candidate~$e_i^c$ where $u_i$ is an (arbitrarily fixed) endpoint of~$e$ that does \emph{not} have color~$c$; crucially, $e$ must admit such an endpoint, because $\chi$ is a proper coloring, so $\chi(u_i)=\chi(u_j)=c$ is not possible.
 It remains to show that for each $c \in [3]$, no candidate from~$A^c$ defeats another in~$A^c$.

We say that candidates of the form~$u_i^c$ or~$e_i^c$ are \emph{associated} with color~$c$. 
First, note that two candidates associated with different colors are ranked differently by~$v$ and~$v'$, due to the ordering of the different blocks in the preference lists of~$v$ and~$v'$. 
Hence, it suffices to consider nominated candidates that are associated with the same color, say~$c \in [3]$.

Note that each two candidates~$u_i^c$ and $u_j^c$ for some $i,j \in [n]$ are ranked differently by~$v$ and~$v'$, because $v$ and $v'$ order the candidates $u_1^c,\dots,u_n^c$ in exactly the opposite order.
Let us now show similarly that each two candidates~$e_i^c$ and $\hat{e}_j^c$ for some $i,j \in [n]$ are ranked differently by~$v$ and~$v'$. If $i < j$, then $v$ prefers all candidates in $A(u_i^c)$ to all candidates in~$A(u_j^c)$, while $v'$ prefers candidates in $A(u_j^c)$ to candidates in~$A(u_i^c)$; 
the case $i>j$ is analogous. If $i=j$, then $e_i^c$ and $\hat{e}_j^c$ are both contained in~$A(u_i^c)$, and our claim follows from the fact that $v$ and $v'$ order the candidate set~$A(u_i^c)$ in exactly the opposite order.

Hence, it remains to compare two candidates $u_i^c$ and~$e_j^c$ for some $i,j \in [n]$ and $e \in E$. If $i < j$, then $v$ prefers~$u_i^c$ to~$e_j^c \in A(u_j^c)$, but $v'$ prefers~$e_j^c$ to~$u_i^c$. The case $i>j$ is analogous. 
Therefore, the only possibility when $v$ and~$v'$ could order~$u_i^c$ and~$e_j^c$ in the same way is when $i=j$. However, then it is not possible that both $u_i^c$ and $e_j^c=e_i^c$ are nominates, since we only nominate $e_i^c$ if $\chi(u_i) \neq c$, in which case $u_i^c$ is \emph{not} nominated. This proves that there are no two nominees that are ranked in the same order by~$v$ and~$v'$. Therefore, no candidate in~$A^c$ defeats another candidate in~$A^c$, as required, finishing the proof.
\end{proof}

\subsection{Proof of Theorem~\ref{thm:3voters-NPh}}
\label{sec:proof-of-3votersNPh}

We start with proving Theorem~\ref{thm:MMC-deg3} in Appendix~\ref{app:MC-variant}, then prove the correctness of the reduction presented in Section~\ref{sec:3votersreduction} in Appendix~\ref{app:proof-of-correctness-3voters}.

\subsubsection{Proof of Theorem~\ref{thm:MMC-deg3}}
\label{app:MC-variant}
    
    \mmcvariant*
    
    \begin{proof}
    Let $G=(S \uplus C, R,E)$ be our input instance with~$S$, $C$, and~$R$ being the set of singles, couples, and rooms, respectively.
    It is easy to see that by adding the necessary number, namely $2|R|-(|S|+2|C|)$, of dummy singles and making them adjacent to all rooms in~$G$ we obtain an equivalent instance where a complete matching necessarily assigns two people (i.e., two singles or one couple) to each room. Hence, we may assume that $|R|=|S|/2+|C|$. We may further assume that $G$ contains no isolated vertices.
    
    Next, we are going to construct an equivalent instance where each vertex in~$G$ has degree at most~$3$.
    We will achieve this through a series of operations,
    each one modifying the current instance and constructing an equivalent instance.

    Rule~\ref{rule:separate} decreases the number of rooms that are adjacent to at least one single and to at least two couples.
        \begin{myrule}
        \label{rule:separate}
        If $r \in R$ is adjacent to at least two couples and at least one single, then do the following: 
        add a new room~$r'$ and a new couple~$c'$, 
        and modify the set of edges by making~$c'$ adjacent to both~$r$ and~$r'$, and then replacing each edge~$rs$ connecting $r$ to a single~$s \in S$ with an edge~$r's$. See Figure~\ref{fig:separate} for an illustration.
        \end{myrule}
    
    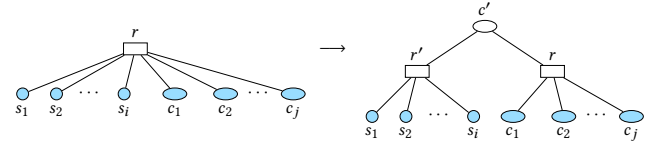
\begin{figure}[ht]
    \centering
    \resizebox{1\columnwidth}{!}{       
    \begin{tikzpicture}[scale=1]
    
    \node[room,label={[label distance=2pt]90:{$ r$}}] (r) at (0,0.4) {};
    \node[single,fill=bordercol,label={[label distance=2pt]-90:{$s_1$}}] (s1) at (-2.0,-0.4) {};
    \node[single,fill=bordercol,label={[label distance=2pt]-90:{$s_2$}}] (s2) at (-1.4,-0.4) {};
    \node at (-0.8,-0.4) {$\cdots$};
    \node[single,fill=bordercol,label={[label distance=2pt]-90:{$s_i$}}] (s3) at (-0.2,-0.4) {};
    \node[couple,fill=bordercol,label={[label distance=2pt]-90:{$c_1$}}] (c1) at (0.7,-0.4) {};  	
    \node[couple,fill=bordercol,label={[label distance=2pt]-90:{$c_2$}}] (c2) at (1.6,-0.4) {};  	
    \node at (2.2,-0.4) {$\cdots$};
    \node[couple,fill=bordercol,label={[label distance=2pt]-90:{$c_j$}}] (c3) at (2.8,-0.4) {};  	
    
    \draw (r)--(s1);
    \draw (r)--(s2);
    \draw (r)--(s3);
    \draw (r)--(c1);
    \draw (r)--(c2);
    \draw (r)--(c3);
    
    \node at (3.5,0.4) {$\longrightarrow$};
    
    \node[single,fill=bordercol,label={[label distance=2pt]-90:{$s_1$}}] (s'1) at (4.2,-0.8) {};
    \node[single,fill=bordercol,label={[label distance=2pt]-90:{$s_2$}}] (s'2) at (4.8,-0.8) {};
    \node at (5.4,-0.8) {$\cdots$};
    \node[single,fill=bordercol,label={[label distance=2pt]-90:{$s_i$}}] (s'3) at (6,-0.8) {};
    \node[room,label={[label distance=2pt]90:{$ r'$}}] (r') at (5.0,0) {};
    \node[couple,fill=bordercol,label={[label distance=2pt]-90:{$c_1$}}] (c'1) at (6.7,-0.8) {};  	
    \node[couple,fill=bordercol,label={[label distance=2pt]-90:{$c_2$}}] (c'2) at (7.6,-0.8) {};  	
    \node at (8.2,-0.8) {$\cdots$};
    \node[couple,fill=bordercol,label={[label distance=2pt]-90:{$c_j$}}] (c'3) at (8.8,-0.8) {};  	
    \node[room,label={[label distance=2pt]90:{$ r$}}] (rb) at (7.4,0) {};
    \node[couple,label={[label distance=2pt]90:{$ c'$}}] (cb) at (6.2,0.8) {};
    
    \draw (r')--(s'1);
    \draw (r')--(s'2);
    \draw (r')--(s'3);
    \draw (rb)--(c'1);
    \draw (rb)--(c'2);
    \draw (rb)--(c'3);
    \draw (rb)--(cb)--(r');
    
    \end{tikzpicture}
    }
    \caption{Illustration for Rule~\ref{rule:separate}.
     Singles, couples, and rooms are depicted as circles, ellipses, and rectangles, respectively. We highlighted in light blue those nodes that might be connected to parts of the graph not depicted in the figure. }
    \label{fig:separate}
    \end{figure}
    
        \begin{claim}
            \label{clm:rule-sep-safe}
            Applying Rule~\ref{rule:separate} results in an equivalent instance.
        \end{claim}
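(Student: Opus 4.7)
My plan is to establish the equivalence by translating complete matchings in both directions. The key structural observation I will rely on is that in the modified graph, $c'$ has exactly two neighbors, $r$ and~$r'$, so any complete matching must assign $c'$ to exactly one of them; furthermore, $r'$'s other neighbors in the new graph are precisely the singles that were adjacent to $r$ in the original graph, while $r$'s remaining neighbors are exactly the couples it had before. I will also use the fact, already established at the start of the proof of Theorem~\ref{thm:MMC-deg3}, that $|R|=|S|/2+|C|$ is maintained by the rule (both sides grow by~$1$), which forces every room to be filled to full capacity~$2$ in any complete matching.

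For the forward direction, starting from a complete matching $M$ of the original instance, the capacity observation gives that $M$ matches $r$ either to a single couple~$c_k$ or to two singles $s_a,s_b$. In the first case, I will keep the edge $rc_k$ and add~$c'r'$. In the second case, I will reroute $s_a$ and~$s_b$ to~$r'$ via the newly introduced edges $r's_a$ and~$r's_b$, and add the edge~$c'r$. All other edges of~$M$ transfer verbatim; edge-existence and capacity constraints are immediate in both cases.

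For the backward direction, starting from a complete matching $M'$ of the modified instance, $c'$ is matched to exactly one of~$r,r'$. If $c'$ is matched to~$r'$, then $r'$ has no remaining capacity, so no single lies at~$r'$ in~$M'$, while $r$ must still be filled and its only non-$c'$ neighbors are couples, so $r$ is matched to some couple $c_k$; simply deleting the edges $c'r'$ and keeping everything else yields a complete matching of the original. If instead $c'$ is matched to~$r$, then no couple lies at~$r$ in~$M'$, and $r'$ (which must also be filled) can only accommodate singles because $c'$ is its only couple-neighbor, so $r'$ is matched to two singles $s_a,s_b$, which I reassign to~$r$ in the original instance via the edges that Rule~\ref{rule:separate} replaced. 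Again, all remaining edges of~$M'$ carry over.

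The hard part is essentially negligible: the entire design of the rule is set up so that the two units of capacity absorbed by $c'$ at one of $\{r,r'\}$ precisely compensate for the (non-)rerouting of singles between the two rooms, so the check that every translated assignment is edge-valid and capacity-respecting reduces to a straightforward case analysis keyed on where $c'$ sits in the relevant matching.
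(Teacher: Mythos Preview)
The proposal is correct and takes essentially the same approach as the paper: the paper's proof also splits the forward direction on whether $M$ assigns a couple or two singles to~$r$, and the backward direction on whether $c'r'\in M'$ or $c'r\in M'$, performing the identical rerouting in each case. Your write-up is slightly more explicit in invoking the $|R|=|S|/2+|C|$ invariant to justify that every room is filled to capacity, which the paper uses only implicitly.
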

        \begin{claimproof}
            Let $G=((S \cup C) \uplus R,E)$ be the instance before applying Rule~\ref{rule:separate} for some room~$r \in R$, and let the resulting instance be $G'=(S \cup C \cup \{c'\}) \uplus R,E')$. 
            First, observe that given a complete matching~$M$ for~$G$, we can construct a complete matching $M'$ for~$G'$ as follows. 
            If $M$ contains an edge~$rc$ for some couple~$C$, then we set $M'=M \cup \{r'c'\}$. 
            Otherwise, there exist singles~$s$ and~$s'$ such that $\{sr,s'r\} \subseteq M$; we then set $M'=M \setminus \{sr,s'r\} \cup \{sr',s'r',rc'\}$. It is straightforward to check that $M'$ is a complete matching in both cases.
    
            For the other direction, suppose now that $G'$ admits a complete matching~$M'$. Then either~$c'r' \in M$ or $c'r \in M$. In the former case, the matching $M' \setminus \{c'r'\}$ is a complete matching in~$G$.  
            In the latter case, $r'$ must be adjacent to two singles, say $s$ and~$s'$ in~$M'$; hence, 
            $M=M' \setminus \{c'r,sr',s'r'\} \cup \{sr,s'r\}$ is a complete matching in~$G$.
        \end{claimproof}
        \smallskip

    Rule~\ref{rule:mixed-room} deals with rooms that are adjacent to exactly one single and one couple.
        \begin{myrule}
        \label{rule:mixed-room}
        If $r \in R$ is adjacent to exactly one single~$s$ and exactly one couple~$c$, then delete the edge~$rs$.
        \end{myrule}
    
    \begin{claim}
        \label{clm:mixed-room-safe}
        Rule~\ref{rule:mixed-room} constructs an equivalent instance.
    \end{claim}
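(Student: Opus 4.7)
The plan is to exploit the cardinality invariant $|R| = |S|/2 + |C|$, which was established at the very beginning of the proof by padding with dummy singles and is preserved by Rule~\ref{rule:separate} (since that rule adds exactly one room and one couple, leaving $|S|$ unchanged). Under this invariant the total demand $|S| + 2|C|$ equals the total supply $2|R|$ of room slots, so in every complete matching~$M$ each room must be filled to its full capacity, i.e., $|M(r) \cap S| + 2|M(r) \cap C| = 2$ for every $r \in R$.

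Given this, I would argue that the edge~$rs$ targeted by Rule~\ref{rule:mixed-room} is used by no complete matching of~$G$. Suppose for contradiction that $rs \in M$ for some complete matching~$M$. Then $s \in M(r)$, and since $r$ must be filled to capacity two, $M(r)$ needs to contain one further occupant. This occupant cannot be a couple, because a couple already occupies two slots while $s$ holds the remaining one; so it must be a second single adjacent to~$r$. But the hypothesis of Rule~\ref{rule:mixed-room} states that $s$ is the \emph{unique} single in $N_G(r)$, contradicting the existence of such a second single. Hence $rs$ appears in no complete matching of~$G$, and the set of complete matchings of~$G$ coincides with the set of complete matchings of the graph~$G'$ obtained by deleting~$rs$; the reverse inclusion is automatic since $G'$ is a subgraph of~$G$.

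The only real obstacle is bookkeeping around the invariant: one must verify that the cardinality condition $|R| = |S|/2 + |C|$ is in force at the moment Rule~\ref{rule:mixed-room} is applied, not merely immediately after the dummy padding. This is easy to check, because Rule~\ref{rule:separate} increases both $|R|$ and $|C|$ by one each time it fires while leaving $|S|$ unchanged, and Rule~\ref{rule:mixed-room} itself only deletes an edge; both operations preserve the invariant. Once this observation is recorded, the proof collapses to the one-line counting contradiction above. Notice that without the full-capacity consequence of the invariant, a direct argument would fail, because if $s$ had degree one in $G$ it would be impossible to reroute $s$ after removing $rs$; the invariant rules out this bad case entirely by making $rs \in M$ impossible from the outset.
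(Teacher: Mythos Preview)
Your proposal is correct and follows essentially the same approach as the paper: both arguments hinge on the fact that, under the invariant $|R|=|S|/2+|C|$, every room must be filled to capacity two in any complete matching, so the edge $rs$ can never participate and may be deleted safely. The paper states this in one line, whereas you spell out the counting explicitly; one minor point is that you only check Rules~\ref{rule:separate} and~\ref{rule:mixed-room} preserve the invariant, but since the rules may be interleaved you should in principle verify this for all of them (the paper does so later, confirming that every rule maintains $|R|=|S|/2+|C|$).
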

    \begin{claimproof}
        Since a complete matching needs to match exactly two persons (a couple or two singles) to each room, it is clear that no such matching can include the edge~$rs$, and hence, we can safely delete it.
    \end{claimproof}
        Rule~\ref{rule:reduce-single-degree} reduces the degree of a single who is adjacent to more than three rooms. 
        \begin{myrule}
        \label{rule:reduce-single-degree}
            If $s \in S$ is adjacent to at least four rooms~$r_1,r_2,r_3,$ and~$r_4$, then do the following: 
            add new singles $s_2,s_3,s_4$, and~$s^\star$, 
            along with new rooms~$r'_1$ and~$r'_2$, and modify the set of edges by deleting the edges $sr_2,sr_3,$ and~$sr_4$, 
            and adding the edge set \[\{s_2r_2, s_3 r_3, s_4r_4, sr'_1,s_2 r'_1,s_3 r'_2, s_4 r'_2,s^\star r'_1,s^\star r'_2\}.\] See Figure~\ref{fig:reduce-single-degree} for an illustration.
        \end{myrule}
    
    \begin{figure}[ht]
    \centering
    \resizebox{1\columnwidth}{!}{       
    \begin{tikzpicture}[scale=1]
    
    \node[single,fill=bordercol,label={[label distance=2pt]90:{$ s$}}] (s) at (0,0.0) {};
    \node[room,fill=bordercol,label={[label distance=2pt]270:{$ r_1$}}] (r1) at (-1.2,-0.8) {};
    \node[room,fill=bordercol,label={[label distance=2pt]270:{$ r_2$}}] (r2) at (-0.4,-0.8) {};
    \node[room,fill=bordercol,label={[label distance=2pt]270:{$ r_3$}}] (r3) at (0.4,-0.8) {};
    \node[room,fill=bordercol,label={[label distance=2pt]270:{$ r_4$}}] (r4) at (1.2,-0.8) {};
    
    \node (h1) at (-0.7,0.0) {};
    \node (h2) at (-0.65,0.12) {};
    \node (h3) at (-0.65,-0.12) {};
    
    \draw (s)--(r1);
    \draw (s)--(r2);
    \draw (s)--(r3);
    \draw (s)--(r4);
    \draw (s)--(h1);
    \draw (s)--(h2);
    \draw (s)--(h3);
    
    \node at (2,0.4) {$\longrightarrow$};
    
    \node[room,fill=bordercol,label={[label distance=2pt]270:{$r_1$}}] (rb1) at (3.4,-0.8) {};
    \node[room,fill=bordercol,label={[label distance=2pt]270:{$r_2$}}] (rb2) at (4.2,-0.8) {};
    \node[room,fill=bordercol,label={[label distance=2pt]270:{$r_3$}}] (rb3) at (5.8,-0.8) {};
    \node[room,fill=bordercol,label={[label distance=2pt]270:{$r_4$}}] (rb4) at (6.6,-0.8) {};
    
    \node[single,fill=bordercol,label={[label distance=2pt]110:{$s$}}] (sb) at (3.4,0) {};
    \node[single,label={[label distance=2pt]70:{$s_2$}}] (s2) at (4.2,0) {};
    \node[room,label={[label distance=2pt]160:{$r'_1$}}] (r'1) at (3.8,0.8) {};
    
    \node[single,label={[label distance=2pt]110:{$s_3$}}] (s3) at (5.8,0) {};
    \node[single,label={[label distance=2pt]70:{$s_4$}}] (s4) at (6.6,0) {};
    \node[room,label={[label distance=1.5pt]20:{$r'_2$}}] (r'2) at (6.2,0.8) {};
    
    \node[single,label={[label distance=2pt]90:{$s^\star$}}] (sstar) at (5,1.2) {};
    
    \node (hb1) at (2.7,0) {};
    \node (hb2) at (2.75,0.12) {};
    \node (hb3) at (2.75,-0.12) {};
    
    \draw (r'1)--(sb)--(rb1);
    \draw[double,double distance=1.5pt] (sstar)--(r'1.base)--(s2);
    \draw (s2)--(rb2);
    \draw (sstar)--(r'2);
    \draw (s3)--(rb3);
    \draw (s4)--(rb4);
    \draw[double,double distance=1.5pt] (s4)--(r'2.base)--(s3);
    \draw (sb)--(hb1);
    \draw (sb)--(hb2);
    \draw (sb)--(hb3);
    
    \node[room,fill=white] (r'1) at (3.8,0.8) {};
    \node[room,fill=white] (r'2) at (6.2,0.8) {};
    
    \end{tikzpicture}
    }
    \caption{Illustration for Rule~\ref{rule:reduce-single-degree}, using the notation of Figure~\ref{fig:separate}.
    Double lines depict the edge set~$M_0$ defined in the proof of Claim~\ref{clm:rule-single-degree-safe}.}
    \label{fig:reduce-single-degree}
    \end{figure}
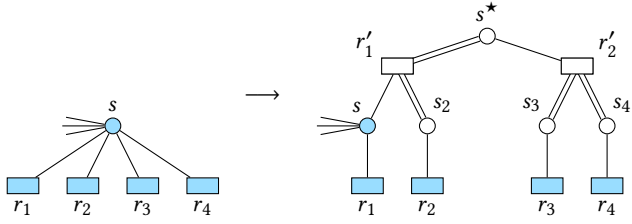

        \begin{claim}
        \label{clm:rule-single-degree-safe}
            Applying Rule~\ref{rule:reduce-single-degree} results in an equivalent instance.
        \end{claim}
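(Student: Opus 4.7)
The plan is to establish the equivalence of the two instances through an explicit correspondence between complete matchings. Let $G$ and $G'$ denote the instances before and after the rule is applied, respectively.

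For the forward direction, given a complete matching $M$ in~$G$, I would dispatch on the room to which~$s$ is matched in~$M$. If $s$ is matched to some $r^{\star} \in N_G(s) \setminus \{r_2,r_3,r_4\}$ (which includes the case $r^\star=r_1$), then the edge $s r^\star$ survives in $G'$, and I complete the matching in the gadget by sending $s_2$ and $s^\star$ to $r'_1$ and $s_3,s_4$ to $r'_2$. If instead $s$ is matched to $r_i$ for some $i \in \{2,3,4\}$, then $s$ sits at $r_i$ together with exactly one other single $x$, because $s$ occupies one of the two slots (couples cannot share with $s$). I then reroute $s$ to the new room $r'_1$ via the newly added edge $sr'_1$, place $s_i$ at $r_i$ in the slot vacated by~$s$ (so $r_i$ is now occupied by $\{s_i,x\}$), and fill the remaining gadget slots in the unique way consistent with adjacency: for $i=2$, set $r'_1=\{s,s^\star\}$ and $r'_2=\{s_3,s_4\}$; for $i=3$, set $r'_1=\{s,s_2\}$ and $r'_2=\{s^\star,s_4\}$; for $i=4$, set $r'_1=\{s,s_2\}$ and $r'_2=\{s^\star,s_3\}$.

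For the backward direction, given a complete matching $M'$ in $G'$, I would exploit that $s^\star$ has only $r'_1$ and $r'_2$ as neighbors, and each of $s_2,s_3,s_4$ has only two neighbors. A case analysis on the placement of $s^\star$, together with the capacity constraints at $r'_1$ and $r'_2$, leaves exactly four possibilities for the gadget: either (i) $r'_1=\{s_2,s^\star\}$ and $r'_2=\{s_3,s_4\}$, in which case $s$ must be matched in $M'$ to some $r^\star \in N_G(s) \setminus \{r_2,r_3,r_4\}$; or (ii) $r'_1=\{s,s^\star\}$ with $s_2$ at $r_2$ and $r'_2=\{s_3,s_4\}$; or (iii) $r'_1=\{s,s_2\}$ and $r'_2=\{s^\star,s_4\}$ with $s_3$ at $r_3$; or (iv) $r'_1=\{s,s_2\}$ and $r'_2=\{s^\star,s_3\}$ with $s_4$ at $r_4$. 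In case~(i) I simply delete the gadget vertices from $M'$ to obtain $M$; in each of cases (ii)--(iv) I additionally swap $s_i r_i$ back to the deleted edge $s r_i$, so that $r_i$ is once again filled by $s$ and the second single accompanying $s_i$ at $r_i$ in $M'$.

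The main obstacle is verifying in the backward direction that after these swaps every room of $M$ is still filled by either two singles or one couple and that every vertex of $G$ is matched exactly once. This succeeds because in cases (ii)--(iv) the room $r_i$ must be matched in $M'$ to $s_i$ together with some other single from $N_G(r_i)\setminus\{s\}$ (no couple can share a room with $s_i$), so exchanging $s_i$ for $s$ preserves its two-single composition, while all other original rooms carry over unchanged. Finally, I would record that the rule reduces the degree of $s$ from $\deg_G(s)$ to $\deg_G(s)-2$ and introduces only vertices of degree at most three, so exhaustive application terminates with every single having degree at most three, as required.
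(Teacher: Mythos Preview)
Your proposal is correct and follows essentially the same approach as the paper's proof: both directions are handled by an explicit correspondence between complete matchings, dispatching on where $s$ is matched in the forward direction and on how the gadget rooms $r'_1,r'_2$ are filled in the backward direction. Your backward case analysis (cases (i)--(iv)) is more explicit than the paper's, which argues by counting that at most one of $s,s_2,s_3,s_4$ can escape the gadget and then uniformly replaces $s_ir_i$ by $sr_i$; but the content is identical. One small remark: a complete matching only requires the capacity inequality, not that every room be full, so your phrase ``filled by either two singles or one couple'' is slightly stronger than needed---but since the invariant $|R|=|S|/2+|C|$ is maintained, rooms are in fact full in every complete matching, and your argument goes through as written.
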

        \begin{claimproof}
            Let $G=((S \cup C) \uplus R,E)$ be the instance before applying Rule~\ref{rule:reduce-single-degree} for some single~$s \in S$, and let   the resulting instance be $G'=((S \cup \{s_2,s_3,s_4,s^\star\} \cup C) \uplus (R \cup \{r'_1,r'_2\}),E')$. 
            
            First assume that $G$ admits a complete matching~$M$. 
            Define the matching~$M_0=\{s_2 r'_1,s^\star r'_1, s_3 r'_2,s_4 r'_2\}$.
            If $M$ does not contain any of the edges~$s r_i$, $i \in \{2,3,4\}$, then $M'=M \cup M_0$ is clearly a complete matching for~$G'$. 
            If $s r_2 \in M$, then define $P$ as the edge set of the path $(r_2, s_2, r'_1, s)$ in~$G$, and if $sr_i \in M$ for some~$i \in \{3,4\}$, then let $P$ be the edge set of the path~$(r_i, s_i, r'_2, s^\star, r'_1, s)$ in~$G'$. In either case, it is not hard to verify that $M'=(M \cup M_0) \triangle P$ is a complete matching in~$G'$.
            
            Assume now that $G'$ admits a complete matching~$M$. Then $s^\star$ is matched either to~$r'_1$ or to~$r'_2$, and these two rooms have to accommodate exactly four among the singles $s^\star,s_2,s_3,s_4,$ and~$s$. Hence, at most one single among $s_2,s_3,s_4$ and~$s$ is matched to a room in~$\{r_1,r_2,r_3,r_4\}$; if there is indeed one such room $r_i$, then let $F=\{sr_i\}$, otherwise let $F=\emptyset$. Then deleting all edges of~$M$  not present in~$G'$ and then adding~$F$ yields a complete matching in~$G$.
        \end{claimproof}
    
        Rule~\ref{rule:reduce-couple-degree} reduces the degree of a couple who is adjacent to more than three rooms. 
        \begin{myrule}
        \label{rule:reduce-couple-degree}
            If $c \in C$ is adjacent to rooms~$r_1,\dots,r_i$ for some $i \geq 4$, then do the following: 
            replace~$c$ with new couples $c_1,\dots,c_i$, 
            add new rooms~$r'_1,\dots,r'_{i-1}$, and modify the set of edges by deleting all edges incident to~$c$, and adding the edge set~$\{c_j r_j \colon j \in [i]\} \cup \{c_j r'_j,c_{j+1} r'_j:j \in [i-1]\}$. See Figure~\ref{fig:reduce-couple-degree} for an illustration.
        \end{myrule}
    
    \begin{figure}[ht]
    \centering
    \resizebox{1\columnwidth}{!}{       
    \begin{tikzpicture}[scale=1]
    
    \node[couple,label={[label distance=2pt]90:{$c$}}] (c) at (0,0.0) {};
    \node[room,fill=bordercol,label={[label distance=2pt]270:{$r_1$}}] (r1) at (-1.2,-0.8) {};
    \node[room,fill=bordercol,label={[label distance=2pt]270:{$r_2$}}] (r2) at (-0.4,-0.8) {};
    \node at (0.4,-0.8) {$\cdots$};
    \node[room,fill=bordercol,label={[label distance=2pt]270:{$r_i$}}] (ri) at (1.2,-0.8) {};
    \draw (c)--(r1);
    \draw (c)--(r2);
    \draw (c)--(ri);
    
    \node at (2.2,0) {$\longrightarrow$};
    
    \node[room,fill=bordercol,label={[label distance=2pt]270:{$r_1$}}] (rb1) at (3.4,-0.8) {};
    \node[room,fill=bordercol,label={[label distance=2pt]270:{$r_2$}}] (rb2) at (4.8,-0.8) {};
    \node at (5.6,-0.5) {$\cdots$};
    \node[room,fill=bordercol,label={[label distance=2pt]270:{$r_{i-1}$}}] (rbi1) at (6.4,-0.8) {};
    \node[room,fill=bordercol,label={[label distance=2pt]270:{$r_i$}}] (rbi) at (7.8,-0.8) {};
    
    \node[couple,label={[label distance=0pt]-5:{$c_1$}}] (c1) at (3.4,0) {};
    \node[couple,label={[label distance=0pt]-5:{$c_2$}}] (c2) at (4.8,0) {};
    \node[couple,label={[label distance=0pt]-5:{$c_{i-1}$}}] (ci1) at (6.4,0) {};
    \node[couple,label={[label distance=0pt]-5:{$c_i$}}] (ci) at (7.8,0) {};
    \draw (c1)--(rb1);
    \draw[double, double distance=1.5pt] (c2)--(rb2);
    \draw (ci)--(rbi);
    \draw (ci1)--(rbi1);
    \node[room,label={[label distance=2pt]90:{$r'_1$}}] (r'1) at (4.1,0.8) {};
    \node[room,label={[label distance=2pt]90:{$r'_2$}}] (r'2) at (5.5,0.8) {};
    \node[room,label={[label distance=2pt]90:{$r'_{i-1}$}}] (r'i) at (7.1,0.8) {};
    
    \draw[double, double distance=1.5pt] (r'1.base)--(c1);
    \draw (r'1)--(c2)--(r'2);
    \draw[double, double distance=1.5pt] (ci1) to node[pos=0.4,fill=white,yshift=2.8pt] {$\footnotesize \ddots$} (r'2.base);
    \draw (ci1)--(r'i);
    \draw[double, double distance=1.5pt] (r'i.base)--(ci);
    
    \node[room,fill=white] at (4.1,0.8) {};
    \node[room,fill=white] at (5.5,0.8) {};
    \node[room,fill=white] at (7.1,0.8) {};
    \end{tikzpicture}
    }
    \caption{Illustration for Rule~\ref{rule:reduce-couple-degree} using the notation of Figure~\ref{fig:separate}. Double lines depict the edge set~$M_j$ for $j=2$ defined in the proof of Claim~\ref{clm:rule-couple-degree-safe}.}
    \label{fig:reduce-couple-degree}
    \end{figure}
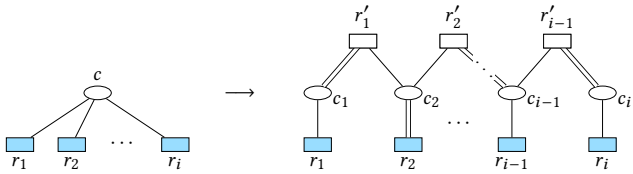
        
        \begin{claim}
        \label{clm:rule-couple-degree-safe}
            Applying Rule~\ref{rule:reduce-couple-degree} results in an equivalent instance.
        \end{claim}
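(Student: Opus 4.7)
\begin{claimproof}
Let $G$ denote the instance before applying Rule~\ref{rule:reduce-couple-degree} to the couple~$c$ with $N_G(c)=\{r_1,\dots,r_i\}$, and let $G'$ denote the resulting instance. My plan is to establish two separate directions, each by an explicit construction. The main structural observation I will exploit is that in $G'$, the newly added rooms $r'_1,\dots,r'_{i-1}$ are adjacent only to the new couples $c_1,\dots,c_i$, so in any complete matching of $G'$ each room $r'_j$ must be matched to exactly one of these couples; in particular, these $i-1$ auxiliary rooms occupy $i-1$ out of the $i$ new couples, leaving exactly one new couple $c_{j^\star}$ free to be matched elsewhere.

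For the direction ``$G$ admits a complete matching $\Rightarrow$ $G'$ does,'' suppose $M$ is a complete matching in~$G$, and let $j \in [i]$ be the index with $cr_j \in M$. Define
\[
M_j = \{c_k r'_k : k < j\} \cup \{c_k r'_{k-1} : k>j\},
\]
which is a perfect matching between $\{c_1,\dots,c_i\}\setminus\{c_j\}$ and $\{r'_1,\dots,r'_{i-1}\}$ following the ``snake'' pattern along the path $c_1 - r'_1 - c_2 - r'_2 - \cdots - r'_{i-1} - c_i$ after deleting $c_j$. I will take $M' = (M \setminus \{cr_j\}) \cup \{c_j r_j\} \cup M_j$ and verify directly that each of $c_1,\dots,c_i$ is matched to exactly one room, that each $r'_k$ is matched to exactly one couple, and that each original room has the same occupancy as in~$M$.

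For the converse direction, suppose $M'$ is a complete matching in $G'$. By the structural observation above, exactly one couple $c_{j^\star}$ is matched in $M'$ to some original room, and since $c_{j^\star}$'s only original-room neighbor is $r_{j^\star}$, we have $c_{j^\star} r_{j^\star} \in M'$. I then take
\[
M = \big(M' \setminus \{c_j r'_k, c_{j^\star}r_{j^\star} : j\in[i],\, k\in[i-1]\}\big) \cup \{c r_{j^\star}\}
\]
and check that this is a complete matching in~$G$: all vertices of $G$ other than $c$ and the rooms $r_1,\dots,r_i$ retain their matches from $M'$, the couple~$c$ is matched to~$r_{j^\star}$, and each room~$r_k$ with $k\neq j^\star$ is matched in $M'$ to candidates that already belong to~$G$ (since $c_k$ was consumed by an auxiliary room~$r'_\ell$). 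I do not anticipate any real obstacle here; the only mildly delicate point is the bookkeeping verifying that the snake matching $M_j$ correctly uses all $i-1$ auxiliary rooms, which follows from a straightforward induction on $|j - k|$ along the path.
\end{claimproof}
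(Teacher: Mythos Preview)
Your proposal is correct and follows essentially the same argument as the paper: you build the same ``snake'' matching $M_j$ on the path $c_1-r'_1-\cdots-r'_{i-1}-c_i$ for the forward direction, and for the converse you use the same counting observation that the $i-1$ auxiliary rooms absorb all but one of the new couples $c_{j^\star}$, which must then be matched to $r_{j^\star}$. The only cosmetic difference is that the paper includes the edge $c_j r_j$ in its definition of $M_j$, whereas you list it separately.
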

        \begin{claimproof}
            Let $G=((S \cup C) \uplus R,E)$ be the instance before applying Rule~\ref{rule:reduce-couple-degree} for some couple~$c \in C$, and let the resulting instance be $G'=((S \cup C') \uplus (R \cup \{r'_1,\dots,r'_{i-1}\}),E')$  for $C'=C \setminus \{c\} \cup \{c_1,\dots,c_i\}$. 
    
            First assume that $G$ admits a complete matching~$M$. Then $cr_j \in M$ for some $j \in [i]$. Let 
            \begin{equation}
            \label{eqn:form-of-matching-R3}
            M_j= \{c_jr_j\} \cup \{c_h r'_h:1 \leq h <j\} \cup  \{c_h r'_{h-1}: j<h \leq i\};
            \end{equation}
            then $M'=M \setminus \{cr_j\}  \cup M_j$ is a complete matching in~$G'$. 
            
            Conversely, if $M'$ is a complete matching in~$G'$, then it is not hard to see that since the rooms~$r'_1,\dots,r'_{i-1}$ must accommodate all but one couple among $c_1,\dots,c_i$, it follows that $M'$ contains a matching~$M_j$ of the form~(\ref{eqn:form-of-matching-R3}).
            In this case, $M=M' \setminus M_j \cup \{cr_j\}$ is a complete matching in~$M$.
        \end{claimproof}
    
    Rule~\ref{rule:reduce-room-degree-couples} is obtained from Rule~\ref{rule:reduce-couple-degree} by switching the roles of couples and rooms.
        \begin{myrule}
        \label{rule:reduce-room-degree-couples}
            If $r \in R$ is adjacent to couples~$c_1,\dots,c_i$ for some $i \geq 4$ and no singles, then do the following: 
            replace~$r$ with new rooms $r_1,\dots,r_i$, 
            add new couples~$c'_1,\dots,c'_{i-1}$, and modify the set of edges by deleting all edges incident to~$r$, and adding the edge set~$\{c_j r_j \colon j \in [i]\} \cup \{r_j c'_j,r_{j+1} c'_j:j \in [i-1]\}$.
        \end{myrule}
    
        By symmetry,  Claim~\ref{clm:rule-couple-degree-safe} immediately implies the following.
        \begin{claim}
        \label{clm:rule-room-degree-couples-safe}
            Applying Rule~\ref{rule:reduce-room-degree-couples} results in an equivalent instance.
        \end{claim}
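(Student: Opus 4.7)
The plan is to mirror the argument of Claim~\ref{clm:rule-couple-degree-safe} by exploiting a structural duality between couples and rooms in the local neighborhood where Rule~\ref{rule:reduce-room-degree-couples} is applied. The key observation is that when a room~$r$ is adjacent to couples only (and no singles), the capacity constraint $|M(r) \cap S| + 2|M(r) \cap C| \leq 2$ forces $r$ to be matched to exactly one couple in any complete matching. Locally, then, couples and rooms play interchangeable roles: each couple~$c_j$ incident to~$r$ must be matched to exactly one room in its neighborhood, and $r$ must be matched to exactly one couple among $c_1,\dots,c_i$. This symmetry is precisely what the author is invoking.

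First, I would explicitly verify this duality on the gadget constructed by the rule. The new rooms $r_1,\dots,r_i$ and new couples $c'_1,\dots,c'_{i-1}$ form a ``path'' where each $r_j$ is adjacent to $c_j$ and to the intermediary couples $c'_{j-1}$ and~$c'_j$ (with the endpoints $r_1,r_i$ missing one of these). Since the intermediary couples are degree-$2$ and have no other connections in the graph, accommodating all of them in the new rooms forces a specific chain structure: exactly one index~$j$ is ``free'' so that $c_j$ can be matched to~$r_j$, while for $h<j$ couples $c'_h$ pair $r_h$ with its left neighbor, and for $h>j$ couples $c'_{h-1}$ pair $r_h$ with its right neighbor. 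This is the mirror image of the structure appearing in the proof of Claim~\ref{clm:rule-couple-degree-safe}.

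Given this, the forward direction takes a complete matching $M$ in~$G$, identifies the unique couple $c_j$ matched to~$r$ (so $c_j r \in M$), and extends $M \setminus \{c_j r\}$ by the edge set $M_j = \{c_j r_j\} \cup \{r_h c'_h : 1 \leq h < j\} \cup \{r_{h} c'_{h-1} : j < h \leq i\}$, which accommodates every new room and every intermediary couple exactly once. Conversely, from a complete matching $M'$ in~$G'$, the forced chain structure guarantees the existence of an index~$j$ with $c_j r_j \in M'$ and $M' \cap E(G')$ containing the corresponding $M_j$; removing $M_j$ and replacing the edge $c_j r_j$ with $c_j r$ yields a complete matching in~$G$.

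I do not anticipate any real obstacle here: both directions are essentially transcriptions of the proof of Claim~\ref{clm:rule-couple-degree-safe} with the words ``couple'' and ``room'' swapped in the local gadget, and the hypothesis that $r$ has no single neighbors is exactly what is needed to ensure this local swap does not interact with the (asymmetric) capacity constraint elsewhere. The only thing to double-check is the boundary behavior at $j=1$ and $j=i$ in the chain, which the indexing convention in~$M_j$ handles uniformly.
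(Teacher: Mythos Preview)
Your proposal is correct and takes essentially the same approach as the paper: the paper's entire proof is the one-line remark ``By symmetry, Claim~\ref{clm:rule-couple-degree-safe} immediately implies the following,'' and what you have written is a careful unpacking of exactly that symmetry, including the key point that the absence of single neighbors at~$r$ forces~$r$ to host exactly one couple so that the couple/room roles become locally interchangeable.
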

    
        Notice that after applying Rule~\ref{rule:separate} exhaustively, each room that is adjacent to at least one single can only be adjacent to at most one couple. 
        Rule~\ref{rule:reduce-room-degree-singles} deals with such rooms, decreasing their degree in~$G$ whenever it is more than three.
    \begin{myrule}
        \label{rule:reduce-room-degree-singles}
        If $r \in R$ has degree more than three in~$G$, and is adjacent to distinct singles $s_1$ and~$s_2$ and at most one couple, then do the following: add a set~$\wt{S}$ of new singles, a set~$\wt{C}$ of new couples, and a set~$\wt{R}$ of new rooms where
        \begin{align*}
            \wt{S} &= \{ s'_1,s'_2,\hat{s}_1,\hat{s_2}, s_{1 \vee 2},\hat{s}_{1 \vee 2}\}, \\
            \wt{C} &= \{ c',c_{1 \wedge 2},c_3,c_4,c_5\}, \\
            \wt{R} &= \{ r',r_1,r_2,r_{1 \wedge 2},r_{1 \vee 2},r_3,r_4,r_5\},        
        \end{align*}
        and then delete the edges~$r s_1, rs_2$, add the edge set~$\wt{E}$ where
        \begin{align*}
            \wt{E} = & \,\, \{ s_i r_i ,s'_i r_i ,\hat{s}_i r_i, s'_i r_{i+2}, \hat{s}_{1 \vee 2} r_{i+2},
            c_{i+2} r_{i+2}, c_{i+2} r_5:i \in [2]\}
             \\
            & \,\,  \cup \{  
            c_5 r_5, s_{1 \vee 2} r_{1 \vee 2} , 
            \hat{s}_{1 \vee 2} r_{1 \vee 2},
            s_{1 \vee 2} r, 
            c'r, c' r', 
            c_{1 \wedge 2} r', c_{1 \wedge 2} r_{1 \wedge 2} \}
             \\
            & \,\,  \cup \{  
                    s'_1 r_{1 \wedge 2} , s'_2 r_{1 \wedge 2}
                    \},            
        \end{align*}    
        and finally, if there is an edge between~$r$ and a couple~$c$ in~$G$, then replace~$c r$ with~$c r'$.  See Figure~\ref{fig:reduce-room-degree-single} for an illustration.
        \end{myrule}
    
    \begin{figure}[ht]
    \centering
    \resizebox{1\columnwidth}{!}{       
    \begin{tikzpicture}[scale=1]
    \node[room,fill=bordercol,label={[label distance=2pt]90:{$r$}}] (ra) at (-3.4,0.0) {};
    \draw (ra)--(-2.7,0.15);
    \draw (ra)--(-2.6,0.03);
    \draw (ra)--(-2.7,-0.15);
    \node[single,fill=bordercol,label={[label distance=2pt]0:{$s_1$}}] (sa1) at (-2.4,0.7) {};
    \node[single,fill=bordercol,label={[label distance=2pt]0:{$s_2$}}] (sa2) at (-2.4,1.5) {};
    \node[couple,fill=bordercol,label={[label distance=2pt]0:{$c$}}] (ca) at (-2.4,-1.0) {};
    \draw (sa2)--(ra)--(sa1);
    \draw (ra)--(ca);
    \node at (-1.2,0) {$\longrightarrow$};
    
    \node[room,fill=bordercol   ,label={[label distance=2pt]90:{$r$}}] (r) at (0,0.0) {};
    \node[single,label={[label distance=2pt]130:{$s_{1 \vee 2}$}}] (s1v2) at (0.6,1.2) {};
    \node[room,label={[label distance=2pt]95:{$r_{1 \vee 2}$}}] (r1v2) at (1.6,1.2) {};
    \node[couple,label={[label distance=2pt]90:{$c_5$}}] (c5) at (2.6,1.2) {};
    \node[room,label={[label distance=2pt]90:{$r_5$}}] (r5) at (3.6,1.2) {};
    \node[couple,label={[label distance=2pt]90:{$c_4$}}] (c4) at (4.6,0.6) {};
    \node[couple,label={[label distance=2pt]90:{$c_3$}}] (c3) at (4.6,1.8) {};
    \node[room,label={[label distance=2pt]90:{$r_4$}}] (r4) at (5.6,0.6) {};
    \node[room,label={[label distance=2pt]90:{$r_3$}}] (r3) at (5.6,1.8) {};
    \node[single,label={[label distance=2pt]0:{$\hat{s}_{1 \vee 2}$}}] (hats1v2) at (6.6,1.2) {};
    \node[couple,label={[label distance=2pt]-130:{$c'$}}] (c') at (0.6,-1.2) {};
    \node[room,label={[label distance=2pt]-100:{$r'$}}] (r') at (1.6,-1.2) {};
    \node[couple,fill=bordercol,label={[label distance=2pt]0:{$c$}}] (c) at (1.6,-2.2) {};
    
    \node[couple,label={[label distance=2pt]-90:{$c_{1 \wedge 2}$}}] (c1w2) at (2.6,-1.2) {};
    \node[room,label={[label distance=2pt]-90:{$r_{1 \wedge 2}$}}] (r1w2) at (3.6,-1.2) {};
    \node[single,label={[label distance=0pt]-90:{$s'_1$}}] (s'1) at (4.6,-0.6) {};
    \node[single,label={[label distance=0pt]-90:{$s'_2$}}] (s'2) at (4.6,-1.8) {};
    \node[room,label={[label distance=2pt]-90:{$r_1$}}] (r1) at (5.6,-0.6) {};
    \node[room,label={[label distance=2pt]-90:{$r_2$}}] (r2) at (5.6,-1.8) {};
    
    \node[single,fill=bordercol,label={[label distance=2pt]0:{$s_1$}}] (s1) at (6.6,-0.3) {};
    \node[single,label={[label distance=2pt]0:{$\hat{s}_1$}}] (hats1) at (6.6,-0.9) {};
    \node[single,fill=bordercol,label={[label distance=2pt]0:{$s_2$}}] (s2) at (6.6,-1.5) {};
    \node[single,label={[label distance=2pt]0:{$\hat{s}_2$}}] (hats2) at (6.6,-2.1) {};
    
    \draw (r)--(s1v2);
    \draw (r1v2)--(c5);
    \draw[double, double distance=1.5pt] (r1v2)--(s1v2);
    \draw (r5)--(c3);
    \draw (r3)--(hats1v2);
    \draw (r5)--(c4);
    \draw[double, double distance=1.5pt] (c3)--(r3);
    \draw[double, double distance=1.5pt] (c4)--(r4);
    \draw[double, double distance=1.5pt] (c5)--(r5);
    \draw (r4)--(hats1v2);
    \draw (r)--(c');
    \draw[double, double distance=1.5pt] (r')--(c');
    \draw (r')--(c1w2);
    \draw[double, double distance=1.5pt] (c1w2)--(r1w2);
    \draw (r1w2)--(s'1);
    \draw (r1w2)--(s'2);
    \draw (r')--(c);
    \draw[double, double distance=1.5pt] (s'1)--(r1)--(hats1);
    \draw[double, double distance=1.5pt] (s'2)--(r2)--(hats2);
    \draw (s1)--(r1);
    \draw (s2)--(r2);
    \draw (s'1) [out=45,in=-135]to (r3);
    \draw (s'2) [out=45,in=-135]to (r4);
    \draw[double, double distance=1.5pt] (r1v2) [out=400,in=90]to (hats1v2);
    \draw (r)--(0.7,0.15);
    \draw (r)--(0.8,0.03);
    \draw (r)--(0.7,-0.15);
    
    \end{tikzpicture}
    }
    \caption{Illustration for Rule~\ref{rule:reduce-room-degree-singles} using the notation of Figure~\ref{fig:separate}.}
    \label{fig:reduce-room-degree-single}
    \end{figure}

        \begin{claim}
            \label{clm:rule-room-degree-singles-safe}
            Applying Rule~\ref{rule:reduce-room-degree-singles} results in an equivalent instance.
        \end{claim}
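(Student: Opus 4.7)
\begin{proofsketch}
The plan is to prove the equivalence by analyzing the structure of complete matchings within the gadget created by the rule. Let $G'$ be the result of applying Rule~\ref{rule:reduce-room-degree-singles} to $G$ at room $r$.

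First, I identify the edges forced in every complete matching $M'$ of $G'$. Since $c_5$ is only adjacent to $r_5$ in $G'$, we must have $c_5 r_5 \in M'$, which in turn forces $c_3 r_3$ and $c_4 r_4$. Then $\hat s_{1 \vee 2}$, being adjacent only to $r_{1 \vee 2}, r_3, r_4$, is forced into $r_{1 \vee 2}$, while $\hat s_1 r_1$ and $\hat s_2 r_2$ are forced because $\hat s_1, \hat s_2$ are pendant vertices. Based on this, I distinguish two main cases according to where $s_{1 \vee 2}$ is matched. In Case~I ($s_{1 \vee 2} r \in M'$), a cascade of forced edges fills $r, r', r_1, r_2, r_{1 \wedge 2}$ completely with $s_{1 \vee 2}, c', s'_1, s'_2, c_{1 \wedge 2}$, so $s_1, s_2,$ and $c$ must all be matched outside the gadget. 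In Case~II ($s_{1 \vee 2} r_{1 \vee 2} \in M'$), the placements of $c'$, $c_{1 \wedge 2}$, and $c$ among $\{r, r', r_{1 \wedge 2}\}$ allow several sub-configurations; I enumerate these and show that each corresponds to exactly one admissible occupancy of $r$ in the associated matching $M$ of $G$ by a subset of $\{s_1, s_2, c\}$ of total capacity at most two.

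A key point is that the gadget precisely blocks the infeasible occupancies of $r$ in $G$ (those of total capacity more than two). For instance, encoding $\{s_1, s_2, c\} \subseteq M(r)$ would require $s_1 r_1, s_2 r_2, c r' \in M'$; but $s_1 r_1$ and $s_2 r_2$ fill $r_1$ and $r_2$, so $s'_1, s'_2$ are forced to $r_{1 \wedge 2}$, which then forces $c_{1 \wedge 2}$ to $r'$, contradicting $c r' \in M'$. Analogous arguments rule out $\{s_i, c\} \subseteq M(r)$ for $i \in \{1, 2\}$.

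Given this structural correspondence, both directions of the equivalence follow. For the forward direction, given a complete matching $M$ of $G$, we pick the case matching the occupancy of $r$ in $M$, add the forced gadget edges, and keep the non-gadget edges of $M$ while substituting $s_i r$ by $s_i r_i$ and $c r$ by $c r'$ as appropriate. For the reverse direction, given $M'$, we restrict to $G$-edges and invert the substitution. The main obstacle is the detailed case analysis in Case~II: we must verify that every feasible $M'$ falls into exactly one sub-configuration and that each relevant sub-configuration can indeed be realised by a complete matching, while ensuring throughout that the capacity constraint at $r$ in $G$ corresponds precisely to the capacity constraints enforced within the gadget.
\end{proofsketch}
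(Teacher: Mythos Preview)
Your forcing argument at the start is incorrect, and the error propagates through the whole case analysis. You claim that $c_5$ is only adjacent to $r_5$, but in the gadget as drawn in Figure~\ref{fig:reduce-room-degree-single} (and as used in the paper's own proof) there is also the edge $c_5 r_{1\vee 2}$; the formal listing of~$\wt E$ simply omits it by a typo. With that edge present, $c_5 r_5$ is \emph{not} forced, and neither are $c_3 r_3$, $c_4 r_4$, or $\hat s_{1\vee 2} r_{1\vee 2}$.

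Even taking your forcings at face value, your two cases are internally inconsistent. Once you have forced $\hat s_{1\vee 2}$ into $r_{1\vee 2}$ and (by your assumption) $c_5$ is not available for $r_{1\vee 2}$, the only remaining neighbour of~$r_{1\vee 2}$ is $s_{1\vee 2}$; since every room must be filled under the invariant $|R|=|S|/2+|C|$, this forces $s_{1\vee 2} r_{1\vee 2}$ as well. Hence your Case~I ($s_{1\vee 2} r\in M'$) can never occur, and your case split collapses.

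This is exactly where the equivalence would break: the scenario ``exactly one of $s_1,s_2$ is matched to~$r$ in~$G$'' (say $M(r)=\{s_1,s_i\}$ with $i\geq 3$) cannot be realised in~$G'$ under your forcings. The paper encodes this scenario by putting $c_5$ into $r_{1\vee 2}$, shifting $c_3$ to $r_5$, putting $s'_1$ and $\hat s_{1\vee 2}$ into $r_3$, and then using $s_{1\vee 2}$ together with~$s_i$ to fill~$r$. So your Case~I, far from corresponding to ``$s_1,s_2,c$ all matched outside,'' is precisely the case that encodes a single $s_i$ being matched to~$r$; and it is only feasible because $c_5$ can move to~$r_{1\vee 2}$.

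The paper's proof avoids forcing altogether: it performs a direct case analysis on where $c'$ (and, in one branch, $s_{1\vee 2}$) is matched in~$M'$, and for each case exhibits the explicit edge swap between $M$ and~$M'$. If you want to salvage your approach, drop the forcing claims, keep the split on~$s_{1\vee 2}$, and redo the sub-case analysis allowing $c_5\in\{r_5,r_{1\vee 2}\}$; you will then recover essentially the same cases as the paper, just organised around $s_{1\vee 2}$ instead of~$c'$.
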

        \begin{claimproof}
            Let $G=((S \cup C) \uplus R,E)$ be the instance before applying Rule~\ref{rule:reduce-room-degree-singles} for some room~$r \in R$, and let~$G'=((S' \cup C') \uplus R',E')$ denote the resulting instance. 
            
            First, we prove that given a complete matching~$M$ for~$G$, we can construct a complete matching $M'$ for~$G'$ as follows. 
            Let $M_0$ be defined as the edge set
            \begin{align*}
            M_0 & =\{  s'_i r_i ,\hat{s}_i r_i,
            c_{i+2} r_{i+2}:i \in [2]\} \\
            & \phantom{=(} \cup 
            \{ c_5 r_5, s_{1 \vee 2} r_{1 \vee 2} , 
            \hat{s}_{1 \vee 2} r_{1 \vee 2}, c' r',  c_{1 \wedge 2} r_{1 \wedge 2}\};
            \end{align*}
            see again Figure~\ref{fig:reduce-room-degree-single}.
            Observe that $M_0 \subseteq \wt{E}$ is a complete matching for the subgraph of~$G'$ induced by~$\wt{S} \cup \wt{C} \cup \wt{R}$. 
            Therefore, if the matching~$M$ does not contain any of the edges~$E \setminus E'$, which Rule~\ref{rule:reduce-room-degree-singles} removed from~$G$, then 
            $M \cup M_0$ is a complete matching for~$G'$.
    
            Otherwise, we distinguish between three cases, depending on how the room~$r$ is matched in~$M$.
            In each of these case, we will define an edge set~$M'$.
            First, if $cr \in M$ for some couple~$c$ (recall that there is at most one couple adjacent to~$r$ in~$G$), then we set
            \[M'=M \setminus \{cr\} \cup \{c r', c'r\} \cup (M_0 \setminus \{c'r'\}).
            \]
            Second, if $|\{s_1 r,s_2 r\} \cap M|=1$,  then assume that $s_1 r \in M$; the case $s_2 r \in M$ is symmetric. In this case, we let
            \begin{align*}
            M' &= (M \cup M_0 \setminus 
            \{ s_1 r, s'_1 r_1, c_3 r_3, c_5 r_5, 
            s_{1 \vee 2} r_{1 \vee 2}, 
            \hat{s}_{1 \vee 2} r_{1 \vee 2}
            \} \\
            & \phantom{=a}
            \cup 
            \{ s_1 r_1, s'_1 r_3, \hat{s}_{1 \vee 2} r_3, 
            c_3 r_5, c_5 r_{1 \vee 2}, s_{1 \vee 2} r.
            \}
            \end{align*}
            Third, if $\{s_1 r,s_2 r\} \subseteq M$, then 
            we can define 
            \begin{align*} 
            M'&=M \cup M_0 \setminus 
            \{ s_1 r, s_2 r, s'_1 r_1, s'_2 r_2, 
            c_{1 \wedge 2} r_{1 \wedge 2}, c'r' \}
            \\
            & \phantom{=a}
            \cup 
            \{ s_1 r_1, s_2 r_2, s'_1 r_{1 \wedge 2}, 
            s'_2 r_{1 \wedge 2}, c_{1 \wedge 2} r', c'r\}
            \end{align*}
            In each of the three cases, it is straightforward to verify that $M'$ is a complete matching for~$G'$.
    
            \smallskip
        For the other direction, suppose now that $M'$ is a complete matching for~$G'$. Again, we will define an edge set~$M$ depending on the properties of~$M'$. 
        
        First assume that $r c' \in M'$. 
        If it further holds that $r' c_{1 \wedge 2} \in M'$, then room~$r_{1 \wedge 2}$ can only be matched to the singles~$s'_1$ and~$s'_2$ by~$M'$, that is, 
        ${\{ s'_1 r_{1 \wedge 2}, s'_2 r_{1 \wedge 2}\} \subseteq M'}$, because $M'$ is a complete matching. This also implies $\{s_1 r_1,s_2 r_2 \} \subseteq M'$. Note also that no vertex in~$S \cup C \cup R$ other than the vertices~$r,s_1,$ and $s_2$ can be matched to a vertex in~$\wt{S} \cup \wt{C} \cup \wt{R}$ (recall that $r' c_{1 \wedge 2} \in M'$, so $r'$ cannot be matched to any couple in~$C$). 
        However, then 
        $M = M' \setminus \wt{E} \cup \{ s_1 r, s_2 r\}$ is a complete matching for~$G$.
        
        Otherwise, if $rc' \in M'$ but $r' c_{1 \wedge 2} \notin M'$, then $r'$ must be adjacent to some couple~$c \in C$ within~$M'$, and the edge~$cr'$ must have been added to~$E'$ by Rule~\ref{rule:reduce-room-degree-singles} because it found $rc \in E$. In this case, we must have 
        $c_{1 \wedge 2} r_{1 \wedge 2} \in M'$, as otherwise $c_{1 \wedge 2 }$ would be unmatched. 
        Since $s_{1 \vee 2}$ is not matched to~$r$, we also know that $s_{1 \vee 2} r_{1 \vee 2} \in M'$, which in turn implies $\hat{s}_{1 \vee 2} r_{1 \vee 2} \in M'$. This yields also $\{c_3 r_3, c_4 r_4, c_5 r_5\} \subseteq M'$. Now, as $s'_1$ and $s'_2$ both must be covered by~$M'$ (and they cannot be matched to~$r_{1 \wedge 2}$), we obtain $s'_1 r_1,s'_2 r_2 \in M$.
        Since $\hat{s}_i$ can only be matched to~$r_i$ for both $i \in [2]$, we obtain that $s_1 r_1, s_2 r_2 \notin M$.
        Therefore, 
        $M=M' \setminus \wt{E} \setminus \{r'c\} \cup \{rc\}$ is a complete matching for~$G$.
    
        It remains to consider the case when $rc' \notin M'$.
        On the one hand, if $r s_{1 \vee 2} \notin M'$, then via a reasoning similar to the previous case, it is not hard to see that $M_0 \subseteq M'$. Thus, setting $M=M \setminus M_0$ is a complete matching for~$G$.
        On the other hand, if $r s_{1 \vee 2} \in M'$, then 
        we get $r_{1 \vee 2} c_5$ as otherwise room~$r_{1 \vee 2}$ is not filled. Hence, $\hat{s}_{1 \vee 2}$ must be matched to $r_3$ or to~$r_4$. Assume that $\hat{s}_{1 \vee 2} r_3 \in M'$; the case $\hat{s}_{1 \vee 2} r_4 \in M'$ is symmetric.
        Then $r_3$ must be also matched to~$s'_1$, which implies $s_1 r_1 \in M$. By contrast, $r_4$ must be matched to~$c_4$. Since $s'_2$ cannot be matched to $r_{1 \wedge 2}$ (because we have $c'r' \in M'$ 
        and $c_{1 \wedge 2} r_{1 \wedge 2} \in M'$ due to $r s_{1 \vee 2} \in M'$), we obtain that $s'_2 r_2 \in M'$, and consequently, $s_2 r_2 \notin M'$.
        This yields that $M=M \setminus \wt{E} \cup \{s_1 r\}$ is a complete matching for~$G$.
         This finishes the proof.
        \end{claimproof}    
    
    After applying Rules~\ref{rule:separate}--\ref{rule:reduce-room-degree-singles} exhaustively, let $G^\star$ denote the resulting graph. Then
    singles in~$G^\star$ have degree at most~$3$, because Rule~\ref{rule:reduce-single-degree} is not applicable. 
    Similarly, couples in~$G^\star$ have degree at most~$3$, because Rule~\ref{rule:reduce-couple-degree} is not applicable. Rooms in~$G^\star$ that are adjacent to at least one single 
    are adjacent to at most one couple, because Rule~\ref{rule:separate} is not applicable, 
    and they are adjacent to exactly two singles, because neither Rule~\ref{rule:mixed-room} nor Rule~\ref{rule:reduce-room-degree-singles}
    is applicable; in particular, they have degree~$3$. 
    Lastly, rooms that are only adjacent to couples also have degree at most~$3$, because Rule~\ref{rule:reduce-room-degree-couples} is not applicable. Hence, $G^\star$ has maximum degree at most~$3$, and fulfills also the requirement of the theorem on rooms adjacent to both singles and couples. 
    Due to Claims~\ref{clm:rule-sep-safe}--\ref{clm:rule-room-degree-singles-safe}, $G^\star$ is equivalent to the original graph~$G$ when interpreted as an instance of \textsc{Maximum Matching with Couples}.
    Note also that $G^\star$ also fulfills $|S^\star|/2+|C^\star|=|R^\star|$ where $S^\star$, $C^\star$, and $R^\star$ denote the number of singles, couples, and rooms in~$G^\star$, respectively; to see this, it suffices to observe that each of the Rules~\ref{rule:separate}--\ref{rule:reduce-room-degree-singles} maintains this invariant.
    
    Let us now show that applying Rules~\ref{rule:separate}--\ref{rule:reduce-room-degree-singles} takes polynomial time.
    To see this, it suffices to observe the following facts. 
    First, no new vertex introduced during the application of a rule has degree more than~$3$, or is adjacent to a single and more than one couple. Hence, none of the rules can be applied to these vertices. 
    Second, when applying one of the rules to a given vertex~$v$, the degree of~$v$ decreases, except in the case when $v$ is a room adjacent only to singles, and we apply Rule~\ref{rule:reduce-room-degree-singles}. However, this latter can happen to each room at most once (as after the application of Rule~\ref{rule:reduce-room-degree-singles}, they are adjacent to a newly introduced couple). Since we can decrease the degree of some vertex of~$G$ at most $2|E|$ times, and we can apply Rule~\ref{rule:reduce-room-degree-singles} to a room adjacent only to singles at most~$|R|$ times, we obtain that the total number of applying Rules~\ref{rule:separate}--\ref{rule:reduce-room-degree-singles} is at most $2|E|+|R|$. Since applying each rule takes linear time, the construction of~$G^\star$ can be completed in polynomial time.
    
    It remains to deal with vertices of~$G^\star$ that have degree less than~$2$. Let us construct a trivial ``no''-instance $G_0$ containing five rooms $r_1,\dots,r_5$ and five couples~$c_1,\dots,c_5$ whose underlying graph is the disjoint union of the complete bipartite graph on couples~$c_1,c_2,c_3$ and rooms~$r_1,r_2$, and the complete bipartite graph on couples~$c_4,c_5$ and rooms~$r_3,r_4,r_5$.
    We apply the following operations.
    \begin{myrule}
    \label{rule:reject-halfempty-room}
        If $r \in R$ is a room adjacent to exactly one single but no couples, then output~$G_0$. 
    \end{myrule}
    \begin{myrule}
    \label{rule:increase-degree1-single}
        If $s \in S$ has degree~$1$, then 
        introduce two new rooms $r$ and~$r'$, two new singles~$s'$ and~$s''$, and a new couple~$c$, 
        then add the edges $sr', cr,s'r,s''r,s'r',s''r'$. 
    \end{myrule}
    \begin{myrule}
    \label{rule:increase-degree1}
        If $v$ is a room adjacent only to one couple (and no singles), or~$v$ is a couple of degree~$1$, then 
        introduce two new rooms, $r$ and~$r'$, and two new couples, $c$ and~$c'$, and add the edges $cr,c'r,cr',c'r'$ plus an additional edge connecting~$v$ with either~$c$ or~$r$, depending on whether $v$ is a room or a couple. 
    \end{myrule}
    \begin{claim}
        Applying any of Rules~\ref{rule:reject-halfempty-room}--\ref{rule:increase-degree1} results in an equivalent instance.
    \end{claim}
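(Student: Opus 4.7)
The proof proceeds rule by rule, with the key tool being the invariant $|S|/2+|C|=|R|$, preserved by all three rules (trivially for Rule~\ref{rule:reject-halfempty-room} since it does not modify the instance; Rule~\ref{rule:increase-degree1-single} adds $2$ singles, $1$ couple, and $2$ rooms, contributing $+2$ to each side; Rule~\ref{rule:increase-degree1} adds $2$ couples and $2$ rooms, again $+2$ on each side). This invariant forces every room in any complete matching to be \emph{full}, i.e., to hold either one couple or exactly two singles, which furnishes strong local constraints.

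For Rule~\ref{rule:reject-halfempty-room}, a room with exactly one single neighbor and no couple neighbor can hold at most one single (so at most half of its capacity $2$) and no couple, and thus cannot be full; consequently the input admits no complete matching. Moreover, $G_0$ is clearly a no-instance since three couples $c_1,c_2,c_3$ must be matched to only two rooms $r_1,r_2$ while couples cannot share a room. Thus replacing the input by $G_0$ preserves unsolvability.

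For Rule~\ref{rule:increase-degree1-single}, I would show the gadget is forced: the new couple $c$ has only $r$ as a neighbor, so $cr$ must lie in any complete matching, saturating~$r$; then the new singles $s'$ and $s''$ cannot use $r$, so both are matched to $r'$, saturating~$r'$ as well. Consequently the added edge $sr'$ is never used, so $s$ is matched only via its original edges. Hence a complete matching in the modified instance is obtained from one in the original by adding $\{cr, s'r', s''r'\}$, and conversely by removing these three edges, yielding the equivalence.

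For Rule~\ref{rule:increase-degree1} the plan is analogous: I would show that the new edge ($vc$ if $v$ is a room, $vr$ if $v$ is a couple) is never used in any complete matching. The new rooms $r$ and $r'$ each have only the couples $c$ and $c'$ (and possibly $v$) as neighbors and no single neighbors, so by fullness each must accommodate a couple. The main obstacle is ruling out the alternative configurations using the invariant: if $v$ is a room and $c$ were matched to $v$, then only $c'$ would remain to fill the new rooms $r$ and $r'$, so at least one would stay empty, violating fullness; if $v$ is a couple matched to $r$, then $r$ could take no additional couple, forcing both $c, c'$ into $r'$, which is impossible since a room holds at most one couple. Once these cases are excluded, the gadget admits an internal matching $\{cr, c'r'\}$ independent of the rest of the graph, and the equivalence follows by adding or removing these two edges.
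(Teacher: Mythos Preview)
Your proposal is correct and follows essentially the same approach as the paper's proof: for Rule~\ref{rule:reject-halfempty-room} you use the fullness constraint (forced by the invariant $|S|/2+|C|=|R|$) to conclude the instance is unsolvable, and for Rules~\ref{rule:increase-degree1-single} and~\ref{rule:increase-degree1} you argue that the gadgets are internally forced so that the new edge incident to the original vertex is never used. The paper's own proof is terser (it merely asserts that the new rooms and couples ``can only be matched to each other''), whereas you spell out the case analysis using the fullness invariant explicitly, but the underlying argument is the same.
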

    \begin{claimproof}
        First, a room~$r$ adjacent to only one single and no couples can never be matched two people; hence, an instance containing such a room is a ``no''-instance and hence equivalent to~$G_0$, which shows that Rule~\ref{rule:reject-halfempty-room} is correct.
        Second, observe that all complete matchings in an instance resulting from applying Rule~\ref{rule:increase-degree1-single} must contain the edges $cr,s'r',s''r'$ and, hence, cannot match~$s$ along any of the newly introduced edges. From this, the statement for Rule~\ref{rule:increase-degree1-single} follows.
        Similarly, 
        the newly added rooms~$r,r'$ and couples~$c,c'$ in Rule~\ref{rule:increase-degree1} can only be matched to each other (in some way) and not to~$v$ in all complete matchings. From this, the statement for Rule~\ref{rule:increase-degree1} follows.
    \end{claimproof}
    
    Finally, observe that applying first Rule~\ref{rule:reject-halfempty-room} exhaustively to~$G^\star$, then Rule~\ref{rule:increase-degree1-single} exhaustively, and then Rule~\ref{rule:increase-degree1} exhaustively (which takes linear time in total), 
    we ensure that each vertex has degree~$2$ or~$3$ in the resulting graph, 
    while also maintaining the remaining properties required by the theorem. 
    \end{proof}

\subsubsection{Correctness of the reduction presented in Section~\ref{sec:3votersreduction}}
\label{app:proof-of-correctness-3voters}
Before proving the correctness of the reduction presented in Section~\ref{sec:3votersreduction}, we present a simple lemma about flat elections.

\begin{restatable}{lemma}{lemacyclicflat}
\label{lem:acyclic-flat}
    If $\EE$ is a flat election with an odd number of voters whose candidate set is partitioned into pairwise disjoint sets~$C_1$, $C_2$, and~$C_3$ such that 
    \begin{itemize}
        \item each candidate in~$C_1$ defeats all candidates in~$C_2$,
        \item each candidate in~$C_2$ defeats all candidates in~$C_3$, and
        \item each candidate in~$C_3$ defeats all candidates in~$C_1$,
    \end{itemize}
    then $|C_1|=|C_2|=|C_3|$.
\end{restatable}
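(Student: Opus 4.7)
The plan is a short double-counting argument on the underlying tournament. Since the number of voters is odd, no two candidates can be tied, so the pairwise comparison graph on the whole candidate set is a tournament (every pair of distinct candidates has a well-defined winner). Write $n_i=|C_i|$ for $i\in[3]$ and $m=n_1+n_2+n_3$; by flatness every candidate defeats exactly $(m-1)/2$ others.

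First I would count the defeats originating from $C_1$. Each $c\in C_1$ defeats all $n_2$ candidates of $C_2$, is defeated by every candidate of $C_3$, and defeats some number $d(c)$ of candidates in $C_1\setminus\{c\}$; hence $d(c)=(m-1)/2-n_2$ for every $c\in C_1$. Summing over $c\in C_1$ counts each intra-$C_1$ defeat exactly once, so
\begin{equation*}
n_1\!\left(\frac{m-1}{2}-n_2\right)=\sum_{c\in C_1}d(c)=\binom{n_1}{2}.
\end{equation*}
Assuming $n_1\ge 1$ (the intended setting, where all three parts of the partition are nonempty), this reduces to $m=n_1+2n_2$, and combined with $m=n_1+n_2+n_3$ yields $n_2=n_3$.

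Next, I would run the same count cyclically with $C_2$ in place of $C_1$: each candidate of $C_2$ defeats all of $C_3$, is defeated by all of $C_1$, and so the analogous identity gives $m=n_2+2n_3$, hence $n_1=n_3$. Combining the two identities produces $n_1=n_2=n_3$, as desired.

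This is essentially a one-step calculation; the main conceptual point, rather than an obstacle, is to notice that the oddness of the number of voters is exactly what is needed to make the pairwise relation on each $C_i$ a tournament, so that $\sum_{c\in C_1}d(c)=\binom{n_1}{2}$ holds on the nose. The degenerate situation in which some $C_i$ is empty does not arise in the applications of the lemma (e.g.\ in Observation~\ref{obs:EEq-properties}), and in any case is immediately ruled out by comparing the out-degrees of candidates in the nonempty parts, which must differ and thus contradict flatness.
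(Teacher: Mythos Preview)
Your proof is correct and follows essentially the same degree-counting idea as the paper: both exploit that in a flat tournament every candidate has out-degree $(m-1)/2$, combined with the cyclic dominance between the $C_i$'s. The only cosmetic difference is that the paper picks in each $C_i$ a candidate with intra-$C_i$ out-degree at least $(|C_i|-1)/2$ and derives three inequalities that are forced to hold with equality when summed, whereas you observe directly that flatness pins down the intra-$C_i$ out-degree of \emph{every} candidate and then sum to $\binom{n_i}{2}$; your route is marginally more direct, but the content is the same.
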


\begin{proof}
    Note that for each $i \in [3]$, there exists a candidate~$c_i^\star$ in~$C_i$ that defeats at least $\frac{|C_i|-1}{2}$ candidates from~$C_i$, because for each pair of candidates in~$C_i$ one of them defeats the other (as there are an odd number of voters), and thus the total number of defeats within~$C_i$ is $\frac{|C_i|\cdot (|C_i|-1)}{2}$.
    Recall also that by the conditions of the lemma, $c_i^\star$ also defeats all candidates in~$C_{i+1}$ where we set $C_4:=C_1$. 
    
    Since $\EE$ is flat, every candidate defeats exactly $\frac{|C|-1}{2}$ candidates, which implies
    \begin{align}
    \frac{|C_1|-1}{2} + |C_2| &\leq \frac{|C|-1}{2}; \label{eq:C1}\\
    \frac{|C_2|-1}{2} + |C_3| &\leq \frac{|C|-1}{2}; \label{eq:C2}\\
    \frac{|C_3|-1}{2} + |C_1| &\leq \frac{|C|-1}{2}.\label{eq:C3}
    \end{align}
    Summing up inequalities (\ref{eq:C1})--\ref{eq:C3}), we obtain $\frac{3|C|-3}{2}$ on both sides, which implies that all of these inequalities must hold with equality. From this, $|C_1|=|C_2|=|C_3|$ follows by simple calculus.
\end{proof}

We are now ready to prove that the reduction presented in Section~\ref{sec:3votersreduction}, proving Theorem~\ref{thm:3voters-NPh}, is correct. 

\medskip
\noindent
{\bf Direction ``$\Longleftarrow:$''}%
    First, assume that  $\EE_X$ admits nominations resulting in a flat election. We start with the following claim.

\begin{claim}
    \label{clm:flatness}
    If a set of nominations for the relevant election~$\EE_X$ results in a flat election, then
    for each $t \in T$
    exactly three candidates from~$F_t$ are nominated, and the restriction of~$\EE_X$ to these three nominees is flat.
\end{claim}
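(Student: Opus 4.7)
My plan is to exploit the recursive level structure of $\EE_q$ and iteratively apply Lemma~\ref{lem:acyclic-flat} to pin down the per-team nominee count, then derive the within-team flat structure by a direct counting argument. For each team $t \in T$, let $n_t$ denote the number of nominees in $F_t$; since $\{F_t\}_{t \in T}$ partitions the relevant candidates~$X$, the first goal is to show $n_t = 3$ for every $t$. I will first record a transfer principle: if $\psi^{-1}(t)$ defeats $\psi^{-1}(t')$ in $\EE_q$, then every candidate of $F_t$ defeats every candidate of $F_{t'}$ in $\EE_X$. This follows immediately from Observation~\ref{obs:bijection-property}, since at least two of the three voters rank $\psi^{-1}(t)$ above $\psi^{-1}(t')$ in $\EE_q$, and after substitution the same two voters rank the entire block $F_t$ (in the orientation induced by the corresponding team list) above the entire block $F_{t'}$ in $\EE_X$.

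Next, I will invoke Observation~\ref{obs:EEq-properties} to partition the teams at the $q$-th level into three blocks $G_1, G_2, G_3$ that cyclically defeat one another via the transfer principle. Setting $N = \sum_t n_t$ and $N_i = \sum_{t \in G_i} n_t$, every nominee in $G_i$ defeats all $N_{i+1}$ nominees in $G_{i+1}$ (indices mod $3$) and none of the $N_{i-1}$ nominees in $G_{i-1}$, so flatness of the full restriction forces each nominee of $G_i$ to defeat exactly $(N-1)/2 - N_{i+1}$ others within $G_i$. This common count witnesses that the restriction of $\EE_X$ to the nominees of each $G_i$ is itself flat, whence Lemma~\ref{lem:acyclic-flat} yields $N_1 = N_2 = N_3$. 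The restriction of $\EE_X$ to each $G_i$ mirrors a copy of $\EE_{q-1}$, so the same argument applies recursively; descending through levels $q-1, q-2, \dots, 1$, I conclude that $n_t$ takes a single common value across all $t \in T$. This value is pinned down by the dummy teams: since $F_d = \{a_d, b_d, c_d\}$ consists of three singleton parties for every $d \in D$, we automatically have $n_d = 3$, hence $n_t = 3$ for every $t$.

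For the second half of the claim, a short count suffices. With $N = 3\rho$ nominees, flatness demands that each nominee defeats $(3\rho-1)/2$ others; by Observation~\ref{obs:EEq-number-of-defeats} each team defeats exactly $(\rho-1)/2$ other teams in $\EE_q$, so the transfer principle accounts for $3(\rho-1)/2$ cross-team defeats per nominee, leaving exactly $(3\rho-1)/2 - 3(\rho-1)/2 = 1$ within-team defeat per nominee. With three nominees per team and each defeating exactly one of the remaining two, the restriction to $F_t$'s nominees forms a 3-cycle in the defeat relation---which is precisely the unique flat 3-candidate sub-election on three voters. The main technical point I expect to require care is the inductive step that each $G_i$-restriction remains flat when we recurse; this is handled uniformly by the cyclic-partition computation above, so the remainder of the argument is essentially bookkeeping along the recursion tree.
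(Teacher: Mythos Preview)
Your proposal is correct and follows essentially the same approach as the paper: both descend through the recursive level structure of~$\EE_q$, using the cyclic three-partition at each level together with Lemma~\ref{lem:acyclic-flat} to force equal nominee counts across the sub-blocks and to propagate flatness of the restriction downward, then finish with the same cross-team count to obtain exactly one within-team defeat per nominee. The only cosmetic difference is that you pin the common value $n_t=3$ via the dummy teams, whereas the paper tracks the exact count $3\cdot 3^i$ at each level from the base case total~$3\rho$; both are valid.
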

\begin{claimproof}
    Assume that we are given a set of nominations that results in a flat election~$\EE$.

     Consider some candidate~$z$ of~$\EE_q$ written in the form~(\ref{eqn:candidate-form}) for some~$x \in \{\abase,\bbase,\cbase\}$ and indices~$h_1,\dots,h_{q-1} \in [3]$. %
    For each index ${i \in [q]}$, let $Z_{i,z}$ denote the candidates of~$\EE_q$ belonging to the same group as~$z$ at level~$i$;
    set also $Z_{q,z}=C_{q}$
    and $Z_{0,z}=\{z\}$.
    By the construction of~$\EE_q$ as given in Definition~\ref{def:EEq}, we know
    $|Z_{i,z}|=3^i$ for each $i \in [q]$ and 
    $Z_{0,z} \subset Z_{1,z} \subset \dots \subset Z_{q-1,z} \subset Z_{q,z}$. 
    Moreover, by Observation~\ref{obs:EEq-properties}, the candidate set~$Z_{i,z}$ for some $i \in [q]$ can be partitioned as $Z_{i-1,z} \uplus Z^+_{i-1,z} \uplus Z^-_{i-1,z}$ where 
    \begin{itemize}
    \item candidates in~$Z_{i-1,z}$ defeat all candidates in~$Z^+_{i-1,z}$, 
    \item candidates in~$Z^+_{i-1,z}$ defeat all candidates in~$Z^-_{i-1,z}$, and 
    \item  candidates in~$Z^-_{i-1,z}$ defeat all candidates in~$Z_{i-1,z}$
    \end{itemize}
    in the election~$\EE_q$.
    
    For some $z \in C_q$ and $i \in [q]$, 
    let us define the set~$\wt{F}_{i,z}$ of candidates in the election~$\EE_X$ as $\wt{F}_{i,z}=\bigcup \{F_{\psi(z)}:z \in Z_{i,z}\}$; for $i+1 \in [q]$ we define $\wt{F}^+_{i,z}$ and $\wt{F}^-_{i,z}$ analogously.\footnote{With a slight abuse of the notation, $F_t$ for some $t \in T$ is treated here, as well as at certain places later, as a \emph{set} (instead of a \emph{list}) of candidates from~$X$.}
    Notice that due to %
    Observation~\ref{obs:bijection-property},
    we have that for each $i \in [q]$ and $z \in C_q$
    \begin{itemize}
    \item[(a)] candidates in~$\wt{F}_{i-1,z}$ defeat all candidates in~$\wt{F}^+_{i-1,z}$, 
    \item[(b)] candidates in~$\wt{F}^+_{i-1,z}$ defeat all candidates in~$\wt{F}^-_{i-1,z}$, and 
    \item[(c)]  candidates in~$\wt{F}^-_{i-1,z}$ defeat all candidates in~$\wt{F}_{i-1,z}$
    \end{itemize}
    in the election~$\EE_X$. 
    We are going to prove by induction on~$i$ that 
    for each~$i=q,q-1, \dots, 1,0$ and
    for each $z \in C_q$, 
    \begin{itemize}
        \item[(i)] there are exactly $3 \cdot 3^i$ candidates
    in~$\wt{F}_{i,z}$ nominated in~$\EE$, and
        \item[(ii)] nominees in~$\wt{F}_{i,z}$ defeat exactly half of the nominees not contained in~$\wt{F}_{i,z}$.
    \end{itemize}
    Statement~(i) clearly holds for $i=q$, because $\wt{F}_{q,z}= \bigcup \{F_t \colon t \in T\}=X$ for every $z \in C_q$, and $X$ is the union of $3\rho=3|T|=3 \cdot 3^q$ parties. Statement~(ii) is vacuously true for~$i=q$.

    Assume now that the hypothesis holds for~$i$, and consider the statement for~$i-1$. 
    Fix some~$z \in C_q$.
    If the number of nominees is equal in the three sets~$\wt{F}_{i-1,z}$, $\wt{F}^+_{i-1,z}$, and $\wt{F}^-_{i-1,z}$, then statement~(i) holds due to the inductive hypothesis for~(i), because the number of candidates nominated in~$\EE$ in each  of these three sets will be~$3^i$; statement~(ii) also holds, because each nominee in~$\wt{F}_{i-1,z}$ defeats half of the nominees not in~$\wt{F}_{i,z}$ due to the inductive hypothesis for~(ii), as well as 
    all nominees in~$\wt{F}^+_{i-1,z}$, which is exactly half of the nominees not in~$\wt{F}_{i-1,z}$.
    
    So suppose that one of the sets~$\wt{F}_{i-1,z}$, $\wt{F}^+_{i-1,z}$, and $\wt{F}^-_{i-1,z}$ contains more than~$3^i$ nominees in~$\EE$.
    In such a case, facts~(a)--(c) together with Lemma~\ref{lem:acyclic-flat} applied to the restriction of~$\EE$ to the nominees in~$\wt{F}_{i,z}$ imply that 
    there exists at least one nominee~$c^\star$ in~$\wt{F}_{i,z}$ who defeats more than~$\frac{3 \cdot 3^i-1}{2}$ nominees from~$\wt{F}_{i,z}$ in~$\EE$. 
    Due to our inductive hypothesis on statements~(i) and~(ii), we also know that 
    candidate~$c^\star$ defeats $\frac{3 \cdot 3^q-3 \cdot 3^i}{2}$ nominees that are not in~$\wt{F}_{i,z}$.
    This yields that 
    \[\Cplodd(c^\star) \geq \frac{3\rho-3 \cdot 3^i}{2}+\frac{3 \cdot 3^i+1}{2}>\frac{3\rho-1}{2}\]
    in total, a contradiction to the flatness of~$\EE$.
    This proves the induction. 

    In particular, statement~(i) for $i=0$ yields that for each $z \in C_q$, there are exactly three nominees in the set~$\wt{F}_{0,z}=F_{\psi(z)}$; here we used that $Z_{0,z}=\{z\}$. Since this holds for each $z \in C_q$, and $\psi$ is a bijection, we obtain that $F_t$ for each team~$t \in T$ contains exactly three nominees in~$\EE$.

    Using statement~(ii) for~$i=0$ we also obtain that each nominee in~$F_t$ for some~$t \in T$ defeats half of the nominees not in~$F_t$; hence if any nominee in~$F_t$ would defeat both nominees in~$F_t$ other than itself, then it would obtain a Copeland-score of~$\frac{3|T|-3}{2}+2$, which again contradicts the flatness of~$\EE$.
\end{claimproof}

\smallskip
By Claim~\ref{clm:flatness}, we know that for each team~$t \in T$, the election~$\EE_X$ restricted to the three nominees contained in~$F_t$ is flat, i.e., each of these three nominees defeats exactly one other nominee. 
Recall that in this restricted election, the preferences of~$v$, $v'$, and~$v''$ are exactly the lists given by~$F_t$, $F'_t$, and~$F''_t$, respectively, as defined in~(\ref{eqn:def-Fr-mixed})--(\ref{eqn:def-Fd}).

\begin{claim}
    \label{clm:cyclic}
    If an election with three voters and three candidates is flat, then it contains no two candidates that are ranked the same way by all three voters.
\end{claim}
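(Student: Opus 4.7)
\begin{proofsketch}
The plan is to prove the contrapositive directly by contradiction, exploiting the very rigid structure of defeat in a three-voter, three-candidate election.

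First, I would observe that in a flat election with three candidates and an odd number of voters, each candidate must defeat exactly $\frac{3-1}{2}=1$ of the other candidates, so the defeat relation forms a directed 3-cycle on the candidate set. Thus, up to relabelling, if the candidates are $a,b,c$, we may assume $a$ defeats $b$, $b$ defeats $c$, and $c$ defeats $a$.

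Next, I would suppose for contradiction that some two candidates are ranked in the same order by all three voters; since the defeat relation is a 3-cycle, we may assume without loss of generality that these two candidates are $a$ and $b$ and that every voter prefers $a$ to $b$ (the case of $b$ preferred to $a$ is symmetric and, being inconsistent with $a$ defeating $b$, is actually impossible). The key step is then a simple counting argument based on where $c$ is positioned in each voter's ranking. Each voter's list (being consistent with $a \succ b$) falls into one of three categories, namely $a \succ b \succ c$, $a \succ c \succ b$, or $c \succ a \succ b$; write $n_1, n_2, n_3$ for the number of voters of each type, so that $n_1+n_2+n_3=3$.

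Finally, I would translate the required defeats into inequalities on $n_1,n_2,n_3$. For $b$ to defeat $c$, at least two voters must rank $b$ above $c$, which forces $n_1 \geq 2$. For $c$ to defeat $a$, at least two voters must rank $c$ above $a$, forcing $n_3 \geq 2$. Adding these yields $n_1+n_3 \geq 4$, contradicting $n_1+n_2+n_3=3$. This contradiction completes the proof. The whole argument is elementary; the only thing to be careful about is handling the symmetric labelling cases cleanly, which is why reducing at the outset to the canonical 3-cycle orientation is worthwhile.
\end{proofsketch}
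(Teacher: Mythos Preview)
Your proof is correct and follows essentially the same idea as the paper's: both argue by contradiction, using that flatness forces a $3$-cycle in the defeat relation and then showing that a unanimous ranking of some pair forces one candidate to defeat two others. The only difference is stylistic: the paper avoids the explicit $n_1,n_2,n_3$ bookkeeping by observing directly that if all voters prefer $x$ to $y$, and $y$ (by flatness) must defeat the remaining candidate~$z$, then at least two voters prefer $y$ to $z$; those same two voters then prefer $x$ to $z$ by transitivity of their individual rankings, so $x$ defeats both $y$ and $z$, contradicting flatness. Your counting argument reaches the same contradiction with a bit more setup.
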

\begin{claimproof}
    Assume w.l.o.g.\ that candidate~$x$ is preferred to candidate~$y$ by all three voters. Since~$y$ also must defeat a candidate due to the flatness of the election, the remaining candidate~$z$ must follow~$y$ in the preference list of at least two voters. However, then $x$ also defeats~$z$, contradicting the assumption of flatness.
\end{claimproof}

We next argue how nominations can result in a flat election when restricted to the candidate set of certain team lists.
\begin{claim}
\label{clm:agents-nominate-oneroom}
    If a set of nominations for the election~$\EE_X$ results in a flat election, then for each $p \in S \cup C \cup \hat{C}$, exactly one candidate of the form~$p^r$ for some $r \in R$ is nominated.
\end{claim}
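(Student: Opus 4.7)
\begin{proofsketch}
The plan is to derive the claim purely by counting, once we invoke Claim~\ref{clm:flatness}. That claim guarantees that for every team $t \in T$, exactly three candidates from $F_t$ are nominated; applied to the team $p \in S \cup C \cup \hat{C}$, it tells us that exactly three of the candidates in $F_p$ appear as nominees in $\EE_X$. The remaining work is to match these three nominees to the parties associated with $p$ and read off the conclusion.

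First I would handle the case when $p$ has degree~$3$ in~$G$, so that $F_p$ consists of the five candidates $p, p', \neg p^{r_1}, \neg p^{r_2}, \neg p^{r_3}$, and the four parties associated with $p$ are $\{p,p'\}, P_p^{r_1}, P_p^{r_2}, P_p^{r_3}$. Since $\{p,p'\}\subseteq F_p$, exactly one of its two candidates is nominated and that nominee lies in $F_p$. For each $i\in[3]$, the only candidate of $P_p^{r_i}=\{p^{r_i},\neg p^{r_i}\}$ that belongs to $F_p$ is $\neg p^{r_i}$. Hence the number of nominees contained in $F_p$ equals $1 + |\{i : \neg p^{r_i} \text{ is nominated}\}|$, and by Claim~\ref{clm:flatness} this must equal $3$. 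Therefore exactly two of the $\neg p^{r_i}$ are nominated, and consequently exactly one of $p^{r_1}, p^{r_2}, p^{r_3}$ is nominated.

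For the degree-$2$ case, $F_p = p, p', \neg p^{r_1}, \neg p^{r_2}$, and the four associated parties are the singletons $\{p\}, \{p'\}$ and the pairs $P_p^{r_1}, P_p^{r_2}$. The singletons contribute two forced nominees, both lying in $F_p$, while each $P_p^{r_i}$ contributes a nominee to $F_p$ precisely when $\neg p^{r_i}$ is chosen. The count becomes $2 + |\{i : \neg p^{r_i} \text{ is nominated}\}| = 3$, so exactly one $\neg p^{r_i}$ is nominated and hence exactly one of $p^{r_1}, p^{r_2}$ is nominated, completing the proof in this case.

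I do not foresee any real obstacle here; the one thing to be careful about is the bookkeeping of which candidates of each party lie in $F_p$ and which lie outside it (e.g., the candidates $p^{r_i}$ themselves never appear in $F_p$), since that is what converts the uniform count of three nominees from Claim~\ref{clm:flatness} into the exact-one statement about the candidates $p^r$.
\end{proofsketch}
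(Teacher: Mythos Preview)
Your proposal is correct and follows essentially the same approach as the paper: invoke Claim~\ref{clm:flatness} to fix exactly three nominees in $F_p$, then count how many of these come from the party $\{p,p'\}$ (or the two singletons $\{p\},\{p'\}$) versus the parties $P_p^{r_i}$, and deduce exactly how many $\neg p^{r_i}$---and hence how many $p^{r_i}$---are nominated. The only difference is that you handle the degree-$3$ case before the degree-$2$ case, which is immaterial.
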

\begin{claimproof}
By Claim~\ref{clm:flatness}, we know that exactly three candidates from~$F_p$ are nominated.
First assume that $p$ is a single or couple of degree~$2$ in~$G$, or a copy of such a couple. In this case, there are two parties $P_p^{r_1}$ and~$P_p^{r_2}$ associated with~$p$ for some rooms~$r_1$ and~$r_2$. Moreover, the candidates~$p$ and~$p'$ both form a singleton party, and thus must be nominated. By the definition of~$F_p$ as given in~(\ref{eqn:def-Fp-2choices}), there are two additional  candidates in~$F_r$, namely $\neg p^{r_1}$ and~$\neg p^{r_2}$. Due to Claim~\ref{clm:flatness}, exactly one of these must be nominated, which means that exactly one of the candidates~$p^{r_1}$ and~$p^{r_2}$ is nominated.

Assume now that $p$ is a single or couple of degree~$3$ in~$G$, or a copy of such a couple. In this case, there are three parties $P_p^{r_1}, P_p^{r_2}$ and~$P_p^{r_3}$ associated with~$p$ for some rooms~$r_1,r_2$, and~$r_3$. Moreover, candidates~$p$ and~$p'$ together form a party, and thus exactly one of them is nominated. By the definition of~$F_p$ as given in~(\ref{eqn:def-Fp-3choices}), there are three additional  candidates in~$F_r$, namely $\neg p^{r_1}, \neg p^{r_2}$, and~$\neg p^{r_3}$. Due to Claim~\ref{clm:flatness}, exactly two of these must be nominated, which means that exactly one of the candidates~$p^{r_1},p^{r_2}$, and~$p^{r_3}$ is nominated.
\end{claimproof}

The following claim finishes the proof of this direction.
\begin{claim}
\label{clm:nomination-yields-matching}
    If a set of nominations for the election~$\EE_X$ results in a flat election, 
    then~$M=\{pr:p \in S \cup C,r \in R, p^r$ is nominated $\}$ is a complete matching for~$G$.
\end{claim}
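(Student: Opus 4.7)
The plan is to verify the three defining properties of a complete matching for~$M$: (a)~$M \subseteq E(G)$; (b)~every $p \in S \cup C$ is incident to exactly one edge of~$M$; and (c)~$|M(r) \cap S| + 2|M(r) \cap C| \leq 2$ for each room $r \in R$. Property~(a) is immediate, since the candidate $p^r$ exists only when $pr \in E(G)$. Property~(b) follows directly from Claim~\ref{clm:agents-nominate-oneroom}, which guarantees that for each $p \in S \cup C$ exactly one candidate of the form $p^r$ is nominated, so $p$ is incident to exactly one edge of $M$.

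The substantive work lies in (c). I would fix a room $r$ and apply Claim~\ref{clm:flatness}: exactly three candidates from $F_r$ are nominated, and their restriction in $\EE_X$ forms a flat election. Since $\{r, r'\}$ is the party $P_r$, exactly one of $r, r'$ sits among the three nominees, so the two remaining nominees must come from the other candidates listed in $F_r$. I would then case-split on the type of room $r$ allowed by Theorem~\ref{thm:MMC-deg3}: mixed, all-singles, or all-couples.

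For a mixed room with $N_G(r)=\{s_1,s_2,c\}$, inspection of~(\ref{eqn:def-Fr-mixed}) shows that each of $s_1^r$ and $s_2^r$ is preferred to both $c^r$ and $\hat{c}^r$ by all three voters; by Claim~\ref{clm:cyclic} no flat triple can contain both a single-type and a couple-type candidate. Hence the two non-room nominees are either $\{s_1^r, s_2^r\}$, giving $M(r)=\{s_1,s_2\}$, or a subset of $\{c^r,\hat{c}^r\}$, giving $M(r) \subseteq \{c\}$; the capacity holds in both cases. For an all-singles room, the candidates in $F_r$ other than $r,r'$ are exactly the $s^r$ with $s \in N_G(r)$ by~(\ref{eqn:def-Fr-singles}), so $M(r) \cap C = \emptyset$ and $|M(r) \cap S| \leq 2$ automatically. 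For an all-couples room, the only case to rule out is the nomination of two distinct $c_i^r$ and $c_j^r$: a direct computation of pairwise majority margins on the triples $\{c_i^r, c_j^r, r\}$ and $\{c_i^r, c_j^r, r'\}$ using~(\ref{eqn:def-Fr-couples}) shows that in each such triple the candidate $c_{\min(i,j)}^r$ defeats both of the other two, violating flatness. Hence $|M(r) \cap C| \leq 1$ and, since there are no singles adjacent to $r$, the capacity constraint follows.

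The main obstacle is the all-couples case: Claim~\ref{clm:cyclic} alone does not suffice to rule out invalid nominee pairs there (several mixed triples involving copies $\hat{c}_i^r$ turn out to be flat), and so flatness must be invoked directly through explicit pairwise-margin computations among $c_i^r$, $c_j^r$, and the chosen room candidate. Combining the three cases with (a) and (b) yields that $M$ is a complete matching for~$G$.
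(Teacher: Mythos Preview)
Your proposal is correct and follows essentially the same structure as the paper's proof: both reduce to the room-capacity constraint via Claim~\ref{clm:flatness}, use Claim~\ref{clm:agents-nominate-oneroom} for coverage of $S\cup C$, and case-split on the three room types allowed by Theorem~\ref{thm:MMC-deg3}, with the mixed-room case handled through Claim~\ref{clm:cyclic}.

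The only noteworthy difference is in the all-couples case. The paper argues more than is strictly needed: it rules out two $c^r$-type nominees (both defeat $P_r$), rules out two $\hat c^r$-type nominees ($P_r$ defeats both), and then shows that a mixed pair $\{c_i^r,\hat c_j^r\}$ forces $i=j$, concluding $M(r)=\{c_i\}$ exactly. You instead observe that only the two-distinct-$c_i^r,c_j^r$ case can violate the capacity bound (since $\hat c^r$-type nominees do not contribute to $M(r)$ at all), and you rule that case out by a direct flatness violation ($c_{\min(i,j)}^r$ beats both other nominees). Your route is slightly more economical for the claim as stated; the paper's route additionally pins down the exact structure of the nominee triple, which is not required here but aligns with the converse direction in Claim~\ref{clm:matching-gives-flat}.
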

\begin{claimproof}
Claim~\ref{clm:agents-nominate-oneroom} implies that $M$ contains exactly one edge incident to each vertex in~$S \cup C$, so it suffices to prove that
\begin{equation}
\label{eqn:capacity-of-room}
    |M(r) \cap S|+2 |M(r) \cap C| \leq 2
\end{equation}
holds for each room~$r \in R$.

Consider a room~$r \in R$. Recall that due to Claim~\ref{clm:flatness}, there must be exactly three nominees among the candidates of~$F_r$.
Recall that $P_r \subseteq F_r$, and that exactly one of the candidates in~$P_r$ is nominated. 
Let $x$ and~$y$ denote the other two nominees in~$F_r$.  
We will distinguish between three cases. 
 
First, suppose that $r$ is adjacent to one couple~$c$ and two singles, say $s_1$ and~$s_2$, in~$G$;
so that is $F_r$ is defined by~(\ref{eqn:def-Fr-mixed}).
Observe that if $x$ belongs to some party~$P_c$ or~$P_{\hat{c}}$ associated with a couple~$c \in C$ or its copy,   
then $y$ cannot belong to some party~$P_s$ associated with a single~$s \in S$, because then $y$ would be preferred to~$x$ by all three voters, which cannot happen by Claim~\ref{clm:cyclic}.
Therefore, only two cases are possible: 
either $\{x,y\}=\{c^r,\hat{c}^r\}$, which means that $M(r)=\{c\}$, or $\{x,y\}=\{s_1^r,s_2^r\}$, which means that $M(r)=\{s_1,s_2\}$. In either case, (\ref{eqn:capacity-of-room}) holds.

Second, assume that $r$ is only adjacent to singles in~$G$; 
then $F_r$ is defined by~(\ref{eqn:def-Fr-singles}). Due to Claim~\ref{clm:flatness}, we know that exactly two among the candidates $\{s^r:sr \in E\}$ must be nominated besides the nominee of~$P_r$; hence, $M(r)$ contains exactly two singles, 
and thus (\ref{eqn:capacity-of-room}) holds.

Third, assume that $r$ is only adjacent to couples in~$G$;
then $F_r$ is defined by~(\ref{eqn:def-Fr-couples}).
Again, let~$x$ and~$y$ denote the two nominees in~$F_r$ besides the nominee of~$P_r$.  
First note that $x$ and $y$ cannot both be of the form~$c^r$ for some~$c \in C$, as in that case both of them would defeat~$P_r$, which 
cannot happen by Claim~\ref{clm:flatness}. It also cannot happen that both are of the form~$\hat{c}^r$ for some~$c \in C$, as in that case $P_r$ would defeat both of them, again a contradiction to Claim~\ref{clm:flatness}. Hence, it must be the case that $\{x,y\}=\{c_i^r,\hat{c}_j^r\}$ for some couples~$c_i$ and~$c_j$. 
Notice that $c_i^r$ defeats the nominee of~$P_r$, and thus, by Claim~\ref{clm:flatness}, must in turn be defeated by~$\hat{c}_j^r$. 
This implies that $\hat{c}_j^r$ must precede~$c_i^r$ both in~$F'_r$ and in~$F''_r$, which can only happen if $c_i=c_j$.
Hence, in this case we have $\{x,y\}=\{c_i^r,\hat{c}_i^r\}$ for some couple~$c_i \in C$, which means that $M(r)=\{c_i\}$, ensuring~(\ref{eqn:capacity-of-room}). 
\end{claimproof}

\medskip
\noindent
{\bf Direction ``$\Longrightarrow:$''}
    Assume now that $M \subseteq E$ is a complete matching for our instance of \textsc{Maximum Matching with Couples}.

    \begin{claim}
    \label{clm:matching-gives-flat}
        Given a complete matching for~$G$, there exists a set of nominations for the constructed instance such that for each team~$t \in T$, the restriction of~$\EE_x$ to the candidates of~$F_t$ is a flat election with three nominees.
    \end{claim}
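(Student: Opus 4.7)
The plan is to give an explicit construction of nominations from~$M$ and then verify, team by team, that each $F_t$ restricted to the nominees yields a 3-cycle in the defeat relation. For each party $P_p^r$ with $p \in S \cup C \cup \hat{C}$ and $r \in N_G(p)$, nominate~$p^r$ exactly when the matching~$M$ assigns~$p$ (or the couple~$c$ when $p=\hat{c}$) to~$r$, and nominate $\neg p^r$ otherwise; each singleton party trivially nominates its only candidate. The remaining parties are the size-two parties $\{p,p'\}$ that arise when $p$ has degree~$3$ in~$G$, and every room party $\{r,r'\}$; for these, the primed-versus-unprimed choice will be dictated by which matching edge is used, as described below.

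Next, I would verify that exactly three candidates of each team list~$F_t$ are nominated. For dummies this is immediate, since all three candidates there are singletons. For an agent~$p$, since $M$ matches~$p$ to exactly one neighbor, the number of nominees of the form $\neg p^r$ in $F_p$ equals the degree of~$p$ in~$G$ minus one; together with the forced nominees from $\{p\}$ and $\{p'\}$ (degree-$2$ case) or the single nominee from $\{p,p'\}$ (degree-$3$ case), this totals three. For a room~$r$, the matching fills $r$ with either two singles or one couple; only the relevant $s_i^r$, or the pair $c^r$ together with $\hat{c}^r$, appear among the nominees in~$F_r$, which added to the unique nominee from $\{r,r'\}$ again yields three.

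The final step is a case analysis proving flatness. For a dummy~$d$ the restriction to~$F_d$ is, by construction, the 3-voter 3-candidate cyclic election~$\EE_1$ of Definition~\ref{def:EEq}, hence flat. In all other cases I fix the primed-versus-unprimed choice as follows: for a degree-$3$ agent~$p$ matched to the room indexed~$r_3$ in~(\ref{eqn:def-Fp-3choices}) nominate~$p$, otherwise nominate~$p'$; for a room team~$r$ nominate~$r$ if $M$ pairs $r$ with the single indexed~$s_1$ in~(\ref{eqn:def-Fr-mixed}) or~(\ref{eqn:def-Fr-singles}), and nominate~$r'$ otherwise (in particular~$r'$ whenever $r$ is matched only to a couple). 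With these choices fixed, each remaining sub-case reduces to comparing the three chosen nominees under the three voters using the explicit team lists~(\ref{eqn:def-Fr-mixed})--(\ref{eqn:def-Fp-2choices}), and a direct pairwise count confirms that the three defeats form a cycle. The main obstacle is purely bookkeeping: there are roughly a dozen sub-cases (two agent degrees with their primed-choice, and three room types each with several possible matched pairs), and one must check that the prescribed rule always produces the cyclic pattern rather than the degenerate one in which some single nominee defeats the other two.
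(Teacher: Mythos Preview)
Your proposal is correct and follows essentially the same approach as the paper: define nominations from the matching, then verify team-by-team that each restriction is a three-candidate cycle. Your rule for rooms coincides with the paper's; your rule for the $\{p,p'\}$ choice in the degree-$3$ agent case (``nominate~$p$ iff matched to~$r_3$'') differs slightly from the paper's (``nominate~$p'$ iff matched to~$r_1$''), but both choices yield flat restrictions in all three sub-cases, so this is an inessential variation.
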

    \begin{claimproof}
    For each edge~$sr \in E$ incident to a single~$S$, let party~$P_s^r$ nominate the candidate~$s^r$ if $sr \in M$, and $\neg s^r$ otherwise.
    Similarly, for each edge~$cr \in E$ incident to a couple~$c$, let $P^r_c$ nominate~$c^r$ (and $P_{\hat{c}}$ nominate~$\hat{c}^r$) if $cr \in M$, and~$\neg c^r$ ($\neg \hat{c}^r$, respectively) otherwise.
    
    Given a room~$r$, we define the nominee of~$P_r$ as follows.
    If $M(r)$ contains a single that appears as the first candidate in~$F_r$, then let $P_r$ nominate~$r$, and $r'$ otherwise.     
    Based on the definitions~(\ref{eqn:def-Fr-mixed})--(\ref{eqn:def-Fr-couples}), it is straightforward to check that these nominations
    fulfill the requirements of the claim for~$F_t$. 
    
    If $p \in S \cup C \cup C'$ is such that $p$ and~$p'$ are both singletons, then the team lists for~$p$ are defined by~(\ref{eqn:def-Fp-2choices}). Using that exactly one candidate of the form~$p^r$ and, consequently, one candidate of the form~$\neg p^r$ is nominated, it follows that the restriction of~$\EE_X$ to the three nominees in~$F_p$ is a flat election.
    
    By contrast, if $p \in S \cup C \cup C'$ is such that candidates~$p$ and~$p'$ belong to the same party, then 
    the team lists for~$p$ are defined by~(\ref{eqn:def-Fp-3choices}). Since~$p$ (or the couple whose copy is~$p$) is incident to exactly one edge of~$M$, we know that there are exactly two nominees in~$F_p$ of the form~$\neg p^r$ for some $r \in R$. Let us nominate candidate~$p'$ if the first candidate of~$F''_p$ is not nominated, i.e., it is $\neg p^r$ where $p r \in M$; otherwise nominate the candidate~$p$. It is straightforward to check that in this case, the nominations again fulfill the requirements for~$F_p$.
        
    Finally, it is trivial that all possible nominations fulfill the requirements for the team lists corresponding to some dummy~$d \in D$, as defined in~(\ref{eqn:def-Fd}).
    \end{claimproof}

    Consider the nominations guaranteed by Claim~\ref{clm:matching-gives-flat}.
    Let $t \in T$ be a team.
    Notice that by %
    Observation~\ref{obs:bijection-property}, a nominee in~$F_t$ defeats a nominee in~$F_{t'}$ for distinct teams~$t$ and~$t'$ if and only if $\psi^{-1}(t)$ defeats~$\psi^{-1}(t')$ in~$\EE_q$. Hence, Observation~\ref{obs:EEq-number-of-defeats} means that each nominee~$x$ in~$F_t$ defeats exactly $\frac{(3|T|-3)}{2}$ nominees from~$X$ that are not contained in~$F_t$. 
    By Claim~\ref{clm:matching-gives-flat}, we also know that $x$ defeats exactly one nominee in~$F_t$. 
    From this we get
    $\Cplodd(x)=\frac{(3|T|-3)}{2}+1=\frac{(3|T|-1)}{2}$. Hence, the proposed nominations for~$\EE_X$ indeed result in a flat election.
    This finishes the proof of the correctness of our reduction.

\subsection{Proof of Theorem~\ref{thm:4voters-s=2}}
\label{sec:proof-of-4votersNPc}
\thmfourvotersmaxsizetwo*

\begin{proof}
    We present a reduction from \textsc{3-Coloring} that uses some ideas from the proof of Theorem~\ref{thm:Copeland-v=s=2}. Let $G=(U,E)$ be our input graph with vertex set $U=\{u_1,\dots,u_n\}$. We construct an instance of \PP\ with voter set $\{v,v',w,w'\}$ and $\maxsize=2$ as follows. 
    
    We let $P=\{p\}$ be our distinguished party, and we create candidates~$q_i$ for $i \in [n]$, with each $q_i$ forming its own singleton party. For each color~$c \in [3]$, we further create a party $P_{u_i}^c=\{u^c_i,\ol{u}^c_i\}$ for each $i \in [n]$, and a party $P^c_e=\{e^c_i,e^c_j\}$ for each edge $e= u_i u_j \in E$. 
    Then the maximum party size is $\maxsize=2$.
    We shall use the notation 
    {
    \allowdisplaybreaks
    \begin{align*}    
    U^{[3]} & = \{u^c_i:i \in [n], c \in [3]\}, \\
    \ol{U}^{[3]} & = \{\ol{u}^c_i:i \in [n], c \in [3]\}, \\
    E^{[3]} & = \{e^c_i:e \in E \textrm{ is incident to } u_i, i \in [n], c \in [3]\}, \\
    Q &= \{q_1,\dots,q_n\},
    \end{align*}
    so that our candidate set is $C_0=\{p\} \cup Q \cup U^{[3]} \cup \ol{U}^{[3]} \cup E^{[3]}$.
    }
    
    To define the preferences of our voters, we will use the definition of blocks~$X^c$ and~$\wt{X}^c$ for each color $c \in [3]$ as defined in~(\ref{eq:Xblocks-def}), and we additionally introduce the following blocks:
\begin{align}
\label{eq:Zblocks-def}
\begin{split}
Z &= \ol{u}^1_1, \ol{u}^2_1, \ol{u}^3_1, q_1,
\ol{u}^1_2, \ol{u}^2_2, \ol{u}^3_2, q_2, \dots,
\ol{u}^1_n, \ol{u}^2_n, \ol{u}^3_n, q_n; \\
\wt{Z} &= \ol{u}^1_n, \ol{u}^2_n, \ol{u}^3_n, q_n,
\ol{u}^1_{n-1}, \ol{u}^2_{n-1}, \ol{u}^3_{n-1}, q_{n-1}, \dots,
\ol{u}^1_1, \ol{u}^2_1, \ol{u}^3_1, q_1.
\end{split}
\end{align}

Now we are ready to define the preferences of our voters:
\begin{align*}
    v &: p, \ora{Q}, X^1, X^2, X^3, \ora{\ol{U}^{[3]}}; \\
    v' &: \wt{X}^3, \wt{X}^2,\wt{X}^1, \ola{\ol{U}^{[3]}}, \ola{Q}, p; \\
    w &: \ora{E^{[3]}}, p, Z, \ora{U^{[3]}}; \\
    w' &: \wt{Z}, \ola{U^{[3]}}, p, \ola{E^{[3]}}.
\end{align*}

For some reduced election~$\EE$ with candidate set~$C \subset C_0$, let $\lambda_\EE(x)$ for each $x \in C$ denote the set of nominees that defeat~$x$ in~$\EE$. 
The following observations follow directly from the preferences of voters:
\begin{itemize}
    \item $\lambda_\EE(p)=\emptyset$;
    \item $\lambda_\EE(u^c_i)\supseteq Q$ for each nominee $u_i^c \in U^{[3]}$; 
    \item $\lambda_\EE(\ol{u}^c_i)\supseteq E^{[3]} \cap C$ for each nominee $\ol{u}_i^c \in \ol{U}^{[3]}$.
\end{itemize}

Notice also that candidates in~$Q$ either are tied  with or defeat every nominee in~$C \setminus \ol{U}^{[3]}$, and thus can be defeated only by nominees in~$\ol{U}^{[3]}$. However, due to the definition of blocks~$Z$ and~$\wt{Z}$, each nominee~$q_i \in Q$ is tied with all nominees in~$\ol{U}^{[3]} \setminus \{\ol{u}_i^1,\ol{u}_i^2,\ol{u}_i^3\}$. Thus, we also get
\begin{equation}
\label{eqn:qi_induces_one_color}
    \lambda_\EE(q_i) \subseteq \{\ol{u}_i^1,\ol{u}_i^2,\ol{u}_i^3\}
\end{equation}
for each $i \in [n]$.

Similarly, note that candidates in~$E^{[3]}$ are either tied  with or defeat every nominee in~$C \setminus U^{[3]}$, and thus can be defeated only by nominees in~$U^{[3]}$. However, due to the definition of blocks~$X^c$ and~$\wt{X}^c$ for some color~$c \in [3]$, each nominee~$e^c_i \in E^{[3]}$ is tied with all nominees in~$U^{[3]} \setminus \{u_i^c\}$. Thus, we also get
\begin{equation}
\label{eqn:eic_ensures_proper_coloring}
  \lambda_\EE(e_i^c) \subseteq \{u_i^c\}
\end{equation}
for each nominee~$e_i^c$ in~$\EE$.

Since $\Cpl[1](x)=t-1-\lambda_\EE(x)$ for each nominee $x \in C$, we get that
\begin{equation}
\label{eqn:3Col2-pscore}
    \Cpl[1](p) = t-1,     
\end{equation}
which shows that $p$ is the unique winner in~$\EE$ if and only if all nominees in~$Q \cup E^{[3]}$ are defeated by at least one other nominee.

Moreover, for each $i \in [n]$ and $c \in [3]$ we obtain that
\begin{align}
    \label{eqn:3Col2-Uscore}
    \Cpl[1](u^c_i) & \leq  t-1-|Q|=t-1-n && \textrm{if } u_i^c \in C; \\
    \label{eqn:3Col2-negUpscore}
    \Cpl[1](\ol{u}^c_i) & \leq  t-1-|E^{[3]} \cap C|=t-1-3|E| && \textrm{if } \ol{u}_i^c \in C. 
\end{align}

We claim that $p$ can be the unique winner in the reduced election~$\EE$ if and only if $G$ admits a proper $3$-coloring.

\medskip 
\noindent
{\bf Direction ``$\Longrightarrow$'':}
Suppose first that $p$ is the unique winner in~$\EE$. Since each candidate~$q_i \in Q$ must be defeated by at least one nominee in~$\EE$ due to~(\ref{eqn:3Col2-pscore}), by (\ref{eqn:qi_induces_one_color}) we know that at least one candidate in~$\{\ol{u}_i^1,\ol{u}_i^2,\ol{u}_i^3\}$ must be nominated in~$\EE$ for each $i \in [n]$. Thus, for each $i \in [u]$ we can fix one candidate~$\ol{u}_i^c$ that is nominated in~$\EE$, and assign the color~$c$ to~$u_i$. 
We argue that the coloring~$\chi$ thus defined is a proper coloring for~$G$.

For the sake of a contradiction, assume that there exists some edge~$e =u_i u_j \in E$ and color~$c \in [3]$ so that $\chi(u_i)=\chi(u_j)=c$. This means that both $\ol{u}_i^c$ and $\ol{u}_j^c$ are nominated in~$\EE$, and consequently, neither $u_i^c$ nor $u_j^c$ are nominated due to the definitions of parties~$P_{u_i}^c$ and~$P_{u_j}^c$. By (\ref{eqn:eic_ensures_proper_coloring}), this implies that if candidate~$e_i^c$ is nominated, then $\lambda(e_i^c)=\emptyset$, and similarly, if $e_j^c$ is nominated, then $\lambda(e_j^c)=\emptyset$. Therefore,  $\Cpl[1](P_e^c)=t-1=\Cpl[1](p)$ in both cases, contradicting our assumption that $p$ is the unique winner in~$\EE$. Hence, $\chi$ is indeed a proper 3-coloring for~$G$.

\medskip 
\noindent
{\bf Direction ``$\Longleftarrow$'':}
Suppose now that $\chi:U \rightarrow [3]$ is a proper 3-coloring of~$G$. We provide a nomination strategy for all parties which yields a reduced election~$\EE$ whose unique winner is $p$. 

First, for each~$i \in [n]$ we nominate $\ol{u}_i^{\chi(u_i)}$ together with the candidates in~$\{u_i^{c'}:c' \neq \chi(u_i)\}$. By~(\ref{eqn:qi_induces_one_color}), this immediately yields $\lambda_\EE(q_i)=\{\ol{u}_i^{\chi(u_i)}\}$ and hence 
\begin{equation}
\label{eqn:3Col2-qscore}
    \Cpl[1](q_i)=t-2.
\end{equation}

Next, for each $e \in E$ and color~$c \in [3]$, let us nominate the candidate~$e_i^c$ where $u_i$ is an (arbitrarily fixed) endpoint of~$e$ that does \emph{not} have color~$c$; as we have argued in the proof of Theorem~\ref{thm:Copeland-v=s=2}, $e$ must admit such an endpoint, because $\chi$ is a proper 3-coloring.
Note that in this case, candidate~$u^c_i$ is nominated (by $c \neq \chi(u_i)\}$), so the nominee $e_i^c$ is therefore defeated by~$u_i^c$. Hence, $\Cpl[1](e_i^c) \leq t-2$ follows. Together with our previous observations in (\ref{eqn:3Col2-pscore})--(\ref{eqn:3Col2-qscore}), this means that all nominated candidates have score at most~$t-2$, while $p$ has score~$t-1$. Thus $p$ is indeed the unique winner in~$\EE$. 
\end{proof}

\section{Additional material for Section~\ref{sec:parameter-t}}
\label{app:param-t}
All reductions in this section are from the 
\textsc{Multicolored Clique} problem which 
is $\mathsf{W}[1]$-hard when parameterized by the size of the desired clique~\cite{pietrzak-multicolored-2003,fellows-hermelin-rosamond-vialette-multicolored-hardness}.

\begin{center}
\fbox{ 
\parbox{0.95\columnwidth}{
{\bf \textsc{Multicolored Clique}}. \\
{\bf Input:} An undirected graph $G=(U,E)$ with the vertex set~$U$ partitioned into $k$ independent sets $U_1, \dots, U_k$. \\
{\bf Question:} Is there a clique of size~$k$ in~$G$?
}}
\end{center}

\medskip

We now define some notation that will be useful in  our reductions from \textsc{Multicolored Clique}.

For each pair of distinct indices~$i,j \in [k]$  let  $E_{\{i,j\}}$ denote the set of edges in~$G$ that run between~$U_i$ and~$U_j$; note that  $E_{\{i,j\}}=E_{\{j,i\}}$. 
We may assume w.l.o.g.\ that $|U_i|=n$ 
and $|E_{\{i,j\}}|=m$ for every $i,j \in [k]$ with $i \neq j$; 
thus, we will write $U_i=\{u_i^1,\dots,u_i^n\}$ and 
$E_{\{i,j\}}=\{e_{\{i,j\}}^1,\dots,e_{\{i,j\}}^m\}$. 
Indeed, to achieve this we can simply add a newly constructed independent edge set of appropriate size to each set~$E_{\{i,j\}}$ to guarantee $|E_{\{i,j\}}|=m$ for all $i \neq j$, and then add the necessary number of isolated vertices to each vertex set $U_i$ so that they all have the same size.

For each vertex $u \in U_i$, we let $E^j(u)=\{uv \in E: v \in U_j\}$ for each $j \in [k] \setminus \{i\}$, as well as the edge sets $E^{>i}(u)=\bigcup_{j \in [k], j>i} E^j(u)$ and $E^{<i}(u)=\bigcup_{j \in [k], j<i} E^j(u)$.

\subsection{Proof of Theorem~\ref{thm:4voters}.}
\label{sec:param-t-even}

\thmCopelandfourteven*

We will prove Theorem~\ref{thm:4voters} in two steps, dissecting it into Theorems~\ref{thm:4voters-alpha-lessthanone}
and~\ref{thm:Copeland-4voters-alpha}.
We start with the following result.

\begin{restatable}[$\star$]{theorem}{thmfourvotersalphalessthanone}
\label{thm:4voters-alpha-lessthanone}
For any constant $\alpha \in [0,1)$, \PP{} for Copeland$^{\, \alpha}$ is $\mathsf{W}[1]$-hard when parameterized by~$t$, the number of parties, even if there are only four voters.    
\end{restatable}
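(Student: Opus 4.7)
My plan is to reduce from \textsc{Multicolored Clique}, following the spirit of Theorem~\ref{thm:Copeland-easy-alpha} but replacing the two-voter-per-non-edge gadget with a constant-voter encoding. Given an input graph $G=(U,E)$ with color classes $U_1,\dots,U_k$, each of size $n$, and $m$ edges between each color pair, I would construct an election whose parties are: a distinguished singleton $P^\star=\{p\}$, an auxiliary singleton $\{p'\}$, a \emph{vertex-party} $P_i=U_i$ for each color $i\in[k]$, and an \emph{edge-party} $P_{\{i,j\}}=E_{\{i,j\}}$ for every pair $i<j$. The total number of parties is $t=2+k+\binom{k}{2}$, so the reduction is FPT in~$k$.

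The four voters should be designed to realize the following pairwise outcomes in any reduced election: (a)~$p$ defeats $p'$ and is tied with every other candidate; (b)~$p'$ is tied with every non-$p$ candidate; (c)~two non-$\{p,p'\}$ nominees that are \emph{not} a vertex-edge pair lying in a shared color pair are tied; and (d)~a vertex-candidate $u\in U_i$ and an edge-candidate $e\in E_{\{i,j\}}$ are tied when $e$ is incident to~$u$, and otherwise $u$ defeats~$e$. Under (a)--(d), a direct calculation gives $\Cpl(p)=1+\alpha(t-2)$, while every other nominee~$c$ defeating $d_c$ candidates has $\Cpl(c)=d_c+\alpha(t-1-d_c)$; the strict inequality $\Cpl(p)>\Cpl(c)$ simplifies to $(1-\alpha)d_c<1-\alpha$, which for $\alpha<1$ forces $d_c=0$. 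Applied to every nominated vertex $u_i\in U_i$ and every nominated edge $e_{\{i,j\}}\in E_{\{i,j\}}$, property~(d) then forces $e_{\{i,j\}}$ to be incident to both $u_i$ and $u_j$---precisely the condition that the $k$ nominated vertices form a clique in~$G$.

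The principal obstacle is realizing (a)--(d) with only four linear orders. My plan is to use two pairs of voters $(v_1,v_2)$ and $(v_3,v_4)$ with "forward" and "reverse" block structure. The candidate set will be partitioned into blocks indexed by (color, color-pair) slots; within each slot I place the vertices of $U_i$ and the edges of $E_{\{i,j\}}$ interleaved according to incidence. Specifically, in $v_1$ and $v_2$ each vertex $u\in U_i$ is immediately followed by its incident edges in $E_{\{i,j\}}$, while in $v_3$ and $v_4$ each $u$ is immediately preceded by them; the blocks themselves appear in identical order across all four voters. This yields a 2-2 tie for incident vertex--edge pairs and a 3-1 (or 4-0) defeat in the vertex's favor for non-incident pairs, giving~(d); two candidates in distinct blocks are tied because both pairs of voters agree on block order, giving~(c). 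Finally, $p$ and $p'$ are inserted as a two-candidate block in the middle of each list, with the relative order $p\succ p'$ in three voters and $p'\succ p$ in one voter, which yields (a) and (b).

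Once the construction is specified, correctness follows by the two standard directions: a multicolored $k$-clique $\{u^{(1)},\dots,u^{(k)}\}$ in~$G$ yields nominations (the $u^{(i)}$ and the edges $u^{(i)}u^{(j)}$ for $i<j$) under which every non-$p$ nominee has $d_c=0$, making $p$ the unique winner; conversely, if $p$ is the unique winner, then $d_c=0$ for every non-$p$ nominee, so by~(d) every nominated edge is incident to both relevant nominated vertices, producing a $k$-clique. Verifying the pairwise outcomes of the concrete preference layout is a careful but routine bookkeeping exercise, which I expect to be the most tedious---but not conceptually difficult---part of the argument.
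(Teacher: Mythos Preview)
Your overall plan---same party set, same target pairwise outcomes (a)--(d), same score calculation reducing ``$p$ unique winner'' to ``no non-$p$ nominee defeats another''---matches the paper's proof exactly. The gap is in the concrete four-voter layout.

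The issue is that a vertex $u\in U_i$ must be incidence-tested against edges in \emph{every} edge-party $E_{\{i,j\}}$ with $j\neq i$, not just one. Your proposed partition into ``(color, color-pair) slots,'' each containing the vertices of~$U_i$ together with the edges of a single $E_{\{i,j\}}$, would force each vertex to live in $k-1$ different slots simultaneously; that is not a partition of the candidate set, so the preference lists are not well-defined. Relatedly, your sentence ``the blocks themselves appear in identical order across all four voters'' cannot coexist with property~(c): if all four voters agree on the block order, then any two candidates in distinct blocks are decided $4{:}0$, not tied.

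The paper resolves both problems with a single idea: use \emph{two different} block partitions, one for each voter pair. For the pair $(v,v')$, vertex~$u\in U_i$ is interleaved with its incident edges going to \emph{higher} colors (the sets $E^{>i}(u)$), and $v'$ reverses the block order of~$v$. For the pair $(w,w')$, the same vertex is interleaved with its incident edges going to \emph{lower} colors (the sets $E^{<i}(u)$), and $w'$ reverses~$w$. Then for $u\in U_i$ and $e\in E_{\{i,j\}}$ with $i<j$: the pair $(w,w')$ places them in different blocks and hence cancels, while the pair $(v,v')$ places them in the same block and orders them by incidence; the roles swap when $i>j$. This is precisely the missing ingredient in your sketch---once you have it, the remaining verification is the routine bookkeeping you anticipated.
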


\begin{proof}  Let $G=(U,E)$ and $k$ be our instance of the \textsc{Multicolored Clique} with $U$ partitioned into sets $U_1, \dots, U_k$;
we use all the notation introduced at the beginning of Appendix~\ref{app:param-t}.
We define an instance of \PP{} as follows.

First, we set $P=\{p\}$ as our distinguished party.
We also introduce a party~$P'=\{p'\}$.
We add $U_i$ for each $i \in [k]$, as well as~$E_{\{i,j\}}$ for each pair $\{i,j\} \subseteq [k]$ with $i \neq j$ as parties. Thus, we have $t=2+k+\binom{k}{2}$ parties. Our four voters will be $v,v',w,$ and~$w'$. To define their preferences, we first construct the following blocks for each $i \in [k]$:
\begin{align*}
F^>_i &= u_i^1, \ora{E^{>i}(u_i^1)},
u_i^2, \ora{E^{>i}(u_i^2)},
\dots,
u_i^n, \ora{E^{>i}(u_i^n)}; \\
\wt{F}^>_i &= \ola{E^{>i}(u_i^1)}, u_i^1, 
\ola{E^{>i}(u_i^2)}, u_i^2, 
\dots,
\ola{E^{>i}(u_i^n)}, u_i^n; \\
F^<_i &= u_i^1, \ora{E^{<i}(u_i^1)},
u_i^2, \ora{E^{<i}(u_i^2)},
\dots,
u_i^n, \ora{E^{<i}(u_i^n)}; \\
\wt{F}^<_i &= \ola{E^{<i}(u_i^1)}, u_i^1, 
\ola{E^{<i}(u_i^2)}, u_i^2, 
\dots,
\ola{E^{<i}(u_i^n)}, u_i^n.
\end{align*}
Note that blocks~$F^>_i$ and~$\wt{F}^>_i$ contain exactly those edges that have one endpoint in~$U_i$ and the other endpoint in some set~$U_j$ with $j>i$.
Similarly, blocks~$F^<_i$ and~$\wt{F}^<_i$ contain exactly those edges that have one endpoint in~$U_i$ and the other endpoint in some set~$U_j$ with $j<i$.
This also means that candidates of~$E_{\{i,j\}}$ for some $1 \leq i<j \leq k$ are contained precisely in the blocks~$F_i^>,\wt{F}_i^>,F^<_j,$ and $\wt{F}^<_j$.

We are now ready to define the preferences of our voters:
\begin{align*}
v &: p, p', F^>_1, F^>_2, \dots, F^>_{k-1}, F^>_k; \\
v' &:  \wt{F}^>_k, \wt{F}^>_{k-1}, \dots, \wt{F}^>_2, \wt{F}^>_1, p, p'; \\
w &: p, p', F^<_1, F^<_2, \dots, F^<_{k-1}, F^<_k; \\
w' &:  \wt{F}^<_k, \wt{F}^<_{k-1}, \dots, \wt{F}^<_2, \wt{F}^<_1, p, p'.
\end{align*}

Observe that in every reduced election~$\EE$ resulting from some nominations, $p$ receives a score of $\Cpl(p)=1+\alpha(t-2)$, because it defeats~$p'$ and is tied with all remaining nominees. Since $\alpha<1$, this implies that $p$ is the unique winner of~$\EE$ if and only if no nominee other than~$p$ defeats any other nominee in~$\EE$.
We show that such an election~$\EE$ exists if and only if $G$ contains a clique of size~$k$.

\smallskip
\noindent
{\bf Direction ``$\Longrightarrow$'':} Suppose first that $p$ is the unique winner in~$\EE$, and let $u_i^{\kappa(i)}$ denote the nominee of party~$U_i$ 
for each $i \in [k]$.
We are going to show that the vertices in $K=\{u_i^{\kappa(i)}:i \in [k]\}$ form a clique in~$G$. 
Fix some $i,j \in [k]$ with $i<j$, and let $e=u_i^{h}u_j^{\ell}$ be the edge nominated by party~$E_{\{i,j\}}$. 

We claim that $e$ is incident to~$u_i^{\kappa(i)}$, that is, $h=\kappa(i)$. To see this, first note that $w$ and $w'$ 
rank $u_i^{\kappa(i)}$ (contained in blocks~$F^<_i$ and~$\wt{F}^<_i$) 
and $e$ (contained in blocks~$F^<_j$ and~$\wt{F}^<_j$) differently. 
Moreover, $v$ and $v'$ both prefer $u_i^{\kappa(i)}$ to~$e$ 
(both contained in blocks~$F^>_i$ and~$\wt{F}^>_i$) exactly if $\kappa(i) < h$, while 
they both prefer $e$ to~$u_i^{\kappa(i)}$ if and only if $\kappa(i)>h$. Since $u_i^{\kappa(i)}$ and~$e$ must be tied in~$\EE$ (as do every pair of candidates not involving~$p$), we get that only $h=\kappa(i)$ is possible, proving our claim. 

An analogous argument shows that $e$ is incident to~$u_j^{\kappa(j)}$, that is, $\ell=\kappa(j)$: 
First, note that $v$ and~$v'$ rank $e$ and $u_j^{\kappa(j)}$ differently, because the former is contained in blocks~$F_i^>$ and~$\wt{F}_i^>$, while the latter is contained in blocks~$F_j^>$ and~$\wt{F}_j^>$. 
Moreover, if $\ell<\kappa(j)$, then both $w$ and~$w'$ prefer~$e$ to~$u_j^{\kappa(j)}$, while if 
$\ell>\kappa(j)$, then both $w$ and~$w'$ prefer~$u_j^{\kappa(j)}$ to~$e$. Since these two candidates must be tied in~$\EE$, we obtain $\ell=\kappa(j)$ as required.

Hence, $e$ is the edge connecting $u_i^{\kappa(i)}$ and~$u_j^{\kappa(j)}$.
Since we proved this for arbitrary indices $i,j \in [k]$ with $i \neq j$, it follows that $K$ forms a clique in~$G$.

\smallskip
\noindent
{\bf Direction ``$\Longleftarrow$'':}
Suppose now that there exists a clique~$K$ in~$G$. We claim that nominating the vertices and edges of~$K$ results in an election~$\EE$ in which $p$ is the unique winner. By our previous observations, it suffices to prove that there exists no nominee~$c$ other than~$p$ that defeats another nominee in~$\EE$. 

First, $p'$ clearly cannot defeat any other candidate. Second, for every $i,j \in [k]$, the nominee of~$U_i$ cannot defeat the nominee of~$U_j$: voters~$v$ and~$v'$ rank them differently, because they are in different blocks within the preferences of~$v$ and~$v'$, and the same holds for voters~$w$ and~$w'$ as well.
Third, for indices $i,i',j,j' \in [k]$ with $i \neq j$ and~$i' \neq j'$, the nominee of $E_{\{i,j\}}$ cannot defeat the nominee of~$E_{\{i',j'\}}$, 
because $v$ and~$v'$, as well as $w$ and~$w'$, rank them differently, irrespective whether they belong to different blocks (in which case the ordering of blocks ensures our claim) or not (in which case they share an endpoint, and therefore the ordering of edges within a single block guarantees our claim).

Hence, it remains to show that the nominee of~$U_i$, say $u_i^{\kappa(i)}$, and the nominee of some party~$E_{\{i',j'\}}$ where $1 \leq i' < j' \leq k$ are tied in~$\EE$.
First, $i \neq i'$, then these two nominees are in different blocks in the votes of~$v$ and~$v'$, hence are ranked differently by them. If $i=i'$, then they are again ranked differently by~$v$ and~$v'$, because the nominee of $E_{\{i',j'\}}$ (an edge incident to some vertex in~$U_{i'}=U_i$, namely, to~$u_i^{\kappa(i)}$) must be contained in~$E^{>i}(u_i^{\kappa(i)})$, and hence is preferred to~$u_i^{\kappa(i)}$ by~$v'$, while is less preferred than~$u_i^{\kappa(i)}$ by~$v$. Hence, in either case, $v$ and~$v'$ rank these two nominees differently.
The analogous argument shows that $w$ and~$w'$ also rank them differently, implying that they are indeed tied in~$\EE$. 
This proves that $p$ is indeed the unique nominee that defeats another nominee, and hence is the unique winner in~$\EE$, proving the correctness of our reduction.
\end{proof}

Recall that the intractability result stated in Theorem~\ref{thm:4voters-alpha-lessthanone} does not hold for $\alpha=1$. The following theorem fills this gap by a more sophisticated reduction. 

\begin{restatable}[$\star$]{theorem}{thmCopelandfourvotersalpha}
\label{thm:Copeland-4voters-alpha}
For any constant $\alpha \in (0,1]$, \PP{} for Copeland$^{\, \alpha}$ is  
$\mathsf{W}[1]$-hard when parameterized by~$t$, the number of parties, even if there are only four voters.
\end{restatable}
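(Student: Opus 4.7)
Plan: I extend the reduction of Theorem~\ref{thm:4voters-alpha-lessthanone} from the case $\alpha<1$ to the remaining case $\alpha=1$ (Llull). Under Llull, ties contribute the same as wins to a candidate's score, so the single extra defeat $p\succ p'$ in the previous construction no longer distinguishes~$p$ from the other nominees, all of which would tie with~$p$ for the maximum Llull score. The cleanest reformulation is that $p$ is the unique Llull winner iff $p$ has strictly fewer losses than every other nominee, so my aim is to design the instance so that $p$ has zero losses while every other nominee has at least one loss in precisely those reduced elections corresponding to a $k$-clique.

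The reduction is again from \textsc{Multicolored Clique} and retains the parties $\{p\},\{p'\},U_1,\ldots,U_k,E_{\{i,j\}}$ from Theorem~\ref{thm:4voters-alpha-lessthanone}, augmented by ``shadow'' parties~$\bar{U}_i$ whose nominees always strictly defeat every vertex candidate of~$U_i$ but are themselves always defeated by~$p$; this guarantees both that every vertex nominee acquires a defeater and that $p$ stays undefeated. The blocks $F^{>}_i,\wt F^{>}_i,F^{<}_i,\wt F^{<}_i$ will be tightened so that for an incident vertex-edge pair $(u_i^\ell,e)$ three of the four voters rank~$u_i^\ell$ above~$e$ (producing a $3{:}1$ strict defeat of~$e$), while non-incident and other cross pairs remain at $2{:}2$ ties. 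Thus in a clique nomination, each edge nominee is strictly defeated by its two incident vertex nominees, each vertex nominee by its shadow, and $p$ by no one; giving $p$ the unique zero-loss count and hence making $p$ the unique Llull winner.

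The main obstacle is twofold. First, realizing all the required pairwise relations (the strict defeats $p\succ p'$, $p\succ\bar U_i$, $\bar U_i\succ u_i^\ell$, $u_i^\ell\succ e$ for incident edges, together with the carefully placed ties everywhere else) simultaneously with only four voters requires a delicate interleaving of the preference lists; in particular, the shadow mechanism must avoid accidentally creating a second defeater for any edge nominee, which would mask a non-clique violation by dominating edges independently of incidence. Second, the clique condition must be tightened: a non-clique nomination where an edge~$e\in E_{\{i,j\}}$ is incident to~$u_i^{\kappa(i)}$ but not~$u_j^{\kappa(j)}$ would still produce a defeater for~$e$ under the basic structure, so I would additionally introduce synchronization gadgets---for example, splitting each edge party into two halves that independently check the two endpoints, coupled by a small synchronization party---to ensure that only a both-endpoints match produces defeaters for all edge nominees. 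Correctness would then reduce to a case analysis of the resulting defeat digraph on the nominees, showing that $p$ is its unique source if and only if $G$ contains a multicolored $k$-clique.
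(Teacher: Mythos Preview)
Your proposal is a plan rather than a proof, and it contains a gap you yourself flag but do not close. The critical issue is the one you name under ``second obstacle'': with your incidence-based defeat mechanism, an edge nominee $e\in E_{\{i,j\}}$ that is incident to the nominated vertex of $U_i$ but not to that of $U_j$ still acquires a defeater, so a non-clique nomination can make every non-$p$ nominee lose at least once and $p$ the unique Llull winner. Your remedy---``synchronization gadgets'' that split each edge party into two halves coupled by a further party---is not constructed: you give no voter preferences realizing it, and it is far from clear that four voters suffice to simultaneously enforce (a) the $3{:}1$ defeats on incident vertex--edge pairs, (b) $2{:}2$ ties on all non-incident and cross pairs, (c) the shadow-defeats-vertex relation, (d) $p$ defeating all shadows and $p'$, and (e) the additional synchronization constraint, without creating spurious defeats somewhere. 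Until the lists are written down and checked, there is no proof.

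The paper takes a rather different route that sidesteps the synchronization problem entirely. Instead of making $p$ undefeated, it lets $p$ be defeated by~$p'$ but introduces a large set~$D$ of singleton dummy parties that $p$ (and every ``copy'' party) defeats, giving $p$ the fixed score $|D|+\alpha(t-|D|-2)$. For each pair $(i,j)$ it adds copy parties $U_i^{(j)}$ and $E_{\{i,j\}}^{(0)},E_{\{i,j\}}^{(1)}$ whose score is at least $|D|+\alpha(t-|D|-3)$ automatically; the only way a copy party drops strictly below $p$'s score is if it is defeated by \emph{both} of two specific parties. For $U_i^{(j)}$ these two are $U_i$ and the appropriate $E_{\{i,j\}}^{(c)}$, and the resulting ``validity'' constraints jointly force the nominations to encode a clique. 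The two-endpoint check you worry about is thus enforced by having two separate copy parties per pair, each demanding its own pair of defeats, rather than by an ad hoc synchronization gadget.
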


\begin{proof}
Let $G=(U,E)$ and $k$ be our instance of the \textsc{Multicolored Clique} with $U$ partitioned into sets~$U_1, \dots, U_k$;
again, we use all the notation introduced at the beginning of Appendix~\ref{app:param-t}.
We define an instance of \PP{} as follows. 

First, we set $P=\{p\}$ as our distinguished party.
We also introduce a party~$P'=\{p'\}$.
We add $U_i$ for each $i \in [k]$, as well as~$E_{\{i,j\}}$ for each pair $\{i,j\} \subseteq [k]$ with $i \neq j$ as parties. 
We next define the following copies of these parties as follows:

\vspace{4pt}
\begin{tabular}{ll}
 $U_i^{(c)}=\{u_i^{h,(c)}:h \in [n] \}$ & for $i \in [k]$ and  $c \in [k] \setminus \{i\}$; 
 \\
 $E_{\{i,j\}}^{(c)}=\{e_{\{i,j\}}^{h,(c)}:h \in [m] \}$ & for $1 \leq i<j \leq k$ and $c \in \{0,1\}$; \\
 $A_{\{i,j\}}=\{a_{\{i,j\}}^h:h \in [m] \}$ & for $1 \leq i<j \leq k$; 
\end{tabular}
\vspace{4pt}

We remark that for some edge~$e \in E_{\{i,j\}}$ we may denote its two copies as~$e^{(0)}$ and~$e^{(1)}$, ignoring that this edge~$e$ must have the form $e=e_{\{i,j\}}^h$ for some $h \in [m]$.

It will be useful for us to define the following two sets of parties:
\begin{align*}
\SS^\orig& =\bigcup_{i \in [k]} \{U_i\} \cup \bigcup_{1 \leq i<j \leq k}  \{E_{\{i,j\}} \}; \\
\SS^\cop& = \bigcup_{1 \leq i<j \leq k}  \{ U_i^{(j)},U_j^{(i)},
E_{\{i,j\}}^{(0)},E_{\{i,j\}}^{(1)} \};
\end{align*}

We further create certain candidate sets that are \emph{not} parties; namely, we create sets~$C_{\{i,j\}}$ and~$D_{\{i,j\}}$ for each $i, j \in [k]$ with $i<j$, sets~$C_p$ and $D_p$,  
and sets~$B_S$ and~$D_S$ for each party $S \in \SS^\orig$. 
Each of these sets contains two candidates, both of them forming a singleton party.
Hence, the number of parties is 
\[t=2+
k^2+4 \binom{k}{2}+2\left(2\binom{k}{2}+2+2|\SS^\orig|\right)=k^2+12\binom{k}{2}+4k+6.
\]

Next, we let $v,v',w,$ and $w'$ be our four voters.
We will define the preferences of these voters in a piecewise manner: each preference list will be constructed by concatenating several \emph{blocks}, where each block is a list of candidates. 

We first construct \emph{vertex-copy blocks} $Y_i$ and $\wt{Y}_i$ for each $i \in [k]$:
\begin{multline}
\label{eq:Yblock-def}
Y_i = u_i^1, \ora{\bigcup_{c \in [k] \setminus \{i\}} \{u_i^{1,(c)}\}},
u_i^2, \ora{\bigcup_{c \in [k] \setminus \{i\}} \{u_i^{2,(c)}\}}, \\
\dots,
u_i^n, \ora{\bigcup_{c \in [k] \setminus \{i\}} \{u_i^{n,(c)}\}}; 
\end{multline}
\vspace{-10pt}
\begin{multline}
\label{eq:wtYblock-def}
\wt{Y}_i = u_i^n, \ola{\bigcup_{c \in [k] \setminus \{i\}} \{u_i^{n,(c)}\}},
u_i^{n-1}, \ola{\bigcup_{c \in [k] \setminus \{i\}} \{u_i^{n-1,(c)}\}}, \\
\dots,
u_i^1, \ola{\bigcup_{c \in [k] \setminus \{i\}} \{u_i^{1,(c)}\}}.
\end{multline}

Similarly, we construct \emph{edge-copy blocks} $F_{\{i,j\}}$ and $\wt{F}_{{\{i,j\}}}$ for each~$i, j \in [k]$ with $i<j$:
\begin{flalign}
\notag
F_{\{i,j\}} = e_{\{i,j\}}^1,a_{\{i,j\}}^1, e_{\{i,j\}}^{1,(0)},& e_{\{i,j\}}^{1,(1)},
e_{\{i,j\}}^2,a_{\{i,j\}}^2, e_{\{i,j\}}^{2,(0)},e_{\{i,j\}}^{2,(1)},&%
\\
\label{eq:Fblock-def}
& \dots, 
e_{\{i,j\}}^m,a_{\{i,j\}}^m, e_{\{i,j\}}^{m,(0)},e_{\{i,j\}}^{m,(1)},     
\end{flalign} 
\vspace{-10pt}
\begin{flalign}
\notag
\label{eq:wtFblock-def}
\wt{F}_{\{i,j\}} = e_{\{i,j\}}^m,a_{\{i,j\}}^m, e_{\{i,j\}}^{m,(1)},& e_{\{i,j\}}^{m,(0)}, 
e_{\{i,j\}}^{m-1},a_{\{i,j\}}^{m-1}, e_{\{i,j\}}^{m-1,(1)},e_{\{i,j\}}^{m-1,(0)},&%
\\
& \dots,
e_{\{i,j\}}^1,a_{\{i,j\}}^1, e_{\{i,j\}}^{1,(1)},e_{\{i,j\}}^{1,(0)}.
\end{flalign}

Next, 
we construct blocks $\myDC$ and $\wt{\myDC}$ by setting
\begin{align*}
\myDC &= \ora{D_{\{1,2\}}},\ora{C_{\{1,2\}}}, 
\ora{D_{\{1,3\}}},\ora{C_{\{1,3\}}},\dots, \ora{D_{\{k-1,k\}}},\ora{C_{\{k-1,k\}}}; \\
\wt{\myDC} &= \ola{D_{\{k-1,k\}}},\ola{C_{\{k-1,k\}}}, 
\ola{D_{\{k-2,k\}}},\ola{C_{\{k-2,k\}}},\dots, \ola{D_{\{1,2\}}},\ola{C_{\{1,2\}}}. 
\end{align*}
Let us clarify that the block $\myDC$ contains all lists $(\ora{D_{\{i,j\}}}, \ora{C_{\{i,j\}}})$ where $1 \leq i<j \leq k$
in some arbitrarily fixed order,
whereas
$\wt{\myDC}$ also contains all lists $(\ola{D_{\{i,j\}}}, \ola{C_{\{i,j\}}})$ where $1 \leq i<j \leq k$, but exactly in the reversed ordering. 

Similarly, 
we construct blocks $\myCA$ and $\wt{\myCA}$ by setting
\begin{align*}
\myCA &= \ora{C_{\{1,2\}}},\ora{A_{\{1,2\}}}, 
\ora{C_{\{1,3\}}},\ora{A_{\{1,3\}}}, 
\dots, \ora{C_{\{k-1,k\}}},\ora{A_{\{k-1,k\}}}; \\
\wt{\myCA} &= \ola{C_{\{k-1,k\}}},\ola{A_{\{k-1,k\}}}, \ola{C_{\{k-2,k\}}},\ola{A_{\{k-2,k\}}},\dots, \ola{C_{\{1,2\}}},\ola{A_{\{1,2\}}}. 
\end{align*}

Next, 
we construct a block~$\myDB$ 
by concatenating the lists $(\ora{D_S},\ora{B_S})$ for each party $S \in \SS^\orig$ in some fixed order, and we obtain the block~$\wt{\myDB}$ by concatenating the lists $(\ola{D_S},\ola{B_S})$ for $S \in \SS^\orig$ in the reversed order. 

Similarly, 
we create a block $\myBX$ 
by concatenating the lists $(\ora{B_S},\ora{S})$ for each party $S \in \SS^\orig$ in some fixed order, and we obtain the block~$\wt{\myBX}$ by concatenating the 
lists $(\ola{B_S},\ola{S})$ for~$S \in \SS^\orig$ in the reversed order.
We also define the set 
\[D=\bigcup_{S \in \SS^\orig}D_S \cup \bigcup_{1 \leq i<j\leq k} D_{\{i,j\}}.\]

Next, let us define \emph{incidence blocks} $H_{(i,j)}$ and $\wt{H}_{(i,j)}$  for each indices $i,j \in [k]$ with $i \neq j$. If $i<j$, then we define %
\begin{multline}
\label{eq:Hblock-def1}
    H_{(i,j)} = \ora{\bigcup_{e \in E^j(u_i^1)}\{ e^{(0)} \}}, u_i^{1,(j)}, 
\ora{\bigcup_{e \in E^j(u_i^2)}\{ e^{(0)} \}}, u_i^{2,(j)}, \\
\dots, 
\ora{\bigcup_{e \in E^j(u_i^n)}\{ e^{(0)} \}}, u_i^{n,(j)}; 
\end{multline}
\vspace{-10pt}
\begin{multline}
\label{eq:wtHblock-def1}
\wt{H}_{(i,j)} = \ola{\bigcup_{e \in E^j(u_i^n)}\{ e^{(0)} \}}, u_i^{n,(j)}, 
\ola{\bigcup_{e \in E^j(u_i^{n-1})}\{ e^{(0)} \}}, u_i^{n-1,(1)}, \\
\dots,
\ola{\bigcup_{e \in E^j(u_i^1)}\{ e^{(0)} \}}, u_i^{1,(1)}.
\end{multline}
Otherwise, i.e., if $i>j$, then we define 
\begin{multline}
\label{eq:Hblock-def2}
H_{(i,j)} = \ora{\bigcup_{e \in E^j(u_i^1)}\{ e^{(1)} \}}, u_i^{1,(j)}, 
\ora{\bigcup_{e \in E^j(u_i^2)}\{ e^{(1)} \}}, u_i^{2,(j)}, \\
\dots, 
\ora{\bigcup_{e \in E^j(u_i^n)}\{ e^{(1)} \}}, u_i^{n,(j)};     
\end{multline}
\vspace{-10pt}
\begin{multline}
\label{eq:wtHblock-def2}
\wt{H}_{(i,j)} = \ola{\bigcup_{e \in E^j(u_i^n)}\{ e^{(1)} \}}, u_i^{n,(j)}, 
\ola{\bigcup_{e \in E^j(u_i^{n-1})}\{ e^{(1)} \}}, u_i^{n-1,(1)}, \\
\dots,
\ola{\bigcup_{e \in E^j(u_i^1)}\{ e^{(1)} \}}, u_i^{1,(1)}. 
\end{multline}

We are now ready to define the preference lists of all voters:
\begin{align*}
v &: \myDB,\myDC,Y_1,\dots,Y_k,F_{\{1,2\}},\dots,F_{\{k-1,k\}},p',p,\ora{D_p},\ora{C_p}; \\
v' &: \ola{D_p},\ola{C_p},p',p,\wt{F}_{\{k-1,k\}},\dots,\wt{F}_{\{1,2\}},\wt{Y}_k,\dots,\wt{Y}_1,\wt{\myDC},\wt{\myDB}; \\
w &: \myBX,\myCA,H_{(1,2)},\dots,H_{(k,k-1)},p,\ora{D},\ora{C_p},p'; \\
w' &: \ola{C_p},p',p, H_{(k,k-1)},\dots,H_{(1,2)}, \ola{D},\wt{\myCA},\wt{\myBX}.
\end{align*}

It is straightforward to verify that all candidates are listed in the preferences of each of the four voters. It is clear that the  presented reduction can be computed in polynomial time. Moreover, since the number of parties created is $t=O(k^2)$, it is  a parameterized reduction. %
It remains to prove its correctness.

\medskip
We start with some observations regarding the relationship between the parties in the constructed instance. 
Observing the preference lists of~$v$ and~$v'$, we obtain that both $v$ and~$v'$ 
\begin{itemize}
\item  prefer candidates in $D_S$ to candidates in~$B_S$ for each $S \in \SS^\orig$ due to the definition of~$\myDB$ and~$\wt{\myDB}$; %
\item  prefer candidates in $D_{\{i,j\}}$ to candidates in~$C_{\{i,j\}}$ for each $i,j \in [k]$ with $i<j$ due to the definition of~$\myDC$ and~$\wt{\myDC}$; %
\item  prefer candidate~$p'$ to $p$;
\item  prefer candidates in $D_p$ to candidates in~$C_p$.
\end{itemize}
Notice also that each pair of candidates that falls into one of the above categories is ranked differently by~$w$ and~$w'$.

Observing the
preference lists of~$w$ and~$w'$, we obtain that both~$w$ and~$w'$
\begin{itemize}
\item prefer  candidates in $B_S$ to candidates in~$S$ for each $S \in \SS^\orig$;
\item prefer candidates in $C_{\{i,j\}}$ to candidates in~$A_{\{i,j\}}$ for each $i,j \in [k]$ with $i<j$;
\item prefer  candidates in some party in~$\SS^\cop \cup\{P\}$ to candidates in~$D$;
\item prefer  candidates  in~$C_p$ to candidate~$p'$.
\end{itemize}
Again, each pair of candidates that falls into one of the above categories is ranked differently by~$v$ and~$v'$. 
Taking into account the observations of the previous paragraph (and considering again the preferences of voters where necessary),
we obtain that
\begin{enumerate}[label=(\alph*)]
\item\label{item-fact:p-rel} $p$ defeats all candidates in~$D$, is defeated by candidate~$p'$, and is tied with all remaining candidates;
\item\label{item-fact:copy-vs-D} each party in $\SS^\cop$ defeats all candidates in~$D$;
\item each party~$S \in \SS^\orig$ is defeated by both candidates in~$B_S$;
\item both candidates in~$B_S$ are defeated by both candidates in~$D_S$ for all $S \in \SS^\orig$;
\item both candidates in~$C_{\{i,j\}}$ defeat both candidates in~$A_{\{i,j\}}$ and are defeated by all candidates in~$D_{\{i,j\}}$;
\item\label{item-fact:DpCpBp} both candidates in~$C_p$ defeat candidate~$p'$ and are defeated by both candidates in~$D_p$.
\end{enumerate}
In particular, statement~\ref{item-fact:p-rel} means that for any reduced election~$\EE$ resulting from some nomination strategy, we get that the Copeland$^{\alpha}$-score of $p$ is exactly 
\begin{equation}
\label{eq:p-score}
\Cpl(p)=|D|+\alpha(t-|D|-2).    
\end{equation}

The key properties of a nomination strategy we need is captured in the following definition.

\begin{definition}
\label{def:valid_nomination}    
We call a set of nominations by parties in $\SS^\orig \cup \SS^\cop$ \emph{valid}, if
\begin{enumerate}[label=(\roman*)]
\item \label{cond:valid-U} the nominee of $U_i$ precedes the nominees of $U_i^{(c)}$ for each $c \in [k] \setminus \{i\}$ in both blocks $Y_i$ and $\wt{Y}_i$, for each $i \in [k]$;
\item \label{cond:valid-E}  the nominee of $E_{\{i,j\}}$ precedes the nominees of $E_{\{i,j\}}^{(c)}$ for both $c \in \{0,1\}$ in both blocks $F_{\{i,j\}}$ and $\wt{F}_{\{i,j\}}$, for each $i,j \in [k]$ with $i<j$;
\item \label{cond:valid-EU}  the nominee of $E_{\{i,j\}}^{(c)}$, where $c=0$ if $i<j$ and $c=1$ if $i>j$, precedes the nominee of $U_i^{(j)}$ in both blocks $H_{(i,j)}$ and $\wt{H}_{(i,j)}$, for each $i,j \in [k]$ with $i \neq j$.
\end{enumerate}
\end{definition}

\begin{claim}
\label{clm:validity=clique} 
There exists a valid nomination for all parties in~$\SS^\orig \cup \SS^\cop$ if and only if $G$ admits a clique of size~$k$.
\end{claim}

\begin{claimproof}
Suppose that there exists a valid nomination. Let $u_i^{\kappa(i)}$ denote the nominee of $U_i$ for each $i \in [k]$, 
and let $e_{\{i,j\}}^{\kappa(i,j)}$ denote the nominee of $E_{\{i,j\}}$ for each $i,j \in [k]$ with $i<j$.

Condition~\ref{cond:valid-U} of validity can be satisfied only if
for each $c \in [k] \setminus \{i\}$, the nominee~$u_i^{h,(c)}$ of $U_i^{(c)}$ satisfies, on the one hand, $h \geq \kappa(i)$ due to the definition of~$Y_i$ in~(\ref{eq:Yblock-def}) and, on the other hand, $h \leq \kappa(i)$ due to the definition of~$\wt{Y}_i$ in~(\ref{eq:wtYblock-def}).
Therefore, 
$U_i^{(c)}$ must nominate the candidate~$u_i^{\kappa(i),(c)}$ for each $c \in [k] \setminus \{i\}$ in~$\EE$. 
With an analogous argument based on the definitions of the blocks~$F_{\{i,j\}}$ and~$\wt{F}_{\{i,j\}}$ as given in~(\ref{eq:Fblock-def})--(\ref{eq:wtFblock-def}), condition~\ref{cond:valid-E} of validity 
implies that
$E_{\{i,j\}}^{(c)}$ must nominate the candidate~$e_{\{i,j\}}^{\kappa(i,j),(c)}$
in~$\EE$ for both $c \in \{0,1\}$, for each $i,j \in [k]$ with $i<j$. 

Consider now condition~\ref{cond:valid-EU} of validity for some 
indices $i,j \in [k]$ with $i<j$.
On the one hand, by the definition of the block~$H_{(i,j)}$ as given in~(\ref{eq:Hblock-def1}), we get that the edge~$e_{\{i,j\}}^{\kappa(i,j)}$ must be incident to a vertex in $\{u_i^h:1 \leq h \leq \kappa(i)\}$.
On the other hand, by the definition of~$\wt{H}_{(i,j)}$ as given in~(\ref{eq:wtHblock-def1}), we get that 
the edge~$e_{\{i,j\}}^{\kappa(i,j)}$ must be incident to a vertex in $\{u_i^h:\kappa(i) \leq h \leq n\}$. Therefore, $e_{\{i,j\}}^{\kappa(i,j)}$ must be incident to~$u_i^{\kappa(i)}$. 
Consider now the condition~\ref{cond:valid-EU} of validity for the index pair $(j,i)$; recall that $j>i$. The definition of the blocks~$H_{(j,i)}$ and~$\wt{H}_{(j,i)}$ as given in~(\ref{eq:Hblock-def2}) and~(\ref{eq:wtHblock-def2}) implies through an analogous reasoning that the edge~$e_{\{i,j\}}^{\kappa(i,j)}$ must also be incident to~$u_j^{\kappa(j)}$. 
Thus, the vertices $u_i^{\kappa(i)}$ and $u_j^{\kappa(j)}$ must be connected by an edge. Since this holds for each $i,j \in [k]$ with $i<j$, we obtain that $G$ indeed contains a clique of size~$k$.

For the other direction, it is straightforward to verify that given a clique~$K$ of size~$k$ in~$G$, nominating the vertices and edges of~$K$, as well as all their copies, yields a set of nominations for parties in~$\SS^\orig \cup \SS^\cop$ that is valid.
\end{claimproof}

Using Claim~\ref{clm:validity=clique}, it is now not hard to prove the correctness of the reduction. 
\begin{claim}
\label{clm:dir1} 
If $P$ is the unique winner in an election~$\EE$ resulting from some nominations, 
then $G$ contains a clique of size~$k$.
\end{claim}

\begin{claimproof}
Notice that for $P$ to become the unique winner in~$\EE$, all nominated candidates in~$\EE$ must have Copeland$^\alpha$-score less than~$\Cpl(p)$. 
Consider the party~$U_i^{(j)}$ for some~$i,j \in [k]$ with $i \neq j$.   
By $U_i^{(j)} \in \SS^\cop$ and statement~\ref{item-fact:copy-vs-D}, we know that $U_i^{(j)}$ defeats all candidates in~$D$; recall that each of these candidates forms a singleton party, yielding a partial score of~$|D|$ for~$U_i^{(j)}$. 
Observing the preference lists, we can see that 
the only parties that may defeat $U_i^{(j)}$ are $U_i$ and 
either $E_{\{i,j\}}^{(c)}$  where $c=0$ if $i<j$, and $c=1$ if $i>j$; 
all remaining $t-|D|-3$ parties are tied with~$U_i^{(j)}$ in~$\EE$.
Therefore, by (\ref{eq:p-score}) we get 
\[\Cpl(U_i^{(j)}) \geq |D|+\alpha(t-|D|-3
) = \Cpl(p).\]
Moreover, 
unless both of the two parties mentioned above---namely, $U_i$ and $E_{\{i,j\}}^{(c)}$ with $c$ defined as~$0$ if $i<j$ and as~$1$ otherwise---defeat $U_i^{(j)}$, we know that the nominee of~$U_i^{(j)}$ obtains an additional score of at least $\alpha$ from the comparison with these two parties. That would result in $U_i^{(j)}$ having Copeland$^\alpha$-score at least $|D|+\alpha(t-|D|-2)$, preventing $p$ from becoming the unique winner of~$\EE$, due to~(\ref{eq:p-score}). 
Therefore, both of these two parties must defeat~$U_i^{(j)}$, which means precisely that conditions~\ref{cond:valid-U} and~\ref{cond:valid-EU} must hold.

Consider now the party~$E_{\{i,j\}}^{(c)}$ for some~$i,j \in [k]$ with $i<j$ and $c \in \{0,1\}$. 
By $E_{\{i,j\}}^{(c)} \in \SS^\cop$  and statement~\ref{item-fact:copy-vs-D}, we know that $E_{\{i,j\}}^{(c)}$ defeats all candidates in~$D$. 
Observing the preference lists again, we can see that 
the only parties that may defeat $E_{\{i,j\}}^{(c)}$ are $E_{\{i,j\}}$ and~$A_{\{i,j\}}$. 
Therefore, by~(\ref{eq:p-score}) we get 
\[\Cpl(E_{\{i,j\}}^{(c)}) \geq |D|+\alpha(t-|D|-3) = \Cpl(p).\]
Again, 
$E_{\{i,j\}}^{(c)}$ additionally obtains at least~$\alpha$ score from the comparison with parties~$E_{\{i,j\}}$ and~$A_{\{i,j\}}$, thereby preventing $p$ from becoming the unique winner in~$\EE$,
unless both~$E_{\{i,j\}}$ and~$A_{\{i,j\}}$ defeat~$E_{\{i,j\}}^{(c)}$. 
Hence, condition~\ref{cond:valid-E} of validity most also hold. 

This proves that the nominations that result in the election~$\EE$ must be valid, and thus  by Claim~\ref{clm:validity=clique} we know that $G$ admits a clique of size~$k$.
\end{claimproof}

\begin{claim}
\label{clm:dir2} 
If $G$ contains a clique~$K$ of size~$k$, then $P$ is a winner in an election~$\EE$ resulting from some nominations.
\end{claim}

\begin{claimproof}
For each $i \in [k]$, define $u_i^{\kappa(i)}$ as the vertex of~$K$ in~$U_i$, 
and let $e_{\{i,j\}}^{\kappa(i,j)}$ denote the unique edge in $E_{\{i,j\}}$ connecting two vertices of~$K$. 
Let $\EE$ be the election resulting from the following nominations: 
\begin{itemize}
\item party $U_i$ nominates~$u_i^{\kappa(i)}$ for each $i \in [k]$;
\item party $U_i^{(j)}$ nominates $u_i^{\kappa(i),(j)}$ for each $i,j \in [k]$ with $i \neq j$;
\item party $E_{\{i,j\}}$ nominates $e_{\{i,j\}}^{\kappa(i,j)}$
for each $i,j \in [k]$ with $i < j$;
\item  party $E_{\{i,j\}}^{(c)}$ nominates $e_{\{i,j\}}^{\kappa(i,j),(c)}$ for  each $i,j \in [k]$ with $i < j$ and $c \in \{0,1\}$;
\item party $A_{\{i,j\}}$ nominates $a_{\{i,j\}}^{\kappa(i,j)}$ for each $i,j \in [k]$ with $i < j$.
\end{itemize}

It is straightforward to verify that the above nominations are valid.
Taking into account statements~\ref{item-fact:copy-vs-D}--\ref{item-fact:DpCpBp}, it follows that each party except for~$P$ is defeated by at least two other parties. Note also that each party defeats at most~$|D|$ parties. Hence, if some party other than~$P$ defeats $k \leq |D|$ other parties in~$\EE$, then it must have Copeland$^\alpha$-score at most~$k+\alpha(t-k-3)$, which is less than~$\Cpl(p)$ due to~(\ref{eq:p-score}) and $0<\alpha\leq 1$. Therefore, each party other than~$P$ has score less than~$\Cpl(p)$, so $P$ is indeed the unique winner of the election~$\EE$.\end{claimproof}

The correctness of the reduction now follows from Claims~\ref{clm:dir1} and~\ref{clm:dir2}.
\end{proof}

\subsection{Proof of Theorem~\ref{thm:5voters-t}}
\label{sec:proof-of-5voterst}

Before proving Theorem~\ref{thm:5voters-t}, we state a lemma which will be useful for us.
This lemma creates the possibility to construct nearly flat elections, where we can increase the score of certain candidates slightly by shifting them ``to the left'' (i.e., closer to the top) in the preferences of one voter. 

\begin{lemma}
\label{lem:shiftable-candidates}
    If $q \geq \lceil \log_3 \ell \rceil +
    \lceil \log_3 \ell' \rceil +1$, 
    then $C_q$ contains $\ell$ candidates $c_1,\dots, c_\ell$ and $\ell$  candidate sets $C_{\leftarrow 1}, \dots,C_{\leftarrow \ell}$,  each of size~$\ell'$, such that: 
    \begin{itemize}
        \item for each $i \in [\ell]$, candidate~$c_i$ is directly preceded by the candidates of~$C_{\leftarrow i}$ in the preferences of~$w$;
        \item for each $i \in [\ell]$, each candidate in~$C_{\leftarrow i}$ defeats~$c_i$ in~$\EE_q$;
        \item the sets~$\{c_i\} \cup C_{\leftarrow i}$ for $i \in [\ell]$ are pairwise disjoint.
    \end{itemize}
\end{lemma}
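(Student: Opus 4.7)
\begin{proofsketch}
Set $q_1 := \lceil \log_3 \ell \rceil$ and $q_2 := \lceil \log_3 \ell' \rceil$, so that $q \geq q_1 + q_2 + 1$. The plan is to exploit the recursive block-structure of~$\EE_q$: by Definition~\ref{def:EEq}, $\EE_q$ decomposes into $3^{q-q_2-1} \geq 3^{q_1} \geq \ell$ pairwise disjoint $(q_2+1)$-level groups, each of which, when $\EE_q$ is restricted to it, constitutes a copy of~$\EE_{q_2+1}$. A straightforward induction on~$q$ based on~(\ref{prefs-of-EEq}) shows that in~$\EE_q$, each $k$-level group (for $k \in [q]$) occupies a contiguous segment of~$w$'s preference list, and the three $(k-1)$-level sub-groups inside it appear in the canonical copy-1, copy-2, copy-3 order.

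I would then pick $\ell$ of the $(q_2+1)$-level groups and call them $B_1, \ldots, B_\ell$; for each $i \in [\ell]$, let $B_i^{(1)}, B_i^{(2)}, B_i^{(3)}$ be its three $q_2$-level sub-groups listed in the order they appear in~$w$'s preferences. Set $c_i$ to be the first candidate of~$B_i^{(2)}$ in~$w$'s preferences, and let $C_{\leftarrow i}$ consist of the last $\ell'$ candidates of~$B_i^{(1)}$ in~$w$'s preferences; this is well-defined because $|B_i^{(1)}| = 3^{q_2} \geq \ell'$.

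Verifying the three required properties is then routine. Direct precedence is immediate: within~$B_i$ the sub-block~$B_i^{(1)}$ is directly followed by~$B_i^{(2)}$, so the last $\ell'$ candidates of~$B_i^{(1)}$ sit directly in front of~$c_i$ in~$w$'s preferences, with no other candidate interposed. For the defeat property, Observation~\ref{obs:EEq-properties} applied to the copy of~$\EE_{q_2+1}$ on~$B_i$ yields that every candidate of~$B_i^{(1)}$ defeats every candidate of~$B_i^{(2)}$; since pairwise defeats depend only on the relative ordering of the two candidates in the voters' preferences, the defeat persists in the full election~$\EE_q$. Disjointness is immediate from the pairwise disjointness of the blocks~$B_i$.

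The only point requiring any real care is the contiguity claim above---that each $k$-level group forms a contiguous segment of~$w$'s preference list, with its three children appearing in the canonical order---but this is a routine bookkeeping induction on the recursive construction of the preferences and I do not anticipate any genuine difficulty.
\end{proofsketch}
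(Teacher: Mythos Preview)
Your proposal is correct and follows essentially the same approach as the paper: pick $\ell$ of the $(q_2+1)$-level groups, and inside each take the first candidate of the second $q_2$-level sub-block together with the last $\ell'$ candidates of the first sub-block. You are in fact slightly more careful than the paper, which takes $C^\star$ to be the entire first sub-block (of size $3^{q_2}\geq \ell'$) rather than explicitly trimming to size~$\ell'$, and which leaves the contiguity-in-$w$ observation implicit.
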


\begin{proof}
Set $q'=\lceil \log_3 \ell' \rceil $, and consider the election $\EE_{q'+1}$, defined by (\ref{prefs-of-EEq}) with $q'$ taking the role of~$q$. Let $c^\star$ denote the first candidate in $\copb{L_{q'}(w)}$, and notice that $c^\star$ is preceded by $3^{q'} \geq \ell'$ candidates in the preferences of~$w$, each of whom defeats $c^\star$ in~$\EE_{q'+1}$; let $C^\star$ denote this candidate set. 

Recall that each $(q'+1)$-level group within~$\EE_q$ is a copy of $\EE_{q'+1}$, and hence contains a copy of~$c^\star$ and $C^\star$. 
By $3^{q-(q'+1)} \geq 3^{\lceil \log_3 \ell \rceil} \geq \ell$, there are at least $\ell$ such $(q'+1)$-level groups in~$\EE_q$. Fixing an arbitrary ordering over these groups, let $c_i$ and $C_{\leftarrow i}$ be the copy of~$c^\star$ and~$C^\star$, respectively, in the $i$-th such group for $i=1,\dots,\ell$. It is then clear that 
the sets $\{c_i\} \cup C_{\leftarrow i}$ are pairwise disjoint, and that each $c_i$ is defeated by all candidates of~$C_{\leftarrow i}$. 
Notice also that candidates belonging to the same group (at some level) appear consecutively in each of the preference lists in~$\EE_q$, and therefore $c_i$ is directly preceded by the candidates of~$C_{\leftarrow i}$ in the preference list of~$w$ in~$\EE_q$.
\end{proof}

\thmfivevoterst*
\begin{proof}

Let $G=(U,E)$ and $k$ be our instance of the \textsc{Multicolored Clique} with $U$ partitioned into sets~$U_1, \dots, U_k$;
again, we use all the notation introduced at the beginning of Appendix~\ref{app:param-t}.
We define an instance of \PP{} as follows. 

\medskip
\noindent
{\bf Candidates and parties.}
We set~${P=\{p\}}$ as our distinguished party. We also add~$U_i$ for each $i \in [k]$, as well as $E_{\{i,j\}}$, $A_{(i,j)}$, and~$A_{(j,i)}$ for each $i,j \in [k]$ with $i<j$ as a party where $A_{(i,j)}$ and~$A_{(j,i)}$ are two copies of the edge set~$E_{\{i,j\}}$ defined as $A_{(i,j)}=\{a_{(i,j)}^1,\dots,a_{(i,j)}^m\}$ for each $i,j \in [k]$ for $i \neq j$. 
We will also add a set~$D$ of dummy candidates, each of whom forms its own singleton party.
We set the total number of dummies as 
\begin{align}
    \label{def:5v-dummies-number}
    |D| &=3^{q+1}-3\binom{k}{2}-k-1 \quad \text{where } \\
\label{eqn:5v-qdef}
q& =\lceil \log_3 (k+1) \rceil +
\lceil \log_3 k \rceil +2; 
\end{align}
observe that $|D|=O(k^2)$.
It will be useful for us to set aside a set~$\wt{D} \subseteq D$ of dummy candidates for indexing purposes, and to associate with each $d \in \wt{D}$ two copies in~$D \setminus \wt{D}$, denoted by $d'$ and~$d''$, such that $|\{d,d',d'':d \in \wt{D}\}|=3|\wt{D}|$. 
We will also use the notation $D_0=D \setminus \{d,d',d'':d \in \wt{D}\}$.
We set the sizes of these sets such that  
$|D_0|=2k+2$ and thus by (\ref{def:5v-dummies-number}) we have 
\begin{equation}
\label{eqn:5v-no-of-dummies}
    |\wt{D}|=\frac{|D|-2k-2}{3}=3^q-\binom{k}{2}-k-1.
\end{equation}

The number of parties is $t=1+k+3\binom{k}{2}+|D|=O(k^2)$.

\medskip
\noindent
{\bf High-level description.}
We will use the techniques developed in Sections~\ref{sec:flat-elections} and~\ref{sec:3votersreduction} for proving Theorem~\ref{thm:3voters-NPh}. 
First, we create a flat 
election~$\EE_q$ 
with three voters, namely $v,v',$ and $v''$, with 
sufficiently $\vphantom{L^{L^L}}$ large number of candidates so that we can associate a candidate in~$\EE_q$ with each of the parties~$P$ and $U_i$ for~$i \in [k]$, as well as with each edge set~$E_{\{i,j\}}$ for $1 \leq i<j \leq k$. We  then substitute each candidate of~$\EE_q$ in the preference list of a voter in~$\{v,v',v''\}$ with corresponding \emph{team lists}, as done in the proof of Theorem~\ref{thm:3voters-NPh}.
Relying on Lemma~\ref{lem:shiftable-candidates}, we will then shift all candidates in the party~$U_i$ for $i \in [k]$, so that each of these parties  obtains $k$ more points in a Copeland$^\alpha$ election than the rest of the candidates except for certain dummy candidates, who lose one point each. Similarly, we shift candidate~$p$ so that it gains~$2$ additional points. 

Next, we introduce voters $z$ and~$z'$ whose preferences are \emph{almost} exactly opposed: they rank each pair of candidates differently, with the exception that a candidate~$a_{(i,j)}^h$ of~$A_{(i,j)}$ may defeat candidate~$u \in U_i$ in the preferences of both~$z$ and~$z'$ whenever the edge~$e_{\{i,j\}}^h$ is incident to~$u$. 
Hence, by choosing nominees for the parties~$A_{(i,j)}$ for $i,j \in [k]$ with $i \neq j$ so that they correspond to edges that are incident to the nominees of the parties~$U_i$, $i \in [k]$, we can decrease the score of each $U_i$ by $k-1$ points at the cost of increasing the score of each party~$A_{i,j}$ by~$1$, enabling~$p$ to become the unique winner. With a minor adjustment of the scores of the parties $E_{\{i,j\}}$ for $i,j \in [k]$ with $ i \neq j$, we can also ensure that the team lists associated with the edge set~$E_{\{i,j\}}$ must induce a flat election, which in turn guarantees that the nominations in~$A_{(i,j)}$ and~$A_{(j,i)}$ are consistent in the sense that they correspond to the same edge of~$E_{\{i,j\}}$ for each $i,j \in [k]$, implying that the nominees of the parties~$U_i$ for~$i \in [k]$ must form a clique in~$G$.

\medskip
\noindent
{\bf Teams and their associated lists.}
Let us define the following families of parties: 
\begin{align*}
\mathcal{P}_U &= \{U_i:i \in [k]\}, \\
\mathcal{P}_E &= \{E_{\{i,j\}}:1 \leq i<j\leq k\}, \\
\mathcal{P}_{\wt{D}} &=\{ \{d\}:d \in \wt{D}\}.
\end{align*}
We will refer to the family 
\[T=\{P\} \cup \mathcal{P}_U \cup \mathcal{P}_E 
\cup  \mathcal{P}_{\wt{D}}
\] 
of parties as the set of our \emph{teams}. Notice that 
by (\ref{eqn:5v-no-of-dummies}) we know $|T|=3^q$.
To define the preferences of  voters $v$, $v'$, and~$v''$, we introduce \emph{team lists} $F_t,F'_t$, and~$F''_t$ for each team $t \in T$.

With the distinguished party~$P=\{p\}$, we associate the following team lists, where $d_p$ and~$d'_p$ are two dummies in~$D_0$:
    \begin{equation}
    \begin{array}{@{}r@{\hspace{2pt}}l@{}}    
    F_P &= p,d_p,d'_p; \\
    F'_P &= d'_p,p,d_p; \\    
    F''_P &= d_p,d'_p,p.
    \end{array}
    \label{eqn:def-Fp-5v}
    \end{equation}
Similarly, with each party~$U_i$ for $i \in [k]$, 
we associate the following team lists, where $d_i$ and~$d'_i$ are two dummies in~$D_0$:
    \begin{equation}
    \begin{array}{@{}r@{\hspace{2pt}}l@{}}    
    F_{U_i} &= \ora{U_i},d_i,d'_i; \\
    F'_{U_i} &= d'_i,\ora{U_i},d_i; \\    
    F''_{U_i} &= d_i,d'_i,\ora{U_i}.
    \end{array}
    \label{eqn:def-FUi-5v}
    \end{equation}
When defining the above team lists, we always use distinct dummies, so that the team lists in~(\ref{eqn:def-Fp-5v})--(\ref{eqn:def-FUi-5v}) contain a total of $2k+2 =|{\{d_p,d'_p\}} \cup {\{d_i,d'_i:i \in [k]\}}|$ dummy candidates, using up all of~$D_0$.

Next, we define a team list for each party~$E_{\{i,j\}}$ where $i,j \in [k]$ with $i<j$:
    \begin{equation}
    \begin{array}{@{}r@{\hspace{2pt}}l@{}}    
    F_{E_{\{i,j\}}} &= 
    a_{(i,j)}^1, e_{\{i,j\}}^1, a_{(j,i)}^1, \dots, 
    a_{(i,j)}^m, e_{\{i,j\}}^m, a_{(j,i)}^m; \\[4pt]
    F'_{E_{\{i,j\}}} &= 
    a_{(j,i)}^1, a_{(i,j)}^1, e_{\{i,j\}}^1, \dots, 
    a_{(j,i)}^m, a_{(i,j)}^m, e_{\{i,j\}}^m; \\[4pt]
    F''_{E_{\{i,j\}}} &= 
    e_{\{i,j\}}^1, a_{(j,i)}^1, a_{(i,j)}^1, \dots, 
    e_{\{i,j\}}^m, a_{(j,i)}^m, a_{(i,j)}^m.
    \end{array}
    \label{eqn:def-FEij-5v}
    \end{equation}

Finally, we define the team list for each $d \in \wt{D}$ as 
    \begin{equation}
    \begin{array}{@{}r@{\hspace{2pt}}l@{}}    
    F_{\{d\}} &= d,d',d''; \\[2pt]
    F'_{\{d\}} &= d'',d,d'; \\[2pt]    
    F''_{\{d\}} &= d',d'',d.
    \end{array}
    \label{eqn:def-Fd-5v}
    \end{equation}

\medskip
\noindent
{\bf Preferences.} To define the preferences of our five voters, we create an election~$\EE^{G,k}$ in three steps. 
First, based on Definition~\ref{def:EEq} we create an election~$\EE^\varphi$ that is obtained from the flat election~$\EE_q$ by using a bijection~$\varphi$ between our team set~$T$ and the candidate set~$C_q$ of~$\EE_q$, and substituting each candidate of~$C_q$ with the team list of the corresponding team. 
Second, we shift certain candidates in~$\EE^\varphi$ so that $p$ gains $2$ points but parties in~$\P_U$ gain $k$ points each; we denote by $\EE^{\varphi,\textup{sh}}$ the resulting election. 
Third, we add two additional voters together with their preferences to obtain~$\EE^{G,k}$.
Let us explain these three steps more in detail. 

\smallskip 
\noindent
{\bf Step 1: Constructing the election~$\EE^\varphi$.}
Consider the election~$\EE_q$ defined over candidates set~$C_q$ as in Definition~\ref{def:EEq}. Recall that the number of candidates in~$\EE_q$ is $|C_q|=3^q$.

Due to our choice of~$q$ as defined in~(\ref{eqn:5v-qdef}), we can apply Lemma~\ref{lem:shiftable-candidates} 
to~$\EE_{q-1}$ for $\ell=k+1$ and~$\ell'=k$. We obtain that there are candidate sets~$\{c_i\} \cup C_{\leftarrow i}$ in~$\EE_{q-1}$ for $i\in [k+1]$ such that 
\begin{itemize}
    \item candidate~$c_i$ is directly preceded by the candidates of~$C_{\leftarrow i}$ in the preferences of~$w$;
    \item for each $i \in [\ell]$, each candidate in~$C_{\leftarrow i}$ defeats~$c_i$ in~$\EE_{q-1}$;
    \item the sets~$\{c_i\} \cup C_{\leftarrow i}$ for $i \in [k+1]$ are pairwise disjoint and have size~$k+1$.
\end{itemize} 
We define a bijection~$\varphi$ from~$T$ into the set $C_q$ of candidates in~$\EE_q$ such that $\varphi(U_i)=\copa{c_i}$ for each $i \in [k]$, 
$\varphi(P)=\copa{c_{i+1}}$, 
and moreover,
teams in~$\P_E$ are mapped to candidates in~$\{\copb{c}:c \in C_{q-1}\}$.
Notice that such a function exists and can be easily computed based on~$\EE_{q-1}$, because 
$|\mathcal{P}_{E}|=\binom{k}{2} \leq 3^{q-1}$ holds 
due to our choice of~$q$, as defined in~(\ref{eqn:5v-qdef}).   

Now, we create the election~$\EE^\varphi$ as follows. For each candidate in~$c \in C_q$, we substitute $c$ in the preference list of~$v,v',$ and~$v''$ with the team list~$F_{\varphi^{-1}(c)}$, $F'_{\varphi^{-1}(c)}$, and $F''_{\varphi^{-1}(c)}$, respectively.

Before proceeding to the next step, let us establish the following claim.
\begin{claim}
\label{clm:quasiflatness-of-EEvarphi}
Let $N$ be a set of nominees containing exactly one candidate from each party in the constructed instance. 
Then  in the election~$\EE^\varphi_N$
\begin{itemize}
    \item each nominee $c \in N$ defeats exactly $3\cdot (3^q-1)/2$ nominees among those not contained in the same team lists as~$c$;
    \item $\Cplodd[\EE^{\varphi}_{N}](P)=
\Cplodd[\EE^{\varphi}_{N}](U_i)=
\Cplodd[\EE^{\varphi}_{N}](d)=
\frac{3^{q+1}-1}{2}
$ for each $U_i \in \P_U$ and each dummy~$d \in D$;
\item candidate~$p$, as well as the nominee of~$U_i$ for each $i \in [k]$, is defeated by each of the $3k$ nominees directly preceding it in the preference list of~$v$ in~$\EE^{\varphi}_N$;
    \item the nominee of $U_i$ defeats the nominees of~$A_{(i,j)}$ and of~$E_{\{i,j\}}$ for each $i,j \in [k]$ with~$i \neq j$.
\end{itemize}
\end{claim}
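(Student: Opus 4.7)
\begin{proofsketch}
The backbone of my proof will be an analogue of Observation~\ref{obs:bijection-property} tailored to the current substitution: for any two nominees $c_1 \in F_{t_1}$ and $c_2 \in F_{t_2}$ lying in distinct team lists, the triple $(v, v', v'')$ ranks $\{c_1, c_2\}$ in exactly the same order as $(w, w', w'')$ ranks $\{\varphi(t_1), \varphi(t_2)\}$ in $\EE_q$, since block substitution preserves relative block ordering irrespective of the varying team list lengths. Consequently, $c_1$ defeats $c_2$ in $\EE^{\varphi}_N$ if and only if $\varphi(t_1)$ defeats $\varphi(t_2)$ in $\EE_q$. A separate bookkeeping step---by inspection of $F_P$, $F_{U_i}$, $F_{E_{\{i,j\}}}$, and $F_{\{d\}}$---confirms that every team list houses exactly three nominees, one per party contained in it.

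With this backbone in hand, bullet~1 follows at once from the flatness of $\EE_q$ (Observation~\ref{obs:EEq-number-of-defeats}): $\varphi(t)$ defeats $(3^q-1)/2$ other candidates, each of whose team list contributes three nominees, yielding $3(3^q-1)/2$ defeats outside $F_t$. Bullet~4 is equally immediate: $\varphi(U_i) = \copa{c_i} \in \copa{L_{q-1}(w)}$ while $\varphi(E_{\{i',j'\}}) \in \copb{L_{q-1}(w)}$, so Observation~\ref{obs:EEq-properties} yields that the $U_i$-nominee defeats every nominee housed in any $F_{E_{\{i',j'\}}}$, thereby covering the nominees of $A_{(i,j)}, A_{(j,i)}$, and $E_{\{i,j\}}$ for all $i \neq j$. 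For bullet~3, Lemma~\ref{lem:shiftable-candidates} guarantees that $\copa{c_i}$ is directly preceded in $w$'s preferences in $\EE_q$ by the $k$-element set $\copa{C_{\leftarrow i}}$, every member of which defeats $\copa{c_i}$; translating back through the backbone principle, the $k$ team lists directly preceding $F_{U_i}$ in $v$'s list supply exactly $3k$ nominees that each defeat the $U_i$-nominee, and the identical argument with $c_{k+1}$ in place of $c_i$ handles $p$.

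I expect bullet~2 to be the main locus of actual work. To reach the target score $(3^{q+1}-1)/2$, I need to augment the between-team-list tally of $(3^{q+1}-3)/2$ from bullet~1 with exactly one within-team-list win for each of $p$, the $U_i$-nominees, and every dummy in $D$. The idea is that each of $F_P$, $F_{U_i}$, and $F_{\{d\}}$ (for $d \in \wt{D}$) encodes the cyclic defeat pattern of $\EE_1$ among its three nominees: a direct comparison of the three lists shows that the nominees form a directed $3$-cycle in the defeat relation, so each participant---including the anchor dummies $d_i, d'_i, d_p, d'_p$ and the copies $d', d''$ attached to each $d \in \wt{D}$---receives exactly one within-list win. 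The only technical nuisance is verifying the $\EE_1$-cycle inside $F_{U_i}$ for an arbitrary choice of the $U_i$-nominee, but this is immediate from the construction because every candidate of $U_i$ occupies the same relative position with respect to $d_i$ and $d'_i$ in all three lists $F_{U_i}, F'_{U_i}, F''_{U_i}$. Summing the two contributions then gives $(3^{q+1}-3)/2 + 1 = (3^{q+1}-1)/2$, as required.
\end{proofsketch}
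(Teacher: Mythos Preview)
Your proposal is correct and follows essentially the same approach as the paper's proof: both establish the ``backbone'' correspondence with~$\EE_q$ via block substitution, invoke Observation~\ref{obs:EEq-number-of-defeats} for bullet~1, the cyclic $\EE_1$-structure of the team lists~(\ref{eqn:def-Fp-5v}), (\ref{eqn:def-FUi-5v}), (\ref{eqn:def-Fd-5v}) for bullet~2, Lemma~\ref{lem:shiftable-candidates} combined with the fact that the relevant nominee is listed first within its team list for bullet~3, and Observation~\ref{obs:EEq-properties} together with the placement of~$\varphi(\P_U)$ in $\copa{L_{q-1}(w)}$ and~$\varphi(\P_E)$ in $\copb{L_{q-1}(w)}$ for bullet~4.
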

\begin{claimproof}
    First notice that for each party, there is exactly one team list containing all of its candidates. In particular, this means that for any nomination by the parties, each team list contains exactly three nominees. Using that $\varphi$ is a bijection, the first statement of the claim follows from Observation~\ref{obs:EEq-number-of-defeats}.

    The second statement follows from the observation that the team lists containing candidates of~$P$, of~$\bigcup_{i \in [k]}U_i$ or of~$D$ are all cyclic in the sense that they can be interpreted as copies of the election~$\EE_1$; see their definitions in~(\ref{eqn:def-Fp-5v}), (\ref{eqn:def-FUi-5v}), and~(\ref{eqn:def-Fd-5v}). Hence, these candidates obtain a score of~$3(3^q-1)/2+1$ in~$\EE^\varphi_N$.
    
    To see the the third statement, note that $p$ as well as the nominee of each party in~$\P_U$ is contained in the team list~$F_{\varphi^{-1}(c_i)}$ for some candidate~$c_i \in C_{q-1}$, $i \in [k+1]$, resulting from Lemma~\ref{lem:shiftable-candidates}, and in particular is listed first among the nominees in~$F_{\varphi^{-1}(c_i)}$ in the preferences of~$v$. By our conditions on~$c_i$ and~$C_{\leftarrow i}$, all $3k$ nominees in the team lists associated with the $k$ teams that are mapped to candidates of~$C_{\leftarrow i}$ by~$\varphi$ defeat all nominees in~$F_{\varphi^{-1}(c_i)}$ in~$\EE^\varphi_N$. Since these $3k$ nominees are exactly those that in~$\EE^\varphi_N$ directly precede the first nominee within~$F_{\varphi^{-1}(c_i)}$ in the preferences of~$v$, the statement follows.

    Finally, the fourth statement follows from the property of the election~$\EE_q$ that each candidate in~$\{\copa{c} :c \in C_{q-1}\}$ defeats every candidate in~$\{\copb{c} :c \in C_{q-1}\}$; see Definition~\ref{def:EEq}. Note that we also use our condition on~$\varphi$ that it maps 
    parties in~$\P_U$ to candidates in~$\{\copa{c} :c \in C_{q-1}\}$ and maps
    parties in~$\P_E$ to candidates in~$\{\copb{c} :c \in C_{q-1}\}$, and hence, all nominees in the team lists of some team in~$\P_U$ defeat
    all nominees in the team lists of some team in~$\P_E$.
\end{claimproof}

\smallskip 
\noindent
{\bf Step 2: Constructing the election~$\EE^{\varphi,\textup{sh}}$.}
We modify~$\EE^\varphi$ by shifting every candidate in~$U_i$ to the left (i.e., towards the most-preferred position) by~$k$ positions for each $i \in [k]$, and shifting the candidate~$p$ to the left by~$2$ positions in the preferences of~$v$. 

Recall now that due to the third statement of Claim~\ref{clm:quasiflatness-of-EEvarphi},
candidate~$p$, as well as the first candidate in~$\ora{U_i}$ for each $i \in [k]$, is preceded by~$3k$ dummy candidates in the preferences of~$v$ in $\EE^\varphi$, each of whom defeats it.
Therefore, the shifts applied to the election~$\EE^\varphi$ increase the score of~$p$ by exactly~$2$, increase the score of each~$U_i$, $i \in [k]$, by exactly~$k$, and decrease the score of certain dummy candidates by~$1$.
Consequently,  in 
any reduced election~$\EE^{\varphi,\textup{sh}}_{N}$
we get that
\begin{align}
\label{eq:5v-score-p-aftershift}
\Cplodd[\EE^{\varphi,\textup{sh}}_{N}](P) &=
\frac{3^{q+1}-1}{2}+2 \quad \textrm{and}
\\ 
\label{eq:5v-score-Ui-aftershift}
\Cplodd[\EE^{\varphi,\textup{sh}}_{N}](U_i) &=
\frac{3^{q+1}-1}{2}+k.
\end{align}

Notice that for each candidate~$e \in E$, the applied shifts do not change the relative position of~$e$ with respect to any other candidate.

\smallskip 
\noindent
{\bf Step 3: Constructing the election~$\EE^{G,k}$.}
We are going to add two more voters, $z$ and~$z'$, to the election~$\EE^{\varphi,\textup{sh}}$, resulting in the election~$\EE^{G,k}$. To define the preferences of these two additional voters, we need to define a set $A(u)$ for each vertex $u$ as \[A(u)=\left\{a^h_{(i,j)}: e^h_{\{i,j\}} \text{ is incident to $u$, and $i$ is such that } u \in U_i\right\},\] 
containing one copy of each edge incident to~$u$.  
Then we can define the blocks~$Y_i$ and $\wt{Y}_i$ for each $i \in [k]$ as 
\begin{align}
\label{eq:blocks-Y-zz'}
\begin{split}
    Y_i &: \ora{A(u_i^1)},u_i^1, \dots, \ora{A(u_i^n)},u_i^n; \\
    \wt{Y}_i &:  \ora{A(u_i^n)},u_i^n, \dots, \ora{A(u_i^1)},u_i^1.
\end{split}
\end{align}

We also need to fix an arbitrary dummy~$d^\star$ that is contained in the team list of some team mapped to a candidate in~$\{\copa{c}:c \in C_{q-1}\}$ by~$\varphi$; since all such teams contain a dummy in their lists (recall that teams in~$\P_E$ are all mapped to~$\{\copb{c}:c \in C_{q-1}\}$ by~$\varphi$), such a dummy exists. Notice that $d^\star$ defeats all candidates of each party in~$\P_E$ in the election~$\EE^{\varphi}$, and hence, also in the election~$\EE^{\varphi,\textup{sh}}$.

Finally, we set the preferences of~$z$ and~$z'$ as follows: 
\begin{align}
\label{eq:prefs-zz'}
\begin{split}
z&: p,\ora{E}, d^\star, \ora{D \setminus \{d^\star\}},Y_1,Y_2,\dots,Y_k; \\
z'&: \wt{Y}_k,\wt{Y}_{k-1},\dots,\wt{Y}_1,
\ola{D \setminus \{d^\star\}},
\ola{E},d^\star,p.
\end{split}
\end{align}

The following claim is immediate from the preferences of voters~$z$ and~$z'$, our choice of~$d^\star$, and (\ref{eq:5v-score-p-aftershift}).
\begin{claim}
    \label{clm:5v-Eij-vs-dummystar}
    Let $N$ be a set of nominees containing exactly one candidate from each party in the constructed instance. 
    Then the nominee of each party~$E_{\{i,j\}} \in \P_E$  
is defeated by~$d^\star$ in~$\EE^{\varphi,\textup{sh}}_N$  but defeats~$d^\star$ in~$\EE^{G,k}:N$; moreover,
\begin{equation}
    \label{eq:5v-score-p-final}
    \Cplodd[\EE^{G,k}_{N}](P) =
    \Cplodd[\EE^{\varphi,\textup{sh}}_{N}](P) =
    \frac{3^{q+1}-1}{2}+2
\end{equation}
\end{claim}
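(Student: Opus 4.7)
The plan is to handle the claim's three assertions in turn: the two pairwise comparisons of $d^\star$ against an $\P_E$-nominee, followed by the invariance of $p$'s Copeland score. The score invariance is the cleanest part, so I would dispatch it first. By (\ref{eq:prefs-zz'}), $z$ ranks $p$ at the top of its ballot while $z'$ ranks $p$ at the bottom, so for every candidate $c \neq p$ the two new voters split on the pair $(p,c)$: one supports $p$, the other supports $c$. Hence the margin $N(p,c) - N(c,p)$ is unchanged when moving from $\EE^{\varphi,\textup{sh}}_N$ to $\EE^{G,k}_N$. Since both elections have an odd number of voters, no ties arise, so the set of candidates defeated by $p$ is identical in both; combined with (\ref{eq:5v-score-p-aftershift}), this yields the asserted value.

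For the first pairwise comparison, I would trace the construction back to the flat election $\EE_q$. By the choice of $d^\star$, the team $t^\star$ containing it satisfies $\varphi(t^\star) \in \{\copa{c} : c \in C_{q-1}\}$, whereas every party $E_{\{i,j\}} \in \P_E$ has $\varphi(E_{\{i,j\}}) \in \{\copb{c} : c \in C_{q-1}\}$. Reading Observation~\ref{obs:EEq-properties} at the topmost recursion level of $\EE_q$ shows that voters $w$ and $w'$ prefer $\varphi(t^\star)$ to $\varphi(E_{\{i,j\}})$, while $w''$ prefers the reverse, giving a $2$-to-$1$ margin in favor of $\varphi(t^\star)$. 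Since the substitution of candidates by team lists in Step~1 preserves pairwise margins between candidates lying in distinct team lists, $d^\star$ defeats every $\P_E$ nominee $e$ with margin $2$-to-$1$ in $\EE^\varphi_N$. The shifts of Step~2 only reposition $p$ and the $U_i$-candidates within $v$'s preferences, so they cannot affect the relative order of $d^\star$ and any $e$; the $2$-to-$1$ advantage therefore persists in $\EE^{\varphi,\textup{sh}}_N$.

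Finally, for the second pairwise comparison, I would add the contributions of the two new voters. By (\ref{eq:prefs-zz'}), both $z$ and $z'$ rank the whole edge set strictly ahead of $d^\star$, so each prefers $e$ to $d^\star$ for every $\P_E$ nominee $e$. This converts the $2$-to-$1$ deficit of $e$ into a $3$-to-$2$ advantage in $\EE^{G,k}_N$, which is exactly the defeat required. No serious obstacle is foreseeable: all the structural ingredients are already baked into the construction. The only subtlety worth flagging is that the possible score decrease of certain dummies in Step~2 involves only pairs with shifted candidates (namely $p$ or a $U_i$-nominee), and hence is irrelevant to the $(d^\star,e)$ comparison we care about here.
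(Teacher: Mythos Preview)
Your argument is correct and matches the paper's own reasoning, which simply declares the claim ``immediate from the preferences of voters~$z$ and~$z'$, our choice of~$d^\star$, and~(\ref{eq:5v-score-p-aftershift}).'' You have spelled out precisely those three ingredients: the top/bottom placement of~$p$ in $z$ and $z'$ giving score invariance, Observation~\ref{obs:EEq-properties} giving the $2$-to-$1$ margin for~$d^\star$ over any $\P_E$-nominee (unaffected by Step~2 shifts), and the placement of all of~$E$ ahead of~$d^\star$ in both new ballots flipping this to $3$-to-$2$. One cosmetic point: in your middle paragraph you name the voters $w,w'$, but in $\EE^\varphi$ they are called $v,v',v''$; the substance is unaffected.
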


\medskip
\noindent
{\bf Correctness.}
We claim that there exists a set~$N$ of nominees (containing exactly one candidate from each party) such that $p$ is the unique winner in the reduced election~$\EE^{G,k}_N$ if and only if $G$ contains a clique of size~$k$.

\medskip
\noindent
{\bf Direction ``$\Rightarrow$'':}
Suppose that $p$ is the unique winner in~$\EE^{G,k}_N$, that is, each nominee in~$N \setminus \{p\}$ obtains at most $\frac{3^{q+1}-1}{2}+1$ points. 
Let $u_i^{\kappa(i)}$ denote the nominee of~$U_i$ for each $i \in [k]$. We are going to show that the vertex set~$K=\{u_i^{\kappa(i)}:i \in [k]\}$ induces a clique in~$G$.

Fix some~$i \in [k]$.
By (\ref{eq:5v-score-Ui-aftershift}) and (\ref{eq:5v-score-p-final}) we know that there must exist a set~$N_i^\star$ of  at least $k-1$ nominees that are defeated by~$U_i$ in~$\EE^{\varphi,\textup{sh}}_N$ but defeat~$U_i$ in~$\EE^{G,k}_N$. Clearly, each nominee in~$N_i^\star$ must precede the nominee of~$U_i$, that is, $u_i^{\kappa(i)}$ in the preferences of both~$z$ and~$z'$. 
Observing the preferences of these voters as given in~(\ref{eq:prefs-zz'}) and the definition of the blocks $Y_i$ and~$\wt{Y}_i$ in~(\ref{eq:blocks-Y-zz'}),
we deduce that $N_i^\star \subseteq A(u_i^{\kappa(i)})$ for each $i \in [k]$, and hence, by $|N_i^\star|\geq k-1$ we get that $A(u_i^{\kappa(i)})$ must contain the nominee of each of the $k-1$ parties $A_{(i,j)}$ for $j \in [k] \setminus \{i\}$.
In other words, the edges corresponding to the nominees of the parties~$A_{(i,j)}$, $j \in [k] \setminus \{i\}$, must all be incident to~$u_i^{\kappa(i)}$.

Consider now a party~$A_{(i,j)}$ for some $i,j \in [k]$ with $i \neq j$.
The above arguments imply that the nominee of~$A_{(i,j)}$ defeats  at least $\frac{3 \cdot (3^q-1)}{2}+1$ candidates not contained in the team list $F_{E_{\{i,j\}}}$: the $3 \cdot (3^q-1)/2$ nominees it already defeated in~$\EE^\varphi_N$ (due to the first statement of Claim~\ref{clm:quasiflatness-of-EEvarphi}), plus the nominee $u_i^{\kappa(i)}$ that it defeats in~$\EE^{G,k}_N$ due to voters~$z$ and~$z'$ but not in~$\EE^\varphi_N$ (as shown also by the third statement of Claim~\ref{clm:quasiflatness-of-EEvarphi}).

Similarly, the nominee of~$E_{\{i,j\}}$ defeats at least $\frac{3 \cdot (3^q-1)}{2}+1$ candidates not contained in the team list $F_{E_{\{i,j\}}}$: the $3 \cdot (3^q-1)/2$ nominees it already defeated in~$\EE^\varphi_N$ (due to the first statement of Claim~\ref{clm:quasiflatness-of-EEvarphi}) and hence also defeated in~$\EE^{\varphi,\textup{sh}}_N$ (as the shifts applied in Step~2 do not affect~$E_{\{i,j\}}$), plus the dummy $d^\star$ that it defeats in~$\EE^{G,k}_N$ due to voters~$z$ and~$z'$ but not in~$\EE^{\varphi,\textup{sh}}_N$ (due to Claim~\ref{clm:5v-Eij-vs-dummystar}).

This implies that the restriction of~$\EE^{G,k}$ to the three nominees in the team list~$F_{E_{\{i,j\}}}$ must be flat, as otherwise one of the candidates nominated by parties~$A_{(i,j)}, A_{(j,i)}$, and~$E_{\{i,j\}}$ would gain two additional points (against the nominees within in the team list~$F_{E_{\{i,j\}}}$) and thus would obtain a score of~$\frac{3 \cdot (3^q-1)}{2}+3=\frac{3^{q+1}-1}{2}+2$, which by (\ref{eq:5v-score-p-final}) contradicts the assumption that $p$ is the unique winner in~$\EE^{G,k}_N$.

Considering the team lists associated with $E_{\{i,j\}}$ as defined in~(\ref{eqn:def-FEij-5v}), it follows in a straightforward way that the restriction of~$\EE^{G,k}$ to the three nominees in the team list~$F_{E_{\{i,j\}}}$ can only be flat if $A_{(i,j)}$ and $A_{(j,i)}$ both nominate the copies of the same edge, i.e., their nominees are~$a_{(i,j)}^h$ and~$a^h_{(j,i)}$, respectively, for some $h \in [m]$. Therefore, $a_{(i,j)}^h \in  A(u_i^{\kappa(i)})$
and $a_{(j,i)}^h \in  A(u_j^{\kappa(j)})$ must hold, in other words, $e_{\{i,j\}}^h$ is incident to both~$u_i^{\kappa(i)}$ and~$u_j^{\kappa(j)}$. Since we have proved this for arbitrary $i,j \in [k]$ with $i \neq j$, it follows that $K$ forms a clique in~$G$.

\medskip
\noindent
{\bf Direction ``$\Rightarrow$'':}
Suppose now that 
$K=\{u_i^{\kappa(i)}:i \in [k]\}$ induces a clique in~$G$.
Consider the election obtained by nominating each vertex of~$K$, as well as each edge~$e_{\{i,j\}}^h$ in~$G[K]$ together with its copies~$a_{(i,j)}^h$ and~$a_{(j,i)}^h$,
besides nominating~$p$ and all dummies. Let $N$ denote the set of nominees.

Notice that restricting~$\EE^\varphi$ to the nominees contained in some arbitrarily chosen team list yields a flat election. Hence, it is immediate by Observation~\ref{obs:EEq-number-of-defeats} that $\EE^\varphi_N$ is a flat election.
The shifts applied in Step~2 to obtain~$\EE^{\varphi,\textup{sh}}$ increase the score of~$p$ by~$2$, increase the score of each~$U_i \in \P_U$ by~$k$, and do not change the score of~$E_{\{i,j\}}$, irrespective of the nominations; dummies' score may decrease but not increase.
Finally, since the nominee of~$A_{(i,j)}$ for some $i,j \in [k]$ with $i \neq j$ precedes the candidate of~$U_i$ in the preferences of both~$z$ and~$z'$ (since it is a copy of an edge incident to the vertex in~$K \cap U_i$),
we get that the addition of voters~$z$ and~$z'$
increases the score of~$A_{(i,j)}$ by~$1$ for each $i,j \in [k]$ with $i \neq j$, and decreases the score of~$U_i$ by~$k-1$ for each~$i \in [k]$. Taking into account Claim~\ref{clm:5v-Eij-vs-dummystar} and the observation that no dummy's score may increase due to the addition of voters~$z$ and~$z'$, we get that 
each party in~$\P_U \cup \P_E$ as well as each party~$A_{(i,j)}$, $i,j \in [k]$ with $i \neq j$, obtains 
a score of~$\frac{3^{q+1}-1}{2}+1$ in~$\EE^{G,k}_N$. Hence, $P$ is indeed a unique winner  in~$\EE^{G,k}_N$ by (\ref{eq:5v-score-p-final}), finishing the proof of correctness for our reduction.
\end{proof}

\section{Additional material for Section~\ref{sec:Maximin}}

\subsection{Proof of Theorem~\ref{thm:Maximin_hard}}
\label{sec:proof-of-Maximinhard}

\thmMaximinhard*

We first prove the statement for four voters in Theorem~\ref{thm:Maximin_V4_s2}, and then for five voters in Theorem~\ref{thm:Maximin_V5_s2}; as it is possible to add two voters with opposite preferences without changing the election outcome, these results imply Theorem~\ref{thm:Maximin_hard}.

\begin{theorem}
    \label{thm:Maximin_V4_s2}
    \PP\ for Maximin is $\NP$-complete even for instances with four voters and maximum party size~$\maxsize=2$.
\end{theorem}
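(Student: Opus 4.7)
The plan is to reduce from \textsc{3-SAT}. Given a formula $\phi$ with variables $x_1,\dots,x_n$ and clauses $C_1,\dots,C_m$, I will construct an instance of \PP\ with four voters and maximum party size~$2$ consisting of three kinds of parties: a distinguished singleton party $\{p\}$; a variable party $\{x_i,\bar{x}_i\}$ for each~$i \in [n]$ (its nomination encodes the truth value assigned to~$x_i$); and a clause gadget of size at most~$2$ for each clause $C_j$ containing a ``dangerous'' candidate~$c_j$ together with a ``dummy'' partner.

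The preferences of the four voters will be engineered so that, in every reduced election $\EE$, candidate~$p$ defeats every other candidate by a margin of at least~$3$ out of~$4$, which guarantees $\MM_\EE(p)\ge 3$. Consequently, $p$ is the unique winner of~$\EE$ if and only if every other nominee has $\MM_\EE(\cdot)\le 2$, which in turn requires each non-$p$ nominee to be defeated by some other nominee with margin at least~$3$. The literal candidates $x_i$ and~$\bar x_i$ will always be defeatable in this sense (e.g.\ by a generic ``countering'' candidate in a separate party that is present under any nomination), so the only real constraint will come from the clause gadgets: each dangerous candidate~$c_j$ will be constructed so that, by itself, $c_j$ has Maximin score~$3$, but $c_j$ is defeated with margin~$3$ precisely by those literal nominees~$\ell$ such that $\ell$ occurs in~$C_j$.

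To realise the required pairwise margins consistently, I will assemble the four preference orders out of a few carefully nested blocks of literals, clause candidates, and dummies, ordered in a symmetric (roughly cyclic) pattern across pairs of voters, in the same spirit as the block constructions used in the proofs of Theorems~\ref{thm:Copeland-v=s=2} and~\ref{thm:4voters-s=2}. Two of the voters will rank the literals in one order and the other two in the reverse order so that within each variable party the two literal candidates do not themselves kill~$p$; simultaneously, each of the four voters' ranks will place $\ell$ directly above~$c_j$ exactly when $\ell$ is a literal of~$C_j$, yielding $N(\ell,c_j)\ge 3$ precisely in this case.

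Correctness is then the standard two-way check: from a satisfying assignment of~$\phi$, nominate each $x_i$ or $\bar{x}_i$ according to its truth value and each clause gadget's dummy partner; the satisfying literal of~$C_j$ drops $c_j$'s Maximin score to at most~$2$, so $p$ wins uniquely. Conversely, if $p$ is the unique winner, then for each~$j$ the nominee of the clause gadget must have Maximin score~$\le 2$, which forces at least one literal of~$C_j$ to be nominated, giving a satisfying assignment. Membership in~$\NP$ is immediate, as observed in Section~\ref{sec:prelim}. The main technical obstacle will be constructing four total preference orders that simultaneously enforce all the pairwise margins needed: large margins on $(p,\cdot)$-pairs, the specific incidence margins between literals and dangerous clause candidates, and moderate margins elsewhere so that no unintended candidate also attains Maximin score~$\ge 3$; this requires a careful bookkeeping of how each literal appears in multiple clauses and how the blocks interleave.
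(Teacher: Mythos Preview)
Your reduction has a fatal flaw in the setup of $p$'s score. You propose that $p$ defeats every other nominee by at least $3$ of the $4$ voters, giving $\MM_\EE(p)\ge 3$. But with four voters, $N(p,c)\ge 3$ implies $N(c,p)\le 1$, so every other nominee~$c$ already satisfies $\MM_\EE(c)\le N(c,p)\le 1<3$. Thus $p$ is the unique Maximin winner in \emph{every} reduced election, irrespective of whether $\phi$ is satisfiable; the constructed instance is trivially a ``yes''-instance and encodes nothing. (Indeed, your own claim that ``by itself, $c_j$ has Maximin score~$3$'' is inconsistent with $N(c_j,p)\le 1$.) The paper avoids this by making $p$ \emph{tied} with every other candidate, so $\MM_\EE(p)=2$; then $p$ wins uniquely iff every other nominee is actually defeated by someone (score at most~$1$), which is a non-vacuous condition.

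A second problem is your clause gadget. A single party $\{c_j,\text{dummy}\}$ per clause, where the dummy is always safe, lets every clause party simply nominate its dummy and evade the constraint; nothing in your description forces $c_j$ to be nominated. The paper's gadget instead uses \emph{three} parties $\{c_j,c_j^-\}$, $j\in[3]$, per clause and shows that if all three $c_j^-$ are nominated simultaneously then none of them can be defeated; this forces at least one positive $c_j$ to be nominated, and that $c_j$ can be defeated only by the $j$-th literal of the clause, which is what creates the link to satisfiability.
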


\begin{proof}
    We present a reduction from $\NP$-complete \textsc{3-SAT} problem. 
    Let the input instance for \textsc{3-SAT} be a formula $\varphi=\bigwedge_{k=1}^m c^k$ over a set~$X=\{x_1,\dots,x_n\}$ of variables that is the conjunction of a set 
    $\C=\{c^k: k \in [m]\}$  of clauses, where each clause $c \in \C$ contains exactly three literals. 
    We may assume w.l.o.g.\ that $\varphi$ is not trivial in the sense that there is at least one variable that appears both as a positive and a negative literal in~$\varphi$.

    We are going to define an instance of \PP\ for Maximin as follows. We let $V=\{v,v',w,w'\}$ be the set of voters. 
    We let $P=\{p\}$ be our distinguished party, and we define another party $P'=\{p'\}$.
    For each variable~$x \in X$ we further introduce a party $P_x=\{x,\ol{x}\}$, and for each clause~$c \in \C$
    we introduce three parties $P_{c,j}=\{c_j,c^-_j\}$, $j \in [3]$. 
    Notice that each party has size at most two.    

    To provide the preferences of the voters, we introduce some notation. Let 
\begin{equation}
    \label{eq:def-c-minus}
    \C^-=\{c_j^-: c \in \C, j \in [3]\}
    \end{equation}
        and $L=\{x,\ol{x}:x \in X\}$ be the set of all literals.
    For each literal $\ell \in L$ we define 
    \begin{equation}
    \label{eq:def-v4s2-setA}
    A(\ell)=\{c_j: c \in \C, j \in [3], \ell \textrm{ is the $j$-th literal in }c \}.  \end{equation}
    We also define the following blocks for each~$c \in \C$ and each~$\ell \in L$: 
    \begin{align}
    \label{eq:def-v4s2-blocks}
    \begin{split}
        Y_c &= c_1, c_2, c_3, c^-_1, c^-_2, c^-_3; \\
        \wt{Y}_c &= c_3, c_2, c_1, c^-_3, c^-_2, c^-_1; \\
        F_{\ell} &= \ell, \ora{A(\ell)}; \\
        \wt{F}_{\ell} &= \ell, \ola{A(\ell)}. 
    \end{split}
    \end{align}
Now we are ready to define the preferences of our voters:
    \begin{align*}
        v &: p, p', \ora{L}, Y_{c^1}, \dots, Y_{c^m}; \\
        v' &: p, \wt{Y}_{c^m}, \dots, \wt{Y}_{c^1}, p', \ola{L}; \\
        w &: \ora{\C^-}, p', 
        F_{x_1}, \dots, F_{x_n}, 
        F_{\ol{x}_1}, \dots, F_{\ol{x}_n}, p; \\
        w' &:         
        F_{\ol{x}_n}, \dots, F_{\ol{x}_1},
        F_{x_n}, \dots, F_{x_1}, 
        \ola{\C^-}, p', p.
    \end{align*}

    We claim that $P$ can be the unique winner in 
    some reduced election~$\EE$
    if and only if $\varphi$ admits a satisfying truth assignment. 
    
    Note first that candidate~$p$ has Maximin-score $\MM_\EE(p)=2$ because it is tied with every other nominee in~$\EE$. Thus, $p$ is the unique winner in~$\EE$ if every other nominated candidate obtains Maximin-score at most one, that is, gets defeated by at least one other nominee.
    
    Suppose that this indeed happens. Note that it is not possible that each of the three candidates~$c_1^-, c_2^-$, and $c_3^-$ is nominated in~$\EE$ for some $c \in \C$, as in that case none of the remaining nominees can defeat these nominees. Hence, for each $c \in \C$ there exists some $j \in [3]$ for which $c_j$ is nominated by party~$P_{c,j}$. 
    Based on the preferences of the voters, it is straightforward to verify that candidate~$c_j$ can only be defeated by  the nominee of a party~$P_x$.  
    In particular,$\vphantom{L^{L^{\hat{a}}}}$ 
    $c_j$ can only be defeated by the literal $\ell$ for which $c_j \in A(\ell)$. Thus, we get that the $j$-th literal in~$c$ must be nominated. 
    This immediately implies that the truth assignment that sets exactly the nominated candidates from~$L$ as true satisfies~$\varphi$, since for each clause~$c \in \C$ it sets at least one literal true.

    \medskip
    Assume now that we have a satisfying truth assignment~$\alpha$ for $\varphi$. 
    Consider the election~$\EE$ where each party~$P_x$, $x \in X$, nominates the literal set to true by~$\alpha$, 
    and where each party~$P_{c,j}$, $c \in \C$ and $j \in [3]$, nominates~$c_j$ if and only if the $j$-th literal in~$c$ is set to true by~$\alpha$.
    It is straightforward to check that $p$ is the unique winner in~$\EE$, since each nominee other than~$p$ gets defeated by some other nominee: 
    \begin{itemize}
        \item $p'$ gets defeated by every nominee in~$\C^-$ (note that there is at least one nominee in~$\C^-$ by our assumption that $\varphi$ is nontrivial);
        \item each nominee~$c_j^- \in \C^-$ gets defeated by all nominees in $\{c_1,c_2,c_3\}$, and at least one of these candidates is nominated because $\alpha$ satisfies~$\varphi$;
        \item each nominee~$c_j$ for some $c \in \C,j \in [3]$ gets defeated by the $j$-th literal in~$c$, which is a literal set to true by~$\alpha$ (since $c_j$ is nominated) and thus nominated in~$\EE$;
        \item each nominated literal~$\ell \in L$ is defeated by~$p'$.
    \end{itemize}
    This proves the correctness of our reduction.
\end{proof}

\begin{theorem}
    \label{thm:Maximin_V5_s2}
    \PP\ for Maximin is $\NP$-complete even for instances with five voters and maximum party size~$\maxsize=2$.
\end{theorem}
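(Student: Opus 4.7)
The plan is to reduce from \textsc{3-SAT}, extending the four-voter construction of Theorem~\ref{thm:Maximin_V4_s2} to five voters. We retain the candidate and party structure: the distinguished party $\{p\}$, the party $\{p'\}$, variable parties $P_x = \{x, \ol{x}\}$, and clause-position parties $P_{c,j} = \{c_j, c_j^-\}$. Since five voters never produce a tie, every pairwise comparison gives a $3$-$2$, $4$-$1$, or $5$-$0$ split. We aim to design preferences so that in any reduced election $\EE$ the score $\MM_\EE(p)$ is constant, say equal to $2$ (with $p'$ beating $p$ by $3$-$2$, and $p$ beating every other nominee by at least $3$-$2$); then $p$ is the unique winner if and only if every other nominee has Maximin score at most $1$, i.e., is ``strongly defeated'' by some other nominee with at least $4$ of $5$ voters.

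The strong-defeat graph on nominees should mirror the defeat graph of the four-voter proof: $p'$ strongly defeated by any nominated $c_j^-$; each nominated $c_j^-$ strongly defeated by any nominated $c_{j'}$ with $j'\neq j$; each nominated $c_j$ strongly defeated by the nominee of $P_\ell$ where $\ell$ is the $j$-th literal of~$c$; and each literal nominee $\ell$ strongly defeated by~$p'$. Under this configuration, the structural argument from the four-voter proof transfers essentially unchanged: for each clause~$c$, at least one $c_j$ must be nominated (otherwise none of $c_1^-, c_2^-, c_3^-$ admits a strong defeater among the nominees), and whenever $c_j$ is nominated, the $j$-th literal of~$c$ must also be nominated in order to strongly defeat~$c_j$, which yields a satisfying assignment. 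Conversely, a satisfying assignment yields nominations producing exactly this strong-defeat pattern, making $p$ the unique winner.

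The main obstacle is the construction of the five preference lists themselves. Naively appending a fifth voter to the four-voter construction fails: that voter's preferences would need to simultaneously rank $p'$ above~$\ell$, $\ell$ above~$c_j$, $c_{j'}$ above~$c_{j'}^-$, and $c_{j'}^-$ above~$p'$, a cyclic requirement incompatible with a strict linear order. Hence the five preferences must be redesigned so that the extra ``vote'' needed to upgrade each $3$-$1$ defeat of the four-voter case into a $4$-$1$ strong defeat is supplied by different voters for different pairs of candidates. A natural approach is to split the roles of the four original voters across five new voters, and to insert carefully chosen segments that control the margins of $p$'s comparisons precisely, possibly with the help of a small number of auxiliary dummy candidates arranged in singleton parties. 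Once such preferences are in hand, verifying the required pairwise splits and the fixed value $\MM_\EE(p)=2$ in every reduced election is a routine case analysis, and the correctness of the reduction follows from the structural argument above.
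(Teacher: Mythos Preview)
Your plan correctly identifies the framework (a \textsc{3-SAT} reduction with the same party structure, $\MM_\EE(p)=2$, and ``strong defeat'' meaning a $4$--$1$ margin) and, crucially, the obstacle: the intended $4$-cycle of strong defeats $p' \to \ell \to c_j \to c_j^- \to p'$ cannot be obtained by simply appending a fifth voter to the four-voter construction. But you stop exactly where the real work begins. Saying that the five preferences ``must be redesigned'' and that one might use ``a small number of auxiliary dummy candidates'' is not a construction; it is a restatement of the problem. The substance of this theorem is precisely an explicit five-voter profile with the required pairwise margins, and you have not supplied one, nor argued that one exists.

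The paper resolves the obstacle by adding a single extra singleton party $P''=\{p''\}$ and rerouting the strong-defeat chain through it: rather than having $p'$ strongly defeat literals, one arranges that $p'$ strongly defeats $p''$ and $p''$ strongly defeats every literal, while nominees in $\C^-$ still strongly defeat $p'$. This turns the $4$-cycle of required $4$--$1$ defeats into a $5$-cycle, which can then be realised by five explicit preference lists that the paper writes down and checks. Your proposal gestures at dummies but does not identify this insertion point (or any other concrete fix), so as it stands there is a genuine gap: an outline of what the proof must achieve, but not the construction that achieves it.
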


\begin{proof}
We provide a polynomial-time reduction from $3$-SAT based on the reduction presented in the proof of Theorem~\ref{thm:Maximin_V4_s2}.
We keep all notation regarding the input instance~$\varphi$ of 3-SAT, as well as the notation introduced in~(\ref{eq:def-c-minus}), (\ref{eq:def-v4s2-setA}) and~(\ref{eq:def-v4s2-blocks}).
Furthermore, the set of parties and candidates contains all candidates and parties defined in the the proof of Theorem~\ref{thm:Maximin_V4_s2}, as well as an additional party~$P''=\{p''\}$. 

We define the set of voters as $V=\{v,v',w,w',z\}$, and define their preferences as follows: 
    \begin{align*}
        v &: p, \ora{L}, Y_{c^1}, \dots, Y_{c^m}, p', p''; \\
        v' &: p, \wt{Y}_{c^m}, \dots, \wt{Y}_{c^1}, p', p'', \ola{L}; \\
        w &: p'', 
        F_{x_1}, \dots, F_{x_n}, 
        F_{\ol{x}_1}, \dots, F_{\ol{x}_n},
        \ora{\C^-}, p', p; \\
        w' &: p', p'',
        F_{\ol{x}_n}, \dots, F_{\ol{x}_1},
        F_{x_n}, \dots, F_{x_1}, \ola{\C^-}, p
        \\
        z &:    
        \ora{\C^-}, p',p'', \ora{L}, \ora{\{c_j:c \in \C, j \in [3]\}}, p.
    \end{align*}

    We claim that $P$ can be the unique winner in the election~$\EE$ resulting from some nominations for the parties if and only if $\varphi$ admits a satisfying truth assignment. 
    
    Observe first that candidate~$p$ has Maximin-score $\MM_\EE(p)=2$ in~$\EE$. 
    Let us further observe the following facts: 
    \begin{itemize}
        \item $p''$ defeats each nominee in~$L$ by $4:1$;
        \item $p'$ defeats~$p''$ by $4:1$;
        \item each nominee in~$\C^-$ defeats~$p'$ by $4:1$;        
    \end{itemize}

    Now assume that $p$ indeed is the unique winner in~$\EE$. Then each candidate $c_j^- \in \C^-$ that is nominated in~$\EE$ must be defeated by $4:1$ by some nominee in~$\EE$; however, it is immediate from the preferences of the voters that this can only be done by some nominee in~$\{c_1,c_2,c_3\}$. Hence, we get that for each $c \in \C$ there exists some $j \in [3]$ such that $c_j$ is nominated in~$\EE$.
    
    Next consider some nominee $c_j$ where $c \in \C, j \in [3]$. 
    Again by construction, it follows that the only nominee that may defeat $c_j$ by $4:1$ is the literal~$\ell$ for which $c_j \in A(\ell)$, that is, the $j$-th literal in clause~$c$. 
    Hence, for each clause~$c$ there must exist a literal in~$c$ that is nominated in~$\EE$. This immediately yields a satisfying assignment that sets a literal to true if and only if it is nominated in~$\EE$.

    \medskip
    Assume now that $\varphi$ admits a satisfying truth assignment~$\alpha$. Consider the election~$\EE$ where each party $P_x$, $x \in X$, nominates the literal set to true by~$\alpha$, and where each party~$P_{c,j}$, $c \in \C$ and $j \in [3]$, nominated $c_j$ if and only the $j$-th literal in~$c$ is set to true. Again, it is straightforward to check that $p$ is the unique winner in~$\EE$, since every other nominee obtains a Maximin-score at most~$1$. To see this, 
    it suffices to complement our earlier observations with the following: 
    \begin{itemize}
        \item there is at least one nominee in~$\C^-$ by our assumption that $\varphi$ is nontrivial;
        \item for each $c \in \C$ there is at least one nominee in~$\{c_1,c_2,c_3\}$, and this nominee defeats each nominee in~$\{c_1^-,c_2^-,c_3^-\}$ by~${4:1}$;
        \item 
        for each nominee $c_j$, $c \in \C$ and $j \in [3]$, the $j$-th literal~$\ell$ in~$c_j$ is set to true by~$\alpha$ and hence is nominated in~$\EE$; moreover, $\ell$ defeats $c_j$ by~${4:1}$.
    \end{itemize}
    This proves the correctness of our reduction.
\end{proof}

\subsection{Proof of Claim~\ref{clm:latrules}}
\label{app:proof-of-clmlatrules}

\clmlatrules*

\begin{proof}
Given an instance~$I=(D,H,\gamma)$ of \LabSubDigraph\, we will refer to a subdigraph of~$H$ that is $\gamma$-isomorphic to~$D$ as a \emph{solution} for~$I$.

Consider Rule~\ref{latrule:leaf} when applied to an instance~$I=(D,H,\gamma)$, and let $U=\{x \in \Gamma_u: N_H^+(x) \cap \Gamma_v = \emptyset\}$.
First, if $\wt{H}$ is a subdigraph of~$H$ that is $\gamma$-isomorphic to~$D$, then 
the unique vertex of~$\wt{H}$ with label~$u$ must have an out-neighbor with label~$v$. Hence,  $\wt{H}$ cannot contain a vertex from~$U$, which means that deleting $U$ from~$H$ yields an instance~$I'$ equivalent to~$I$. 

The instance~$I''$ constructed by Rule~\ref{latrule:leaf} is obtained by further deleting $v$ from~$D$ and $\Gamma_v$ from~$H-U$. We show that $I''$ is equivalent to~$I'$ and, hence, to~$I$. Assume that $\wt{H}$ is a solution for~$I''$. 
Then $\wt{H}$ contains a vertex $x \in V(H-U-\Gamma_v)$ with label~$u$, and because $x \notin U$, there must exist a vertex~$y$ in~$H$ among the out-neighbors of~$x$ in~$H$ that has label~$v$. 
Adding the vertex~$y$ and the arc~$(x,y)$ to~$\wt{H}$ yields a solution for the instance~$I'$. 
Conversely, the solvability of~$I'$ implies the solvability of~$I''$, because 
deleting the unique vertex with label~$v$
from a solution to~$I'$ yields a solution for~$I''$.

\medskip
Consider now Rule~\ref{latrule:cycle}  applied to instance
$I=(D,H,\gamma)$, and let $I'$ be the constructed instance with digraphs~$D'$ and~$H'$.

First, if $\wt{H}$ is a subdigraph of~$H$ that is $\gamma$-isomorphic to~$D$ witnessed by an isomorphism~$f$, then $(f(u),f(w)) \in A_{uw}$ is clear, due to~$f(v) \in N_H^+(u) \cap N_H^-(w) \cap \Gamma_v$. Hence, deleting $f(v)$ from~$\wt{H}$ and adding the arc~$(f(u),(w))$ yields a subdigraph of~$H'$ that is $\gamma$-isomorphic to~$D'$. 

Conversely, if $\wt{H}$ is a solution to~$I'$, then it contains an arc~$(x,y) \in A_{uw}$ such that $x \in \Gamma_u$ and $y \in \Gamma_w$, because $(u,w)$ is an arc in~$D'$. By the definition of~$A_{uw}$, it follows that there is a vertex $z \in N_H^+(u) \cap N_H^-(w) \cap \Gamma_v$ in~$H$. Adding $z$ and replacing the arc~$(x,y)$ with arcs~$(x,z)$ and~$(z,y)$ yields a subdigraph of~$H$ that is $\gamma$-isomorphic to~$D$. This shows that $I'$ is indeed equivalent with~$I$.
\end{proof}

\subsection{Proof of Theorem~\ref{thm:MM-fpt}}
\label{app:proof-of-fptcorrect}
\thmMMfpt*

\begin{lemma}
\label{lem:AlgMM-correct-1}
If \AlgMM\ returns ``yes'', then the input instance of \PP{} for Maximin is a ``yes''-instance.
\end{lemma}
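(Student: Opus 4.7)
The plan is to extract from a successful run of \AlgMM\ an explicit nomination strategy, and then verify directly that it makes $p$ the unique Maximin winner. Concretely, if \AlgMM\ returns ``yes'' then it has on hand the guesses $p$ and $s^\star$ from Step~1, the function $\delta: \P \setminus \{P^\star\} \to \P$ from Step~2, and a $\gamma$-isomorphism $f: V(D) \to V(\wt{H}) \subseteq V(H)$ witnessing the ``yes'' answer of Step~7. I will propose as the nomination strategy that $P^\star$ nominates $p$, and every other party $P$ nominates $c_P := f(P)$. This is well-defined, since $\gamma(f(P)) = P$ forces $c_P \in P$, and $V(H) \subseteq C \setminus X \setminus \{p\}$ guarantees that $c_P$ survived Steps~3 and~4; Step~5 having not returned ``no'' ensures no party was entirely deleted.

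Before analyzing the resulting reduced election~$\EE$, I will record the basic fact that for any two nominees $c, c'$ the restricted preferences give $N_\EE(c, c') = N_{\EE_0}(c, c')$, so the original pairwise counts carry over directly.

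The bulk of the argument then splits naturally into two halves. First, I will show $\MM_\EE(p) \geq s^\star$: every nominee $c \neq p$ survived Step~3, which guarantees $N_{\EE_0}(p, c) \geq s^\star$, and hence the minimum over all other nominees is at least $s^\star$. Second, for every $P \neq P^\star$ I need to exhibit a nominee $c'$ with $N_\EE(c_P, c') < s^\star$, thereby forcing $\MM_\EE(c_P) < s^\star$. If $\delta(P) = P^\star$, Step~4 yields $N_{\EE_0}(c_P, p) < s^\star$, so I take $c' = p$. Otherwise $\delta(P) = P'$ for some $P' \neq P^\star$, and by Step~6 the arc $(P', P)$ is present in~$D$; since $\wt{H}$ is $\gamma$-isomorphic to $D$ via~$f$, the pair $(c_{P'}, c_P) = (f(P'), f(P))$ must be an arc of $H$, which by construction encodes $N_{\EE_0}(c_P, c_{P'}) < s^\star$, so I take $c' = c_{P'}$.

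Combining these bounds gives $\MM_\EE(p) \geq s^\star > \MM_\EE(c_P)$ for every $P \neq P^\star$, so $p$ is the unique Maximin winner of~$\EE$ and the input instance is a ``yes''-instance. I do not anticipate a genuine obstacle here: the deletion rules in Steps~3 and~4 together with the digraph construction in Step~6 have been engineered precisely to encode the Maximin inequalities that $p$ must satisfy and the defeats its opponents must endure, so the proof amounts to reading these encodings back off and translating them into the required Maximin-score comparisons.
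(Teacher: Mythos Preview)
Your proof is correct and follows essentially the same approach as the paper's: extract the nomination strategy from the $\gamma$-isomorphism~$f$, use Step~3 to lower-bound $\MM_\EE(p)$, and use the arcs of~$H$ to upper-bound the score of every other nominee. Your explicit case split on whether $\delta(P)=P^\star$ is in fact slightly more careful than the paper's version, which tacitly treats $\delta(P)$ as a vertex of~$D$ even though $P^\star\notin V(D)$; your use of Step~4 cleanly handles precisely this case.
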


\begin{proof}

\AlgMM\ may only return ``yes'' in Step~7, which requires that 
instance~$J$ of~\LabSubDigraph\ admits a solution, i.e., a subdigraph~$\wt{H}$ of~$H$ that is $\gamma$-isomorphic to~$D$. 
Note that $C':=V(\wt{H})$ contains exactly candidate from each party in~$\P \setminus \{P^\star\}$; let $\EE$ be the election where the nominees are exactly $C' \cup \{p\}$. 
By Step~3 we know $C' \cap X=\emptyset$, implying $\MM_\EE(p)=\min_{c \in C'} N(p,c) \geq s^\star$. 
Consider now some $c \in C'$ nominated by a party~$P$, and let $c' \in C'$ be the vertex of~$\wt{H}$ to which $\delta(P)$ is mapped by the isomorphism witnessing the solution~$\wt{H}$ for~$J$. As $(\delta(P),P)$ is an arc in~$D$, there must be an arc~$(c',c)$ in~$\wt{H}$ as well. By construction, this means $N(c,c')<s^\star$. Hence, we get
$\MM_\EE(c)=\min_{\hat{c} \in C'\cup \{p\}} N(c,\hat{c}) \leq N(c,c')<s^\star$. Thus, $p$ is indeed the unique winner in~$\EE$.
\end{proof}

\begin{lemma}
\label{lem:AlgMM-correct-2}
For each ``yes''-instance of \PP\ for Maximin, \AlgMM\ returns ``yes''.
\end{lemma}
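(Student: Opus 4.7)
The plan is to show that for any ``yes''-instance of \PP\ for Maximin there is a sequence of guesses in Steps~1 and~2 for which the constructed instance of \LabSubDigraph\ admits a solution, so that Step~7 returns ``yes''. I would fix a witness set of nominations whose reduced election $\EE$ has $p \in P^\star$ as the unique winner with Maximin-score $s^\star = \MM_\EE(p)$, denote by $c_P$ the nominee of each party $P \in \P \setminus \{P^\star\}$, and use the uniqueness of the winner to pick, for every such $P$, a candidate $c'_P \neq c_P$ with $N_\EE(c_P, c'_P) < s^\star$. Setting $\delta(P)$ to the party containing $c'_P$ defines an outcome of Steps~1 and~2 which \AlgMM\ will at some point enumerate; it then suffices to argue that this particular choice drives the algorithm to output ``yes''.

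The second step of the plan is to check that no $c_P$ with $P \neq P^\star$ is removed in the preprocessing phase. The key observation is the identity $N_{\EE_0}(x,y) = N_\EE(x,y)$ for any two nominees $x,y$, which follows immediately from the fact that the preferences in $\EE$ are the restrictions of the preferences in $\EE_0$ to the nominee set. Step~3 deletes $c$ only when $N_{\EE_0}(p,c) < s^\star$; since $N_{\EE_0}(p,c_P) = N_\EE(p,c_P) \geq \MM_\EE(p) = s^\star$ for every $P \neq P^\star$, all our nominees survive. Step~4 affects only parties with $\delta(P) = P^\star$, in which case the choice of $\delta(P)$ forces $c'_P = p$, and hence $N_{\EE_0}(c_P,p) = N_\EE(c_P,p) < s^\star$, so $c_P$ is kept. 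Therefore no party in $\P \setminus \{P^\star\}$ becomes empty, and Step~5 does not output ``no''.

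The final step is to exhibit a solution to the \LabSubDigraph\ instance $J=(D,H,\gamma)$ built in Step~6. I would define $f(P) = c_P$ for each $P \in \P \setminus \{P^\star\}$ and take $\wt{H}$ to be the subdigraph of~$H$ whose vertex set is $\{c_P : P \in \P \setminus \{P^\star\}\}$ and whose arc set is $\{(c_{\delta(P)}, c_P) : P \in \P \setminus \{P^\star\},\ \delta(P) \neq P^\star\}$. Each such arc lies in~$H$, because both endpoints survived the deletions and $N_{\EE_0}(c_P, c_{\delta(P)}) = N_\EE(c_P, c'_P) < s^\star$ by our choice of~$c'_P$. The map $f$ is a bijection $V(D) \to V(\wt{H})$ preserving arcs, and $\gamma(f(P)) = P$ holds by construction; hence $f$ witnesses that $\wt{H}$ is $\gamma$-isomorphic to~$D$, and Step~7 returns ``yes''. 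No real obstacle arises in this argument; the only delicate point is the identity $N_{\EE_0} = N_\EE$ on nominees, which is the bridge translating the Maximin inequalities of the reduced election into arcs of the host graph~$H$.
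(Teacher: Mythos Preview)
Your proposal is correct and follows essentially the same approach as the paper's own proof: fix a witnessing reduced election, take the branch where \AlgMM\ guesses $p$, $s^\star$, and the function~$\delta$ accordingly, verify via $N_{\EE_0}=N_\EE$ on nominees that no~$c_P$ is deleted in Steps~3--4, and exhibit the subdigraph of~$H$ on $\{c_P:P\neq P^\star\}$ as the required $\gamma$-isomorphic copy of~$D$. If anything, your version is slightly more explicit than the paper's in spelling out the arc set of~$\wt{H}$ and restricting to $\delta(P)\neq P^\star$.
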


\begin{proof}
Assume that there exists a set of nominations resulting in an election~$\EE$ where $P^\star$ is the unique winner. Let $C'$ denote the set of nominees in~$\EE$, and let $c_P$ denote the nominee of party~$P$ for each $P \in \P \setminus \{P^\star\}$.

Consider the branch of \AlgMM\ where 
the algorithm's guesses are correct, that is, it correctly guesses the nominee $p$ of~$ P^\star$, the Maximin-score~$s^\star$ of~$p$ in~$\E$, as well as a party~$\delta(P)$
for each party~$P \in \P \setminus \{P^\star\}$ 
whose nominee fulfills $N_{\E}(c_P,c_{\delta(P)})<s^\star$. 
Clearly, such a party~$\delta(P)$ exists, as otherwise $P$ would have Maximin-score at least~$s^\star$ in~$\EE$. 
We will use the fact that $N_{\E_0}(c,c')=N_{\E}(c,c')$ holds in any reduced election~$\E$ of~$\E_0$ for each two candidates~$c$ and~$c'$ that take part in the election~$\E$.

First, note that \AlgMM\ cannot return ``no'' in Step~5, because for each party $P \in \P \setminus \{P^\star\}$ we know that 
\begin{itemize}
\item[(i)] $N_{\E_0}(p,c_P)= N_{\E}(p,c_P)\geq \MM_\EE(p)=s^\star$, and that
\item[(ii)] if $\delta(P)=P^\star$, then $N_{\E_0} (c,p)=N_{\E} (c,p)<s^\star$, 
\end{itemize}
and therefore
$c_P$ is not deleted in Steps~3 and~4, so $c_P \notin X$. 
Hence, the algorithm reaches Step~7. 

It now suffices to show that the subgraph $H[C']$ of~$H$ spanned by all nominees in~$\EE$ contains a graph $\gamma$-isomorphic to~$D$. We show that the function $f(P)=c_P$ is a good witness for this, i.e, 
for each arc~$(P_1,P_2)$ in~$D$ there is an arc~$(f(P_1),f(P_2))$; note that the label constraint $\gamma(f(P))=\gamma(c_P)=P$ is satisfied automatically for each $P \in \P \setminus \{P^\star\}$.
Recall that $(P_1,P_2)$ is an arc in~$D$ if and only $P_1=\delta(P_2)$. As we assume that the algorithm's guesses are correct, we get $N(c_{P_2},c_{\delta(P_2)})<s^\star$. 
Since $f(P_1)=c_{P_1}=c_{\delta(P_2)}$ and $f(P_2)=c_{P_2}$, this is equivalent with saying that $N(f(P_2),f(P_1))<s^\star$. 
Hence, by the definition of~$H$ we get that $(f(P_1),f(P_2))$ is an arc in~$H$, as promised. 
This shows that \AlgMM\ returns ``yes'' in Step~7.
\end{proof}

Lemmas~\ref{lem:AlgMM-correct-1} and~\ref{lem:AlgMM-correct-2} together imply the correctness of \AlgMM, so to finish the proof of Theorem~\ref{thm:MM-fpt}, it remains to determine it's running time.

The number of possible guesses that the algorithm has to explore due to Steps~1 and~2 is $|P^\star| \cdot |V| \cdot (t-1)^{t-1}$.
For each such guess, performing Steps~3--6 takes $O(|V|\cdot |C|^2)$ time, as this is the time necessary to compute $N_{\E_0}(c,c')$ for each two candidates~$c$ and~$c'$ in~$\EE_0$. 
Finally, running the algorithm for \LabSubDigraph\ guaranteed by Lemma~\ref{lem:labelled-subdigraph} takes $O(|C|^2)$ time. Hence, the overall running time of \AlgMM\ is $O(t^t \cdot |V|^2 \cdot |C|^3)$, which is clearly fixed-parameter tractable with respect to parameter~$t$.

\bibliographystyleapp{ACM-Reference-Format}
\bibliographyapp{citeme}

\end{appendices}

\fi

\end{document}